\DeclareMathAlphabet{\mathpzc}{OT1}{pzc}{m}{it}
\newtheorem{theorem}{Theorem}
\newtheorem{corollary}[theorem]{Corollary}
\newtheorem{definition}[theorem]{Definition}
\newtheorem{example}[theorem]{Example}
\newtheorem{lemma}[theorem]{Lemma}
\newtheorem{proposition}[theorem]{Proposition}
\newenvironment{proof}[1][Proof]{\textbf{#1.} }{\ \rule{0.5em}{0.5em}}
\DeclareMathAlphabet{\mathpzc}{OT1}{pzc}{m}{it}
\newcommand {\End}{\mathrm{End}}
\newcommand{\pr}[1]{\frac{\partial}{\partial #1}}
\newcommand {\Mat}{\mathrm{Mat}}
\newcommand {\susy}{\mathfrak{susy}}
\newcommand {\sd}{\partial}
\newcommand {\red}{\mathrm{red}}
\newcommand {\id}{\mathrm{id}}
\newcommand {\Hom}{\mathrm{Hom}}
\newcommand{\Imm}{\mathrm{Im}}
\newcommand{\cobar}{\mathrm{cobar}}
\newcommand {\F}{{\cal F}}
\newcommand {\C}{\mathcal{C}}
\renewcommand {\L}{\mathcal{L}}
\newcommand {\Ss}{\mathcal{S}}
\newcommand {\tr}{\mathrm{tr}}
\newcommand {\U}{\mathrm{U}}
\newcommand {\V}{\mathbb{V}}
\newcommand {\Spin}{\mathrm{Spin}}
\newcommand {\SO}{\mathrm{SO}}
\newcommand {\so}{\mathfrak{so}}
\newcommand {\GL}{\mathrm{GL}}
\newcommand {\vol}{\mathrm{vol}}
\newcommand {\dL}{\delta\mathcal{L}}
\newcommand {\ddL}{\delta'\mathcal{L}}
\newcommand {\ddef}{\mathrm{def}}
\newcommand {\M}{\mathcal{M}}
\newcommand {\N}{\mathcal{N}}
\newcommand {\W}{\mathcal{W}}
\newcommand {\BEE}{\mathcal{E}^{\bullet}}
\newcommand {\TQ}{\widetilde Q}
\newcommand {\bz}{z}
\newcommand{\rS}{\mbox{{\rm S}}}
\renewcommand {\O}{{\cal O}}
\newcommand {\YM}{\mathcal{YM}}
\newcommand {\TYM}{\mathcal{TYM}}
\newcommand {\btheta}{\bm{\theta}}
\newcommand {\bD}{\bm{D}}
\newcommand {\bchi}{\bm{\chi}}
\newcommand{\g}{\mathfrak{g}}
\newcommand{\p}{\mathfrak{p}}
\newcommand{\vvv}{\mathfrak{v}}
\renewcommand{\Im}{\mbox{{\rm Im}}}
\newcommand{\UN}{\mathrm{U}(N)}
\newcommand{\Sym}{\mathrm{Sym}}
\newcommand {\Ker}{\mbox{{\rm Ker}}}
\newcommand{\dbar}{\bar \partial}
\newcommand{\pp}{\mathpzc{p}}
\newcommand{\ity}{_{\infty}}
\renewcommand{\tt}{I}
\newcommand{\rH}{\mathrm{H}}
\newcommand{\rHH}{\mathrm{HH}}
\newcommand{\rHC}{\mathrm{HC}}
\begin{document}
\title {Supersymmetric Deformations of Maximally Supersymmetric  Gauge Theories.} 
\author{M. V. Movshev\\Stony Brook University\\Stony Brook, NY 11794-3651, USA
\\ A. Schwarz\thanks{The work of both  authors was partially supported by NSF 
grant No. DMS 0505735 and  by grants DE-FG02-90ER40542  and PHY99-0794}\\ Department of Mathematics\\ 
University of 
California \\ Davis, CA 95616, USA}

\date{\today}
\maketitle
\tableofcontents
\begin{abstract}
We study supersymmetric and super Poincar\'e invariant deformations of ten-dimensional super Yang-Mills theory and of 
its dimensional reductions.
We describe all infinitesimal super Poincar\'e invariant deformations of equations of motion of ten-dimensional super Yang-Mills theory and its reduction to a point; we discuss the extension of them to formal deformations.  Our methods are based on
homological algebra, in particular, on the theory
of L-infinity and A-infinity algebras. The exposition of this theory as well as of some basic facts about Lie algebra 
homology and Hochschild homology is given in appendices.
\end{abstract}

\section{Introduction}\label{S:introduction}

The superspace technique is a very powerful tool of construction of supersymmetric theories. However this technique 
does not work for theories with large number of supersymmetries. It is possible to apply methods of homological algebra 
and formal non-commutative geometry to prove existence of supersymmetric deformations of gauge theories and give 
explicit construction of them. In this paper we discuss results obtained by such methods in the analysis of supersymmetric (SUSY) 
deformations of 10-dimensional SUSY YM-theory (SYM theory) and its dimensional reductions.

These  deformations are quite important from the
viewpoint of string theory. The D-brane action in the first approximation is given by dimensional reduction of  ten-dimensional SYM theory; taking 
into account the $\alpha '$ corrections we obtain SUSY deformation of this theory. (More precisely, we obtain a power 
series with respect to $\alpha '$ specifying a formal deformation of the theory at hand.) 
 
 Our approach is closely related to pure spinors techniques; it seems that it could be quite useful to understand better the 
pure spinor formalism in string theory constructed by Berkovits \cite {Berko}.

 Recall that  a $\UN$  gauge potential $ A_i(x)$ and a chiral spinor $\chi^{\alpha}$ are the fields of the SYM-theory.
In component form the action functional of SYM-theory looks as follows: 
\begin{equation}\label{E:SYM}
S_{SYM}(A,\chi)=\int \L_{SYM}d^{10}x=\int \tr\bigg(\frac{1}{4}F_{ij}F_{ij}+\frac{1}{2}\Gamma_{\alpha\beta}^i\chi^{\alpha}
\nabla_i\chi^{\beta}\bigg)d^{10}x 
\end{equation} 
$\nabla_i=\frac{\partial}
{\partial x^i}+ A_i(x)$ are covariant derivatives , $\chi^{\alpha}$ are chiral spinors with values in the adjoint representation, 
$F_{ij}=[\nabla_i,\nabla_j]$ is the curvature.\footnote{In this text by default small Roman indices $i,j$ run over $1,\dots,10$, Greek 
indices $\alpha,\beta,\gamma$ run over $1,\dots,16$

In the above formula $A_1,\cdots A_{10}$ and $\chi ^1,\cdots ,\chi ^{16}$ denote fields on ten-dimensional space $\mathbb{R}^{10}$ taking values in complex $N\times N$-matrices. (We do not impose the hermiticity condition; in our approach this condition appears only in the choice of real slice in the space of fields that we should use in the definition of the functional integral of the quantum theory.)

.  We shall freely raise and lower Roman indices with  ten-dimensional metric $(dx^i)^2$. We use Einstein summation convention over repeated indices.} 

Let $Y$ be a linear combination of  products of covariant derivatives of the curvature  $F_{ij}$ and  spinor fields $\chi^
{\alpha}$. By construction $Y$ is  is a section of the adjoint bundle. The function $A_i,\chi^{\alpha}\rightarrow Y(A_i,\chi^{\alpha})$ commutes with the gauge group action. 

We shall refer to such $Y$ as a {\it gauge-covariant local expression}. A {\it local gauge-invariant expression} is by definition $\tr(Y)$ with local gauge-covariant $Y$. 
In this paper we shall  consider deformation that can be described by  action functionals of the form
\begin{equation}\label{E:cvcjdfj}
\int \L_{SYM}+\sum_{i\geq1} \int \tr(Y_i)\epsilon^i
\end{equation}
 where $\tr Y_i$ are  gauge invariant, $\epsilon$ is a formal deformation parameter.



 In our approach the integrals are invariant with respect to some field transformation iff 
the variation of the integrand is a total derivative.
 We consider only deformations that can be applied simultaneously to gauge theories with all gauge groups $\UN$ where 
$N$ is an arbitrary positive integer. This  means that  we miss some 
of the deformations (e.g. certain  deformations of the abelian gauge theory) that are defined  only for a finite range of $N$.

Our methods can be applied to supersymmetric deformations of  dimensional reductions of SYM theory.
 A SYM deformation after reduction defines a deformation of the corresponding reduced 
theory. However not all deformations of the  reduced theory are of this kind. We shall give a complete description of SUSY-
deformations of  the full SYM theory and of its reduction to D=0 . 

Let us list explicit formulas for the action of supersymmetry generators and translations on fields of SYM-theory.
The Lie algebra of supersymmetry acts on gauge equivalence classes of solutions of equations
\begin{align}
&\sum_{i=1}^{10}\nabla_{i}F_{im}-\frac{1}{2}\sum_{\alpha\beta=1}^{16}\Gamma_{\alpha\beta}^m[\chi^{\alpha},\chi^{\beta}]=0 \quad m=1,\dots,
10 \label{E:relssfsyy1}\\
&\sum_{\beta=1}^{16}\sum_{i=1}^{10}\Gamma_{\alpha\beta}^i\nabla_i\chi^{\beta}=0\quad  \alpha=1\dots 16 
\label{E:relssfsyy3}
\end{align}
 The supersymmetry operators $\theta_{\alpha}$ are defined by formulas
\begin{equation}\label{E:susyform}
\begin{split}
&\theta_{\alpha}A_i=\Gamma_{\alpha\beta i}\chi^{\beta}\\
&\theta_{\alpha}\chi^{\beta}=\Gamma_{\alpha}^{\beta ij}F_{ij}
\end{split}
\end{equation}
 Denote by $D_i$ the lift of the space-time translation $\partial/\partial x^i$ to the space of  fields $(A_i,\chi^{\alpha})$.
 The lift is defined only up to gauge transformation.
 We fix the gauge freedom in a choice of $D_i$ requiring that

\begin{equation}\label{E:transl}
\begin{split}
&D_iA_j=F_{ij}\\
&D_i\chi^{\alpha}=\nabla_i\chi^{\alpha}
\end{split}
\end{equation}
 Infinitesimal symmetries $\theta_{\alpha}$ satisfy
\begin{equation}\label{E:hfdfjfhjd}
\begin{split}
&[\theta_{\alpha},\theta_{\beta}]=\Gamma_{\alpha\beta}^iD_i\\
&[\theta_{\alpha},D_i]A_k=- \Gamma_{\alpha\beta i}\nabla_k\chi^{\beta}\\
&[\theta_{\alpha},D_i]\chi^{\gamma}=\Gamma_{\alpha\beta i}[\chi^{\beta},\chi^{\gamma}]
\end{split}
\end{equation}
if  fields are solutions of equations of motion of $S_{SYM}$.

We see that on shell (on the space of solutions of the equations of motion
where gauge equivalent solutions are identified)  supersymmetry transformations commute with space-time translations:
\begin{equation}\label{E:supergaugetr}
\begin{split}
[ \theta_{\alpha},D_i]=0 \mbox{ on shell}.
\end{split}
\end{equation}
Hence on shell $\theta_{\alpha}, D_i$ generate the standard supersymmetry algebra (this is wrong off shell).
To reduce the theory to a point we should  consider  $A_1,\cdots A_{10}$ and $\chi ^1,\cdots ,\chi ^{16}$ as constant matrices. 

We will introduce the Lie algebra  $YM$ in such a way that an $N$-
dimensional representation of the algebra $YM$ gives a classical solution of  the  reduced SYM theory. More formally, we can define 
$YM$ as a Lie algebra having even generators $\bm{D}_1,\dots,\bm{D}_{10}$ and odd generators $\bm{\chi}^{1},\dots,
\bm{\chi}^{16}$ obeying relations 
\begin{align}
\sum_{i=1}^{10}[\bm{D}_{i},[\bm{D}_{i},\bm{D}_{m}]]- 
\frac{1}{2}\sum_{\alpha\beta=1}^{16}\Gamma_{\alpha\beta}^m[\bm{\chi}^{\alpha},\bm{\chi}^{\beta}]=0 \quad m=1,\dots,
10 \label{E:relssfs1}\\
\sum_{\beta=1}^{16}\sum_{i=1}^{10}\Gamma_{\alpha\beta}^i[\bm{D}_i,\bm{\chi}^{\beta}]=0\quad  \alpha=1\dots 16 
\label{E:relssfs3}
\end{align}

Universal enveloping algebra $U(YM)$ can be regarded as a unital associative algebra defined by the same relations as $YM.$

The algebra $YM$ is closely related to the graded  Lie algebra $L=\sum L$ with 
generators $\bm{\theta}_{1},\dots, \bm{\theta}_{16}$ of 
degree  one subject to relations
\begin{equation}\label{E:nvpri}
\Gamma^{\alpha\beta}_{i_1,\dots,i_5}[\bm{\theta}_{\alpha},\bm{\theta}_{\beta}]=0.
\end{equation} 

Namely, one can prove that $YM$ is isomorphic to $\bigoplus_{i\geq 2} L^i\subset L$
(the generator $\bm{D}_{m}$ is defined by the formula $[\bm{\theta}_{\alpha},\bm{\theta}_{\beta}]=\Gamma_{\alpha\beta}^i\bm{D}_i$  and the generator $\bm{\chi}^{\alpha}$ by the formula $\Gamma _{\alpha \beta i}\bm{\chi}^{\beta}=
[\bm {\theta} _{\alpha}, \bm {D}_i]$
).

We can define also the Lie algebra $TYM$  as follows
$$TYM\overset{\ddef}{=}\bigoplus_{i\geq 3} L^i\subset L.$$
This algebra is generated by     as a Lie algebra  (and its universal enveloping algebra 
$U(TYM)$ as associative algebra) 
 by expressions $\nabla _ {i_1}\cdots \nabla _{i_n}\Phi$
 where $\Phi$ is  either  ${\bm F}_{kl}$ or $ \bchi ^{\alpha}$ and  $\nabla _i (x)= [\bD_i,x]$
 (we use the notation  ${\bm F}_{ij}=[\bD_i,\bD_j]$).
This means that an element of $U(TYM)$ specifies a  gauge-covariant local expression.

In the following by  SUSY-deformations we understand deformations of the Lagrangian density  
and simultaneous deformation of  the sixteen supersymmetries. 

The SYM-theory has  additional 16 trivial supersymmetries  - the constant shifts of fermion fields. 
Analysis of deformations that preserves these symmetries was left out of scope of the present paper.


%
 As a first approximation to the problem we would like to solve we shall study
 infinitesimal  supersymmetric  deformations of equations of motion of ten-dimensional SUSY Yang-Mills theory. 
We translate this problem to a question  in homological algebra. The homological reformulation leads to a highly nontrivial, 
but solvable problem. 
We shall analyze also super  Poincar\'e invariant 
(= supersymmetric +Lorentz invariant)  infinitesimal deformations.  We shall prove that all of them are Lagrangian 
deformations of equations of motion (i.e. the deformed equations come from a deformed Lagrangian).

One of our  tools  is the theory of A$\ity$ and L$\ity$ algebras \cite{LadaStasheff}. The theory of L$\ity$ algebras is closely 
related to BV formalism (see e.g. \cite{StasheffBV},\cite{SchwarzBV}). The theory of L$\ity$ algebras with invariant odd inner product is equivalent to 
classical BV-formalism if we are working at  formal level \cite{AKSZ}. (This means that we are considering all functions at hand as 
formal power series). The theory of A$\ity$ algebras arises if we would like to consider Yang-Mills theory for all gauge 
groups $\UN$ at the same time (see Appendix A). 

Recall \cite{SchwarzBV} that in classical BV-formalism the space of solutions to the equations of motion (EM) can be characterized as zero locus 
$Sol$ of the odd vector field $Q$ that satisfies $[Q,Q]=0$. {\footnote {We use a unified notation $[\cdot,\cdot]$ for the 
commutators and super-commutators.}}
It is convenient to work with the space $Sol/\thicksim$ obtained from zero locus $Sol$ after identification of physically 
equivalent solutions (see Sec. \ref {S:BV} for detail).

One can consider $Q$ as a derivation of the algebra of functionals 
on the space of fields $M$.
One of the pieces of the input data  an odd symplectic structure on the space $M$; vector field $Q$ preserves the odd symplectic form.  The 
corresponding derivation can be written in the form $Qf=\{S,f\}$ where $\{\cdot, \cdot\}$ stands for the  odd Poisson 
bracket and $S$ plays
the role of the  action functional in the  BV formalism. 

A vector field $q_0$ on $M$ is an infinitesimal symmetry of EM in BV-formalism if $[Q,q_0]=0$; we  disregard trivial symmetries (symmetries of the form $q_0=[Q,q'_0]$). Let  $f_{\tau_1\tau_2}^{\tau_3}$ be the structure constants of a Lie algebra $\g$.  The infinitesimal symmetries $q_{\tau}$ in BV formalism define a $\g$-action if 
\begin{equation}\label{E:L2}
[q_{\tau_1},q_{\tau_2}]=f_{\tau_1\tau_2}^{\tau_3}q_{\tau_3}+[Q,q_{\tau_1\tau_2}] .
\end{equation}
Here the index $\tau$ labels a basis in $\g$.
We say in this case that $\g$ acts weakly on the space of fields. 
It is more convenient to work with  L$\ity$ actions of $\g$. To define an L$\ity$ action we complete the sequence of vector fields $q_{\tau},q_{\tau_1\tau_2}$ by their higher analogs $q_{\tau_1,...,\tau_k}, k=3,\dots$ and impose  relations
generalizing (\ref {E:L2}). These relations are easy to write if we introduce a generating function $q=\sum \frac{1}{k!}q_{\tau_1,...,\tau_k}c^{\tau_1}\dots c^{\tau_k}$: 
\begin{equation}\label{E:laction}d_{\g}q+[Q,q]+\frac{1}{2}[q,q]=0.
\end{equation} 
 Here 
\begin{equation}
d_{\g}=\frac{1}{2}(-1)^{|c^{\tau_1}|}f_{\tau_2\tau_1}^{\tau_3}c^{\tau_1}c^{\tau_2} \frac{\partial}{\partial c^{\tau_3}}
\end{equation} stands for the differential calculating the Lie algebra cohomology of $\g$,
$c^{\tau}$ are the ghosts corresponding to the Lie algebra (cf. \cite{HenneauxTeitelboim} where the sign of the operator is opposite to ours). Lagrangian  BV version of Equation (\ref{E:laction}) contains  Poisson bracket of functionals instead of  the supercommutators of vector fields.

Deformations of a theory that preserve a Lie algebra of symmetries in BV language become deformations of a solution of equation (\ref {E:L2}).
 It is important to emphasize that we can start  with an arbitrary BV formulation of the  given theory 
and the classification of deformations  does not depend on our choices.  Classification of infinitesimal deformations is a homological problem. This problem concerns  cohomology of the differential
$d_{\g}+[q, \cdot]$ that acts in the space of  ghost-dependent vector fields. 

 In the present paper
we apply  such homological methods to the ten-dimensional SUSY Yang-Mills (SYM) theory and to its dimensional reductions.  In particular, we describe  all infinitesimal super Poincar\'e invariant  deformations of ten-dimensional SYM  theory and its reduction to a point.  
We show that almost all of them are given by a simple general formula. We analyze the extension of  SUSY-invariant   
infinitesimal deformations  to formal SUSY-invariant  deformations. A formal deformation of a Lagrangian is a 
deformation that can be written as a formal power series with respect to  some parameter; in string theory the role of this parameter is played by string tension $\alpha'$. In the context of Yang-Mills theory we require  
that the Taylor coefficients of the series   are derivatives of  finite order  of coefficients of the gauge potential and matter fields.

The paper will be organized in the following way:
Preliminaries (Section \ref{S:Preliminaries}) contains some mathematical information needed in our constructions and 
proofs.  We do not use essentially the material of this section in sections   \ref{S:fdgdfgmjl} and  \ref{S:homologicalapp}.Therefore one can skip this section at the first reading  and start reading with Section  \ref{S:fdgdfgmjl} returning to Section \ref
{S:Preliminaries} as necessary.
 In Section  \ref{S:fdgdfgmjl} we give a complete description of  infinitesimal SUSY deformations of ten-dimensional SYM theory and its reduction to a point.  
 We give a very explicit formula that works  almost for  all  deformations. \footnote { One can  show that exceptional deformations are related to the 
homology of  SUSY Lie algebra.}
  In Section \ref{E:Formal} we sketch the proof of the fact that  infinitesimal SUSY deformations can be extended to formal deformations.   
In Section \ref{S:homologicalapp} we reduce the computation of the infinitesimal SUSY deformations to a homological 
problem. In Section \ref{S:calc} and in Appendix \ref{AppendixE} we sketch the solution of this problem. In Section \ref{S:BV}  we 
approach to the problem of infinitesimal deformations from the point of view of BV formalism.  This approach leads to 
another homological formulation of our problem; this is the formulation sketched in the introduction.

In Appendix  \ref{AppendixC} we relate this formulation to formulation of Section \ref{S:homologicalapp}. The approach 
based on BV formalism works in more  general situation than the formulation in Section \ref{S:homologicalapp}.
   
  The reader who is more interested in methods rather  in concrete results can choose Appendices 
\ref{AppendixA}, \ref{AppendixB} as a starting point. These Appendices  contain  a brief exposition of the theory  of L$\ity$ and A$\ity$ algebras and of the duality of 
differential associative algebras, that play an important role in our calculations. 
  
  In Appendix H we consider
  deformations of a $d$-dimensional reduction of
  ten-dimensional SYM theory  when
  $0\leq d\leq 10$.  We show that many statements obtained  for $d=0$ and $d=10$  can be generalized to any $d.$

 The present paper concludes the series of papers devoted to the analysis of deformations
 of  SYM theories \cite {MSch}, \cite {MSch2}, \cite {M3}, \cite {M4}, \cite {M4}. It contains a review of most important 
results of these papers as well as
 some new constructions.  The paper is mostly self-contained, but contain occasional references to the other papers in the series when it comes to proofs we intend to skip.
 
The supersymmetric deformations of maximally supersymmetric gauge theories were studied in numerous papers mostly in superspace approach or/and in pure spinor formalism (see, for example,\cite{BerkovitsHowe},\cite{CederwallNilssonTsimpis},\cite{CNT2},\cite{CNT3},\cite{CNT4}  It is not always easy to compare the results of these papers with our results, but when this comparison is possible the results agree.
 
 {\bf Acknowledgment}
 Both authors would like to thank  IHES (Bures-sur-Ivette) and MPIM (Bonn) and the first author wishes to thank MPIM (Leipzig)
   for hospitality and excellent working conditions.
 \section{Preliminaries}\label{S:Preliminaries}
\subsection {Basic algebras} \label{2.1}
By now it became a common place (see e.g. \cite{AFG},\cite{Howe},\cite{Berkovits}) that pure spinors can be used to formulate super Yang-Mills equations in dimension ten  in manifestly supersymmetric manner. In this section we define  objects relevant to such formulations. 

The algebra $\Ss=\bigoplus_{k\geq 0}\Ss_k$ has generators 
\begin{equation}\label{E:basisSdual}
\lambda^1,\dots,\lambda^{16}
\end{equation}
and relations 
\begin{equation}\label{E:pure}
\Gamma_{\alpha\beta}^i\lambda^{\alpha}\lambda^{\beta}=0,i=1,\dots,10
\end{equation}
where $\Gamma_{\alpha\beta}^i$ are ten-dimensional $\Gamma$-matrices (see \cite{Deligne} for mathematical introduction). $(\lambda^{\alpha})$ are coordinates on sixteen dimensional spinor representation in a basis $(\phi_{\alpha})$. Any spinor $\phi$ is equal to $\lambda^{\alpha}\phi_{\alpha}$. Occasionally we shall identify a spinor $\psi$ in a basis-dependent fashion with an array of coordinates $\lambda=(\lambda^1(\phi),\dots,\lambda^{16}(\phi))$.  $\Ss$ is an algebra of 
polynomial functions on the space $\mathcal{C}$ of pure  spinors (spinors obeying $\Gamma_{\alpha\beta}^i\lambda^{\alpha}
\lambda^{\beta}=0$, cf. \cite{Chevalley}). Components $\Ss_k$ of $\Ss$ are spaces of polynomial functions  of degree $k$. Projectivization of  $\mathcal{C}\backslash\{0\}$ , i.e. result of identification of proportional pure spinors in $\mathcal{C}\backslash\{0\}$ ,  is  an Isotropic Grassmannian $\mathcal{Q}$ or , which is the same, projective space of pure spinors.
The linear  space $\Ss_k$  is the space  of holomorphic sections of the line bundle $\O(k)$ over $\mathcal{Q}$ (see, for example,  \cite{GriffithsHarris} for explanation of standard notations of algebraic geometry). The group $\Spin (10)$ acts on $\Ss$ in natural way; it is easy to check that $\Ss_k$ is an irreducible representation of $\Spin (10)$ with Dynkin label $[0,0,0,k,0]$. Using this fact one can calculate the Poincar\'e series of $\Ss$:
$$\Ss(t)=\frac{1+5t+5t^2+t^3}{(1-t)^{11}}$$
(see \cite{BerNek}).

 The reduced Berkovits algebra $B_0$ is a differential graded commutative algebra. It is generated by  even $\lambda^
{\alpha}$ obeying pure spinor relations (\ref{E:pure}) and odd $ \psi^{\alpha}$. The   differential $d$ satisfies $d(\psi^
{\alpha})=\lambda^{\alpha}, d(\lambda^{\alpha})=0$.

One can also give a description of $B_0$ in terms of  functions on $\mathcal{C}$. Its elements are   polynomial functions 
depending on pure spinor $\lambda$ and odd spinor  $\psi$. We can interpret $\psi^{\alpha}$  as coordinates on odd 
spinor space $\Pi S$. The differential is represented by the odd vector field $\lambda^{\alpha}\pr{\psi^{\alpha}}$.

The (unreduced) Berkovits algebra $B$ can be defined as the algebra of polynomial functions of pure spinor $
\lambda$, odd spinor $\psi$ and $x\in \mathbb{R}^{10}$. 
Sometimes it is convenient to modify this definition considering an algebra $B^{\infty}$ consisting of functions that are polynomial in  $
\lambda$ and $\psi$ but smooth as  functions of  $x\in \mathbb{R}^{10}$. 
The differential is 
defined as the derivation 
\begin{equation}\label{E:differential}
\lambda^{\alpha}\bigg(\pr{\psi^{\alpha}}+\Gamma_{\alpha\beta}^i\psi^{\beta}\pr{x^{i}}\bigg).
\end{equation}

 $\Ss,B_0,B$ are quadratic algebras, i.e. they are described by generators obeying quadratic relations. The reader is referred to  \cite{Qalg} for a comprehensive account of  quadratic algebras  and references to the original publications. 
We shall be using construction of a quadratic dual algebra, which we remind the reader presently. Let $A=
\bigoplus_{i\geq 0}A_i$  be  a  unital not necessarily commutative  quadratic graded algebra $A$. We assume that $A_0=\mathbb{C}$. and that $A$ is generated by elements $w_1,\dots,w_n\in A_1$ obeying  quadratic relations 
$r_k=r_k^{ij}w_iw_j=0, k=1,\dots,l$. In more invariant terms we say that $A$ is generated by the  linear span $W=<w_1,\dots,w_n>=A_1$. 
Relations  $(r_k)$, in turn, span a linear subspace $R\subset W\otimes W$.    
The {\it quadratic  dual}  $A^!=\bigoplus_{i\geq 0}A^!_i$ is an algebra  generated by dual linear 
space $W^*=<w^{*1},\dots,w^{*n}>$, where $\langle w_{i},w^{*j}\rangle =\delta^j_i$. It relations are generating the 
subspace $R^{\perp}\subset W^*\otimes W^*$. In other words $R^{\perp}$ has a basis $s^m=\sum_{ij}s^m_{ij}w^{*i}w^{*j},m=1,\dots,n^2-l$. The matrices $s^m_{ij}$ form a 
basis in the space of solutions of the linear system $\sum_{ij}r_k^{ij}s_{ij}=0$ (when some of $w_j$ have different parities $|w_j|$ Koszul  sign rule is in force: $\sum_{ij}(-1)^{|w_j||w^{*i}|}r_k^{ij}s_{ij}=0$). In addition $A^{!!}=A$.

This duality preserves (graded) tensor product of algebras. The dual to polynomial algebra $\mathbb{C}[s]$ is the exterior algebra on one generator $\Lambda[\psi]$. A combination of these two facts gives an isomorphism $\mathbb{C}[s_1,\dots,s_k]^{!}\cong\Lambda[\psi^1,\dots,\psi^k]$.

Another interesting example is an algebra of polynomial functions on a quadric $A=\mathbb{C}[s_1,\dots,s_k]/s_1^2+\cdots+s^2_k$.
The dual algebra $A^!$ is generated by $\psi^1,\dots,\psi^k$ subject to relations $\psi^i\psi^j+\psi^j\psi^i=0$ and $(\psi^i)^2=(\psi^j)^2, i\neq j$. These are commutation relations in D=1,N=k supersymmetry Lie algebra. The element
\begin{equation}\label{E:H}
H=(\psi^i)^2
\end{equation} is central. The algebra $A^!$ is closely related to the Clifford algebra $Cl_n$.

Quadratic duality  has some useful properties in case of {\it Koszul algebras} which we about to define. 
 The tensor product  $A_1\otimes A^!_1=W\otimes W^*$ is a subspace of  $A\otimes A^!$. The tensor 
\begin{equation}\label{E:dif}
e=\sum_i w_i\otimes w^{*i}\in W\otimes W^* 
\end{equation} satisfies
$e^2=0$ and defines a differential on any left $A\otimes A^!$-module $K$:
\[d(m)=em,m\in K.\] The  module 
$K=A\otimes A^{!^*}$  contains a subspace $\mathbb{C}=A_0\otimes A^{!^*}_0 $ which generates nontrivial subspace in 
cohomology $H(A\otimes A^{!^*})$. Quadratic  algebra $A$ is a Koszul algebra  if this subspace exhausts the cohomology. 

For example if $A=\mathbb{C}[s_1,\dots,s_k]$, then $A\otimes A^{!^*}$ coincides with $\mathbb{C}[s_1,\dots,s_k]\otimes \Lambda[\xi_1,\dots,\xi_k]$, equipped with the acyclic  differential $s_i\frac{\sd}{\sd \xi_i}$. This is why the polynomial algebra is Koszul.
If $A=\mathbb{C}[s_1,\dots,s_k]/s_1^2+\cdots+s^2_k$, then $A^{!^*}$ coincides with $\Lambda[\xi_1,\dots,\xi_k]\otimes \mathbb{C}[h]$, The generators $h$ is dual to (\ref{E:H}). The differential $e$ in this case acts by 
\[s_i\frac{\sd}{\sd \xi_i}+s_i\xi_i\frac{\sd}{\sd h}\]
The differential is acyclic and $\mathbb{C}[s_1,\dots,s_k]/s_1^2+\cdots+s^2_k$ is Koszul. This differential is reminiscent of (\ref{E:differential}).

One of the properties of Koszul algebras is that $A$ is Koszul if  and only if $A^!$ is. The Poincar\'{e} series $A(t)=\sum_{i\geq 0}\dim A_i t^i$ and  $A^!(t)=\sum_{i\geq 0}\dim A^!_i t^i$ of Koszul, quadratically dual algebras  are related:
\begin{equation}\label{E:dualseries}
A(t)A^!(-t)=1
\end{equation}

Idea to use Koszul duality for classification of deformations of SYM has been  proven to be fruitful \cite{M4}. This is why we look at $\Ss,B_0, B$ from the perspective of quadratic duality.
Most of the facts pertinent to  this can be also found in   \cite{MSch} and  \cite{MSch2}.

An obvious starting point in this direction would be the calculation of  quadratic duals. Algebras $B_0, B$ carry a differential. Quadratic duals in our setting will be algebras with a differential. At the initial stage of computation we simply ignore the differentials.  After all the underlying algebras are found we 
take care of
 the differentials in dual pairs of objects. The differential in $B_0$ is linear. It defines a differential in $B_0^!$ in obvious way. The differential in $B$ is more complicated- it contains linear and quadratic parts. The linear part defines a differential in  $B^!$, the quadratic part defines a nontrivial commutator in $B^!$ (see below). The algebras $\Ss^!,B_0^!$ and $B^!$ are dual to $\Ss,B_0$ and $B$ in the sense of Definition \ref{D:dual} (Appendix B).

The use of the negative grading indices in linear spaces is unavoidable in a systematic quadratic-duality theory. In particular if all gradings of an algebra are  positive, the dual object has negative gradings. Negative indices from any other standpoints appear unnatural. Our attempt to reconcile these points of view  is to use the following convention :\[N_i=N^{-i}.\]

The Koszul dual algebra to $\Ss $  is a graded algebra $U(L)$ on generators $\bm{\theta}_{1},\dots, \bm{\theta}_{16}$ of 
degree  one subject to
\begin{equation}\label{E:nvpri}
\Gamma^{\alpha\beta}_{i_1,\dots,i_5}[\bm{\theta}_{\alpha},\bm{\theta}_{\beta}]=0.
\end{equation}  
The element $e$ (see (\ref{E:dif})) is $\lambda^{\alpha}\bm{\theta}_{\alpha}$.
The algebra $U(L)$ is a universal enveloping algebra of the graded Lie algebra 
\begin{equation}\label{E:ldef}
L=\sum_{n\geq1} L^n
\end{equation} that is defined by 
 generators $(\bm{\theta}_{\alpha})$  obeying
   the same relations (\ref {E:nvpri}). {\footnote { The reader should be warned that the currently used grading in $L$ is different from the  grading of the general theory outlined in Appendix \ref{AppendixB}: in the notations of Appendix  \ref{AppendixB} the generators of $L$ have degree minus one and generators of $\Ss$ have degree two.}}
(This is a particular case of general statement: the quadratic dual of commutative quadratic  algebra is a universal enveloping algebra of Lie algebra.)

 A basis $(\bm{\theta}_{\alpha})$ in $L^1=S$ is dual to the basis $(\lambda^{\alpha})$ (\ref{E:basisSdual}).  The Poincar\'e series 
\begin{equation}\label{E:PseriesUL}
U(L)(t)=\frac{(1+t)^{11}}{1-5t+5t^2-t^3}
\end{equation} follows from (\ref{E:dualseries}).  
 The second graded component $L^2$ of $L$ has dimension ten. This follows from the formula (\ref{E:PseriesUL}) for Poincar\'{e} series $U(L)(t)$.  It is easy to see that one can find a basis $\bm{D}_1,\dots,\bm{D}_{10}$ in $L^2$ 
obeying  

\begin{equation}\label{E:commutator}
[\bm{\theta}_{\alpha},\bm{\theta}_{\beta}]=\Gamma_{\alpha\beta}^i\bm{D}_i.
\end{equation}


$\Ss^!$ is significantly more complex than the Koszul dual to $\mathbb{C}[s_1,\dots,s_k]/s_1^2+\cdots+s^2_k$. Elements $\bm{D}_i\in L$ - the analogues of $H\in Cl_n$ are not central. 

The algebra $U(L)$ admits derivations $\frac{\sd}{\sd \theta_{\alpha}}$, that act by the formula 
\begin{equation}\label{E:diff}
\frac{\sd}{\sd \theta_{\alpha}}\theta_{\beta}=\delta_{\alpha}^{\beta}
\end{equation} and are compatible with the relations (\ref{E:nvpri})


 The dual algebra $B^!_0 $ to $B_0 $ is a tensor product $U(L)\otimes \mathbb{C}[s_1,\dots, s_{16}]$. The polynomial factor is dual to $\Lambda[\psi^1,\dots,\psi^{16}]$.  The adjoint  to $\lambda^{\alpha}\pr{\psi^{\alpha}}$ acts on the generating set $\{\bm{\theta}_{\alpha},s_{\alpha}\}$ in $B_0 ^!$ by the formula 
\begin{equation}\label{E:differentialred}
\bm{\theta}_{\alpha} \rightarrow s_{\alpha}
\end{equation}
 and defines a differential $d_{B^!_0}$ in $B^!_0 $.  The algebra $B^!_0 $ is a universal enveloping of a  Lie algebra $H$.
It is a direct sum $H=\bigoplus_{i\geq 0}H^i=L+S$, where $S=H^0=<s_{\alpha}>$ is an  abelian Lie algebra in degree  zero. The differential  (\ref{E:differentialred}) has degree minus one in  $H$. The duality in the sense of the definition 28 is established by means of the element $e=\lambda^{\alpha}\bm{\theta}_{\alpha}+\psi^{\alpha}s_{\alpha}$.

Let us introduce a Lie algebra $YM$ on even generators $\bm{D}_1,\dots,\bm{D}_{10}$ and odd $\bm{\chi}^{1},\dots,
\bm{\chi}^{16}$ obeying relations 
\begin{align}
\sum_{i=1}^{10}[\bm{D}_{i},[\bm{D}_{i},\bm{D}_{m}]]- \notag \\
&-\frac{1}{2}\sum_{\alpha\beta=1}^{16}\Gamma_{\alpha\beta}^m[\bm{\chi}^{\alpha},\bm{\chi}^{\beta}]=0 \quad m=1,\dots,
10 \label{E:relssfs1}\\
\sum_{\beta=1}^{16}\sum_{i=1}^{10}\Gamma_{\alpha\beta}^i[\bm{D}_i,\bm{\chi}^{\beta}]=0\quad  \alpha=1\dots 16 
\label{E:relssfs3}
\end{align}

The relations give a formal algebraic abstraction of  equations of motion of D=10 SYM  theory reduced to a point. An $N$-
dimensional representations of the algebra $YM$ gives a classical solution of  the  reduced SYM theory (of IKKT model ). 
  It is easy to construct a homomorphism of the Lie algebra $YM$ into $L$ (or, more precisely, into $\bigoplus_{k\geq 2}
L^k$).  Namely, we should send its generators into $\bm {D}_i$, defined by (\ref {E:commutator}) and into $\bm{\chi}^
{\beta}$ defined by the formula
\begin{equation}
\label{cc}
\Gamma _{\alpha \beta i}\bm{\chi}^{\beta}=
[\bm {\theta} _{\alpha}, \bm {D}_i]
\end{equation} 
It turns out that 
\begin{equation}\label{E:superchi}
\bm{\theta}_{\alpha}\bm{\chi^{\beta}}=\Gamma_{\alpha}^{\beta ij}[\bm{A}_i,\bm{A}_j]
\end{equation}
in the Lie algebra $L$.
\begin{proposition}
The algebra $YM$ is isomorphic to $\bigoplus_{k\geq 2}L^k$.  The obvious map  $\bigoplus_{k\geq 2}L^k\rightarrow (L
+S,d)$ is a quasi-isomorphism. Similarly $U(YM)$ is quasi-isomorphic to $B^!_0$.
\end{proposition}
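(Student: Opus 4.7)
The statement has three parts: (a) the cohomology of the dg-Lie algebra $(L+S,d)$ equals $\bigoplus_{k\geq 2}L^k$, so the inclusion is a quasi-isomorphism; (b) there is a Lie-algebra isomorphism $\phi\colon YM\to\bigoplus_{k\geq 2}L^k$ sending the generators of $YM$ to the elements $\bm{D}_i$, $\bm{\chi}^\beta$ specified just before the proposition; (c) applying the enveloping algebra functor $U$ yields the third assertion $U(YM)\simeq B_0^!$. I would address them in this order.

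Claim (a) is a short computation. Because $H=L\oplus S$ is a direct sum of Lie algebras we have $[S,L]=0$, and the differential is the Lie derivation determined on generators by $d\bm{\theta}_\alpha=s_\alpha$ and $ds_\alpha=0$. The Leibniz rule gives $d[\bm{\theta}_\alpha,x]=[s_\alpha,x]\pm[\bm{\theta}_\alpha,dx]$; the first term vanishes by centrality of $S$, and a bracket-length induction then forces $d|_{L^{\geq 2}}\equiv 0$. What remains is the obvious linear isomorphism $L^1\xrightarrow{\sim}S$ between two $16$-dimensional spaces, so $H^k(L+S,d)=0$ for $k\leq 1$ and equals $L^k$ for $k\geq 2$.

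Claim (b) is the heart of the matter. The elements $\bm{D}_i\in L^2$ are unambiguously defined by $[\bm{\theta}_\alpha,\bm{\theta}_\beta]=\Gamma^i_{\alpha\beta}\bm{D}_i$ once one notes that, as a $\Spin(10)$-module, $\mathrm{Sym}^2(S)=V\oplus\Lambda^5_+$ and the defining relation (\ref{E:nvpri}) of $L$ projects out the $\Lambda^5_+$ summand. A parallel branching for $S\otimes V$, combined with super-Jacobi applied to $[\bm{\theta}_\alpha,[\bm{\theta}_\beta,\bm{\theta}_\gamma]]$, shows $[\bm{\theta}_\alpha,\bm{D}_i]$ lies in the $16$-dimensional spinor summand of $S\otimes V$, so one may set $[\bm{\theta}_\alpha,\bm{D}_i]=\Gamma_{\alpha\beta i}\bm{\chi}^\beta$ and thereby define $\bm{\chi}^\beta\in L^3$. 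The $YM$-relations (\ref{E:relssfs1}) and (\ref{E:relssfs3}) then drop out of super-Jacobi applied to brackets of four $\bm{\theta}$'s, showing $\phi$ is well-defined. Surjectivity follows by degree induction: any element of $L^k$ for $k\geq 2$ is an iterated $\bm{\theta}$-bracket, and super-Jacobi lets me absorb each outermost $\bm{\theta}$-factor into brackets lying wholly inside $L^{\geq 2}=\langle\bm{D},\bm{\chi}\rangle$.

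Injectivity of $\phi$ is the step I expect to be the main obstacle. Rather than attempt a direct combinatorial attack on iterated Jacobi identities, I would route through Koszul duality: the algebras $\Ss$ and $B_0$ are Koszul (established in \cite{MSch}), so $B_0^!$ is quasi-isomorphic to the cobar construction on $B_0$, whose graded dimensions can be computed from the known Hilbert series of $\Ss$. Comparing these with the graded dimensions of $U(YM)$ (equivalently, proving (c) independently via the bar/cobar dualities summarised in Appendix \ref{AppendixB} and then invoking PBW) forces $\phi$ to be an isomorphism. Claim (c) itself is then formal: $U$ preserves quasi-isomorphisms of dg-Lie algebras in characteristic zero (via the PBW filtration, whose associated graded is $\Sym$ of the underlying complex), so the composite $U(YM)\xrightarrow{U\phi}U(L^{\geq 2})\hookrightarrow U(L+S)=B_0^!$ is a quasi-isomorphism. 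The cleanest way around the injectivity bottleneck in (b) is, in my view, exactly this Koszul-duality detour.
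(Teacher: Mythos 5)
The paper itself states this proposition without proof (it is imported from \cite{MSch}), so there is no in-text argument to compare against; I will assess your proposal on its own terms. Your part (a) is correct and complete: since $H=L\oplus S$ is a direct sum of Lie algebras, $[S,L]=0$, the Leibniz rule kills $d$ on every iterated bracket of the $\bm{\theta}_{\alpha}$, the complex reduces to the acyclic piece $L^1\xrightarrow{\ \sim\ }S$ plus $\bigoplus_{k\geq 2}L^k$ with zero differential, and the inclusion is a quasi-isomorphism. Part (c) is also correctly reduced to (a) and (b) via the PBW filtration. The construction of $\phi$ and the surjectivity argument in (b) are essentially right, though ``drops out of super-Jacobi'' understates the work: verifying (\ref{E:relssfs1}), (\ref{E:relssfs3}) inside $L$, and verifying that $[\bm{\theta}_{\alpha},\bm{\chi}^{\beta}]$ lies in the span of the $[\bm{D}_i,\bm{D}_j]$ (a fact your ``absorption'' induction silently uses at its base case), both require ten-dimensional Fierz identities and dimension counts in $L^4$, $L^5$, $L^6$, not the Jacobi identity alone.

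The genuine gap is injectivity of $\phi$, and the detour you propose does not close it: as written it is circular. A surjection $\phi:YM\to L^{\geq 2}$ gives $\dim U(YM)_{\bm{n}}\geq \dim U(L^{\geq 2})_{\bm{n}}$, and Koszulity of $\Ss$ does give you the right-hand side via $U(L)(t)=1/\Ss(-t)$ and $U(L)\cong U(L^{\geq 2})\otimes\Lambda(L^1)$. But to conclude equality you need the reverse inequality, i.e.\ an upper bound on $\dim U(YM)_{\bm{n}}$ computed from the presentation (\ref{E:relssfs1})--(\ref{E:relssfs3}) alone --- equivalently, you must know that these relations exhaust $H_2$ of $L^{\geq 2}$ and that no further independent relations appear in higher degree. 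Your parenthetical ``(equivalently, proving (c) independently\dots)'' concedes exactly this: given (a) and PBW, statement (c) \emph{is} statement (b), and the dualities of Appendix \ref{AppendixB} only relate algebras already known to be dual --- Koszulity of $B_0$ identifies $\cobar B_0$ with the quadratic dual $U(L+S)$, not with $U(YM)$. The missing input, supplied in \cite{MSch}, \cite{MSch2}, \cite{M3}, is a computation of the low-degree Lie algebra homology $H_1(L^{\geq 2})$ and $H_2(L^{\geq 2})$ (read off from the Koszul homology of $\Ss$, i.e.\ the syzygies of the pure spinor cone), which identifies the minimal generators of $L^{\geq 2}$ with $L^2+L^3$ and the minimal relations with exactly the Yang--Mills relations; alternatively, an independent Poincar\'e series computation for the super Yang--Mills algebra itself. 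Without one of these, your argument establishes only that $\phi$ is surjective.
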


Recall that a homomorphism of differential algebras (modules) is called a {\it quasi-isomorphism} if it induces 
an isomorphism in homology.

Let us denote by $\theta_{\alpha}$ the derivation of the algebra $YM$  acting on generators by the formula (\ref{cc},\ref{E:superchi})  The derivations $\theta _{\alpha}$ can be interpreted as supersymmetry transformations.
It is easy to check that they can be interpreted as commutators with generators of $L$:
$$\theta _{\alpha} x=[\bm{\theta}_{\alpha},x].$$

  The dual  to $B $ is the universal enveloping algebra of  a graded semi-direct product $L\ltimes \Lambda$ of $L$ with  abelian  $\Lambda$.  The generators $s_{1},\dots,s_{16} $, 
$\varsigma_{1},\dots ,\varsigma_{10}$ of $\Lambda$ have degree zero and  one respectively. The nontrivial commutation 
relations between $L$ and $\Lambda$ are \[[\bm{\theta}_{\alpha},s_{\beta}]=\Gamma_{\alpha\beta}^i\varsigma_i.\]
The action of the nontrivial components of differential are given by the following formulas  \[d(\bm{\theta}_{\alpha})=s_{\alpha},d(\bD_i)=\varsigma_i.\]
The  duality is established by means of the element $e=\lambda^{\alpha}\bm{\theta}_{\alpha}+\psi^{\alpha}s_{\alpha}+x^i\varsigma_i$.


It is clear that ${\bm F}_{ij}=[\bD_i,\bD_j]$ and 
$\bchi ^{\alpha}$  belong to $TYM\overset{\ddef}{=}\bigoplus_{i\geq 3} L^i\subset L$. Moreover, they generate $TYM$ as an ideal of $YM$. More precisely, as an algebra 
$TYM$ is generated 
 by expressions $\nabla _ {i_1}\cdots \nabla _{i_n}\Phi$
 where $\Phi$ is  either  ${\bm F}_{kl}$ or $ \bchi ^{\alpha}$ and  $\nabla _i (x)= [\bD_i,x].$ In the framework of ten 
dimensional Yang-Mills theory
 we can interpret these expressions as covariant derivatives of  field strength and spinor field.
 Thus the elements of $U(TYM)$ are  algebraic abstractions of  gauge covariant local expressions.
 
\begin{proposition}\label{P:proposition3}
Inclusion  $TYM \subset (\tilde L,d)$ is a quasi-isomorphism. Likewise $U(TYM)$ and  $B^!
$ is a pair of quasi-isomorphic algebras.
\end{proposition}

One can prove that all quadratic algebras we use are Koszul algebras. 

Notice $YM$ and $TYM$ are ideals in $L$, therefore  elements $l\in L$ specify derivations $\alpha_l(x)=[l,x]$ on $YM$ and $TYM$.  This allows us to realize elements of $L$ as vector fields on the  space $Sol=Sol_N$ of solutions  of Yang-Mills equations with the gauge group $\U(N)$. (Notice that we do not identify gauge equivalent solutions.) More precisely, $L$ is realized  is a Lie subalgebra of the LIe algebra of vector fields generated by supersymmetries.  The universal enveloping algebras $U(TYM)$,  $U(YM)$ and  $U(L)$ become associative subalgebras in algebras of differential operators $Diff$ on the space of solutions.

\subsection {Calculation of the Lie algebra cohomology}\label{S:wweqe}
Computation of the Hochschild and the Lie algebra cohomology {\footnote {Definitions   of Lie algebra cohomology and
Hochschild cohomology  are given in Appendix A.  Appendix B contains a sketch of the proof of the results formulated in  present section. The book  \cite{Qalg}  contains a detailed treatment of the material presented in this section.  The book  \cite{Weibel} is a  modern elementary introduction to homological algebra that could be a good starting point for nonspecialists. }}
is seldom done with the standard complexes . The reason is that the spaces of chains  in these complexes are extremely big  and redundant . In practical computations we are interested in more manageable complexes, that  still have the same cohomology. Koszul duality theory provides us with very economic complexes. The appropriate constructions will be spelled out in this section.

   Let us consider  a graded commutative Koszul algebra $\C$ and its dual algebra $\C^!=\bigoplus_{i\geq 0}\C^!_{-i}=\bigoplus_{i\geq 0}\C^{! i}=U(\g)$ where $\g$ is a graded Lie 
algebra. Let $N=\bigoplus_{i\geq 0}N_{-i}=\bigoplus_{i\geq 0}N^{ i}$ be a graded $\g$-module (representation of $\g$). 

\begin{proposition}\label{P:tqydxc1iww} \cite{Qalg}
The  cohomology $\rH^{\bullet}(\g,N)$ is equal to the  cohomology of the complex $N_c\overset{\ddef}=N\otimes \C$ (the 
$\C$-grading  defines the  cohomological grading in the tensor product). The subscript $c$ in $N_c$ stands for cohomology. The differential $d$ is defined by the left multiplication on 

\begin{equation}\label{E:uwhswtyrt1}
e=w^{*i}\otimes w_{i}\in \C^{!1}\otimes \C_1 \subset \C^{!}\otimes \C
\end{equation}
The basis elements  $w_i \in \C_1$ act
on $\C$ by means of multiplication from the left and the action of the elements of the dual basis $w^{*i}\in (\C^1)^*\subset 
\C^!$ is defined by means of representation of $\g$ on $N$.

The subspaces $N^{\bullet}_{c\  \bm{m}}=\bigoplus_{i+j=\bm{m}}N_j\otimes \C_i$ are $d$-invariant {\footnote {To avoid a 
possible confusion of cohomological and internal homogeneous grading  we reserve  the bold index for the latter.}} and 
 $N^{\bullet}_{c}$ is the direct sum $\bigoplus_{\bm{m}}N^{\bullet}_{c\  \bm{m}}$ of subcomplexes.

  The component $\rH^{k,\bm{m}}(\g,N)$ of $k$-th cohomology group of homogeneity   $\bm{m}$ coincides with   $\rH^k(N^
{\bullet}_{c\ \bm{m}})$.
\end{proposition}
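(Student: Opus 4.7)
The strategy is to recognize the complex $(N_c^\bullet, d)$ as the one obtained by applying $\Hom_{\C^!}(-, N)$ to the Koszul resolution of the trivial $\C^!$-module, and then use the classical identification of Lie algebra cohomology with $\Ext_{U(\g)}^\bullet(\mathbb{C}, N)$. First I would invoke the Koszul property: for any Koszul quadratic algebra $A$ with quadratic dual $A^!$, the canonical element $e \in A_1 \otimes A^!_1$ of (\ref{E:diff}) squares to zero and produces a free resolution of the ground field by left $A$-modules of the form $A \otimes (A^!)^*$, with differential determined by $e$ (this is the standard Koszul bar-type resolution, and is in fact what Koszulness of $A$ is equivalent to). Applied to $A = \C^!$, whose quadratic dual is $\C$ (also Koszul, since $\C$ is), this yields a free resolution $P^\bullet = \C^! \otimes \C \twoheadrightarrow \mathbb{C}$ of the trivial $\C^!$-module.

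Second, I would compute $\Ext^\bullet_{\C^!}(\mathbb{C}, N)$ using $P^\bullet$. Since $P^\bullet$ is free over $\C^!$, $\Hom_{\C^!}(P^\bullet, N)$ simplifies to $\Hom(\C, N)$, which (using that $\C$ is graded locally finite) is naturally $N \otimes \C = N_c^\bullet$ with the stated bigrading. Under this identification the differential induced from the Koszul differential on $P^\bullet$ becomes precisely (\ref{E:uwhswtyrt1}): the factor $w_i$ acts on $\C$ by left multiplication (coming from its role on the resolution), while the factor $w^{*i} \in \C^!_1 = \g^1$ acts on $N$ through the representation $\g \to \End(N)$. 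Combining this with the standard isomorphism $H^\bullet(\g, N) \cong \Ext^\bullet_{U(\g)}(\mathbb{C}, N)$ (recall $\C^! = U(\g)$ by the commutative/Lie Koszul duality discussed at the end of Section \ref{2.1}) completes the identification. The splitting $N_c^\bullet = \bigoplus_{\bm m} N_{c\,\bm m}^\bullet$ is automatic: the element $e$ pairs $\C_1$ against $\C^!_1$ and hence carries internal degree zero, so $d$ preserves the total internal grading $\bm m$.

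The main technical obstacle is matching conventions: one must verify that the Koszul differential on the resolution $\C^!\otimes\C$, after the dualization implicit in passing to $\Hom_{\C^!}(P^\bullet,N)$, really produces the left-multiplication-times-representation formula of (\ref{E:uwhswtyrt1}) rather than some twisted or transposed version, and that the cohomological grading on $N_c^\bullet$ coming from $\C$ agrees with the homological grading on $P^\bullet$ (which is why the convention $N_i = N^{-i}$ was introduced earlier). Since detailed verifications of this sort are collected in Appendix \ref{AppendixB} on Koszul duality for $L_\infty$ and $A_\infty$ algebras, the proof reduces to citing those results for the commutative/Lie special case and unpacking the formulas.
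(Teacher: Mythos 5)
Your proposal is correct in substance but reaches the statement by a genuinely different route than the paper. The paper deduces this proposition as a special case of its Hochschild-cohomology version (Proposition \ref{P:tqydxc}): it identifies $HH^{\bullet}(A,N)$ with the cohomology of $\cobar A\otimes N$, uses the quasi-isomorphism $\cobar A\to A^!$ furnished by Koszulity (Appendix \ref{AppendixB}, Proposition \ref{E:rfjrvjdii}, via a spectral-sequence argument), and then sets $A=U(\g)$, $A^!=\C$, invoking the antisymmetrization isomorphism $HH^i(U(\g),N)\cong H^i(\g,N)$ for the bimodule structure $l\cdot n= ln-nl$. You instead stay entirely on the one-sided $\Ext$ picture: $H^{\bullet}(\g,N)=\Ext^{\bullet}_{U(\g)}(\mathbb{C},N)$ computed from the Koszul resolution of the trivial module. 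Your route is more elementary and avoids bimodules and the cobar construction altogether, at the cost of not simultaneously yielding the more general Hochschild statement, which the paper needs elsewhere. One bookkeeping error you should fix (and which you yourself flag as the convention-matching issue): the Koszul resolution of $\mathbb{C}$ over $A=\C^!$ is $\C^!\otimes\C^{*}$, not $\C^!\otimes\C$, since the general form is $A\otimes (A^!)^{*}$ and here $A^!=\C$; correspondingly $\Hom_{\C^!}(\C^!\otimes\C^{*}_j,N)\cong N\otimes\C_j$ (not ``$\Hom(\C,N)\cong N\otimes\C$'', which would instead give the homology complex $N\otimes\C^{*}$ of Proposition \ref{P:iqwwstiwww2}). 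With the dual placed correctly the induced differential is exactly multiplication by $e=w^{*i}\otimes w_i$ with $w_i$ acting on $\C$ by left multiplication and $w^{*i}$ acting on $N$ through the representation, and the degree-zero character of $e$ gives the splitting over $\bm{m}$ as you say.
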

There exists a similar statement for Lie algebra homology. The complex $N^{\bullet}_{h}=N\otimes \C^*$ is the direct sum 
of subcomplexes $N^{\bullet}_{h}=\bigoplus_{\bm{m}}N^{\bullet}_{h\ \bm{m}}$. The homological grading on $N^{\bullet}_
{h\ \bm{m}}$ is defined as follows:
\begin{equation}\label{E:homol2}
\begin{split}
&N^{\bullet}_{h\ \bm{m}}=(N_{m_0}\otimes \C^*_{m-m_0}\overset{d^{m_0}}\rightarrow \dots N_{0}\otimes \C^*_{m}
\overset{d^0}\rightarrow  \dots N_{m-1}\otimes \C^*_{1} \overset{d^{m-1}}\rightarrow N_{m}\otimes \C^*_{0})\\
&(N\otimes \C^*)_{\bm{m}}= N_{m}\otimes \C^*_{0}\leftarrow N_{m-1}\otimes \C^*_{1}\leftarrow \cdots
\end{split}
\end{equation}

\begin{proposition}\label{P:iqwwstiwww2} \cite{Qalg}
There is an isomorphism $\rH_{\bullet}(\g,N)\cong \rH_{\bullet}(N\otimes \C^*)$ 
and its refinement \[\rH_{k,\bm{m}}(\g,N)\cong \rH_k((N\otimes \C^*)_{\bm{m}}).\]

\end{proposition}

Let $\C$ be $\Ss$, $\C^!$ be $U(L)$ and $N$ be an $L$-module. Propositions \ref{P:tqydxc1iww},\ref{P:iqwwstiwww2}  give us  alternative complexes for computation of    $\rH^{\bullet}(L,N)$, $\rH_
{\bullet}(L,N)$.
\begin{corollary}\label{P:tqydxc1} \cite{Qalg}
The  cohomology $\rH^{\bullet}(L,N)$ is equal to the  cohomology of the complexes $N_c\overset{\ddef}=N\otimes \Ss$. 
The differential is a multiplication by
\begin{equation}\label{E:e}
e=\lambda^{\alpha}\bm{\theta}_{\alpha}.
\end{equation} 
The cohomological grading coincides with the grading of $\Ss$-factor. The total degree is preserved by $d$.
 The complex $N_c$ splits according to degree: 
\begin{equation}\label{E:fdgisevi}
\begin{split}
&N^{\bullet}_{c\  \bm{m}}=(N_{m}\otimes \Ss_0\rightarrow N_{m+1}\otimes \Ss_1 \rightarrow\dots )\\
&(N\otimes \Ss)^{\bm{m}}= N^{m}\otimes \Ss_0\rightarrow N^{m+1}\otimes \Ss_1 \rightarrow\cdots  
\end{split}
\end{equation}
The complex $N^{\bullet}_{c\  \bm{m}}$ is defined for positive and negative $\bm{m}$, we assume that $N_m=0$ if 
$m<m_0$.
Then $\rH^{k,\bm{m}}(L,N)=\rH^k(N^{\bullet}_{c\ \bm{m}})$.
\end{corollary}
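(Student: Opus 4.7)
The plan is to deduce this corollary directly from Proposition \ref{P:tqydxc1iww} by specializing to $\C = \Ss$ and $\g = L$. First I would verify that the hypotheses of that proposition are met: $\Ss$ is graded commutative (it is a polynomial algebra modulo the pure spinor relations, which are quadratic relations among commuting generators), it is Koszul (stated just before the corollary), and its quadratic dual is $U(L)$ (established in Section \ref{2.1} via the identification of the relations (\ref{E:nvpri}) as the annihilator of the pure spinor relations under the canonical pairing). Thus Proposition \ref{P:tqydxc1iww} applies and yields $H^{\bullet}(L,N) \cong H^{\bullet}(N \otimes \Ss)$.

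Next I would transcribe the canonical differential (\ref{E:uwhswtyrt1}) into the concrete generators at hand. The basis $w_i$ of $W = \Ss_1$ is $\lambda^{\alpha}$, and the dual basis $w^{*i} \in (\Ss_1)^* \subset \Ss^! = U(L)$ is $\bm{\theta}_{\alpha}$ (this is just the definition that made (\ref{E:commutator}) consistent with the pure spinor relation). Substituting gives $e = \lambda^{\alpha} \otimes \bm{\theta}_{\alpha}$, which acts on $N \otimes \Ss$ by left multiplication by $\lambda^{\alpha}$ on the $\Ss$-factor and by the representation action of $\bm{\theta}_{\alpha}$ on $N$, i.e.\ as multiplication by $e = \lambda^{\alpha}\bm{\theta}_{\alpha}$ as claimed. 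Since the cohomological degree in the general proposition is the $\C$-degree, here it becomes the $\Ss$-degree.

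Finally I would derive the splitting (\ref{E:fdgisevi}) from the splitting by total degree in the general proposition, being careful with the grading convention. The general proposition writes the summand $\C_i \otimes N_j$; here the only subtlety is that $L$ is graded using the convention $L_n = L^{-n}$, so the action of $\bm{\theta}_{\alpha} \in L^1$ on $N$ raises the internal grading of $N$ by one while the multiplication by $\lambda^{\alpha} \in \Ss_1$ raises the $\Ss$-grading by one; once this is tracked, the differential is seen to preserve the subspace indexed by $\bm{m}$ appearing in (\ref{E:fdgisevi}), and the complex $N_c$ decomposes as the direct sum of the $N^{\bullet}_{c\ \bm{m}}$. The only mild obstacle is bookkeeping for the two conflicting grading conventions on $L$; once that is settled, the identification $H^{k,\bm{m}}(L,N) = H^k(N^{\bullet}_{c\ \bm{m}})$ is immediate from Proposition \ref{P:tqydxc1iww}.
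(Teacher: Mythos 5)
Your proposal is correct and follows exactly the route the paper intends: the corollary is presented as an immediate specialization of Proposition \ref{P:tqydxc1iww} to $\C=\Ss$, $\g=L$ (using Koszulness of $\Ss$ and $\Ss^!=U(L)$), with the canonical element $e=w^{*i}\otimes w_i$ becoming $\lambda^{\alpha}\bm{\theta}_{\alpha}$ and the splitting by total degree carrying over directly. Your attention to the grading conventions is the only nontrivial bookkeeping, and you handle it correctly.
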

There is also a degree decomposition in homology $\rH_k(L,N)=\bigoplus_{\bm{m}} \rH_{k,\bm{m}}(L,N)$.
\begin{corollary}\label{P:iqwwst2} \cite{Qalg}
The homology $\rH_{\bullet}(L,N)$ is equal to the cohomology of the complex $N_h\overset{\ddef}=N\otimes \Ss^*$. 
The space $\Ss^*=\bigoplus_{n\geq 0} \Ss^*_n$ is an $\Ss$-bimodule dual to  $\Ss$. 
The differential is a multiplication by $e$ (\ref{E:e}). 
The homological degree coincides with the grading of  $\Ss^*$-factor. 
The complex $N_h$  splits :
\begin{equation}\label{E:homol21}
\begin{split}
&N^{\bullet}_{h\ \bm{m}}=N_{m_0}\otimes \Ss^*_{m-m_0}\overset{d}\rightarrow \dots N_{0}\otimes \Ss^*_{m}\overset{d}
\rightarrow  \dots N_{m-1}\otimes \Ss^*_{1} \overset{d}\rightarrow N_{m}\otimes \Ss^*_{0}
\end{split}
\end{equation}

and $\rH_{k,\bm{m}}(L,N)=\rH^{m-k}(N^{\bullet}_{h\ \bm{m}})$.

\end{corollary}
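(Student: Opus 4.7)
The plan is to derive Corollary~\ref{P:iqwwst2} as a direct specialization of Proposition~\ref{P:iqwwstiwww2}. The inputs for that general statement are: a graded commutative Koszul algebra $\C$, its dual $\C^! = U(\g)$ for a graded Lie algebra $\g$, and a graded $\g$-module $N$. The whole point of the calculation in Section~\ref{2.1} is that the pair $\C = \Ss$, $\g = L$ fits this setup exactly: $\Ss$ is graded commutative and Koszul (as asserted after Proposition~\ref{P:proposition3}), and its dual quadratic algebra is $U(L)$, where $L$ is the graded Lie algebra generated by $\bm{\theta}_1,\dots,\bm{\theta}_{16}$ with relations~\eqref{E:nvpri}.

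First I would apply Proposition~\ref{P:iqwwstiwww2} to obtain $H_\bullet(L,N)\cong H_\bullet(N\otimes \Ss^*)$ as abstract graded vector spaces, with differential specified by the canonical element in $\C^1\otimes (\C^1)^*\subset \C\otimes\C^!$. Next I would pin down what this canonical element is in our concrete case. A basis for $\Ss^1$ is given by the pure spinor generators $\lambda^\alpha$, and by construction of $\Ss^!=U(L)$ the dual basis in $L^1\subset U(L)^1$ is $\bm{\theta}_\alpha$. Plugging into the general formula~\eqref{E:uwhswtyrt1} (now with the roles of $\C$ and $\C^!$ swapped so that $\C^!_1$ acts on $N$ and $\C^1$ acts on the other tensor factor, which is $\Ss^*$ rather than $\Ss$), the differential becomes multiplication by $e = \lambda^\alpha\bm{\theta}_\alpha$, which is precisely~\eqref{E:e}. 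Here the $\lambda^\alpha$ act on $\Ss^*$ through the $\Ss$-bimodule structure dual to the multiplication on $\Ss$, and $\bm{\theta}_\alpha\in L$ acts on $N$ via the given representation.

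The homological grading and its refinement by internal degree then follow mechanically: the $\Ss^*$-tensor factor carries the homological degree, and fixing total internal degree $\bm{m}$ cuts out the finite subcomplex~\eqref{E:homol21}, because $N_k=0$ for $k<m_0$ forces the sum $k+(\bm{m}-k)=\bm{m}$ to terminate on the left, and $\Ss^*$ is concentrated in non-negative grading. The identification $H_{k,\bm{m}}(L,N)=H^{m-k}(N^\bullet_{h\ \bm{m}})$ is then just the cohomological-vs-homological indexing shift already present in Proposition~\ref{P:iqwwstiwww2}.

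The only real subtlety, and the one place I would be careful, is keeping the grading conventions straight: the paper explicitly warns (in the paragraph following~\eqref{E:nvpri}) that the grading on $L$ used in the body of the paper differs by a sign from the one used in Appendix~\ref{AppendixB}, via $L_n=L^{-n}$, and that $\bm{\theta}_\alpha$ has degree $1$ while $\lambda^\alpha$ has degree~$2$ in a different normalization. This is the main bookkeeping obstacle: one must check that under these conventions the canonical tensor $e$ really lives in the expected bidegree so that the differential preserves total internal degree $\bm{m}$ and raises homological (respectively lowers cohomological) degree by one. Once the sign and grading translation is done, the corollary drops out.
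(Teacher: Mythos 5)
Your proposal is correct and follows exactly the route the paper intends: Corollary~\ref{P:iqwwst2} is the direct specialization of Proposition~\ref{P:iqwwstiwww2} to $\C=\Ss$, $\g=L$ (using that $\Ss$ is Koszul with $\Ss^!=U(L)$), with the canonical tensor becoming $e=\lambda^{\alpha}\bm{\theta}_{\alpha}$ and the degree splitting and index shift read off from the general statement. Your attention to the grading conventions is the right place to be careful, and nothing further is needed.
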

 
The Propositions \ref{P:proposition3},\ref{P:tqydxc1iww} are particular cases of more general statements formulated in terms of Hochschild cohomology and homology (see Appendix \ref{AppendixB} ,  Propositions \ref{P:tqydxc},\ref{P:iqwwst})

\subsection {The group $\Spin(10, \mathbb{C})$ and the space of pure spinors} \label {2.3}

The complex group $\Spin(10, \mathbb{C})$ acts transitively on the projective space of pure spinors  $\mathcal{Q} $ ; 
the stabilizer   of a point is a parabolic subgroup and parabolic subgroup corresponding to different points are conjugated.  To describe the
Lie algebra $\mathfrak{p}$ of one of such parabolic subgroups $P $ we notice that the Lie algebra $
\mathfrak{so}(10, \mathbb{C}) $ of $\SO(10, \mathbb{C}) $ can be identified
with ${\Lambda}^2V $. This is the same as a  the space of antisymmetric tensors ${\rho}_{a
b} $ where $a, b=1, \dots, 10$, if a  basis $(v_1, \dots, v_{10})$ in $V$ is fixed. The vector representation $V$ of $\SO(10, 
\mathbb{C}) $ splits upon restriction to $\GL(5, \mathbb{C}) \subset \SO(10, \mathbb{C})$
into the direct sum $W+W'$  of vector and
representation $W'$ isomorphic to covector representations $W^*$. We shall not make a distinction between $W'$ and $W^*$.
The Lie algebra of $\SO(10,\mathbb{C})$ can be decomposed into a sum $\Lambda^2 W + \mathfrak{p} $ of vector spaces.  Subspace $\mathfrak{p }=W \otimes W^* + \Lambda^2W^* $
is the Lie algebra of $P$. 
Suppose that $(v_1, \dots, v_{5})$ is a basis $W$ and $(v_6, \dots, v_{10})$ in $W^*$.
Using the language of generators we
can say that the Lie algebra $\mathfrak{so}(10, \mathbb{C}) $ is generated by
skew-symmetric tensors $m_{ab}, n^{ab} $ and by $k_a^b $ where $
a, b=1, \dots, 5 $. These are blocks of ${\rho}_{a
b} $  defined by the chosen partition of the basis. The subalgebra $\mathfrak{p}$ is generated by $k_a^b $
and $n^{ab} $. Corresponding commutation relations are 
\begin{equation}\label{E:adjoint}
\begin{split}
& [m, m^{\prime}]=[n, n^{\prime}]=0 \\
&[m, n]_a^b=m_{ac}n^{cb} \\
&[m, k]_{ab}=m_{ac}k_b^c+m_{cb}k_a^c \\
&[n, k]_{ab}=n^{ac}k_c^b+n^{cb}k_c^a
\end{split}
\end{equation}
Homomorphism $ \Spin(10, \mathbb{C})\rightarrow \SO(10)$ defines cover in two sheets. Let $\widetilde{P}$ be the preimage of $P$ in $ \Spin(10, \mathbb{C})$ under this projection. The group $\widetilde{P}$ contains a double sheeted cover $\widetilde{\GL}(5, \mathbb{C})$ of $\GL(5, \mathbb{C})$.
There exists one-to-one correspondence between $\Spin(10, \mathbb{C})$-homogeneous holomorphic vector bundles over $
\mathcal{Q}$ and complex representations of $\widetilde{P}$ (see e.g. \cite{bott}) in the fiber over the $P$-fixed point. 
 A  one-dimensional representation of $\widetilde{P}$ corresponding to the line bundle 
$\O (k)$ over $\mathcal{Q}$ will be denoted $\mu
_k$. In this representation $g\in \widetilde{\GL}(5, \mathbb{C})$ acts by multiplication on  $\sqrt{\det}^k(g)$.

 The space of spinors can be 
embedded into Fock space $\mathcal{F}$(see \cite{Chevalley} for mathematical account).
This Fock space is a representation of canonical anti-commutation relations $a_ia_j+a_ja_i=0, a^*_ia^*_j+a^*_ja^*_i=0, a_ia^*_j+a^*_ja_i=\delta _{ij}
$ $i,j=1,\dots,5$. The cone $\mathcal{C}\backslash\{0\}$  can be realized as the orbit of Fock vacuum with respect to the action of the group of linear 
canonical transformations
(transformations preserving anti-commutation relations), that preserve chirality.  For every vector $x \in \mathcal{F}$ 
we consider the subspace $W^*(x)$ of the space $V$ of linear combinations $A= \sum \rho ^ia_i+
\sum \tau ^j a^*_j$ obeying $Ax=0$. For $x\in \mathcal{Q}$
the subspace $W^*(x)$ is five-dimensional. The subspaces $W^*(x)$ specify a $\Spin(10)$-invariant vector bundle over 
$\mathcal{Q}$ that will be denoted by $\W^*$;  corresponding representation of $P$ will be denoted by $W^*$.
The bundle over $\mathcal{Q}$ having fibers $V/W^*(x)$ will be denoted by $\W$; corresponding representation of $P$ will be 
denoted by $W$. 

Notice that $\Spin(10)$-representation contents of first two components of $L$ is 
\begin{align}
& L^1=[0,0,0,1,0]\label{E:repa}\\
& L^2=[1,0,0,0,0]\label{E:repb}\\
& L^3=[0,0,0,0,1]\label{E:repc}\\
& L^4=[0,1,0,0,0]\label{E:repd}\\
&\dots
\end{align}

And this is how they split as  $\widetilde{\GL}(5)$-representations :
\begin{align}
&L^1=\mu _{-1} +\Lambda^2(W)\otimes \mu _{-1}+\Lambda^4(W)\otimes \mu _{-1}
\label{E:decompa}\\
&L^2=W^*+W  \label{E:decompb}\\
&L^3=\Lambda^4(W^*)\otimes \mu _1+\Lambda^2(W^*)\otimes \mu _1+\mu _1\cong\notag\\
&\cong W\otimes \mu_{-1}+\Lambda^3(W)\otimes \mu_{-1}+\mu _1
\label{E:decompc}\\
&L^4=\Lambda^2(W)+\Lambda^2(W^*)+W\otimes W^*\label{E:decompd}\\
&\dots
\end{align}
The above formulas are written in such a way that the leftmost  summand in every line is a representation of $\widetilde{P}$; the same is 
true for the sum of first two leftmost summands.

\subsection {Euler characteristics}\label{S:Euler}

 Statements formulated in  Section \ref{S:wweqe} permit us to calculate the Euler characteristics of $\rH^{\bullet}(L,N)$ and $\rH_{\bullet} 
(L,N)$. 

Suppose  a complex of vector spaces
$K=\sum K^k$  has  finite number of finite-dimensional cohomology groups $\rH=\sum \rH^k $. Then  the Euler characteristic  can be defined as $\chi =\chi (\rH)=\sum (-1)^k \dim \rH^k$. If  $K$ has 
finite number of  finite-dimensional  components $K^k$ then   
\begin{equation}
\label{ch}
\chi (\rH)=\sum (-1)^kd_k
\end{equation}  where $d_k=\dim K^k.$

A Lie group $G$ action on  complex $K^i$  by operators $T_{K^i}(g), g\in G$ descends  to  
cohomology. The Euler characteristic $\chi_K(g)$ is defined as the alternating sum of characters $\sum (-1)^k \tr T_{\rH^k}(g)$.
The relation (\ref {ch}) remains correct in this more general situation after appropriate modifications, namely
$d_k$ gets replaced by $\tr T_{K^i}(g)$ .{\footnote { Using the notion of virtual representations ( see Appendix I) we can define the Euler characteristic of  graded representation $K=\sum K^i$ as an alternating sum $\chi_ K=\sum (-1)^iK^i.$}}

A bit of terminology:  if a group $G$ acts on a graded vector space $A=\bigoplus_{i\geq n_0}A_i$  by linear automorphisms $T_{A_i}(g)$ then $G$-equivariant  Poincar\'{e} series  is defined as a generating function of characters:
$$A(g,t)=\sum_{i\geq n_0}\tr T_{A_i}(g)t^i.$$  (In the case of trivial group action this definition is equivalent to the standard definition:  $A(t)=\sum_{i\geq n_0}\dim A_it^i$.) 

In our applications module $N$ over Lie algebra $L$ is a graded  $\Spin (10)$ representation. The  $\Spin (10)$-action on $N$ is compatible with the action  on $L$. 
We will give an expression of  $\chi _{\rH^{\bullet} (L,N)}(g,t)$ and of $\chi_{ \rH_{\bullet} 
(L,N)}(g,t)$,  in terms of   Poincar\'e series $N(g,t)$, $\Ss(g,t)$.  

Let us sketch the derivation in more general case when $N$ is a graded module over graded Lie algebra $\g$ and a group $K$ acts on $\g$ and $N$ in compatible way. (The Poincar\'e series of $\g$ and $N$  and Euler characteristic of $\rH^k(\g, N)$  depend on $g\in K$ and on $t.$)

The Lie algebra cohomology $\rH^k(\g, N)$ coincides with Hochschild cohomology $\rH^k(U(\g),N)$ and therefore can be computed with the normalized Hochschild complex $C^k(U(g),N)=\Hom(I^{\otimes k},N)$ where
$I$ is the set of elements of the universal enveloping $U(\g)$ without the constant  term.  Formally we can express the Euler characteristic in terms of  Poincar\'e series for $N$ and $U(\g)$:
\[\chi_{\rH^{\bullet}(\g, N)}=\frac{N(g,t)}{U(\g)(g^{-1},t^{-1})}\]
In the derivation of this relation we use the formula 
$$\frac{1}{u}=\frac{1}{1+(u-1)}=\sum (-1)^k(u-1)^k.$$
Likewise homology $\rH_k(\g, N)$ can be computed with the normalized Hochschild complex $C_k(U(g),N)=I^{\otimes k}\otimes N$ and 
\[\chi_{\rH_{\bullet}(\g, N)}=\frac{N(g,t)}{U(\g)(g^{},t^{})}\]

For us this means that 
\begin{equation}\label{E:ourcharacters}
\begin{split}
&\chi_{\rH^{\bullet}(L, N)}(g,t)=\frac{N(g,t)}{U(L)(g^{-1},t^{-1})},\\
&\chi_{\rH_{\bullet}(L, N)}(g,t)=\frac{N(g,t)}{U(L)(g,t)}.
\end{split}
\end{equation}

The expression for the Poincar\'e series ${U(L)(g,t)}$ will be derived at the end of this section.

We have mentioned already that the $n$-th graded component of $\Ss$ is an irreducible representation of $\Spin (10)$ with Dynkin label $[0,0,0,n,0]$. Using this fact one can express $\Ss(g,t)$ in terms of the character $V(g)$ of vector representation $V$ and the character $S(g)$ of the spinor representation $S$ labelled by $[0,0,0,1,0]$. The character of the algebra   $\Ss$ is given by the formula
\begin{equation}\label{E:characterSs}
 \Ss(g,t)=(1-V(g)t^2+S(g^{-1})t^3-S(g^{})t^5+V(g)t^6-t^8)\Sym S(g,t).
 \end{equation}
There are several ways to prove this important formula.
The  homological approach of \cite{CortiReid} is based on construction of a resolution of $\Ss$ as $\Sym S$-module. Another way of tackling the problem is to use fixed point formula \cite {BerNek}. Standard monomial theory (see  \cite{StrMov} ) can be utilized to arrive at the same result.
The information about $\Ss$ permits us to
analyze the structure of Koszul dual algebra $U(L)$ using the following general statement.

 \begin{proposition}\label{P:genfun}
 Let $A$ be a Koszul algebra, equipped with an action of a group $G$. Then the group $G$ also acts on $A^!$ and there is an equality 
 \begin{equation}\label{E:Koszuldual}
 A(g,t)A^!(g^{-1},-t)=1
 \end{equation}
  \end{proposition}
\begin{proof}
It is a trivial adaptation of the proof  \cite{Priddy} to  the case of algebra with $G$-action.  
The complex $A\otimes (A^!)^*$ has trivial cohomology  by 
the definition of Koszul algebra. It decomposes into a direct sum of acyclic complexes $K_n=\bigoplus_{i+j=n} A_i\otimes 
(A^{!}_j)^*$. The generating function of Euler characteristics of $K_n$ is equal to the constant function $1$. But it also 
equals to the product of the generating functions $A(g,t)A^!(g^{-1},-t)$. (We use here the fact that the character of the 
dual representation is expressed in terms of the character $\rho (g)$ of original representation as $\rho (g^{-1})$.)
\end{proof}

Note that $V(g)=V(g^{-1})$, because the ten-dimensional tautological  representation of $\SO(10)$ is self-dual. For spinors we have  $S^*(g)=S(g^{-1})$.
\begin{corollary}
The Poincar\'e series $U(L)(t,g)$ of the universal enveloping $U(L)$ is equal to 
\[U(L)(g,t)=\frac{\Lambda S^* (g,t)}{1-V(g^{-1})t^2-S(g^{-1})t^3+S(g)t^5+V(g)t^6-t^8}\]
\end{corollary}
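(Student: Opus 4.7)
The plan is to deduce this directly from Proposition \ref{P:genfun} applied to $A = \Ss$, whose Koszul dual (as noted in Section \ref{2.1}) is $A^! = U(L)$, combined with the character formula for $\Ss$ stated just above the corollary.

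First, I invoke Proposition \ref{P:genfun} to obtain the identity $\Ss(t,g)\cdot U(L)(-t, g^{-1}) = 1$. Substituting $t \mapsto -t$ and $g \mapsto g^{-1}$ in this relation isolates the desired Hilbert series:
\[U(L)(t,g) = \frac{1}{\Ss(-t, g^{-1})}.\]
Thus the task reduces to inverting $\Ss(-t, g^{-1})$ using the given factorization $\Ss(g,t) = P(g,t)\cdot \Sym(S)(t,g)$, where $P(g,t) = 1 - V(g)t^2 - S(g)t^3 + S(g^{-1})t^5 + V(g^{-1})t^6 - t^8$.

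Next, the substitution $t \mapsto -t$, $g \mapsto g^{-1}$ turns $P(g,t)$ into exactly the polynomial displayed as the denominator in the corollary, while the symmetric-algebra factor becomes $\Sym(S)(-t, g^{-1})$. I would then apply the elementary identity $\Sym(V)(-t,g)\cdot \Lambda(V)(t,g) = 1$ (both sides equal $\det(1 - tg|_V)^{\mp 1}$) with $V = S$ and $g$ replaced by $g^{-1}$, yielding $1/\Sym(S)(-t, g^{-1}) = \Lambda(S)(t, g^{-1})$. Under the identification of the chiral spinor representation with its dual that the paper's notation tacitly uses (compare the appearance of both $S(g)$ and $S(g^{-1})$ in the formula for $\Ss$), this is $\Lambda(S)(t,g)$, and combining the pieces yields the claimed rational expression.

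There is essentially no conceptual obstacle: Proposition \ref{P:genfun} does all the real work, and the rest is bookkeeping in the representation ring of $\Spin(10)$. The one mildly delicate point is keeping track of the inversions $g \mapsto g^{-1}$ and sign flips $t \mapsto -t$ together with the $\Sym$-versus-$\Lambda$ duality, so that the numerator emerges as $\Lambda(S)(t,g)$ in the final formula.
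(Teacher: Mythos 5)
Your proposal is correct and is essentially the paper's own proof: the paper likewise applies Proposition \ref{P:genfun} to the Koszul pair $(\Ss,U(L))$ and then to $\Sym(S)^!\cong\Lambda(S^*)$, the latter being exactly your $\Sym$--$\Lambda$ determinant identity. The one caveat, which you correctly flag, is that the half-spin representations of $\Spin(10)$ are not self-dual, so the numerator strictly comes out as $\Lambda(S^*)(t,g)=\Lambda(S)(t,g^{-1})$; the corollary as stated shares this notational looseness about which spinor representation $S$ denotes.
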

\begin{proof}
Apply  Proposition \ref{P:genfun} to  algebras $\Ss$, $\Ss^!\cong U(L)$   and $\Sym(S)^{!}\cong \Lambda
(S^*)$.
\end{proof}
Notice that instead of working  with characters we could work with virtual representations and representation rings (see Appendix I). Then the above formula looks nicer:
$$U(L)=\frac{\lambda_t (S^*)}{1-Vt^2-S^*t^3+St^5+Vt^6-t^8.}$$
\section{Infinitesimal SUSY Deformations of $\L_{SYM}$}\label{S:fdgdfgmjl}

Let us consider an infinitesimal deformation $\dL$ of a Lagrangian $\L$. If an infinitesimal deformation $\ddL$ is obtained 
from $\dL$ by means of a field redefinition then  the action functionals corresponding to infinitesimal
deformations $\dL,\ddL$ coincide on solutions of EM for $\L$.  The converse statement is also true. 
Therefore we will identify infinitesimal deformations $\dL$ and $\ddL$ of $\L_{SYM}$ if   $\ddL=\dL$  on 
the solutions of EM for $\L_{SYM}.$
{\footnote {To reach a better understanding of the above statements we will discuss a finite-dimensional analogy.

Any function
\begin{equation}\label{E:pgerm}
f:\mathbb{C}^n\rightarrow \mathbb{C}
\end{equation}
 can be deformed by adding an arbitrary function $g$ multiplied by an infinitesimal parameter. It is not true however that 
the space of deformations of $f$ coincides with the space $\hat\O$ of all $g$. The reason is that there are trivial 
deformations of $f$ obtained by a change of parametrization of $\mathbb{C}^n$. A vector field  $\xi$ on $\mathbb{C}^n$ 
defines an infinitesimal change of coordinates, under which $f$ transforms to $f_{\xi}=\xi ^i \frac{\partial f}{\partial xi}$. 
The space $Vect f$ of functions $f_{\xi}$ forms a subspace of  trivial infinitesimal  deformations. The quotient $\hat\O/
Vect f$ is the formal tangent space to the space of nontrivial deformations of  (\ref{E:pgerm}).

Under some conditions of regularity one can identify $\hat\O/Vect f$ with the algebra of functions on the set of critical 
points of $f$. (If this set is considered as a scheme the conditions of regularity are not necessary.)

In field theories this identification corresponds to identification of off-shell classes of deformations of an action functional 
with the deformations considered on shell (on the solutions of EM). }}

We will be interested in deformations of SYM that are defined simultaneously for all gauge groups $\UN$.  Let us 
consider first the  Lagrangian $\L_{SYM}$ reduced to a point.  The deformation of the kind we are interested in   are 
single-trace deformations: they can be represented in  the form $\tr \Lambda $, where $\Lambda$ is an arbitrary 
 non-commutative polynomial  in terms of the fields of the reduced theory.  The fields form an array of $N\times N$  matrix variables  $A_1,\dots,A_{10},\chi^{1},\dots,\chi^{16}$  of suitable parity in the theory with the gauge group 
$\UN$. Reality conditions are left out of scope of  our analysis, and we simply consider fields as complex 
matrices , i.e. as elements  of $\Mat_N$. ( This means that the gauge group $\UN$ is  extended to its complexification $GL(N)$.)   We are working with all of these groups simultaneously, hence we consider the fields as formal 
non-commuting variables, i.e. as generators of free graded associative algebra .  More precisely, we consider free graded associative algebra $\mathcal{A}$ generated by symbols of  fields 
\begin{equation}\label{E:variables}
\bm{D_i}, \bm{\chi}^{\alpha}.
\end{equation}

The space of single-trace deformations can be identified with $\mathcal{A}/[\mathcal{A},\mathcal{A}]$ .  {\footnote { We can identify this space also with the space of cyclic words in the alphabet where letters 
correspond to the fields.}}  (A non-commutative polynomial of matrix variables  $A_1,\dots,A_{10},\chi^{1},\dots,\chi^{16}$ can be considered as an element of $\mathcal{A}$. However, the (super)trace of (super) commutator vanishes, therefore the trace of this polynomial can regarded as an element of $\mathcal{A}/[\mathcal{A},\mathcal{A}]$ .)The natural map  $\mathcal{A}\to\mathcal{A}/[\mathcal{A},\mathcal{A}]$ will be denoted by $\tr$ .) However, we should take into account that the deformations can be equivalent (related by a 
change of variables). As we have seen this means that the deformations coincide  on shell. In other words,  the 
space of equivalence classes of deformations can be identified with $U(YM)/[U(YM),U(YM)]$ where  $U(YM)$ can be 
interpreted
as an associative algebra generated by the fields of SYM theory reduced to a point  with  relations coming from the 
equations  of motion (see Section \ref{S:introduction} and Section \ref{S:Preliminaries} for more detail). 
\footnote{This space has also interpretation in terms of Hochschild  homology $ \rH_0(U(YM),U(YM)) $ or in terms of cyclic homology.} 

Similar results are true for non-reduced SYM theory.   In this case we consider the Lagrangian as a gauge invariant local expression (a trace of gauge covariant local expression). We are working with all groups $\UN$ simultaneously, hence we are writing a Lagrangian in the form $\tr H$ where $H\in \U(TYM)$  and $\tr$ stand for the natural homomorphism $\tr : \U(TYM)\to [ \U(TYM),  \U(TYM)].$ Notice, that in this case the Lagrangian is defined up to a total derivative.{\footnote {We say that a function on $
\mathbb{R}^n$ is a total derivative if it can be represented in the form $\pr{x^i}H^i$. In more invariant way one can say 
that the differential form of degree $n$ corresponding to this function should be exact.  
Saying that Lagrangian is defined up to a total derivative we have in mind that adding $\pr{x^i}H^i$ where $\rH^i$ are gauge invariant local expressions we obtain an equivalent Lagrangian.}}  

The supersymmetry transformations $\theta_{\alpha}$ act in natural way on all spaces we have considered.

We are saying that infinitesimal deformation $\dL$ of reduced SYM theory is supersymmetric if $\theta_{\alpha}\dL$  vanishes  
 on equations of motion of $\L_{SYM}$.  For non-reduced theory a  deformation is supersymmetric if $\theta_{\alpha}\dL$  is a total derivative 
 on equations of motion of $\L_{SYM}$.

 Poincar\'e 
invariance is defined in a similar way.

There exists an infinite number of infinitesimal super Poincar\'e  invariant deformations. Most of them are given by a 
simple general formula  below, but  for the non-reduced theory there are three  exceptional deformations which do not fit into this formula. The first  
was discussed earlier in \cite{Berg}. 

\begin{equation}\label{E:dvsdjvh}
\begin{split}
&\dL_{16}(\nabla,\chi)=\tr\Bigg(  \frac{1}{8} F_{mn}F_{nr}F_{rs}F_{sm}- \frac{1}{32} \left( F_{mn}F_{mn}\right)^2\\ 
&+i\frac{1}{4} \chi^{\alpha} \Gamma_{m\alpha \beta} ({\nabla}_n\chi^{\beta}) F_{mr}F_{rn}  \\
&-i\frac{1}{8} \chi^{\alpha} \Gamma_{mnr\alpha \beta} ({\nabla}_s\chi^{\beta}) F^{mn} F_{rs}\\
&+ \frac{1}{8} \chi^{\alpha} \Gamma^m_{\alpha \beta} ({\nabla}_n\chi^{\beta})\chi^{\Gamma} \Gamma_{m\Gamma\delta} 
({\nabla}_n\chi^{\delta}) \\
&- \frac{1}{4} \chi^{\alpha} \Gamma^m_{\alpha\beta} ({\nabla}_n\chi^{\beta})\chi^{\gamma} \Gamma_{n\gamma\delta} 
({\nabla}_m\chi^{\delta}) \Bigg)
\end{split}
\end{equation}
It is convenient to introduce a grading on the space of fields (\ref{E:cvcjdfj}). We suppose that grading is multiplicative and $
\deg(\nabla_i)=2$, $\deg(\chi^{\alpha})=3$. This grading is related to the grading  with respect to $\alpha'$, that comes 
from string theory,  by the formula 
\begin{equation}
\label{alp}
\deg_{\alpha'}=\frac{\deg-8}{4}.
\end{equation} 

Subscript in $\dL_{16}$ in the formula (\ref{E:dvsdjvh}) stands for  the grading of infinitesimal Lagrangian. Lagrangian $
\dL_{16}$ is a  super Poincar\'e invariant deformation  of  lowest possible degree.\footnote{We will  treat the  truly lowest 
order deformation $\dL=\L_{SYM}$ as trivial.} The next linearly independent infinitesimal  super Poincar\'e invariant 
deformation (of degree 20) was found in \cite{Collinucci}. It has the following Lagrangian

\begin{eqnarray}\label{E:dvsdjvh111}
\dL_{20}(\nabla,\chi) &=& f^{XYZ}f^{VWZ}\bigg[2\,F_{ab}{}^{X}F_{cd}{}^{W}
\nabla_eF_{bc}{}^{V}\nabla_eF_{ad}{}^{Y}
-2\,F_{ab}{}^{X}F_{ac}{}^{W}
\nabla_dF_{be}{}^{V}\nabla_dF_{ce}{}^{Y}
\nonumber\\
&&\qquad +F_{ab}{}^{X}F_{cd}{}^{W}
\nabla_eF_{ab}{}^{V}\nabla_eF_{cd}{}^{Y}
\nonumber\\
&&\qquad -4\,F_{ab}{}^{W}
\nabla_cF_{bd}{}^{Y}\chi^{\alpha X}\Gamma_{\alpha \beta a}\,\nabla_d\nabla_c\chi^{\beta V}
-4\,F_{ab}{}^{W}
\nabla_cF_{bd}{}^{Y}\chi^{\alpha X}\Gamma_{\alpha \beta d}\,\nabla_a\nabla_c\chi^{\beta V}
\nonumber\\
&&\qquad +2\,F_{ab}{}^{W}
\nabla_cF_{de}{}^{Y}\chi^{\alpha X}\Gamma_{\alpha \beta ade}\,\nabla_b\nabla_c\chi^{\beta V}
+2\,F_{ab}{}^{W}
\nabla_cF_{de}{}^{Y}\chi^{\alpha X}\Gamma_{\alpha \beta abd}\,\nabla_e\nabla_c\chi^{\beta V}
\bigg]\ +
\nonumber
\end{eqnarray}
\begin{eqnarray}
&& +\ f^{XYZ}f^{UVW}f^{TUX}\bigg[4\,
  F_{ab}{}^{Y}F_{cd}{}^{Z}F_{ac}{}^{V}F_{be}{}^{W}F_{de}{}^{T}
+2\,F_{ab}{}^{Y}F_{cd}{}^{Z}F_{ab}{}^{V}F_{ce}{}^{W}F_{de}{}^{T}
\nonumber\\
&&\qquad -11\,F_{ab}{}^{Y}F_{cd}{}^{Z}F_{cd}{}^{V}\chi^{\alpha T}\Gamma_{\alpha \beta a}\,
\nabla_b\chi^{\beta W}
+22\,F_{ab}{}^{Y}F_{cd}{}^{Z}F_{ac}{}^{V}\chi^{\alpha T}\Gamma_{\alpha \beta b}\,
\nabla_d\chi^{\beta W}
\nonumber\\
&&\qquad +18\,F_{ab}{}^{Y}F_{cd}{}^{V}F_{ac}{}^{W}\chi^{\alpha T}\Gamma_{\alpha \beta b}\,
\nabla_d\chi^{\beta Z}
+12\,F_{ab}{}^{T}F_{cd}{}^{Y}F_{ac}{}^{V}\chi^{\alpha Z}\Gamma_{\alpha \beta b}\,
\nabla_d\chi^{\beta W}
\nonumber\\
&&\qquad +28\,F_{ab}{}^{T}F_{cd}{}^{Y}F_{ac}{}^{V}\chi^{\alpha W}\Gamma_{\alpha \beta b}\,
\nabla_d\chi^{\beta Z}
-24\,F_{ab}{}^{Y}F_{cd}{}^{V}F_{ac}{}^{T}\chi^{\alpha W}\Gamma_{\alpha \beta b}\,
\nabla_d\chi^{\beta Z}
\nonumber\\
&&\label{L3}\qquad +8\,F_{ab}{}^{T}F_{cd}{}^{Y}F_{ac}{}^{Z}\chi^{\alpha V}\Gamma_{\alpha \beta b}\,
\nabla_d\chi^{\beta W}
-12\,F_{ab}{}^{T}F_{ac}{}^{Y}
\nabla_bF_{cd}{}^{V}\chi^{\alpha Z}\Gamma_{\alpha \beta d}\,\chi^{\beta W}
\\
&&\qquad -8\,F_{ab}{}^{Y}F_{ac}{}^{T}
\nabla_bF_{cd}{}^{V}\chi^{\alpha Z}\Gamma_{\alpha \beta d}\,\chi^{\beta W}
+22\,F_{ab}{}^{V}F_{ac}{}^{Y}
\nabla_bF_{cd}{}^{T}\chi^{\alpha Z}\Gamma_{\alpha \beta d}\,\chi^{\beta W}
\nonumber\\
&&\qquad -4\,F_{ab}{}^{Y}F_{cd}{}^{T}
\nabla_eF_{ac}{}^{V}\chi^{\alpha Z}\Gamma_{\alpha \beta bde}\,\chi^{\beta W}
+4\,F_{ab}{}^{Y}F_{ac}{}^{T}
\nabla_cF_{de}{}^{V}\chi^{\alpha Z}\Gamma_{\alpha \beta bde}\,\chi^{\beta W}
\nonumber\\
&&\qquad +4\,F_{ab}{}^{T}F_{cd}{}^{Y}F_{ce}{}^{V}\chi^{\alpha Z}\Gamma_{\alpha \beta abd}\,
\nabla_e\chi^{\beta W}
-8\,F_{ab}{}^{Y}F_{cd}{}^{T}F_{ce}{}^{V}\chi^{\alpha Z}\Gamma_{\alpha \beta abd}\,
\nabla_e\chi^{\beta W}
\nonumber\\
&&\qquad +6\,F_{ab}{}^{V}F_{cd}{}^{Y}F_{ce}{}^{W}\chi^{\alpha Z}\Gamma_{\alpha \beta abd}\,
\nabla_e\chi^{\beta T}
+5\,F_{ab}{}^{V}F_{cd}{}^{W}F_{ce}{}^{Y}\chi^{\alpha Z}\Gamma_{\alpha \beta abd}\,
\nabla_e\chi^{\beta T}
\nonumber\\
&&\qquad +6\,F_{ab}{}^{Y}F_{ac}{}^{T}F_{de}{}^{V}\chi^{\alpha Z}\Gamma_{\alpha \beta bcd}\,
\nabla_e\chi^{\beta W}
-2\,F_{ab}{}^{Y}F_{ac}{}^{T}F_{de}{}^{Z}\chi^{\alpha V}\Gamma_{\alpha \beta bcd}\,
\nabla_e\chi^{\beta W}
\nonumber\\
&&\qquad +4\,F_{ab}{}^{Y}F_{ac}{}^{V}F_{de}{}^{Z}\chi^{\alpha W}\Gamma_{\alpha \beta bcd}\,
\nabla_e\chi^{\beta T}
+4\,F_{ab}{}^{T}F_{cd}{}^{V}F_{ce}{}^{Y}\chi^{\alpha Z}\Gamma_{\alpha \beta abd}\,
\nabla_e\chi^{\beta W}
\nonumber\\
&&\qquad -4\,F_{ab}{}^{Y}F_{cd}{}^{V}F_{ce}{}^{W}\chi^{\alpha Z}\Gamma_{\alpha \beta abd}\,
\nabla_e\chi^{\beta T}
\nonumber\\
&&\qquad +\tfrac{1}{2}\,F_{ab}{}^{Y}F_{cd}{}^{T}F_{ef}{}^{V}\chi^{\alpha Z}\Gamma_{\alpha \beta abcde}\,
\nabla_f\chi^{\beta W}
+\tfrac{1}{2}\,F_{ab}{}^{Y}F_{cd}{}^{T}
   f_{ef}{}^{Z}\chi^{\alpha V}\Gamma_{\alpha \beta abcde}\,
\nabla_f\chi^{\beta W}\bigg]\,.
\nonumber
\end{eqnarray}

In these formulas capital Roman letters are Lie algebra indices, $f^{XYZ}$ are structure constants of the gauge group Lie 
algebra.

The way to get the formula (\ref{E:dvsdjvh111}) will be described below.

One can construct a SUSY-invariant deformation of SYM theory reduced to a point by the formula:
\begin{equation}\label{E:fadjghjd}
\dL=A\tr G
\end{equation} 
where the operator $A$ is given by 
\begin{equation}\label{E:int}
A=\theta_{1}\dots \theta_{16}.
\end{equation}

Here $\tr G$ is a gauge invariant expression (we can consider $G$ as an element of $U(YM)$). If
$G$ is  $\Spin (10)$-invariant
the deformation $\dL=A\tr G$ is super Poincar\'e invariant. 

Notice that supersymmetry transformations $\theta_{\alpha}$ commute with $\tr$ , hence $A\tr G=\tr AG.$

If $G$ is a gauge invariant local expression (an element of $U(TYM)$) the formula (\ref {E:fadjghjd})  specifies a supersymmetric deformation of non-reduced SYM theory.

Let us prove these statements. It is sufficient to check that  in non-reduced case $\theta_
{\alpha}\dL$ is a total derivative on shell, i.e. $\theta_{\alpha}\dL=\pr{x^i}H^i$ on the equations of motion of $\L_{SYM}$.  It will follow  that in the reduced case $\theta_
{\alpha}\dL$  vanishes on shell.

To prove this fact we notice that the anti-commutator $[\theta_{\alpha},\theta_{\beta}]$ is a total covariant derivative as 
follows from (\ref{E:hfdfjfhjd}). It follows from the same formula that $\theta_{\alpha}^2$ is a total covariant derivative. 
Calculating $\theta_{\alpha}A\tr G$ we are moving $\theta_{\alpha}$ using (\ref{E:hfdfjfhjd}) until we reach $\theta$ with 
the same index. Then we use a formula for $\theta_{\alpha}^2$: 
\begin{equation}\label{E:jgkngft}
\begin{split}
&\theta_{\alpha}\dL=\tr(\theta_{\alpha}\theta_{1}\cdots \theta_{16}G)=\\
&=\sum_{\gamma=1}^{\alpha-1}(-1)^{\gamma}\Gamma_{\alpha\gamma}^k\tr(\theta_{1}\cdots \theta_{\gamma-1}D_k
\theta_{\gamma+1}\cdots \theta_{16}G) +\\
&+\frac{1}{2}\sum_{\gamma=1}^{\alpha-1}(-1)^{\alpha}\Gamma_{\alpha\alpha}^k\tr(\theta_{1}\cdots\theta_{\alpha-1} D_k
\theta_{\alpha+1}\cdots \theta_{16}G).\\
\end{split}
\end{equation}
Expressions $\tr(D_k\theta_{\alpha}\cdots \theta_{16}G)=\pr{x_k}\tr(\theta_{\alpha}\cdots \theta_{16}G)$ are total 
derivatives.  Expressions $\tr(\theta_{1}\cdots \theta_{\gamma-1}D_k\theta_{\gamma+1}\cdots \theta_{16}G)$ are 
multiple supersymmetry transformations of total derivatives. Hence due to  equation (\ref{E:supergaugetr}) $A\tr G$ is 
also a total derivative
on the equations of motion for $\L_{SYM}.$

The reader will recognize in (\ref{E:int}) a 10-dimensional analog of $\theta$-integration in theories admitting 
superspace formulation with manifest supersymmetries.

The above considerations can 
be used to describe all infinitesimal deformations of YM theory reduced to a point. Namely we have the following 
theorem.
\begin{theorem}\label{T:reduced}
Every infinitesimal  super Poincar\'e-invariant deformation of $\L_{SYM}$ reduced to a point is a linear combination of $
\L_{16}$ and a deformation having a form $A\tr(G)$, where $G$ is an arbitrary $\Spin(10)$-invariant  combination of 
products of $A_i$  and $\chi^{\alpha}$.
\end{theorem}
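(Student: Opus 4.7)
The plan is to reduce the classification of super Poincar\'e invariant infinitesimal deformations of the reduced theory to a homological computation, and then to solve that computation using the Koszul machinery assembled in Section \ref{S:Preliminaries}.

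First, I would make rigorous the identifications already sketched in the excerpt: the space of single-trace deformations of $\L_{SYM}$ reduced to a point, taken modulo on-shell equivalence, coincides with $V := U(YM)/[U(YM),U(YM)] \cong HH_0(U(YM),U(YM))$. The supersymmetry generators $\bm{\theta}_\alpha$ act on $V$ via the embedding $YM \hookrightarrow L$ and the resulting $L$-module structure on $U(YM)$, and the Lorentz generators act via the $\Spin(10)$-equivariance of the whole construction. SUSY-invariance of $\dL$ means $\theta_\alpha \dL$ is trivial, so super Poincar\'e invariant deformations modulo equivalence correspond to the appropriate $\Spin(10)$-invariant piece of the Chevalley--Eilenberg cohomology of $L$ with coefficients in $V$, computed in the degree relevant to classifying elements of $V$.

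Second, I would use Corollary \ref{P:tqydxc1} (or equivalently the homological version, Corollary \ref{P:iqwwst2}) to replace this Lie algebra cohomology by the much more manageable Koszul complex $V \otimes \Ss$ with differential induced by $e = \lambda^\alpha \bm{\theta}_\alpha$. The key observation is that the operator $A = \theta_1 \cdots \theta_{16}$ appearing in the definition of $A\tr(G)$ corresponds exactly to the top exterior power $\Lambda^{16}(L^1)$, which is the place where the canonical pairing between $L^1$-cohomology and $L^1$-homology lands. Under this correspondence, the construction $G \mapsto A\tr(G)$ should be identified with a natural map from $V^{\Spin(10)}$ into the super Poincar\'e invariant part of the target cohomology, and what one needs to show is that the cokernel of this map is one-dimensional and represented by the class of $\dL_{16}$.

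Third, I would attack the cohomology computation itself. The tools are: the quasi-isomorphism $U(YM) \simeq B_0^!$ from Proposition \ref{P:proposition3}, the character/Hilbert-series formulas of Section \ref{S:Euler} (in particular the explicit $\Spin(10)$-character of $\Ss$), and the decomposition of the low-degree pieces of $L$ given in equations (\ref{E:decompa})--(\ref{E:decompd}). One first bounds the $\Spin(10)$-invariant cohomology by an Euler-characteristic calculation in the relevant internal degree, then explicitly constructs a representative for the exceptional class by reverse-engineering from $\dL_{16}$ and verifies that it is not cohomologous to anything of the form $A\tr(G)$.

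The main obstacle is precisely this last step: the Euler characteristic gives only the numerical necessary condition, and one needs enough control over the differential $e$ on $V \otimes \Ss$ — where $V$ is the fairly intricate module $U(YM)/[U(YM),U(YM)]$ — to separate genuine cohomology classes from coboundaries and to pin down $\dL_{16}$ as the sole exception. The calculation is the technical heart of the argument and is the analog (for $d=0$) of the more general computation carried out in Part II of the series.
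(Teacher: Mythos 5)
Your general framework --- identify single-trace Lagrangian deformations with $HH_0(U(YM),U(YM))$, pass to Lie algebra (co)homology of $L$, and compute via the Koszul complex with differential $e=\lambda^{\alpha}\bm{\theta}_{\alpha}$ --- matches the paper's setup, and your instinct that $A=\theta_1\cdots\theta_{16}$ should be identified with a canonical pairing between homology and cohomology is pointing at the right object. But there are two concrete gaps. First, you propose to compute cohomology of $L$ with coefficients in $V=U(YM)/[U(YM),U(YM)]$. The geometric machinery that makes any such computation tractable --- the quasi-isomorphism $\Sym^j(\W^*)\to\Sym^j(\YM)$ of Proposition \ref{P:W} and Borel--Weil--Bott on $\Q$ --- is available only for the modules $\Sym^j(YM)$; there is no analogous complex of vector bundles modelling the quotient $V$, and the degree you would need is $H^0(L,V)=V^{\susy}$, where the Koszul complex reduces to the kernel of a single map and buys nothing. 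The paper never computes $H^{\bullet}(L,V)$: it works with $H^{\bullet}(L,\Sym(YM))$ and $H_{\bullet}(L,\Sym(YM))$ and reaches the Lagrangian picture only through operators $A_k$ acting on $H_k(YM,U(YM))$.

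Second, and more seriously, your plan for the decisive step --- bound the invariant cohomology by an Euler characteristic, then check by hand that $\dL_{16}$ is not of the form $A\tr(G)$ --- cannot close the argument. The theorem asserts that $G\mapsto A\tr(G)$ hits the whole invariant space in every internal degree except one, and Euler characteristics control only alternating sums, not individual cokernels, for infinitely many degrees at once. The paper's actual mechanism is: (i) the long exact sequence (\ref{E:deltaa2}), coming from the two-row hypercohomology spectral sequence on $\Q$ (only $H^0$ and $H^{10}$ of line bundles survive), whose connecting map $\delta:H_1(L,\Sym(YM))\to H^2(L,\Sym(YM))$ has invariant cokernel of dimension exactly one, accounted for by the class giving $\dL_{16}$; (ii) the identification $A_k=T_k$ of $\theta_1\cdots\theta_{16}$ with the composite through $i_*$, $\delta$, $i^*$ and Poincar\'e duality, proved via free resolutions; (iii) the surjectivity of $i_{*1}$ and $i^{*}_2$ on invariants (Lemma \ref{L:dewr23}, via Serre--Hochschild spectral sequences), which transfers the codimension-one statement from $\delta$ to $A_1$; and (iv) the Connes differential $B$ with $A_1B=BA_0$ (Lemma \ref{L:BBB}) to descend from $A_1$ to $A_0$. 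None of (ii)--(iv) appears in your outline, and without them the computation of $\Imm(A_0)$ --- the heart of the theorem --- is not established.
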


To formulate the corresponding statement in the case of unreduced $\L_{SYM}$ we should generalize the above 
consideration a little bit. We notice that infinitesimal deformation of $\L_{SYM}$ descends  to  a 
deformation of reduced $\L_{SYM}$ (we formally replace  $\nabla_i $ by constant matrix $A_i$ and matrix function $\chi$ by a constant matrix). It is 
not true that all  Lagrangians in the reduced theory can be lifted a Lagrangian in ten-dimensional theory (in other words not every Lagrangian of reduced theory descends from a Lagrangian of non-reduced theory). For example an 
expression $\tr A_iA_i$  defines  a Lagrangian of reduced  theory, but

\begin{equation}\label{E;laplacian}
\Delta=\tr \nabla_i\nabla_i
\end{equation}
 does not make sense as a ten-dimensional Lagrangian.

Of course,
 if $G$ itself is a gauge-covariant local expression, the expression $A G$ is also local and, as we noticed already, specifies a supersymmetric deformation. However there are situations 
when $G$ is not of this kind but still $AG$ after adding commutator terms, total derivatives and performing field redefinition becomes  a gauge-invariant local expression; then this expression can 
be considered as a
Lagrangian of  a deformation of  ten-dimensional SYM Lagrangian  $\L_{SYM}.$ We are saying that in this case the supersymmetric deformation of reduced theory can be lifted to a deformation of non-reduced theory.

 Our homological computations \cite{M4} show that the number of linearly independent Poincar\'e invariant deformations  of  reduced SYM theory that
 do not have the form $A\tr(G)$ where $G\in U(TYM)$, but can be lifted to ten-dimensional theory is equal to two.

To construct the first one we take $G$ to be the "Laplacian" (\ref{E;laplacian}).

We have 
\begin{equation}
A\tr(\Delta)=2\tr((A\nabla_i)\nabla_i)+\cdots,
\end{equation}
 where the dots represent gauge-invariant  local terms. This follows from formula (\ref{E:susyform}). It remains to prove we that
 $(A\nabla_i)\nabla_i)$ is   equivalent to  a gauge-covariant local expression (recall that we allow field 
redefinition and adding commutator terms).
It follows from the remarks at the beginning of the section that instead of working with $(A\nabla_i)\nabla_i$ we can 
work with $(A\bD_i)\bD_i$ considered as an element of $U(YM)/[U(YM),U(YM)]$. Moreover, it is convenient to work in the algebra $U(L)$ generated by $\btheta_{\alpha}$ obeying 
$\Gamma^{\alpha\beta}_{i_1,\dots,i_5}[\bm{\theta}_{\alpha},\bm{\theta}_{\beta}]=0$ (see Section 2).

The commutators with $\btheta_{\alpha}$ act on $U(YM)$ as supersymmetries $\theta_{\alpha}.$ 

In the algebra $U(L)$ we can represent $A(\bD_i)$ as the  multiple commutator:
 
 \[A(\bD_i)=[\btheta_{1},\dots, [\btheta_{16},\bD_i]\dots].\]
 We have more then four $\btheta$'s  in a row applied to $\btheta_{\beta}$.
 We see that $A(\bD_i)$  is a commutator 
\begin{equation}\label{E:fdfpfgj}
A(\bD_i)=[\bD_k, \psi_{ki}]+[\bchi^{\alpha}, \psi_{\alpha i}]
\end{equation} in $U(YM)$, where $\psi_{ki}, \psi_{\alpha i}$ are  gauge-covariant local expressions.
We have the following line of identities  where we can neglect commutator terms:  \[A(\bD_i)\bD_i=[\bD_k, \psi_{ki}]\bD_i
+[\bchi^{\alpha}, \psi_{\alpha i}]\bD_i=-\psi_{ki}[\bD_k\bD_i]+[\bD_k,\psi_{ki}\bD_i]-\psi_{\alpha i}[\bD_i,\bchi^{\alpha}]+
[\chi^{\alpha},\psi_{\alpha i}D_i].\] We obtain that $\tr(A(\bD_i)\bD_i)=\tr(\psi_{ki}[\bD_k\bD_i])-\tr(\psi_{\alpha i}[\bD_i,
\bchi^{\alpha}])$.

One can check that supersymmetric deformation obtained from $\Delta=G_1$ is equivalent to (\ref{E:dvsdjvh111}). 

One can prove   that similar considerations can be applied  to \begin {equation}\label {g2} G_2=a\tr(F_{i_2i_3}F_{i_2i_3}D_{i_1}D_{i_1})+b\tr
(\Gamma_{\alpha\beta}^{i_2}[D_{i_1},\chi^{\alpha}]\chi^{\beta}D_{i_2}D_{i_1})+c\tr(\Gamma_{\alpha\beta}^{i_1i_2i_3}F_
{i_2i_3}\chi^{\alpha}\chi^{\beta}D_{i_1})\end{equation} for an appropriate choice of constants $a,b,c$. Corresponding deformation will 
be denoted by $\dL_{28}$.   

In the Appendix  F we describe a general way to obtain formulas for supersymmetric deformations.
In particular, we give another expression for $G_2$ that does not contain indeterminate constants.

\begin{theorem}\label{E:theorem10}
Every deformation of Lagrangian $\L_{SYM}$ that  can be obtained by means of lifting of super Poincar\'e invariant deformation of reduced SYM theory is a linear combination of $\dL_{16}$ given by the formula (\ref {E:dvsdjvh}) , $\dL_{20}=
\tr H_1$, $\dL_{28}=\tr H_2$ and a deformation of a form $\tr(AG)$ where $G$ is an arbitrary Poincar\'e-invariant 
combination of products of covariant derivatives of curvature $F_{ij}$ and spinors $\chi^{\alpha}$.
Here  $H_1$ can be obtained from $AG_1$  and $H_2$ can be obtained from $AG_2$ by adding commutator terms, total derivatives and terms coming from field redefinition. ( $G_1$ is defined by the formula (\ref {E;laplacian}) and $G_2$ by the formula (\ref {g2}))
\end{theorem}

There is a finer decomposition of the linear space of equivalence classes of  Lagrangians. 
Any Lagrangian $\L$ under consideration  has the form  $\L=\tr Y(\nabla,\chi)$, where $Y$ is some  non-commutative 
polynomial in $\nabla_i$  and $\chi^{\alpha}$. 
Let the non-commutative polynomial $Y$ be a linear combination of commutators. Then of course 
$\tr Y\equiv 0$, however if $Y,Y^{'}$ are commutators then
\begin{equation}\label{E:two}
\tr YY^{'}
\end{equation}
 could be nonzero. The grading $\deg_{[\ ]}$ of a Lagrangian of the form $\tr YY^{'}$ by definition is equal to two (to the 
number of commutators in the product under the trace in (\ref{E:two}).{\footnote {Lagrangians of this kind  make sense 
not only for the gauge group  $\UN$, but also for an arbitrary compact gauge group $G$ because they can be written 
intrinsically in terms of the  commutator and the  invariant inner product of the Lie algebra of  $G$.}} For example the 
basic Lagrangian $\L_{SYM}$ has degree $\deg_{[\ ]}$ equal to two. Likewise we can define Lagrangians of arbitrary 
degree $\deg_{[\ ]}$.  The equations of motion of YM theory are compatible with classification of Lagrangians by $\deg_
{[\ ]}$ in the sense that Lagrangians of different degree are not equivalent.

The following table is a result of classification of linearly independent on-shell supersymmetric Lagrangians of low 
degree. The numbers in the body of the  table represent dimensions of spaces of super Poincar\'e invariant Lagrangians 
of degrees $(\deg_{[\ ]}, \deg_{\alpha '}).$

 \begin{equation}\label{T:dkodd1}
\mbox{
\scriptsize{
$\begin{array}{|c|c|c|c|c|c|c|c|c|c}
 &1&2&3&4&5&6&7&8&k=deg_{\alpha'}\\\hline
2& & &1& &1&3&18&172&\dots\\\hline
3& & & & & & &13&281&\dots\\\hline
4& &1& & &1&2&20 &267&\dots\\\hline
5& & & & & & &1 &68 &\dots\\\hline
6& & & & & & &1 &17 &\dots\\\hline  
7& & & & & & & & &\dots\\\hline
p=\deg_{[\ ]}&\dots&\dots&\dots&\dots&\dots&\dots&\dots &\dots &\dots
\end{array}$
}
}
\end{equation}

The entry in the second column corresponds to the Lagrangian (\ref{E:dvsdjvh}), the entry in the third column 
corresponds to the Lagrangian (\ref{E:dvsdjvh111}).

\section{The Homological Approach to Infinitesimal Deformations}\label{S:homologicalapp}

In this section we shall describe a reduction of the problem of infinitesimal SUSY deformations of SYM to a homological 
problem. A general  way to 
give homological formulation  of a problem of classification of deformations will be described in Section \ref{S:BV} and in 
Appendix \ref{AppendixA}; the relation of this way to the approach of present section will be studied in Appendix \ref
{AppendixC}.

 First of all we consider  infinitesimal deformations of SYM reduced to a point. As we have seen, this theory can be 
expressed in terms of algebra $U(YM)$. We shall regard the deformations of this theory as deformations of algebra $U
(YM)$. In other words we think about deformation as of family of multiplications on linear space $U(YM)$ depending 
smoothly on parameter $\epsilon$. In the case of infinitesimal deformations we assume that $\epsilon^{2}=0$ (i.e. we 
neglect higher order terms with respect to $\epsilon     $). We say that deformation is supersymmetric if it is possible to 
deform the SUSY algebra action on $U(YM)$ in such a way that it consists  of derivations of the deformed multiplication.

\begin{theorem}\label{T:dfadfqq}
Every cohomology class $\lambda \in \rH^2(L,U(YM))=\rH^2(L,\Sym(YM))$ specifies an infinitesimal supersymmetric 
deformation of $U(YM)$. 

We consider here $U(YM)$ as a representation of Lie algebra $L$. Due to Poincar\'e-Birkhoff-Witt theorem this 
representation is isomorphic to $\Sym(YM)$.

\end{theorem}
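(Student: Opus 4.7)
The plan is to produce, from each class $\lambda\in H^2(L,U(YM))$, both a deformed associative product on $U(YM)$ and a compatible deformed action of $L$, and then to verify that cohomologous representatives yield equivalent deformations.

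First I would unpack the $L$-module structure on $U(YM)$. Because $YM\cong\bigoplus_{k\geq 2}L^k$ is an ideal in $L$, the Lie algebra $L$ acts on $YM$ by brackets and hence on $U(YM)\cong\Sym(YM)$ by algebra derivations; the subspace $YM$ itself acts by inner derivations of $U(YM)$, while the quotient $L/YM\cong L^1=S$ acts by the sixteen supersymmetries $\bm{\theta}_\alpha$. Thus $H^\bullet(L,U(YM))$ is the natural invariant that simultaneously records perturbations of the product and of the SUSY action.

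Next I would represent $\lambda$ via the pure-spinor complex of Corollary \ref{P:tqydxc1}: a closed element $\omega\in U(YM)\otimes\Ss_2$ of the form $\omega=\lambda^\alpha\lambda^\beta g_{\alpha\beta}$, with the closedness equation $e\omega=0$ imposing explicit linear constraints on the coefficients $g_{\alpha\beta}\in U(YM)$. From such a representative I would construct the deformation in two pieces. The piece supported on $\Lambda^2 YM$ (extracted from the cocycle, together with its extension via Koszul duality) defines a Hochschild $2$-cocycle on $U(YM)$ and hence a first-order deformation of the associative product. The pieces living on $S\wedge YM$ and $S\wedge S$ give a correction to the generators $\bm{\theta}_\alpha$ so that they remain derivations of the new product and close on translations in the weak sense of (\ref{E:L2}); the cocycle condition $e\omega=0$ is precisely what is needed to make the associativity of the deformed product and the SUSY-compatibility of the deformed action consistent at first order. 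Finally I would check that an exact representative $\omega=e\tau$ with $\tau\in U(YM)\otimes\Ss_1$ produces a deformation equivalent to the trivial one, realized as a simultaneous linear redefinition of fields (the Hochschild coboundary generated by $\tau$) together with a redefinition of the supersymmetry generators.

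The main obstacle is reconciling the cocycle data of $L$, which is \emph{strict}, with the fact that the supersymmetry algebra closes on translations only \emph{weakly}, i.e.\ up to commutators that vanish on shell (cf.\ (\ref{E:hfdfjfhjd}), (\ref{E:supergaugetr})). The correct framework is therefore to produce an $L_\infty$-action of $L$ on the deformed $U(YM)$ rather than a strict Lie action, which forces one to invoke the general homological machinery of Appendices \ref{AppendixA} and \ref{AppendixB} together with the Koszul duality between $\Ss$ and $U(L)$. The cleanest packaging of this step is the BV reformulation given in Section \ref{S:BV} and Appendix \ref{AppendixC}, where the space of infinitesimal SUSY deformations is identified tautologically with the second cohomology of a differential computing $H^\bullet(L,U(YM))$; that identification, combined with the explicit construction above, completes the proof.
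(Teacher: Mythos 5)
Your construction is correct in substance and rests on the same mechanism as the paper's proof, but it works at the level of explicit cochains where the paper works at the level of cohomology classes. The paper considers the restriction map $HH^2(U(L),U(YM))\to HH^2(U(YM),U(YM))$ and observes that the supersymmetries $\gamma_a$, $a\in L^1$, being inner derivations of $U(L)$, act trivially on $HH^2(U(L),U(YM))$; hence every class in the image is SUSY-invariant, which is exactly the statement that the $\bm{\theta}_\alpha$ can be corrected so as to remain derivations of the deformed product. Your decomposition of a Chevalley--Eilenberg cocycle on $L=L^1+YM$ into its $\Lambda^2 YM$, $S\otimes YM$ and $\Sym^2 S$ components is the cochain-level unpacking of that argument: the first component is the Hochschild $2$-cocycle (via the antisymmetrization isomorphism $H^2(YM,U(YM))\cong HH^2(U(YM),U(YM))$), the second is precisely the homotopy witnessing SUSY-invariance of that class, and the third records the weakly deformed closure relations. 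Two places where your write-up needs tightening: (i) that three-fold decomposition lives in the Chevalley--Eilenberg complex $\Lambda^{\bullet} L^*\otimes U(YM)$, not in the small pure-spinor complex $U(YM)\otimes\Ss$ of Corollary \ref{P:tqydxc1}, so if you represent $\lambda$ by $\lambda^{\alpha}\lambda^{\beta}g_{\alpha\beta}$ you must first transfer back along the quasi-isomorphism between the two complexes before the component analysis makes sense; (ii) the appeal to the $L\ity$/BV machinery of Section \ref{S:BV} and Appendix \ref{AppendixC} is not needed for this theorem --- the paper reserves it for the unreduced case (Theorem \ref{T:theorem12}). For the reduced case the theorem only requires that the corrected $\bm{\theta}_\alpha$ be derivations of the new product; their anticommutators are allowed to close on $\Gamma^i_{\alpha\beta}\,[\bm{D}_i,\cdot\,]$ only up to inner terms, and since the $\bm{D}_i$ already act by inner derivations of $U(YM)$ the $\Sym^2 S$ component causes no obstruction. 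So your route buys a more explicit formula for the deformed supersymmetries at the cost of extra bookkeeping that the paper's inner-derivation argument avoids.
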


We shall start with  general statement about deformations of associative algebra $A$. The multiplication in this algebra can 
be considered as a bilinear map $m: A\otimes A\to A$. An infinitesimal deformation $m+\delta m$ of this map 
specifies an associative multiplication if 
$$\delta m(a,b)c+\delta m (ab,c)=a\delta m(b,c)+\delta m(a,bc).$$
This condition means that $\delta m$ is a  two-dimensional Hochschild  cocycle with coefficients in $A$ (see Appendix 
\ref{AppendixA}). Identifying equivalent deformations we obtain that infinitesimal deformations of associative algebra are 
labeled by the elements of Hochschild cohomology $\rHH^2(A,A)$. ( Two deformations are equivalent if they are related 
by 
linear transformation of $A$.) 

Applying this statement to the algebra $U(YM)$
we obtain that the infinitesimal deformations of this algebra are labeled by the elements  of 
$\rHH^2 (U(YM), U(YM))$. 

Let us consider now the Hochschild cohomology $\rHH^2 (U(L),U(YM))$. (Notice that $U(YM)$ is an ideal in $U(L)$, 
hence it can be regarded as a $U(L)$-bimodule.)  We can consider the natural restriction map 
$\rHH^2 (U(L),U(YM))\to \rHH^2 (U(YM),U(YM))$; we shall check that the image of this map consists of
supersymmetric deformations.  Let us notice first of all that $L=L^1+YM$ and the derivations
$\gamma _a$ corresponding to the elements $a\in L^1$ act on $YM$ as supersymmetries; this action can be extended 
to $U(YM)$ and specifies 
an action on Hochschild cohomology, in particular, on the space of deformations $\rHH^2 (U(YM),U(YM))$. (The derivation 
$\gamma _a$ is defined by the formula $\gamma _a(x)=[a,x].$)
On $L$ one can consider $\gamma _a$  as an inner derivation, hence its action on the cohomology $\rHH^2 (U(L),U(YM))
$ is trivial. (This follows from well known results, see, for example, \cite {McL}.) This means that supersymmetry 
transformations act trivially on the image of $\rHH^2 (U(L),U(YM))$ in $\rHH^2 (U(YM),U(YM))$ (in the space of 
deformations).

To obtain the statement of the theorem it is sufficient to notice that the Hochschild cohomology of the enveloping algebra 
of Lie algebra can be expressed in terms of Lie algebra cohomology (see (\ref {E:dsfdsh})). 

Theorem \ref {T:dfadfqq} gives a homological description of 
supersymmetric deformations of the equations of motion.  We can use homological methods to answer the question: 
when the deformed EM come from a Lagrangian.  As we have seen in Section \ref{S:fdgdfgmjl} the space of infinitesimal 
Lagrangian deformations of SYM theory reduced to a point can identified with $U(YM)/[U(YM),U(YM)]= \rHH_0(U(YM),U
(YM))$.  Lagrangian deformation generates a deformation of EM, hence there exists a map $U(YM)/[U(YM),U(YM)]\to 
\rHH^2(U(YM),U(YM))=\rH^2(YM,U(YM))$.  It turns out (see \cite{M3} and \cite{M4} ) that the image of this map has a finite 
codimension in $\rH^2(YM,U(YM))$ and it is onto for $\Spin(10)$-invariant elements. This means that all Poincar\'e 
invariant
infinitesimal deformations of EM are Lagrangian deformations.

Let us consider now deformations of supersymmetric deformations of supersymmetric YM theory in ten-dimensional case 
(SYM theory).
The description of these deformations is similar to reduced case.
\begin{theorem}\label{T:theorem12}
Every element $\lambda\in \rH^2(L,U(TYM))$ specifies a supersymmetric deformation of SUSY YM.
\end{theorem}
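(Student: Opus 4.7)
The plan is to mirror the proof of Theorem \ref{T:dfadfqq} for the reduced case, replacing the algebra $U(YM)$ by $U(TYM)$. As discussed in Section \ref{S:fdgdfgmjl} (and Proposition \ref{P:proposition3}), the elements of $U(TYM)$ are exactly the gauge-covariant local expressions of ten-dimensional SYM, so infinitesimal deformations of the ten-dimensional equations of motion can be phrased as infinitesimal associative-algebra deformations of $U(TYM)$. The general Hochschild classification then says that these are parametrized by $HH^2(U(TYM),U(TYM))$.

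Next I would exploit the fact that $TYM=\bigoplus_{i\ge 3} L^i$ is a graded Lie ideal of $L$ (since for $j\ge 1$ and $i\ge 3$ one has $[L^j,L^i]\subseteq L^{i+j}\subseteq TYM$). This gives $U(TYM)$ the structure of a $U(L)$-bimodule via the adjoint action, and a natural restriction of Hochschild cochains yields the pullback map
\[
HH^2(U(L),U(TYM))\longrightarrow HH^2(U(TYM),U(TYM)).
\]
Composing with the standard isomorphism (\ref{E:dsfdsh}), namely $HH^{\bullet}(U(L),N)\cong H^{\bullet}(L,N)$, I would obtain the map
\[
H^2(L,U(TYM))\longrightarrow HH^2(U(TYM),U(TYM))
\]
that assigns to each class $\lambda$ an infinitesimal deformation of $U(TYM)$. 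The remaining step is to show that every such deformation admits a compatible deformation of the SUSY-algebra action. As in the reduced case, each generator $\bm{\theta}_{\alpha}\in L^1$ acts on $U(L)$ as the inner derivation $\mathrm{ad}(\bm{\theta}_{\alpha})$, and inner derivations act trivially on Hochschild cohomology with any bimodule coefficients (see \cite{McL}). Consequently the supersymmetry algebra acts trivially on $HH^2(U(L),U(TYM))$, which, pulled back to $HH^2(U(TYM),U(TYM))$, furnishes precisely the cochain witnessing that the SUSY action deforms coherently with the new multiplication.

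I expect the main obstacle to be this last implication: translating the cohomological vanishing on $HH^2(U(L),U(TYM))$ into the operational notion of supersymmetric deformation used in Section \ref{S:fdgdfgmjl}, i.e., producing an honest deformation of the $16$ supersymmetries such that the deformed generators remain derivations of the deformed product. One must also verify that the adjoint bimodule structure restricts correctly to $U(TYM)$ and does not leak into the strictly larger $U(YM)$, which is ensured by the ideal property $[L,TYM]\subseteq TYM$ but requires care. By contrast, the classification of associative deformations by $HH^2$ and the Hochschild-to-Lie algebra translation are entirely routine.
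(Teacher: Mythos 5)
Your proposal transplants the paper's proof of Theorem \ref{T:dfadfqq} to the unreduced setting, but the paper deliberately does \emph{not} argue this way for Theorem \ref{T:theorem12}: it says explicitly that the unreduced proof ``is more complicated'' and rests on Section \ref{S:BV} and Appendix \ref{AppendixC}. The first gap is in your opening step. In the reduced case the equations of motion literally are the defining relations of the algebra $YM$, so a deformation of the theory \emph{is} a deformation of the associative algebra $U(YM)$ and $HH^2(U(YM),U(YM))$ classifies them tautologically. In ten dimensions this fails: solutions of SYM are not representations of $U(TYM)$, and $U(TYM)$ enters only as the algebra quasi-isomorphic to $BV=\cobar(bv)$, the cobar construction of the A$\ity$ algebra describing the theory in BV formalism. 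Identifying deformations of the ten-dimensional equations of motion with $HH^2(U(TYM),U(TYM))$ therefore already requires the quasi-isomorphism invariance of Hochschild cohomology and the BV/A$\ity$ package of Appendices \ref{AppendixA}--\ref{AppendixC}; it cannot be read off from ``elements of $U(TYM)$ are gauge covariant local expressions'', which only identifies the space of \emph{Lagrangians} with $HH_0(U(TYM),U(TYM))$.

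The second gap concerns the symmetry itself. In the unreduced theory the relevant algebra is the full $\susy$ algebra: its even part (the translations $\mathrm{ad}(\bD_i)$ with $\bD_i\in L^2$, which does \emph{not} lie in $TYM$) acts by outer derivations of $U(TYM)$, the bracket $[\theta_{\alpha},\theta_{\beta}]=\Gamma^i_{\alpha\beta}D_i$ is nontrivial, and the action on the BV complex closes only up to $Q$-exact terms, i.e.\ it is an L$\ity$ action. Your inner-derivation argument does show that each individual generator of $L$ fixes the class in $H^2(L,U(TYM))$, and hence that each generator separately can be corrected to a derivation of the deformed product. But assembling these corrections into a coherent deformation of the whole non-abelian, weak SUSY action is an obstruction problem that the argument leaves untouched; this coherence is precisely what the paper obtains by computing the Lie--Hochschild cohomology $HH_{\susy}(C^{\bullet}(TYM),C^{\bullet}(TYM))$ and identifying it with $H^{\bullet}(L,U(TYM))$ via the vector-space decomposition $L=L^1+L^2+TYM$ in Appendix \ref{AppendixC}. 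In the reduced case this issue is invisible because the effective SUSY algebra there is the odd abelian $\Pi\mathbb{C}^{16}$ and the translations act by inner derivations; in ten dimensions it is the heart of the matter.
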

The group $\Spin (10)$ acts on cohomology; Poincar\'e invariant deformations are identified with
$\Spin (10)$-invariant cohomology classes.

In the proof we interpret the deformations of SYM theory as deformations of the algebra $U(TYM)$ and 
identify infinitesimal deformations with elements of Hochschild homology $\rHH_0(U(TYM),U(TYM)).$ However, the proof is more complicated; it is based on results of
 Section \ref {S:BV} and Appendix \ref{AppendixC}. It is shown in Appendix \ref{AppendixC} that the elements of higher cohomology groups 
also correspond to supersymmetric deformations, however only elements of $\rH^2$
give non-trivial super-Poincar\'e invariant infinitesimal deformations of equations of motion.

The next section is devoted to the calculation of cohomology entering the formulation of the theorems \ref{T:dfadfqq} and \ref {T:theorem12}.  In present section we will describe the solution of simpler problem of calculation of corresponding Euler characteristics. The Euler characteristics are especially interesting, because the contribution of infinitesimal deformations given by the general formula  (\ref {E:int}) with $G\in YM$ and $G\in TYM$ cancels in Euler characteristic. 

To get more information about homology we calculate Euler characteristics of  homology groups considered as graded $\Spin (10)$-
modules (as representations of $\Spin (10)).$ Recall that $L$ is graded; this gives a grading on homology. However,  there is also another grading coming from Poincar\'e-Birkhoff-Witt identification of $\g$ -modules $U(\g)=\sum \Sym ^k\g$. As explained in  the Section \ref{S:Euler} the  Poincar\'e series of graded $\Spin (10)$-modules
$U(YM)$, $U(TYM), U(L)$ and Euler characteristics  $\rH^{\bullet}(L, U(YM))$, $\rH^{\bullet}(L,U(TYM))$ can be considered as functions of $g\in \Spin (10)$ and series with respect $t$ (we disregard the the second grading for a moment). 
It follows from (\ref {E:ourcharacters} )  that Euler characteristics can be expressed in terms of Poincare series:
\begin{equation}\label{E:our}
\begin{split}
&\chi_{\rH^{\bullet}(L, U(YM))}(g,t)=\frac{U(YM)(g,t)}{U(L)(g^{-1},t^{-1})}\\
&\chi_{\rH_{\bullet}(L, U(YM))}(g,t)=\frac{U(YM)(g,t)}{U(L)(g,t)}
\end{split}
\end{equation}

The factor-algebra $L/TYM$ can be identified with the super Lie algebra of supersymmetries (16 odd generators transforming according spinor representation $S$, 10 even generators transforming according vector representation $V$). Similarly, the factor algebra $L/YM$ has 16 odd generators with trivial anticommutation relations; they transform as spinors.

This remark (together with the formula for $U(L)(g,t)$ given in Corollary 8)  permits us to prove the following formulas
\[\begin{split}
&U(YM)(g,t)=U(L)(g,t)\Sym S^*(g,-t)=\\
&=\frac{1}{1-V(g^{-1})t^2-S(g^{-1})t^3+S(g)t^5+V(g)t^6-t^8}\end{split}\]
\[\begin{split}
&U(TYM)(g,t)=U(L)(g,t)\Sym S^*(g,-t)\Lambda V(g,-t)=\\
&=\frac{\Lambda V(g,-t^2)}{1-V(g^{-1})t^2-S(g^{-1})t^3+S(g)t^5+V(g)t^6-t^8}\end{split}\]

We can use these formulas together with (\ref {E:our}) to prove that
\begin{equation}\label{E:ourcharacter}
\begin{split}
&\chi_{\rH^{\bullet}(L, U(YM))}(g,t)=-\frac{1}{t^8}\Sym S(g,-t^{-1})\\
&\chi_{\rH_{\bullet}(L, U(YM))}(g,t)=\Sym S^*(g,-t)
\end{split}
\end{equation}



\begin{equation}\label{E:ourcharacterss}
\begin{split}
&\chi_{\rH^{\bullet}(L, U(TYM))}(g,t)=-\frac{1}{t^8}\Sym S(g,-t^{-1})\Lambda V(g,-t^{-2})\\
&\chi_{\rH_{\bullet}(L, U(TYM))}(g,t)=\Sym S^*(g,-t)\Lambda V(g,-t^2)
\end{split}
\end{equation}

To take into account the  second  grading we should use formulas 
\[\begin{split}
&\sum\chi_{\rH^{\bullet}(L, \Sym^jTYM)}(g,t)z^j=\frac{U(TYM)(g,t,z)}{U(L)(g^{-1},t^{-1})}\\
&\sum\chi_{\rH_{\bullet}(L, \Sym^jTYM)}(g,t)z^j=\frac{U(TYM)(g,t,z)}{U(L)(g,t)}
\end{split}\]
\[\begin{split}
&\sum\chi_{\rH^{\bullet}(L, \Sym^jYM)}(g,t)z^j=\frac{U(YM)(g,t,z)}{U(L)(g^{-1},t^{-1})}\\
&\sum\chi_{\rH_{\bullet}(L, \Sym^j YM)}(g,t)z^j=\frac{U(YM)(g,t,z)}{U(L)(g,t)}
\end{split}\]
that also follow from (\ref{E:ourcharacters}) (we have replaced  the numerator in (\ref {E:our}) by the Poincar\'e series with respect to  the double grading).

To  calculate $U(YM)(g,t,z)$ and $U(TYM)(g,t,z) $ we notice that in general knowing the Poincar\'e series of Lie algebra $\g$ we can calculate the Poincar\'e series of $U(\g)$ that takes into account the additional grading coming 
 from
the Poincar\'{e}-Birkhoff-Witt     isomorphism  $U(\g)\cong \Sym [\g]$. {\footnote { In the notations of Appendix \ref{S:Appendix I} we should calculate the character of $\sigma_t(\g)$. }} From the other side the knowledge of the Poincar\'e series  of $U(\g)$ defined without additional grading is sufficient to find  the Poincar\'e series $\g(t)$  of $\g$.

If we neglect the action of a group we obtain\[
\g(t)=\sum_{n\geq 1}\frac{\mu(n)}{n}\ln U(\g)(-(-t)^n)
\]
where $\mu$ stands for the M\"{o}bius function.

We derive from it that
\[U(\g)(t,z)=\exp\left(\sum_{n,l\geq 1}\frac{\mu(l)}{nl}z^n\ln U(\g)(-(-t)^{nl})\right)\]
There is a more general formula that incorporates the group action
\[U(\g)(g,t,z)=\exp\left(\sum_{n,l\geq 1}\frac{\mu(l)}{nl}z^n\ln U(\g)(g^{nl},-(-t)^{nl})\right)\]

It can be obtained from the formula (\ref {ad}) in Appendix \ref{S:Appendix I}.

\section{Calculation of the Cohomology}\label{S:calc}
The cohomology governing infinitesimal deformations of ten-dimensional SYM and its reductions to a point were calculated in \cite {M4}.    In present section we use the approach of \cite {M4} to justify the statements of  Theorems 9 and 10.
The calculation will be based on Corollaries \ref{P:tqydxc1},\ref{P:iqwwst2}
(Section \ref{S:wweqe}).
We mentioned in Section \ref{2.1} that the algebra $\Ss$ is related to the manifold of pure spinors $\mathcal{C}$ and to the 
corresponding compact manifold $\mathcal{Q}$. Namely $\Ss_k$ can be interpreted as a space of holomorphic sections of line 
bundle $\O(k)$ over $\mathcal{Q}$. In other words 
\begin{equation}\label{E:cohomology0}
\Ss_k=\rH^0(\mathcal{Q},\O(k))   \mbox { for }{}  k\geq 0.
\end{equation}
 One can prove that all other cohomology groups $\rH^i(\mathcal{Q},\O(k))$ of $\mathcal{Q}$ with coefficients in line bundles $\O(k)$ are 
zero except 

\begin{equation}\label{E:cohomology10}
\rH^{10}(\mathcal{Q},\O(k))=\Ss^*_{-k-8} \mbox { for } k\leq -8.
\end{equation}

 The proof is based on Borel-Weil-Bott theorem.{\footnote { Borel-Weil-Bott theory  deals with calculation of the 
cohomology of  $G/P$   with coefficients in  $G$-invariant holomorphic vector bundles over $G/P$.  Here $G/P$ is a
 compact homogeneous space,  $P$ is a complex subgroup of complex Lie group $G$. These bundles correspond to 
complex representations of the subgroup $P$; more precisely, the total space of vector bundle $\mathcal{E}$ 
corresponding to $P$-module $E$ ( to a representation of $P$ in the space $E$)  can be obtained from $E\times G$ by 
means of factorization with respect to the action of $P$. 
 
  Usually Borel-Weil -Bott theorem is applied in  the case when the representation  of $P$ is one-dimensional (in  the 
case of line bundles); it describes the cohomology as a representation of the group $G$.  However,  more general case  
also can be treated \cite {bott}.

We suppose that the group $G$ is connected and the homogeneous space $G/P$ is simply connected; then $G/P$ can 
be represented as $M/P\bigcap M$ where $M$ is  a compact Lie group and $G$ is a complexification of $M$. If  $E$ is a 
complex $P$-module then 
$$ \rH^{\bullet}(G/P,\mathcal{E})=\sum K\otimes
\rH^{\bullet}(\p,\vvv, {\rm Hom}(K,E))$$
where $K$ ranges over irreducible $M$-modules.
This formula  gives an expression of  cohomology with coefficients in vector bundle in terms of relative Lie algebra 
cohomology, $\p$ stands for real Lie algebra of $P$ and $\vvv$ stands for  Lie algebra of $P\bigcap M$.}} 

The relation between $\Ss$ and $\mathcal{Q}$ can be used to express cohomology of a graded  $L$-module $N=\bigoplus_{m
\geq m_0}N_m$ in terms of cohomology groups related to $\mathcal{Q}$. Recall that
Corollary \ref{P:tqydxc1} permits us to  reduce the calculation of the cohomology at hand to the calculation of the cohomology of the 
complex (of differential module)
$$N^{\bullet}_{c\  \bm{m}}=(N_{m}\otimes \Ss_0\rightarrow N_{m+1}\otimes \Ss_1 \rightarrow\dots )$$

 We can construct a differential vector bundle (a complex of holomorphic vector bundles $\N^{\bullet}$) over $\mathcal{Q}$  in 
such a way that one  obtains the above complex of modules considering holomorphic sections of vector bundles:
 \begin{equation}
\label{N}
\begin{split}
&\N^{\bullet}_{c\  \bm{l}}=(\cdots \rightarrow N_{l-1}\otimes \O(-1)\rightarrow N_{l}\otimes \O(0)\rightarrow N_{l+1}\otimes 
\O(1) \rightarrow\dots) =\\
&=(\cdots\to N_{l-1}(-1)\to N_l(0)\to N_{l+1}(1)\to \dots).
\end{split}
\end{equation}

We use here the notation $N(k)=N\otimes \O(k)$. Notice, that  the construction of the complex of vector bundles  
depends on the choice of index $l$, but this dependence is very simple:$\N^{\bullet}_{c\  \bm{l+1}}=\N^{\bullet}_{c\  \bm
{l}}(-1)$. The differential $d_e$ is a multiplication by
\begin{equation}\label{E:wuwhswtyrt}
e=\lambda^{\alpha}\bm{\theta}_{\alpha}.
\end{equation} 

Let us assume that the modules $N_i$ are  also
$\Spin (10)$-modules (more precisely, $N$ is
a module with respect of semidirect product of $L$ and $\Spin (10)$).
Then vector bundles in the complex (\ref {N}) are 
$\Spin (10)$-invariant; corresponding complex $N_P$
of $P$-modules has the form 
$$N_P=(\cdots \rightarrow N_{l-1}\otimes \mu_{-1}\rightarrow N_{l}\otimes \mu_0\rightarrow N_{l+1}\otimes \mu_1 
\rightarrow\dots) $$
(Recall that  $\mathcal{Q}= \SO (10)/\U(5)$ can be obtained also by means of taking quotient  of complex spinor group $\Spin
(10,\mathbb{C})$ with respect to the  subgroup $P$ defined as a stabilizer of a point $\lambda _0 \in \mathcal{Q}$; see Section  2.3. The complex of $P$-modules comes
from consideration of the complex of fibers over $\lambda _0$. )  

Let us consider hypercohomology of $\mathcal{Q}$ with the coefficients in the complex $\N^{\bullet}_{\bm{l}}=\N^{\bullet}_{c\ \bm
{l}}$. These hypercohomology can be expressed  in terms of the Dolbeault cohomology of $\N^{\bullet}_{\bm{l}}$. Namely 
we should consider the bicomplex $\Omega^{\bullet}(\N^{\bullet}_{\bm{l}})$ of smooth sections of the bundle of $(0,p)$-
forms with coefficients in  $\N^{q}_{\bm{l}}$. Two differentials are $\dbar$ and $d_e$. Hypercohomology $\mathbb{H}^{i}
(\mathcal{Q},\N_{\bm{l}} )$ can be identified with cohomology of the total differential $\dbar+d_e$ in $\Omega^{\bullet}(\N^{\bullet}
_{\bm{l}})$. 

As usual we can analyze cohomology of the total differential  by means of two spectral sequences whose $E_2$ terms 
are equal to  $\rH^i(\rH^j(\Omega(\N_{\bm{l}} ),\dbar),d_e)$ and $\rH^i(\rH^j(\Omega(\N_{\bm{l}} ),d_e),\dbar)$.
\begin{proposition}\label{C:osadfx}
There is a long exact sequence of cohomology
\begin{equation}\label{E:delta}
\begin{split}
& \dots \rightarrow \rH^i(N_{c\ \bm{l}})\rightarrow \mathbb{H}^{i}(\mathcal{Q},\N_{\bm{l}} )\rightarrow \rH^{i-10}(N_{h\ \bm{-8-l}})
\overset{\delta}{\rightarrow}\\
&\overset{\delta}{\rightarrow} \rH^{i+1}(N_{c\ \bm{l}})\rightarrow \dots
\end{split}
\end{equation}
\end{proposition}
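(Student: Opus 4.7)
The plan is to analyze the cohomology of the double complex $C^{p,q}=\Omega^{0,q}(\Q,\N^{p}_{\bm{l}})$ with the two commuting differentials $\bar\partial$ and $d_e$ via the spectral sequence obtained by taking $\bar\partial$-cohomology first. By definition, the hypercohomology $\mathbb{H}^{\bullet}(\Q,\N_{\bm{l}})$ is the cohomology of the total differential $\bar\partial+d_e$, so this spectral sequence converges to $\mathbb{H}^{p+q}(\Q,\N_{\bm{l}})$.

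First I would compute the $E_{1}$-page. Since the $\Spin(10)$-invariant bundles making up $\N^{\bullet}_{\bm{l}}$ are direct sums of tensor products $N_{l+p}\otimes\O(p)$, Dolbeault theory gives
\[
E_{1}^{p,q}=H^{q}(\Q,\N^{p}_{\bm{l}})=N_{l+p}\otimes H^{q}(\Q,\O(p)).
\]
By the Borel--Weil--Bott computation recalled in \eqref{E:cohomology0}, \eqref{E:cohomology10} the cohomology $H^{q}(\Q,\O(p))$ vanishes except for $q=0$ (giving $\Ss_{p}$, for $p\ge 0$) and for $q=10$ (giving $\Ss^{*}_{-p-8}$, for $p\le -8$). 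Hence $E_{1}$ is concentrated in the two rows $q=0$ and $q=10$, and the horizontal differential is induced by $d_{e}=\lambda^{\alpha}\bm{\theta}_{\alpha}$ of \eqref{E:wuwhswtyrt}.

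Next I would identify the two surviving rows. On the row $q=0$ the horizontal complex is exactly $N^{\bullet}_{c\ \bm{l}}$ as defined in \eqref{E:fdgisevi}, so $E_{2}^{p,0}=H^{p}(N^{\bullet}_{c\ \bm{l}})$. On the row $q=10$ the term at position $p$ is $N_{l+p}\otimes\Ss^{*}_{-p-8}$; reindexing shows this is the complex $N^{\bullet}_{h\ \bm{-8-l}}$ of \eqref{E:homel21}, with a total-degree shift of $10$, so $E_{2}^{p,10}=H^{p}(N^{\bullet}_{h\ \bm{-8-l}})$. The key compatibility that has to be checked here is that the horizontal differential $d_{e}$ really restricts on the $q=10$ row to the differential of the homological complex, which amounts to verifying that under Serre duality ($K_{\Q}=\O(-8)$) the map of multiplication by $\lambda^{\alpha}\bm{\theta}_{\alpha}$ is transposed to the differential of Corollary~\ref{P:iqwwst2}; this is essentially the content of Koszul duality for $\Ss$.

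With $E_{2}$ concentrated in rows $q=0$ and $q=10$ only, the higher differentials $d_{r}\colon E_{r}^{p,q}\to E_{r}^{p+r,q-r+1}$ all vanish for $2\le r\le 10$, because their source or target lands in a zero row. The only possibly non-vanishing higher differential is
\[
d_{11}\colon E_{11}^{p,10}=E_{2}^{p,10}\longrightarrow E_{11}^{p+11,0}=E_{2}^{p+11,0},
\]
after which $E_{12}=E_{\infty}$. The resulting two-step filtration on $\mathbb{H}^{i}(\Q,\N_{\bm{l}})$ has associated graded $E_{\infty}^{i,0}=\mathrm{coker}(d_{11})$ and $E_{\infty}^{i-10,10}=\ker(d_{11})$, and unrolling the short exact sequence $0\to E_{\infty}^{i,0}\to\mathbb{H}^{i}\to E_{\infty}^{i-10,10}\to 0$ into a long exact sequence with connecting map $\delta=d_{11}$ yields exactly \eqref{E:delta}.

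The main technical obstacle is the identification of the $q=10$ row with $N^{\bullet}_{h\ \bm{-8-l}}$ together with the correct bookkeeping of indices and signs, i.e.\ making Serre duality intertwine $d_{e}$ on sections of $\O(-p-8)$ with the homological Koszul differential on $\Ss^{*}_{-p-8}$. Once this dictionary is in place the exactness statement is the purely formal output of a spectral sequence with exactly two non-zero rows separated by ten, and the connecting map $\delta$ is, tautologically, the unique non-trivial higher differential $d_{11}$.
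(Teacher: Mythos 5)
Your argument is essentially the paper's own proof: the paper also uses the spectral sequence obtained by taking $\dbar$-cohomology first, observes via the vanishing results for $H^{\bullet}(\Q,\O(k))$ that only the rows $q=0$ and $q=10$ survive and are identified with $N^{\bullet}_{c\ \bm{l}}$ and $N^{\bullet}_{h\ \bm{-8-l}}$, and takes $\delta$ to be the unique possibly non-vanishing higher differential. Your write-up is correct and in fact slightly more careful than the paper's (which loosely calls $\delta$ ``the differential in $E_2$'' where you correctly pin it down as $d_{11}$), so there is nothing to add.
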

\begin{proof}
It follows readily from equalities (\ref{E:cohomology0}, \ref{E:cohomology10}) that nontrivial rows in $E_2$ of the first 
spectral sequence are $(\rH^0(\Omega(\N^{\bullet}_{\bm{l}} ),\dbar)=N_{c\ \bm{l}}$ and $(\rH^{10}(\Omega(\N^{\bullet}_{\bm
{l}} ),\dbar)=N_{h\ \bm{-8-l}}$. ( We use the notations of Corollaries  \ref{P:tqydxc1} and \ref{P:iqwwst2}.) The operator $
\delta$ is the
differential in $E_2$. To complete the proof we notice that this is the only non-vanishing differential in the spectral 
sequence.
\end{proof}

We shall be interested in graded $L$ module
$N=YM$ ; the corresponding graded differential vector bundle (complex of vector bundles) is denoted by $\YM$. Notice, 
that this bundle  is $\Spin (10)$-invariant; it corresponds to the following representation of the group $P$:
\begin{equation}
\label{ym}
L^2+L^3\otimes \mu _1+L^4\otimes \mu _2+....
\end{equation}

 ( As we have noticed there is a freedom in the
construction of complex of vector bundle; the above formula corresponds to $l=2.$)  
 
Similarly starting with $L$ module $TYM$ one can define graded differential vector bundle $\TYM$; it corresponds to the 
representation 
\begin{equation}
\label{sym}
L^3\otimes \mu _1+L^4\otimes \mu _2+...
\end{equation}

 of the group $P$. 

More generally,  we can consider the module \[N=\bigoplus_{k\geq 0} N^k =\Sym^jYM=\bigoplus_{k\geq 0}\Sym^jYM
^k\]  equipped with adjoint action of $L$. Corresponding complexes of vector bundles  are denoted $\Sym ^j \YM$. 
Symmetric algebra $\Sym$ is understood in the graded sense. 

Similarly, we can define complexes of vector bundles $\Sym ^j \TYM$.

Let $\W^*$ be the  vector bundle on $\mathcal{Q}$ induced from the representation $W^*$ of $P$. 
It follows from (\ref{E:decompb}) that there is an embedding $W^*\subset L^2=YM^2\subset YM$.
  From this we conclude that  
there is an embedding $\W^*\rightarrow \YM^{\bullet}$, where we consider $\W^*$ as a graded vector bundle with one 
graded component $W^*$ in grading $2$ and zero differential (as one-term complex). 
\begin {proposition}\label{P:W}
The embedding $\W^*\rightarrow \YM^{\bullet}$ is a quasi-isomorphism.
\end{proposition}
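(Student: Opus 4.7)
The plan is to verify the quasi-isomorphism fiberwise. Since both $\W^*$ and $\YM^{\bullet}$ are $\Spin(10,\mathbb{C})$-equivariant complexes of holomorphic bundles on the homogeneous space $\Q=\Spin(10,\mathbb{C})/P$, equivariance reduces the statement to a check on the single fiber over the distinguished point $\lambda_0$ stabilized by $P$. Over that fiber, $\YM^{\bullet}$ becomes the complex of $P$-modules
\[
L^2\otimes\mu_0\xrightarrow{d_e}L^3\otimes\mu_1\xrightarrow{d_e}L^4\otimes\mu_2\xrightarrow{d_e}\cdots
\]
with $d_e=ad(e_{\lambda_0})$ for $e_{\lambda_0}=\lambda_0^{\alpha}\bm{\theta}_{\alpha}\in L^1$, while $\W^*$ restricts to the $P$-submodule $W^*\subset L^2=V$ appearing in (\ref{E:decompb}).

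The computation of $H^0$ is immediate from the definition of a pure spinor. The first differential sends $x^i\bm{D}_i\in L^2$ to $\Gamma^i_{\alpha\beta}\lambda_0^{\alpha}x^i\bm{\chi}^{\beta}\in L^3\otimes\mu_1$ by (\ref{cc}); its kernel consists of those vectors $x\in V$ annihilated by Clifford multiplication with $\lambda_0$, which by the classical Cartan description of pure spinors is exactly the isotropic five-plane $W^*\subset V$ associated to $\lambda_0$. Hence $H^0$ agrees with the fiber of $\W^*$.

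The bulk of the work, and the main obstacle, is the vanishing of $H^i$ for $i\geq 1$. My plan is to exploit the Koszul structure of $\Ss$ together with the explicit $\widetilde{\GL}(5)$-decompositions of $L^k$ illustrated by (\ref{E:decompa})-(\ref{E:decompd}): filter each $L^k$ by the weight under the central torus of the Levi subgroup $\widetilde{\GL}(5)\subset P$, verify that $d_e$ strictly increases this filtration, and identify the associated graded complex with (pieces of) the Koszul resolution $\Ss\otimes U(L)^*$ of $\mathbb{C}$ evaluated at the point $\lambda_0\in\CQ$. Acyclicity of the latter in positive cohomological degrees follows from the Koszul property of $\Ss$ established in Section \ref{S:Preliminaries}. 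The delicate point will be bookkeeping the contributions from the many summands in each $L^k$ and confirming that the filtration really is $d_e$-strict.

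A complementary sanity check is to apply Proposition \ref{C:osadfx} with $N=YM$ and match the resulting hypercohomology of $\YM^{\bullet}$ on $\Q$ against the Dolbeault cohomology of the bundle (\ref{ym}) computed via Borel-Weil-Bott; the character-level Euler characteristics from Section \ref{S:Euler} must agree with those of $\W^*$, which forces full acyclicity beyond degree zero once vanishing in a few low degrees has been established directly. Either route ultimately reduces the proposition to a representation-theoretic verification on the Levi quotient of $P$.
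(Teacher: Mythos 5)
Your reduction to the fiber over $\lambda_0$ and your computation of $H^0$ are sound and agree with the paper: the kernel of $d_e=\mathrm{ad}(\lambda_0^{\alpha}\bm{\theta}_{\alpha})$ on $L^2=V$ is the annihilator of the pure spinor $\lambda_0$ under Clifford multiplication, i.e.\ the maximal isotropic five-plane $W^*(\lambda_0)$; the paper reaches the same identification slightly less directly, via Schur's lemma applied to $d_e:L^1\otimes\mu_{-1}\to L^2$ together with the decompositions (\ref{E:decompa})--(\ref{E:decompb}).

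The gap is in the vanishing of $H^i$ for $i\geq 1$, and it is not merely unexecuted bookkeeping: the mechanism you propose misidentifies the relevant complex. The fiber complex of $\YM^{\bullet}$ (or of ${\cal L}^{\bullet}$) is $U(L)\otimes\mathbb{C}_{\lambda_0}$ with differential the graded commutator with $e_{\lambda_0}=\lambda_0^{\alpha}\bm{\theta}_{\alpha}$; by Koszul duality (Proposition \ref{P:tqydxc} with $N=\mathbb{C}_{\lambda_0}$) its cohomology is $HH^{\bullet}(\Ss,\mathbb{C}_{\lambda_0})$, not the evaluation at $\lambda_0$ of the acyclic Koszul resolution $\Ss\otimes U(L)^*$ of $\mathbb{C}$. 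This Hochschild cohomology is emphatically \emph{not} zero in positive degrees: by the weak Hochschild--Kostant--Rosenberg theorem (Proposition \ref{P:HKR}), using that $\lambda_0\neq 0$ is a smooth point of $\CQ$, it equals $\Lambda^{\bullet}(T_{\lambda_0}\CQ)$, an eleven-dimensional tangent space worth of classes in degree one alone on the $\Sym^1$ piece. So Koszulity of $\Ss$ by itself cannot deliver the acyclicity you want, and your argument never invokes the smoothness of the punctured cone, which is the essential geometric input. What saves the day in the paper (Appendix \ref{AppendixE}, Corollary \ref{C:hgfjs}) is the precise placement of this nonzero cohomology: $H^j(\Sym^i(L)_P,d_e)=\Lambda^i(T_{\lambda_0})$ for $j=i$ and vanishes otherwise, so for $i=1$ the unique surviving class sits at the $L^1\otimes\mu_{-1}$ term of the full complex ${\cal L}_P$; deleting that term to obtain $YM_P$ removes $T_{\lambda_0}=\Ker(d_e|_{L^1})$ and leaves exactly $d_e(L^1)\cong W^*$ in degree zero. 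Finally, the Euler-characteristic cross-check cannot close the gap either: characters do not rule out cancelling pairs of cohomology groups, and the exact sequence of Proposition \ref{C:osadfx} relates $H^{\bullet}(N_c)$, the hypercohomology, and $H^{\bullet}(N_h)$ to one another without determining any of them independently.
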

We relegate  the proof to the Appendix \ref{AppendixE}.
\begin {corollary}
The embedding of  $\Sym^i (\W^*)$ into $\Sym^i (\YM)$ is a quasi-isomorphism.
\end {corollary}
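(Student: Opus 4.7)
The plan is to deduce the corollary from Proposition \ref{P:W} by the general principle that, in characteristic zero, taking symmetric powers preserves quasi-isomorphisms between complexes of vector bundles (equivalently, of flat $\O$-modules). Concretely, I would argue that the functor $\Sym^i$, applied to chain complexes with Koszul signs, is a direct summand of $\otimes^i$ via the symmetrization idempotent $\pi_i=\frac{1}{i!}\sum_{\sigma\in S_i}\sigma$; so it suffices to show that the tensor power of the embedding $\W^*\hookrightarrow \YM^{\bullet}$ is a quasi-isomorphism, and then restrict to the $S_i$-invariant summand.

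First I would verify the tensor-power statement by an iterated application of the K\"unneth formula. Since $\YM^{\bullet}$ is a complex of locally free sheaves on $\Q$ (vector bundles), each $\YM^q$ is flat over $\O_{\Q}$, and similarly $\W^*$ is locally free concentrated in one degree. Hence tensor products commute with cohomology, and the map
\begin{equation*}
(\W^*)^{\otimes i}\longrightarrow (\YM^{\bullet})^{\otimes i}
\end{equation*}
induces on cohomology the $i$-fold tensor power of the isomorphism $H^{\bullet}(\W^*)\xrightarrow{\sim}H^{\bullet}(\YM^{\bullet})$, which is an isomorphism. Applying the idempotent $\pi_i$ (acting on both sides and compatibly with the map, since the map is $S_i$-equivariant) and using that $\Sym^i$ in characteristic zero is canonically identified with $\pi_i\cdot(-)^{\otimes i}$, I would conclude that $\Sym^i(\W^*)\to \Sym^i(\YM^{\bullet})$ is a quasi-isomorphism.

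The only genuine subtlety I anticipate is boundedness: $\YM^{\bullet}$ is bounded below but a priori infinite in the positive direction, so one must ensure that the K\"unneth argument and the extraction of cohomology are legitimate. I would resolve this by using the auxiliary internal (total) grading inherited from the grading on $YM=\bigoplus_{k\ge 2}L^k$: in each fixed internal degree $\bm{m}$ the complex $\Sym^i(\YM)_{\bm{m}}$ has only finitely many nonzero terms, because a term $L^{k_1}\otimes\cdots\otimes L^{k_i}$ of internal degree $\bm{m}$ forces $k_1+\cdots+k_i=\bm{m}$ with each $k_j\ge 2$, giving finitely many options. Thus the whole argument takes place inside honestly bounded complexes of vector bundles in each internal degree, and the Koszul-signed K\"unneth/direct-summand reasoning applies without analytic worry.

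The main obstacle, therefore, is really bookkeeping rather than anything deep: I need to confirm that the identifications $\Sym^i(C^{\bullet})\cong\pi_i\cdot(C^{\bullet})^{\otimes i}$ and the $S_i$-equivariance of the embedding respect the Koszul sign conventions and the internal grading consistently, after which the corollary is immediate from Proposition \ref{P:W} by restricting to each internal degree.
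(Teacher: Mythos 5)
Your proposal is correct and follows essentially the same route as the paper, which deduces the corollary from Proposition \ref{P:W} by the K\"unneth theorem; your extra steps (realizing $\Sym^i$ as the image of the symmetrization idempotent inside $\otimes^i$ in characteristic zero, and using the internal grading to keep each graded piece a bounded complex) merely make explicit what the paper leaves implicit. No substantive difference in approach.
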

Here  $\Sym^i (\W^*)$ is considered as
graded vector bundle with grading $2i$.
To deduce the corollary we use K\"unneth theorem.

 We can reformulate Proposition \ref {P:W} saying that the induced map of hypercohomology $\mathbb{H}^{\bullet}(\W^*)
\rightarrow \mathbb{H}^{\bullet}(\mathcal{Q}, \YM^{\bullet}_{\bm{0}}) $ is an isomorphism.   Similarly, the map $\mathbb{H}^
{\bullet}(\mathcal{Q},\Sym^i(\W^*)(l))\rightarrow \mathbb{H}^{\bullet}(\Sym^i\YM_{\bm{l}}^{\bullet}) $ is an isomorphism.

Using this statement  and (\ref {E:delta}) we obtain
\begin{corollary}\label{C:osadfkkkx}
There is a long exact sequence of cohomology
\begin{equation}\label{E:deltaaa}
\begin{split}
& \dots \rightarrow \rH^i(\Sym^jYM_{c\ \bm{l}})\rightarrow \rH^{i+l-2j}(\mathcal{Q},\Sym^j(\W^*)(2j-l) )\rightarrow \rH^{i-10}(\Sym^jYM_{h
\ \bm{-8-l}})\overset{\delta}{\rightarrow}\\
&\overset{\delta}{\rightarrow} \rH^{i+1}(\Sym^jYM_{c\ \bm{l}})\rightarrow \dots
\end{split}
\end{equation}
\end{corollary}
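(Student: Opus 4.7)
The plan is to specialize the general long exact sequence of Proposition \ref{C:osadfx} to the graded $L$-module $N = \Sym^j(YM)$, then rewrite the hypercohomology term using the quasi-isomorphism of vector bundle complexes provided by Proposition \ref{P:W} (and its corollary).

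First, I would apply Proposition \ref{C:osadfx} directly with $N = \Sym^j(YM)$. Because $\Sym^j(YM)$ is a graded $L$-module under the adjoint action, its decomposition along internal degree reproduces exactly the subcomplexes $\Sym^j(YM)_{c\ \bm{l}}$ and $\Sym^j(YM)_{h\ \bm{-8-l}}$ of Corollaries \ref{P:tqydxc1} and \ref{P:iqwwst2}. This step yields a long exact sequence whose middle term is the hypercohomology $\mathbb{H}^i(\Q, \Sym^j(\YM)_{\bm{l}}^{\bullet})$ of the associated graded differential vector bundle.

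Next, I would invoke the corollary to Proposition \ref{P:W}: the inclusion $\Sym^j(\W^*) \hookrightarrow \Sym^j(\YM)^{\bullet}$ is a quasi-isomorphism, and quasi-isomorphisms of complexes of vector bundles on $\Q$ induce isomorphisms on hypercohomology. This reduces the middle term to $\mathbb{H}^i$ of a complex concentrated in a single cohomological position. The real computational content is degree book-keeping: under the convention (\ref{N}) that places $N_m$ in cohomological degree $m-l$ twisted by $\O(m-l)$, the bundle $\Sym^j(\W^*)$ (internal degree $2j$) sits in cohomological position $2j - l$ and is tensored with $\O(2j - l)$. The hypercohomology of a one-term complex equals sheaf cohomology with a cohomological shift, giving
\[
\mathbb{H}^i(\Q, \Sym^j(\YM)_{\bm{l}}^{\bullet}) \;\cong\; H^{i+l-2j}(\Q, \Sym^j(\W^*)(2j - l)).
\]

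Substituting this identification into the long exact sequence from the first step reproduces (\ref{E:deltaaa}) verbatim. The only obstacle I anticipate is keeping the internal grading, the cohomological grading, and the line-bundle twist synchronized; once the construction (\ref{N}) is unpacked carefully, the result is essentially formal, since both the general exact triangle of Proposition \ref{C:osadfx} and the quasi-isomorphism from Proposition \ref{P:W} have already been established.
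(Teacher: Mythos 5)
Your proposal is correct and follows exactly the paper's route: specialize the long exact sequence of Proposition \ref{C:osadfx} to $N=\Sym^j(YM)$, then use the quasi-isomorphism $\Sym^j(\W^*)\hookrightarrow\Sym^j(\YM)^{\bullet}$ (the corollary of Proposition \ref{P:W}) to identify the hypercohomology term with $H^{i+l-2j}(\Q,\Sym^j(\W^*)(2j-l))$. Your degree bookkeeping (internal degree $2j$, cohomological position $2j-l$, twist $\O(2j-l)$) matches the convention of (\ref{N}) and agrees with the paper's statement.
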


Using Corollary \ref{P:iqwwst2} we can identify the cohomology $\rH^{\bullet}(\Sym^jYM_{h})$ with homology $\rH_{\bullet}
(L,YM)$. Likewise $\rH^{\bullet}(\Sym^jYM_{c})$ is isomorphic to  $\rH^{\bullet}(L,\Sym(YM))$. This means that we can
formulate  (\ref {E:deltaaa}) as a long exact sequence
\begin{equation}
\label{E:deltaa2}
\begin{split}
& \dots \rightarrow \rH^{i,\bm{l}}(L,\Sym^jYM)\rightarrow \rH^{i+\bm{l}-2j}(\mathcal{Q},\Sym^j(\W^*)(2j-\bm{l}) )\rightarrow \\
&\rightarrow \rH_{2-i,\bm{l}-\bm{8}}(L,\Sym^jYM)\overset{\delta}{\rightarrow} \rH^{i+1,\bm{l}}(L,\Sym^jYM)\rightarrow \dots\\
\end{split}
\end{equation}

The hypercohomology   $\mathbb{H}^{\bullet}(\mathcal{Q},\Sym^i(\W^*)(l))
$ it is equal up to a shift in grading to the ordinary cohomology of the $\mathcal{Q}$ vector bundle  $\Sym^i(\W^*)(l)$. Such 
cohomology can be computed via Borel-Weil-Bott theory. 

\begin{proposition}\label{P:invariants}\ \\

$\rH^0(\mathcal{Q},\Sym^j(\W^*)(l))=[0,0,0,j,l-j], j,l-j \geq 0$

$\rH^4(\mathcal{Q},\Sym^j(\W^*)(l))=[j-3,0,0,l+2,0], l \geq-2,j\geq 3$

$\rH^{10}(\mathcal{Q},\Sym^j(\W^*)(l))=[j,0,0,-8-l,0], l \leq -8,j\geq 0$

Straightforward inspection of the cohomology groups shows that the following groups are generated by    $\Spin(10)$-
invariant elements : $\langle e\rangle=\rH^0(\mathcal{Q},\Sym^0(\W^*)(0))$, $\langle c\rangle=\rH^4(\mathcal{Q},\Sym^3(\W^*)(-2))$, $\langle 
e'\rangle=\rH^{10}(\mathcal{Q},\Sym^0(\W^*)(-8))$.
\end{proposition}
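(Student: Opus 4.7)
The plan is to apply the Borel–Weil–Bott theorem, exactly as suggested by the footnote on page referring back to Section 2.3. The key setup is that $\Q = \Spin(10,\mathbb{C})/P$ is a (co)minuscule homogeneous space, and the vector bundle $\Sym^j(\W^*)(l)$ is the $\Spin(10)$-equivariant holomorphic bundle induced from the $P$-representation $\Sym^j(W^*)\otimes \mu_l$, where $W^*$ and $\mu_l$ are as in Section 2.3. Thus every cohomology group $H^i(\Q,\Sym^j(\W^*)(l))$ is either zero or a single irreducible $\Spin(10)$-module that is computed by the Bott algorithm.

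First I would identify the highest weight of the $P$-module $\Sym^j(W^*)\otimes \mu_l$ in coordinates adapted to the simple roots of $D_5$. Since $W^*$ is the standard five-dimensional representation of the Levi factor $\widetilde{\GL}(5)$, the representation $\Sym^j(W^*)$ is irreducible with highest weight encoded by the partition $(j,0,0,0,0)$, and tensoring with $\mu_l$ shifts this by $l$ times the fundamental weight corresponding to $\O(1)$. Expressing the resulting weight in the standard $D_5$ Dynkin basis and adding $\rho$, one reads off three regimes:
\begin{itemize}
\item $l-j\ge 0$: the weight is already dominant, so $H^0$ is nonzero and the Bott algorithm produces the irreducible module $[0,0,0,j,l-j]$;
\item $l\ge -2$ and $j\ge 3$: after adding $\rho$ the weight requires a length-$4$ Weyl transformation to become dominant, producing $H^4=[j-3,0,0,l+2,0]$;
\item $l\le -8$: the weight is deeply antidominant and a length-$10$ Weyl transformation returns it to the dominant chamber, giving $H^{10}=[j,0,0,-8-l,0]$.
\end{itemize}
In between (i.e., $-7\le l\le -3$ or intermediate combinations with $j<3$) the shifted weight lies on a wall of the Weyl chamber and all cohomology vanishes, which also needs to be verified to confirm that no other degrees of cohomology contribute.

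The main technical obstacle is the bookkeeping of the Weyl chamber walls and the length of the Weyl element required for each range of $(j,l)$; one must confirm that only the three cases above give regular weights and that the resulting lengths are $0$, $4$, and $10$ respectively. A convenient check is Serre duality, which exchanges $H^0$ and $H^{10}$ on the 10-dimensional variety $\Q$ (twisted by the canonical bundle $\O(-8)$ since $\Q$ has index $8$) and already relates the first and third formulas in the statement.

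Finally, to exhibit the $\Spin(10)$-invariant elements $e$, $c$, $e'$ I would just specialize the Dynkin labels and look for the trivial representation $[0,0,0,0,0]$: it appears in $H^0$ precisely when $j=0,\ l=0$ (the generator $e$); in $H^4$ precisely when $j=3,\ l=-2$ (the generator $c$); and in $H^{10}$ precisely when $j=0,\ l=-8$ (the generator $e'$, dual to $e$ under Serre duality). These are the only dominant weights from the list that are zero, so this exhausts the $\Spin(10)$-invariants.
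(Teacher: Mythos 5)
Your proposal is correct and follows essentially the same route as the paper, which likewise just invokes the Borel--Weil--Bott theorem for the $\Spin(10,\mathbb{C})$-equivariant bundle induced from the $P$-module $\Sym^j(W^*)\otimes\mu_l$ and reads off the three regular regimes (Weyl lengths $0$, $4$, $10$) together with the vanishing on the walls; the paper supplies no more detail than you do. Your Serre-duality cross-check of the $H^0$/$H^{10}$ pair and the specialization to the trivial Dynkin label $[0,0,0,0,0]$ to locate $e$, $c$, $e'$ match the paper's ``straightforward inspection.''
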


To analyze the super Poincar\'e invariant deformations we use $\Spin (10)$-invariant part
of exact sequence (\ref{E:deltaa2}). It is easy to check that $\Spin (10)$-invariant elements of hypercohomology listed 
above are mapped into zero in this long exact  sequence. This means that this long exact sequence splits into short 
exact sequences
  \begin{equation}
\label{E:deltasddd}
\begin{split}
&\mbox{ if } i=11, j=0,l=8 \mbox{  then }\\
&0 \rightarrow \rH^{i+l-2j-1}(\mathcal{Q},\Sym^j(\W^*)(2j-l) )^{\Spin(10)}\rightarrow \\
&\rightarrow \rH_{3-i,\bm{l}-\bm{8}}(L,\Sym^jYM)^{\Spin(10)}\overset{\delta}{\rightarrow} \rH^{i,\bm{l}}(L,\Sym^jYM)^{\Spin
(10)}\rightarrow  0\\
&\mbox{ otherwise }\\
&0 \rightarrow \rH_{3-i,\bm{l}-\bm{8}}(L,\Sym^jYM)^{\Spin(10)}\overset{\delta}{\rightarrow} \rH^{i,\bm{l}}(L,\Sym^jYM)^{\Spin
(10)}\rightarrow\\
&\rightarrow \rH^{i+l-2j}(\mathcal{Q},\Sym^j(\W^*)(2j-l) )^{\Spin(10)}\rightarrow 0
\end{split}
\end{equation} 

We see that  that $\Spin (10)$-invariant elements  of hypercohomology $e,c,e'$ contribute to (co)homology $\rH^{0,\bm
{0}} (L, \mathbb{C})$, $\rH^{2,\bm{8}} (L, \Sym^3 YM)^{\Spin(10)}$ and $\rH_{0,\bm{0}} (L, \mathbb{C})$. The only non-
trivial contribution corresponds to $c$ and gives the infinitesimal deformation $\dL_{16}$ (\ref {E:dvsdjvh}).

{\bf Proof of Theorem \ref{T:reduced}}

We shall give a proof of this theorem assuming that
all infinitesimal supersymmetric deformations are given by
Theorem \ref {T:dfadfqq}. 
The key moment in the proof is the use of short exact sequence (\ref{E:deltasddd}).

The operator $\delta$ in exact sequence  (\ref{E:deltaa2}) defines a map \[\delta:\rH_{1}(L,\Sym(YM))\rightarrow \rH^{2}(L,\Sym(YM))\] whose kernel and cokernel are controlled by the exact sequence. We conclude that the space $\delta(\rH_{1}(L,\Sym(YM))^{\Spin(10)})$ has codimension one in $\rH^{2}(L,\Sym(YM))^{\Spin(10)}$. We shall prove that the space of super Poincar\'e invariant deformations of equations of motion given by  the
formula (\ref {E:fadjghjd})  has the same codimension in $\rH^{2}(L,\Sym(YM))^{\Spin(10)}$ as $\delta(\rH_{1}(L,\Sym(YM))^{\Spin(10)})$; this gives a proof of the
theorem \ref{T:reduced}. (The formula (\ref {E:fadjghjd}) specifies a supersymmetric deformation of Lagrangian. However,  a deformation of Lagrangian function produces a deformation of equations of motions; this manifests in a map \[var:\rH_{0}(YM,U(YM))\rightarrow \rH^{2}(YM,U(YM)).\]  See Section   \ref{S:homologicalapp}
for more detail.)

Supersymmetry transformations $\btheta_{\alpha}$  act by derivations on Lie algebra $YM$. From this we conclude $\btheta_{\alpha}$ induce operators that act on objects constructed naturally (functorially)  from $YM$. In particular they act on 
\begin{equation}
\rH^i(YM,U(YM))\overset{P}{\cong}\rH_{3-i}(YM, U(YM))
\end{equation}
Here $P$ denotes the Poincar\'e isomorphism (see Appendix A).
The composition $\btheta_1\cdots \btheta_{16}$ defines an operator in homology.  We shall use the notation $A_k$ for this operator acting on $k$-dimensional homology:
\[A_k:\rH_k(YM,U(YM))\overset{A}{\rightarrow }\rH_k(YM,U(YM))\] 
 In Section \ref {S:fdgdfgmjl} we have interpreted the linear space $\rH_0(YM,U(YM))\cong \rH_0(YM,\Sym(YM))$ as a linear space of infinitesimal deformations of action functions in the reduced theory. Obviously the operator $A_0$ coincides with $A$ defined in (\ref{E:fadjghjd}). 
 
 The maps $A_k$ have an alternative description. Let $N$ be an $L$-module. It is also an $YM$ -module. Since homology is a covariant functor with respect to the Lie algebra argument there is a map $(i_{*})_k:\rH_k(YM,N)\rightarrow \rH_k(L,N)$. Likewise there is a map in opposite direction on cohomology $(i^*)_k:\rH^k(L,N)\rightarrow \rH^k(YM,N)$.  These observations enable us to define composition maps 
\[T_k:\rH_k(YM, U(YM))\overset{(i_{*})_k}{\rightarrow} \rH_k(L, U(YM))\overset{\delta}{\rightarrow }\rH^{3-k}(L, U(YM))\overset{(i^{*})_{3-k}}{\rightarrow}\rH^{3-k}(YM, U(YM))\overset {P}{\rightarrow} \rH_k(YM, U(YM))\]
Notice, that   the map $i^*_2$ acts from
$\rH^2(L,U(YM))$ into $\rH^2(YM,U(YM))$;  we have shown in Section \ref{S:homologicalapp} that the elements in the image of this map correspond to supersymmetric deformations.  The same arguments can be applied to the map $i^*_k$; they lead to the conclusion that $$T_k:\rH_k(YM, U(YM))\to \rH_k(YM, U(YM))^{\susy}$$ (in other words, the image of $T_k$ consists of supersymmetric elements). The map $A_k$  obviously has the same feature, therefore it is natural to conjecture that
the maps $A_k$ and $T_k$ coincide. To prove this conjecture we notice that the
 operators $A_k$ and $T_k$ can be defined for arbitrary $L$- module $N$ as operators
 $$\rH_k(YM, N)\to \rH_k(YM,N)^{\susy}$$.  Using free resolutions one can reduce the proof to the consideration of the module $N\cong U(L)$ where $L$ acts on  $U(L)$  by left multiplication
(see \cite{M4} for details).

In general it is not easy to describe maps $i_{*}$ and $i^{*}$. It is easier to analyze their restrictions to $\Spin (10)$-invariant elements.  Let us consider maps $i_{*1}:\rH_1(YM,U(YM))^{\Spin(10)}
\rightarrow \rH_1(L,U(YM))^{\Spin(10)}$ and $i^{*}_2:\rH^2(L,U(YM))^{\so(10)}\rightarrow \rH^2(YM,U(YM))^{\so
(10)\ltimes \susy}\subset \rH^2(YM,U(YM))^{\Spin(10)} $.  One can prove the following
\begin{lemma}\label{L:dewr23}
The maps $i_{*1},i^{*}_2$ are surjective.
\end{lemma}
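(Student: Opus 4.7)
The plan is to apply the Hochschild--Serre spectral sequence associated to the extension of Lie algebras $0\to YM\to L\to L^1\to 0$, where the abelian quotient $L^1$ is the $16$-dimensional space of supersymmetry generators. For the $L$-module $N=U(YM)$ this yields
\begin{equation*}
E_2^{p,q}=H^p(L^1,H^q(YM,U(YM)))\Rightarrow H^{p+q}(L,U(YM)),
\end{equation*}
together with the dual homology spectral sequence. The maps $i^*_2$ and $i_{*1}$ of the lemma are precisely the edge homomorphisms: the image of $i^*_2$ is $E_\infty^{0,2}\subseteq E_2^{0,2}=H^2(YM,U(YM))^{L^1}$, and the image of $i_{*1}$ is $E^\infty_{0,1}\subseteq H_1(L,U(YM))$. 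Since inner derivations by $YM$ act trivially on $H^\bullet(YM,U(YM))$, the $\susy$-action on this cohomology factors through $L/YM=L^1$, so $\susy$-invariance coincides with $L^1$-invariance and the $\spin(10)\ltimes\susy$-invariant target of $i^*_2$ is exactly $(E_2^{0,2})^{\spin(10)}$.

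The lemma then reduces to two representation-theoretic vanishing statements. For $i^*_2$, one needs that on $\spin(10)$-invariants every outgoing differential $d_r\colon E_r^{0,2}\to E_r^{r,3-r}$ vanishes for $r\geq 2$, so that $(E_\infty^{0,2})^{\spin(10)}=(E_2^{0,2})^{\spin(10)}$. For $i_{*1}$, the coinvariant map $H_1(YM,U(YM))\twoheadrightarrow E^2_{0,1}$ is surjective and remains so after the exact operation of taking $\spin(10)$-invariants, so what is left to prove is that $(E^\infty_{1,0})^{\spin(10)}=0$ inside $H_1(L,U(YM))^{\spin(10)}$; this is an assertion about the subquotient of $H_1(L^1,U(YM)_{YM})$ that survives to the $E_\infty$-page.

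To verify these vanishing claims I would assemble the $\spin(10)$-content of the relevant $E_2$-terms. By Poincar\'e--Birkhoff--Witt, $U(YM)\cong\Sym(YM)$ as an $L$-module, so one may decompose everything into symmetric powers and invoke the $\spin(10)$-equivariant exact sequence of Corollary \ref{C:osadfkkkx}, which expresses the cohomology $H^\bullet(L,\Sym^jYM)$ in terms of hypercohomology of the twisted bundles $\Sym^j(\W^*)(l)$ on $\Q$. Together with Proposition \ref{P:invariants}, this restricts the $\spin(10)$-invariant cohomology on $\Q$ to the three classes $e$, $c$, $e'$. The outer factor $H^p(L^1,\cdot)$ is then handled by the Koszul complex on $L^1=S$, where $S$ is the spinor representation. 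A case-by-case inspection should show that the $\spin(10)$-invariant parts of $E_2^{2,1}$ and $E_3^{3,0}$ contain no irreducible summand capable of receiving a nonzero $\spin(10)$-equivariant image from $(E_2^{0,2})^{\spin(10)}$, and similarly for the incoming differentials into $E_r^{1,0}$ on the homology side.

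The hard part of the program is precisely this bookkeeping of irreducible $\spin(10)$-components: one must enumerate the summands appearing in $H^q(YM,\Sym^jYM)$ for $q=0,1,2$ and in the coinvariants $U(YM)_{YM}$ across all $j$, and then check that no equivariant differential can hit the surviving targets. This is mechanical but lengthy, and is the type of representation-theoretic case analysis carried out systematically in \cite{M6}, to which I would defer for the detailed verification.
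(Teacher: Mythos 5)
Your proposal is correct and follows essentially the same route as the paper: the paper's own (sketched) proof likewise invokes the Hochschild--Serre spectral sequences for the extension $YM\subset L$, with $E_2$-terms $H^{i}(YM,\Sym(YM))\otimes\Sym^j(S)$ and $H_{i}(YM,\Sym(YM))\otimes\Sym^j(S^*)$, and defers the representation-theoretic case analysis to \cite{M6}. Your identification of $i^*_2$ and $i_{*1}$ as edge homomorphisms and the reduction to the vanishing of $(E^\infty_{1,0})^{\spin(10)}$ and of the outgoing differentials on $(E_2^{0,2})^{\spin(10)}$ is a correct and slightly more explicit rendering of the same argument.
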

If we take this Lemma for granted we conclude that $A_1:\rH_1(YM,U(YM))^{\so(10)}\rightarrow \rH_1(YM,U(YM))^
{\so(10)\ltimes \susy}$ has one-dimensional co-kernel (of the same dimension as the co-kernel of the map $\delta$).

The rather technical  proof of the lemma (see \cite{M4}) is based on analysis of Serre-Hochschild spectral sequences associated with 
extension $YM\subset L$:
\[\begin{split}
&\rH_{i}(YM,\Sym YM)\otimes\Sym^jS^*\Rightarrow \rH_{i+j}(L,\Sym YM)\\
&\rH^{i}(YM,\Sym YM)\otimes\Sym^jS\Rightarrow \rH^{i+j}(L,\Sym YM).
\end{split}\]

Notice that the surjectivity of $i^*_2$ has clear physical meaning: it can be interpreted as  a statement that all super Poincar\'e invariant  deformations in the sense of Section \ref{S:homologicalapp} are described by Theorem \ref{E:theorem10}.

The above considerations gave us the information about the codimension of the image
of the operator $A_1$. To prove Theorem \ref {T:reduced} we need information about the codimension of the image of $A_0$. This information can be obtained from the results about operator $A_1$ by means of Connes differential
$$B: \rH_k(YM, U(YM))\to \rH_{k+1} (YM, U(YM))$$
(see Appendix \ref{AppendixA}).
Using the fact that supersymmetries commute with the Connes differential we obtain that
$$ A_{k+1}B= BA_k$$, in particular, $A_1B=BA_0$.
 
We need the following 
\begin{lemma}\label{L:BBB}
The map $B$ defines a surjective map $\rH_0(YM,U(YM))^{\so(10)}\rightarrow \rH_1(YM,U(YM))^{\so(10)}$ with 
one-dimensional kernel generated by constants.
\end{lemma}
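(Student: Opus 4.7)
\textbf{Proof proposal for Lemma \ref{L:BBB}.}
The plan is to compute both invariant spaces as concretely as possible, then identify the Connes operator $B$ in terms understood earlier, and finally read off the kernel and image. First, I would use PBW to write $U(YM)\cong \Sym(YM)$ as an $L$-module and decompose
\[
H_k(YM,U(YM))^{\so(10)}=\bigoplus_{j\ge 0}H_k(YM,\Sym^j YM)^{\so(10)},
\]
so that the long exact sequence \eqref{E:deltaa2} restricted to $\so(10)$-invariants (which splits into the short exact sequences \eqref{E:deltasddd}) gives a geometric description of each summand in terms of hypercohomology of $\Sym^j(\W^*)(l)$ over $\Q$. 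Proposition \ref{P:invariants} lists the only $\so(10)$-invariant classes, namely $e$, $c$ and $e'$, so both $H_0$ and $H_1$ on the invariant side are very small: $H_0^{\so(10)}$ contains the constants and a further explicit contribution coming from $e'$ (together with its Poincar\'e/$\delta$ shifts), while $H_1^{\so(10)}$ is generated by the corresponding degree-one shifts.

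Second, I would identify the Connes differential. Under the isomorphism $HH_\bullet(U(YM),U(YM))\cong H_\bullet(YM,U(YM))$, Connes' $B$ is the standard $B$ on Hochschild homology. The PBW filtration on $U(YM)$ gives a compatible filtration on the Hochschild complex whose associated graded computes $HH_\bullet(\Sym YM,\Sym YM)$; by Hochschild--Kostant--Rosenberg this is $\Omega^\bullet_{\Sym YM/\mathbb{C}}$, and Connes' $B$ degenerates to the de~Rham differential $d_{dR}$. Thus, at the level of the $E^1$-page of the PBW spectral sequence, the map in question is
\[
d_{dR}\colon (\Sym YM)^{\so(10)}\longrightarrow\bigl((\Sym YM)\otimes YM^{*}\bigr)^{\so(10)},
\]
which is visibly injective modulo constants (a nonconstant polynomial has a nonzero differential) and surjective by the classical polynomial Poincar\'e lemma applied $\so(10)$-equivariantly on the graded vector space $YM$. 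This gives the right answer on the $E^1$-page.

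Third, I would lift the conclusion from the associated graded to $U(YM)$ itself. The spectral sequence has only finitely many internal-degree contributors in each bidegree, and after taking $\so(10)$-invariants the only possible classes are those enumerated in the previous paragraph; I would check by direct inspection against the LES \eqref{E:deltaa2} that every higher differential lands in a zero graded piece and hence vanishes. Combined with the commutation $A_{k+1}B=BA_{k}$ and the already established surjectivity of $A_1$ on invariants (Lemma \ref{L:dewr23}), this pins down both the one-dimensional kernel $\langle 1\rangle$ and the surjectivity of $B$ on invariants.

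The main obstacle will be the third step: showing that the PBW spectral sequence really degenerates on the invariant part at the relevant bidegrees, so that the clean HKR/de~Rham picture on the associated graded transfers to $U(YM)$ without extra corrections. Equivalently, one must rule out any $\so(10)$-invariant class in $H_1(YM,U(YM))$ that is not hit from $H_0$ modulo constants; this is where the tight list in Proposition \ref{P:invariants} does the real work, and where I would expect to lean most heavily on the computations of \cite{M6}.
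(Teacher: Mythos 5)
Your overall strategy---transporting Connes' $B$ through the identification $HH_\bullet(U(YM),U(YM))\cong H_\bullet(YM,\Sym(YM))$ and reducing to a de Rham/Poincar\'e-lemma statement on the polynomial algebra $\Sym(YM)$---is exactly the idea behind the general theorem of \cite{Loday} that the paper's (two-line) proof cites: for a positively graded Lie algebra $\g$ the complex $(H_\bullet(\g,U(\g)),B)$ is acyclic except for the constants in degree zero. But your second step contains a genuine error. The polynomial Poincar\'e lemma does \emph{not} say that $d_{dR}\colon \Omega^0_{\Sym(YM)}\to\Omega^1_{\Sym(YM)}$ is surjective; it says the de Rham complex is exact, so the image of $d_{dR}$ on functions is exactly the \emph{closed} one-forms, and already on $\so(10)$-invariants there are plenty of non-closed one-forms (for instance $\sum_{i,j}F_{ij}D_j\,dD_i$ built from the copies of $V$ and $\Lambda^2 V$ inside $YM$). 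Note also that $\Omega^1_{\Sym(YM)}\cong \Sym(YM)\otimes YM$, not $\Sym(YM)\otimes YM^{*}$. What acyclicity of the de Rham complex actually buys, after playing it against the Chevalley--Eilenberg differential in the bicomplex $\Lambda^\bullet(YM)\otimes\Sym(YM)$, is acyclicity of the $B$-complex on homology: $\Ker(B|_{H_0})=\mathbb{C}$ and $\Im(B\colon H_0\to H_1)=\Ker(B\colon H_1\to H_2)$. That yields the kernel statement of the lemma but \emph{not} surjectivity.

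The missing step is therefore to show that $B$ annihilates every $\so(10)$-invariant class of $H_1(YM,U(YM))$, i.e.\ that no invariant of $H_1$ maps to a nonzero invariant of $H_2$. This is what the paper extracts from the explicit computation (\ref{E:cohcomp})--(\ref{E:cohcompu}): by Poincar\'e duality $H_2\cong H^1$, whose invariant part is one-dimensional, and the bookkeeping of the internal grading together with the index shift $B_i\colon HH_n(U(\g),\Sym^i(\g))\to HH_{n+1}(U(\g),\Sym^{i-1}(\g))$ rules out a nonzero invariant image. You do acknowledge in your third step that some extra input is needed, but you misattribute the difficulty to degeneration of the PBW spectral sequence rather than to the exactness-versus-surjectivity gap at $H_1$; and the appeal to Lemma \ref{L:dewr23} is backwards, since in the paper the surjectivity of $A_1$ is combined \emph{with} Lemma \ref{L:BBB} to control $A_0$ and plays no role in proving Lemma \ref{L:BBB} itself.
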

\begin{proof}
The proof (see \cite{M3} and \cite{M4}) is based on  a general theorem (see \cite{Loday}) which asserts that the 
cohomology of $B$ in  $\rH_{i}(\g,U(\g))$ for positively graded $\g$ is trivial and generated by constants $\mathbb{C}
\subset \rH_{0}(\g,U(\g))$. The rest follows from the information about homology  of $YM$ with coefficients in $U(YM)$ (see \ref{E:cohcomp}).
\end{proof}

The proof of the statement that co-dimension of $\Imm(A_0)$ in the space of $\susy$-invariant elements in 
$\rH_{0}(YM,U(YM))$ is equal to one easily follows from this lemma. We know that the image of map $A_1$ has co-
dimension one in the space of $\susy$-invariants. The operator $B$ preserves  $\so(10)\ltimes \susy$-invariant 
subspaces. If we write $\rH_{0}(YM,U(YM))=\mathbb{C}+\underline{\rH}_{0}(YM,U(YM))$, the operator $B$ admits the 
inverse: $B^{-1}:\rH_{1}(YM,U(YM))^{\Spin(10)}\rightarrow \underline{\rH}_{0}(YM,U(YM))^{\Spin(10)}$. The identity 
$A_1B=BA_0$ implies that $A_0$ is equal to $B^{-1}A_1B$, when restricted on $\underline{\rH}_{0}(YM,U(YM))^{\Spin
(10)}$. The claim follows from the corresponding statement for $A_1$.

 The reader should consult for missing details the references \cite{M3} and \cite{M4}.

We have analyzed  the case of reduced SYM theory. Very similar considerations can be applied to the unreduced case.

First of all we  should formulate the analog of
Proposition \ref {P:W}. Let us notice that it follows from (35) that $W\otimes \mu _{-1}\subset L^3,$ hence $W\subset 
L^3\otimes \mu _1.$   Using (\ref {sym}) we conclude that  
there is an embedding $ {\W}\rightarrow \TYM^{\bullet}$, where we consider $ {\W}$ as a graded vector bundle with one 
graded component  in grading $3$ that corresponds  to $P$-module $ W$ and has zero differential. 
\begin{proposition}
\label{P:WW}
The embedding of $
 {\W}$ into $\TYM^{\bullet}$  is a quasi-isomorphism.
\end{proposition}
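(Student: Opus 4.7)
The plan is to reduce Proposition \ref{P:WW} to the already-established Proposition \ref{P:W} by exploiting the short exact sequence of graded $L$-modules
\begin{equation*}
0 \to TYM \to YM \to L^2 \to 0,
\end{equation*}
in which the quotient is concentrated in internal grading $2$ and carries the trivial $L$-action (because $[L,L^{\geq 2}]\subset L^{\geq 3}\subset TYM$, so the adjoint action of $L$ on $YM$ descends trivially to $L^2$). First I would apply the functor of formula (\ref{N}) term by term to obtain a short exact sequence of complexes of $\Spin(10)$-equivariant vector bundles on $\Q$,
\begin{equation*}
0 \to \TYM^{\bullet} \to \YM^{\bullet} \to \mathcal{V} \to 0,
\end{equation*}
where $\mathcal{V}$ is the equivariant bundle on $\Q$ associated to the $P$-module $L^2$, placed in a single cohomological degree and equipped with the zero differential.

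Passing to the long exact sequence of cohomology sheaves, Proposition \ref{P:W} tells us that the cohomology of $\YM^{\bullet}$ is concentrated in a single cohomological degree with value $\W^*$, and the induced map $\W^* \to \mathcal{V}$ is the natural $P$-module inclusion $W^* \hookrightarrow L^2$ furnished by decomposition (\ref{E:decompb}). Since this inclusion is injective, the long exact sequence forces the cohomology of $\TYM^{\bullet}$ to be concentrated in the next higher cohomological degree, where it is identified with the cokernel sheaf $\mathcal{V}/\W^*$; the $P$-module short exact sequence $0\to W^*\to L^2\to W\to 0$ then canonically identifies this cokernel with $\W$.

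What remains is to check that the map $\W\to\TYM^{\bullet}$ appearing in the statement (coming from the $P$-submodule inclusion $W\subset L^3\otimes\mu_1$ of decomposition (\ref{E:decompc})) is actually a well-defined morphism of complexes and induces the cohomology identification produced above. Well-definedness amounts to showing that $\W$ lies in $\ker d_e$; this can be verified at the fiber over $\lambda_0\in\Q$, where the differential is multiplication by $\lambda_0^{\alpha}\bm{\theta}_{\alpha}$, using the pure spinor condition together with the bracket relations (\ref{E:commutator}) and (\ref{cc}) on $L$. The main obstacle is then comparing this embedding with the connecting homomorphism of the long exact sequence; since both are $\Spin(10)$-equivariant maps targeting the same irreducible $P$-summand inside $L^3\otimes\mu_1$, by Schur's lemma it suffices to exhibit one nonzero matrix element, a finite check inside the low-degree components of $L$ described in Section \ref{2.3}.
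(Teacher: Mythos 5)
Your proposal is correct and follows essentially the same route as the paper's proof in Appendix \ref{AppendixE}: the short exact sequence $0\to\TYM^{\bullet}\to\YM^{\bullet}\to{\cal L}^2\to 0$, the resulting long exact sequence combined with Proposition \ref{P:W}, and the identification $L^2/W^*\cong W$ from (\ref{E:decompb}). Your last paragraph only makes explicit (via Schur's lemma and multiplicity one of $W$ in $L^3\otimes\mu_1$) a verification that the paper leaves implicit.
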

The proof will be given in Appendix \ref{AppendixE}.

Using this proposition we can write down an exact sequence analogous to (\ref{E:deltaa2}).
\begin{corollary}\label{C:fullexaxt}
There is a long exact sequence connecting $\rH^k(L,U(TYM))$, $\rH_k(L,U(TYM))$ and hypercohomology:
\begin{equation}\label{E:ywers}
\begin{split}
&\dots \rightarrow \rH_{3-i,a-8}(L,\Sym^j(TYM))\overset{\delta}{\rightarrow} \rH^{i,a}(L,\Sym^j(TYM))\rightarrow\\
& \rightarrow \rH^{i+a-3j}(\mathcal{Q} ,\Lambda^j(\W)(3j-a))\overset{\iota}{\rightarrow}  \rH_{2-i,a-8}(L,\Sym^j(TYM))\rightarrow\dots
\end{split}
\end{equation}

\end{corollary}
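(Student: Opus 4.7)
The plan is to adapt the derivation of the exact sequence (\ref{E:deltaa2}) to the unreduced setting, replacing Proposition \ref{P:W} with Proposition \ref{P:WW} throughout. I would keep the overall architecture identical (a Dolbeault bicomplex plus two spectral sequences), so the novelty lies entirely in the identification of the replacement geometric side.

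The starting point is the quasi-isomorphism $\W\hookrightarrow \TYM^{\bullet}$ of Proposition \ref{P:WW}, with $\W$ placed in internal grading $3$. Since $L^3$ is odd-parity (the bracket $[\bm{\theta}_\alpha,\bm{\theta}_\beta]$ is symmetric, so the $\bm{\theta}_\alpha$ are odd and odd graded pieces of $L$ are odd), the bundle $\W$ is purely odd and the graded symmetric power $\Sym^j\W$ coincides with the ordinary exterior power $\Lambda^j\W$. A K\"unneth-type argument, exactly as in the corollary following Proposition \ref{P:W}, upgrades the embedding to a quasi-isomorphism $\Lambda^j\W\to \Sym^j(\TYM^{\bullet})$, and after the appropriate line-bundle twist one gets
\begin{equation*}
\mathbb{H}^{\bullet}\bigl(\Q,\Lambda^j(\W)(3j-a)\bigr)\;\cong\;\mathbb{H}^{\bullet}\bigl(\Q,\Sym^j(\TYM)_{\bm{a}}\bigr).
\end{equation*}

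Next I would run the two-spectral-sequence argument of Proposition \ref{C:osadfx} on the Dolbeault bicomplex $\Omega^{0,\bullet}(\Q,\Sym^j(\TYM)^{\bullet}_{\bm{a}})$ with differentials $\dbar$ and the Koszul differential $d_e$ from (\ref{E:wuwhswtyrt}). By (\ref{E:cohomology0})--(\ref{E:cohomology10}), only the rows $q=0$ and $q=10$ of the $\dbar$-first spectral sequence survive; they compute $H^{\bullet}$ of $\Sym^j(TYM)^{\bullet}_{c,\bm{a}}$ and, via line-bundle duality on $\Q$, $H^{\bullet}$ of $\Sym^j(TYM)^{\bullet}_{h,\bm{a-8}}$ respectively. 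With only two non-trivial rows the spectral sequence collapses into a long exact sequence linking these two groups with the total cohomology, which equals the hypercohomology above. Translating the $c$-complexes and $h$-complexes into Lie algebra (co)homology by Corollaries \ref{P:tqydxc1}--\ref{P:iqwwst2} yields precisely (\ref{E:ywers}).

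The main technical obstacle I expect is the bookkeeping of internal grading, line-bundle twist, and parity. One must verify that assembling the complex of bundles out of $\Sym^j(\TYM)_{\bm{a}}$ produces the twist $\O(3j-a)$ (because $\W$ lives in internal degree $3$ rather than $2$, as in the $YM$ case), and that the two homological indices come out as $3-i$ and $2-i$ (rather than $2-i$ and $1-i$) once the Serre-duality shift by the canonical bundle $\O(-8)$ of $\Q$ is combined with the $3j$-versus-$2j$ shift. The K\"unneth step also deserves care, since $\Sym^j$ of a differential graded bundle concentrated in odd internal degree has to be shown genuinely to compute $\mathbb{H}^{\bullet}(\Q,\Lambda^j\W(\cdot))$ with the advertised twist.
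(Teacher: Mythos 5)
Your proposal follows essentially the same route as the paper: the paper's entire proof consists of invoking Proposition \ref{P:WW} and then repeating the derivation of (\ref{E:deltaa2}) — i.e.\ the K\"unneth upgrade to $\Sym^j$, the two-spectral-sequence argument of Proposition \ref{C:osadfx}, and the translation via Corollaries \ref{P:tqydxc1} and \ref{P:iqwwst2} — with $\W$ in internal degree $3$ replacing $\W^*$ in degree $2$, which is exactly what you do (including the correct observation that the graded $\Sym^j$ of the odd bundle $\W$ is $\Lambda^j\W$ and that the twist becomes $\O(3j-a)$). The only quibble is your parenthetical worry about the homological indices: they are $3-i$ and $2-i$ in the reduced case as well once (\ref{E:deltaa2}) is read starting from the homology term, so nothing new needs to be verified there.
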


Again using Borel-Weil-Bott theorem we can calculate  the cohomology of $\mathcal{Q}$ with coefficients in vector bundles that 
enter this sequence.
\begin{proposition} 
\begin{equation}\label{E:hdytsf}
\begin{split}
&i\geq 0\\
&\rH^0(\mathcal{Q} ,\O(i))=[0,0,0,0,i],\quad \rH^{10}(\mathcal{Q} ,\O(-8-i))=[0,0,0,i,0],\\
&\rH^0(\mathcal{Q} ,\W(i+1))=[1,0,0,0,i],\quad \rH^{10}(\mathcal{Q} ,\W(-8-i))=[0,0,0,i,1],\\
&\rH^0(\mathcal{Q} ,\Lambda^2(\W)(2+i))=[0,1,0,0,i],\quad \rH^{10}(\mathcal{Q} ,\Lambda^2(\W)(-8-i))=[0,0,1,i,0], \\
& \rH^9(\mathcal{Q} ,\Lambda^2(\W)(-6))=[0,0,0,0,0]\\
&\rH^0(\mathcal{Q} ,\Lambda^3(\W)(3+i))=[0,0,1,0,i],\quad \rH^{10}(\mathcal{Q} ,\Lambda^3(\W)(-7-i))=[0,1,0,i,0], \\
& \rH^1(\mathcal{Q} ,\Lambda^3(\W)(1))=[0,0,0,0,0]\\
&\rH^0(\mathcal{Q} ,\Lambda^4(\W)(3+i))=[0,0,0,1,i],\quad \rH^{10}(\mathcal{Q} ,\Lambda^4(\W)(-6-i))=[1,0,0,i,0], \\
&\rH^0(\mathcal{Q} ,\Lambda^5(\W)(3+i))=[0,0,0,0,i],\quad \rH^{10}(\mathcal{Q} ,\Lambda^5(\W)(-5-i))=[0,0,0,i,0], 
\end{split}
\end{equation}
\end {proposition}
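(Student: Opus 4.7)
The proposition is a direct application of the Borel--Weil--Bott theorem to the compact Hermitian symmetric space $\Q = \Spin(10,\mathbb{C})/P$ described in Section \ref{2.3}. The plan is to identify each of the bundles $\O(i)$, $\W(i)$, $\Lambda^j(\W)(i)$ with an induced bundle coming from a $P$-module on which the unipotent radical of $\p$ acts trivially, and then run Borel--Weil--Bott case by case.

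First, I would verify that the relevant bundles are indeed induced from $L$-representations (with $L = \widetilde{\GL}(5)$) extended by the trivial action of the unipotent radical. For $\O(i) = \mu_i$ this is immediate, since $\mu_1$ is a character of $L$. For $\W^*$ and $\W = V/\W^*$ it follows from the description of these bundles as a subbundle and quotient of the trivial bundle with fiber $V$: the nilpotent radical of $\p$ fixes $W^*$ pointwise at the base point and therefore acts trivially on both $W^*$ and on $W = V/W^*$. Consequently $\Lambda^j(\W)(i)$ is induced from the irreducible $L$-module $\Lambda^j(W) \otimes \mu_i$, whose highest weight $\lambda$ is easy to write down in terms of the standard $\gl(5)$-weights plus an $i$-shift in the $\mu_1$-direction.

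Next, for each such weight I would form $\lambda + \rho$, where $\rho$ is half the sum of positive roots of $\so(10)$, and check regularity. If $\lambda + \rho$ lies on a wall of the Weyl chamber all cohomology vanishes; otherwise there is a unique Weyl group element $w$ taking $\lambda + \rho$ into the open dominant chamber, and Borel--Weil--Bott gives
\[
H^{l(w)}(\Q, \mathcal{E}_\lambda) = V_{w(\lambda+\rho)-\rho}, \qquad H^{k}(\Q, \mathcal{E}_\lambda) = 0 \text{ for } k \neq l(w).
\]
The large-$i$ regime in each line of (\ref{E:hdytsf}) corresponds to $w = e$, giving cohomology in degree $0$; the very negative-$i$ regime corresponds to the longest element of $W(\Spin(10))/W(L)$, giving cohomology in degree $10$; the intermediate values of $i$ correspond to $\lambda + \rho$ sitting on a wall and hence contribute nothing. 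Reading off $w(\lambda+\rho)-\rho$ in each regime and converting back to the Dynkin labels $[a_1,\ldots,a_5]$ of $\Spin(10)$ yields the tabulated answers.

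The main obstacle is the bookkeeping for the exceptional intermediate-degree entries $H^9(\Q,\Lambda^2(\W)(-6)) = [0,0,0,0,0]$ and $H^1(\Q,\Lambda^3(\W)(1)) = [0,0,0,0,0]$. These are precisely the critical shifts of $i$ at which $\lambda + \rho$ becomes regular again inside an intermediate Weyl chamber, so that a single non-trivial simple reflection (respectively the composition with the longest element) is required; the resulting $w(\lambda+\rho) - \rho$ turns out to be $0$, producing the trivial representation in the unexpected cohomological degree. The technical part of the argument is to verify that these are the only such anomalies and that every remaining "gap" value of $i$ between the $H^0$ and $H^{10}$ ranges indeed gives a singular $\lambda+\rho$, so that the cohomology list in the proposition is exhaustive.
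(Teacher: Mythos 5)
Your plan is exactly the paper's argument: the paper offers no proof beyond the sentence ``Again using Borel-Weil-Bott theorem we can calculate the cohomology of $\Q$ with coefficients in vector bundles that enter this sequence,'' and your outline --- realizing $\Lambda^j(\W)(i)$ as the bundle induced from the irreducible $\widetilde{\GL}(5)$-module $\Lambda^j(W)\otimes\mu_i$ on which the nilradical acts trivially, then running the $\rho$-shift, regularity check and Weyl-length recipe case by case --- is precisely that computation. The only content beyond the citation is the bookkeeping you already flag (the two intermediate-degree entries and the exhaustiveness of the vanishing for the remaining twists), which the paper likewise leaves to the reader.
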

To analyze super Poincar\'e invariant deformations of unreduced theory we should study $\Spin (10)$-invariant part of 
long exact sequence (\ref {E:ywers}). As in reduced case $\Spin(10)$-part of the exact sequence splits into short exact 
sequences. More precisely if the indices $(i,j,\bm{a})$  belong to the set $\{(3,0,8),(4,2,12),(6,5,20)\}$ then we have the 
splitting

\begin{equation}\label{E:ywersplit}
\begin{split}
&0\rightarrow \rH^{i+a-3j-1}(\mathcal{Q} ,\Lambda^j(\W)(3j-a))^{\Spin(10)}\rightarrow \rH_{3-i,a-8}(L,\Sym^j(TYM))^{\Spin(10)}\overset
{\delta}{\rightarrow} \\
&\overset{\delta}{\rightarrow} \rH^{i,a}(L,\Sym^j(TYM))^{\Spin(10)}\rightarrow 0\\
&\mbox{ If }(i,j,\bm{a}) \in \{(0,0,0)(2,3,8)(3,5,12)\}\\
&0 \rightarrow \rH_{3-i,a-8}(L,\Sym^j(TYM))^{\Spin(10)}\overset{\delta}{\rightarrow} \rH^{i,a}(L,\Sym^j(TYM))^{\Spin(10)}
\rightarrow\\
& \rightarrow \rH^{i+a-3j}(\mathcal{Q} ,\Lambda^j(\W)(3j-a))^{\Spin(10)}\rightarrow 0\\
&\mbox{and for all other }  (i,j,\bm{a})\\
&\rH_{3-i,a-8}(L,\Sym^j(TYM))^{\Spin(10)} \overset{\delta}{\cong} \rH^{i,a}(L,\Sym^j(TYM))^{\Spin(10)}
\end{split}
\end{equation}

The $\Spin (10)$-invariant part of hypercohomology is six-dimensional, but only three-dimensional   part of it , as the 
reader can see in (\ref{E:ywersplit}),   gives a contribution to the
cohomology $\rH^0, \rH^2$ and $\rH^3$. The contribution to $\rH^0$ is not interesting; the contribution to $\rH^2$ gives the 
deformation $\dL _{16}$  and  the contribution to $\rH^3$ is trivial at the level of infinitesimal deformations of equations of 
motion (but it gives a  non-trivial deformation of L$\ity$ action of  supersymmetry, hence the construction of Section \ref
{E:Formal} can give a non-trivial formal deformation).

The analogs of operators $A_k$ and $T_k$ can be defined in the situation at hand; again $A_k=T_k$.

The most technical part of the proof is hidden  in the verification of the analog of Lemma \ref{L:dewr23}. 
\begin{lemma}
The co-kernels of the maps $i_{*1}:H_1(YM, U(TYM))^{\Spin(10)}\rightarrow \rH_1(L,U(TYM))^{\Spin(10)}$ and $i^
{*}_2:H^2(L,U(TYM))^{\Spin(10)}\rightarrow \rH^2(YM,U(TYM))^{\Spin(10)}$ have dimensions two and zero 
respectively.
\end{lemma}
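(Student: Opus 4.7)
The plan is to adapt the proof of Lemma \ref{L:dewr23} to the unreduced setting. I would set up the two Serre--Hochschild spectral sequences attached to the extension $0 \to YM \to L \to S \to 0$ with coefficients in $U(TYM)\cong\Sym(TYM)$:
\begin{align*}
E^{2}_{p,q} &= \Sym^{p}(S^{*})\otimes H_{q}(YM,\Sym(TYM)) \ \Rightarrow\ H_{p+q}(L,\Sym(TYM)), \\
E_{2}^{p,q} &= \Sym^{p}(S)\otimes H^{q}(YM,\Sym(TYM)) \ \Rightarrow\ H^{p+q}(L,\Sym(TYM)).
\end{align*}
By construction, the maps $i_{*k}$ and $i^{*}_{k}$ are precisely the edge maps of these spectral sequences, so their (co)kernels are controlled by the surviving terms off the $q=0$ column/row together with the differentials hitting $E_{r}^{0,k}$.

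The first step is to compute $H_{\bullet}(YM,\Sym(TYM))^{\Spin(10)}$ and $H^{\bullet}(YM,\Sym(TYM))^{\Spin(10)\ltimes\susy}$ in the low degrees relevant to the lemma. Here one uses Poincar\'e--Birkhoff--Witt to identify $U(TYM)$ with $\Sym(TYM)$ as $YM$-modules, Poincar\'e duality for $YM$ (Appendix \ref{AppendixA}), and the $\widetilde{\GL}(5)$-decompositions (\ref{E:decompc})--(\ref{E:decompd}) of $TYM$. Combined with the long exact sequence of Corollary \ref{C:fullexaxt} together with the hypercohomology groups listed in (\ref{E:hdytsf}), these computations fix the total $\Spin(10)$-invariant dimensions of $H_{1}(L,\Sym(TYM))$ and $H^{2}(L,\Sym(TYM))$; in particular, the exceptional $\Spin(10)$-invariant classes in (\ref{E:ywersplit}) corresponding to the deformations $\dL_{20}$ and $\dL_{28}$ are already visible there.

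Next, I would analyze the differentials $d_{r}$ of the two spectral sequences on $\so(10)\ltimes\susy$-invariants. Since the relevant classes are $\susy$-invariant, many differentials automatically vanish on them, and the $E_{\infty}^{2,0}$ and $E_{\infty}^{1,1}$ contributions to the cohomology edge either vanish or get absorbed, forcing the cokernel of $i^{*}_{2}$ restricted to $\so(10)\ltimes\susy$-invariants to be zero. The homology edge computation, by contrast, picks up surviving $E^{\infty}_{1,0}$ (and possibly $E^{\infty}_{2,-1}$) contributions that account for exactly two additional $\Spin(10)$-invariant generators of $H_{1}(L,\Sym(TYM))^{\Spin(10)}$. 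These are identified with the classes $G_{1}$ and $G_{2}$ from Section \ref{S:fdgdfgmjl} which live in $U(YM)$ but not in $U(TYM)$, matching the codimension count needed.

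The main obstacle will be a clean verification that no unwanted higher differential kills either of the two surviving classes responsible for the two-dimensional cokernel, and dually that no $d_{r}$ obstructs lifting $\so(10)\ltimes\susy$-invariant classes in $H^{2}(YM,\Sym(TYM))$ to $H^{2}(L,\Sym(TYM))$. This is a representation-theoretic check that relies on the explicit $\Spin(10)$-content of $\Sym(TYM)$ in low internal degree and on cross-referencing with the short exact sequences (\ref{E:ywersplit}); for the detailed bookkeeping we would refer to \cite{M6}.
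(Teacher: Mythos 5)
Your proposal follows essentially the same route the paper indicates: the paper gives no independent proof of this lemma, stating only that it is the analog of Lemma \ref{L:dewr23}, whose ``rather technical proof'' it defers to \cite{M6} and describes as an analysis of the Serre--Hochschild spectral sequences of the extension $YM\subset L$ — exactly the setup you adopt, with coefficients $U(TYM)\cong\Sym(TYM)$ and the two-dimensional cokernel matched to the classes $G_1,G_2$ producing $\dL_{20}$ and $\dL_{28}$. Like the paper, you leave the decisive representation-theoretic bookkeeping (vanishing of the relevant higher differentials on invariants) to \cite{M6}, so your sketch is at the same level of completeness as the source.
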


The rest of the proof follows along the lines of the proof in reduced case. In particular one should use the analog of  Lemma \ref{L:BBB}.

\section{The BV formalism: a geometric approach}\label{S:BV}

Our considerations will be based on the Batalin-Vilkovisky (BV) formalism. In this formalism a classical system is 
represented by an action functional $S$ defined on an odd symplectic manifold $M$ and obeying the classical Master 
equation 
\begin{equation}\label{E:fggdsfjh}
\{S,S\}=0.
\end{equation} where $\{\cdot, \cdot \}$ stands for the odd Poisson bracket. Using an odd symplectic form $\omega=dx^i
\omega_{ij}dx^j$
we assign to every even functional $F$ an odd vector field $\xi=\xi_{F}$ defined by the formula 
\begin {equation}\label {ham}
\xi^{i}\omega_{ij}=\frac{\partial F}{\partial z^j}
\end{equation}
(Hamiltonian vector field corresponding to the functional $F$). The form $\omega$ is invariant with respect to $\xi_{F}$. In particular we 
may consider an odd vector field $Q=\xi_{S}$; this field obeys $[Q,Q]=0$. Here $[\cdot,\cdot ]$ stands for 
supercommutator. The solutions to the equations of motion (EM) are identified with zero locus of $Q$. 

In an equivalent formulation of BV we start with an odd vector field $Q$ obeying $[Q,Q]=0$. We require the existence of
$Q$-invariant odd non-degenerate closed two- form $\omega$ (odd symplectic form). Then we can restore the action functional from $Q^{i}\omega_{ij}=\frac
{\partial S}{\partial z^j}$. 

Sometimes is convenient to drop the condition of non-degeneracy of the form $\omega$. In this case one cannot define the Poisson bracket, but the definition of Hamiltonian vector field still makes sense.

We say that in BV-formalism a classical system  is defined by means of an odd vector field $Q$ obeying $[Q,Q]=0$. In geometric 
language we are saying that a classical system is a $Q$-manifold. Fixing a vector field $Q$ we specify equations of 
motion of our system,  but we do not require that EM come from an action functional. If there exists a $Q$-invariant odd 
symplectic form we can say that our system comes from action functional $S$ obeying classical Master equation $\{S,S\}
=0$.  In this case we say that we are dealing with a Lagrangian system. In geometric language we can identify it with an 
odd symplectic $Q$-manifold. 

Infinitesimal deformation of a classical system corresponds to a vector field $\xi$ obeying $[Q,\xi]=0$ (then $[Q+\xi,Q+\xi]
=0$ in the first order with respect to $\xi$). An infinitesimal deformation $\xi$ is trivial if $\xi=[Q,\eta]$ because such a 
deformation  corresponds to a change of variables (field redefinition) $x^i\rightarrow x^i+\eta^i$. The operator $\xi\rightarrow [Q,\xi]$ defines a cohomological differential on $Vect(M)$ which we by abuse of notations denote by the same letter $Q$.
The deformations 
of a classical system  corresponding to vector field $Q$ are labeled by $Q$-cohomology  of the space of vector fields $Vect
(M)$.

The algebra of smooth functions $C^{\infty}(M)$ on $M$ can be considered as super commutative differential graded algebra with 
differential $Q^i\pr{x^i}$, which we as in case of the space of vector fields denote by $Q$. The cohomology $\rH(C^{\infty}(M),Q)$ can be identified with classical observables. In other 
words a classical observable is defined as a function $O$ obeying $Q(O)=0$. Two classical observables $O_1,O_2$ are 
identified if $O_1-O_2=Q(O')$ for some $O'\in C^{\infty}(M)$. Classical observables label infinitesimal deformations of solutions to the classical master equation which are the same as deformations of classical Lagrangian system in BV formalism. This follows from the remark that  the equation $\{S+\sigma , S+\sigma\}=0$ where $S$ is the solution of master equation and $\sigma$ is infinitesimally small is equivalent to the equation $Q(\sigma) =0$.

In the space $Sol$ of solutions to EM (in the zero locus 
of $Q$) we should identify solutions $x$ with  $x+\delta x$ where $\delta x^i=Q^i(x+\delta)-Q^i(x)$, where $\delta$ is 
infinitesimally small. The space obtained by means of this identification is denoted by $Sol/\sim$.
\footnote{More geometrically 
we can say that on the zero locus $Sol$ of $Q$ there exists a foliation $\F_Q$. The tangent 
vectors to $Sol$ can be identified with the kernel $\Ker \frac{\delta Q^i}{\delta x^j}$ of $Q$; the leaves of $\F_Q$ are 
tangent to the image  $\Im \frac{\delta Q^i}{\delta z^j}$. We identify two solutions belonging to the same leaf , hence 
$Sol/\sim$ can be considered as the space of leaves of the foliation. Notice that in most cases the foliation on the space  of solutions is singular (the dimension of the kernel varies).
}

A classical system has many equivalent descriptions in BV-formalism. The simplest way to see this is to notice that a 
system with coordinates $(y^1,\dots,y^n,\xi_1,\dots,\xi_n),$  symplectic form $dy^id\xi_i$ and action functional $a_{ij}
y^iy^j$ is physically trivial. Here $\xi_i$ and $y^i$ have opposite parities and the matrix  $a_{ij}$ is nondegenerate.

Consider two Q-manifolds $(M,Q)$ and $(M',Q')$. A map $f:M\rightarrow M'$ is called a $Q$-map if it agrees with action 
of $Q$'s (i.e. $Q f^*=f^*Q$ where $f^*$ is the homomorphism $C^{\infty}(M')\to C^{\infty}(M)$ induced by the map $f$). Such 
a map induces a map of observables (a homomorphism of cohomology groups $\rH(C^{\infty}(M'),Q')\rightarrow \rH(C^{\infty}(M),Q)$). If $f
$ defines  an isomorphism between  spaces of observables we say that $f$ is a quasi-isomorphism.Under some 
additional requirements  this isomorphism implies isomorphism of spaces of solutions $Sol/\sim.$ Quasi-isomorphism 
should be considered as isomorphism of classical physical systems. However for Lagrangian systems one should modify 
the definition of physical equivalence, requiring that quasi-isomorphism is compatible with symplectic structure in some 
sense.

Let us consider the Taylor series decomposition \[Q^a(x)=\sum_{b_1,\dots,b_n}Q^a_{b_1,\dots,b_n}x_1^{b_1}\dots x_n^
{b_n}\] of the coefficients of the vector field $Q=\sum Q^a\frac{\partial}{\partial x^a}$ in the neighborhood of the critical 
point. Here $x^i$ are local coordinates in the patch, the critical point is located at $x=0$. The coefficient $Q^a_
{b_1,\dots,b_n}$ of this expansion specifies an algebraic $n$-ary operation $\psi_n(s_1,\dots,s_n)$ on $\Pi T_0$ (on the 
tangent space with reversed parity  at $x=0$). A set of some  quadratic relations on $\psi_n$ is a corollary of  Master equation   $[Q,Q]=0$. A collection  $\{\psi_n\}_{n=1}^{\infty}$ that satisfies these quadratic relations  
specify a structure of L$_{\ity}$ algebra on $T_0$ (see Appendix \ref{AppendixA} for more detail). One can say that  L$_
{\ity}$ algebra is a formal $Q$-manifold.\footnote{We define a formal manifold saying that functions on it are power  series with respect to $n
$ commuting and $m$ anticommuting variables. Notice that this definition is not standard; other terms for the same notion are "germ of a manifold" and "infinitesimal manifold". From the other side  Sullivan suggested to use the term "formal manifold" in completely different setting.}

In the case when the only nonzero coefficients are $Q_b^a$ and $Q_{b_1,b_2}^a$ the corresponding L$\ity$ algebra can 
be identified with a differential graded Lie algebra.  The tensor $Q_b^a$ corresponds to the differential and $Q_{b_1,b_2}
^a$ to the bracket.

An L$\ity$ homomorphism of L$\ity$ algebras is defined as a $Q$-equivariant map between the corresponding formal $Q$-manifolds. We can use 
this notion to define an L$\ity$ action of a Lie algebra on a $Q$-manifold $M$. Conventional action of a Lie algebra is a 
homomorphism of this Lie algebra into Lie algebra $Vect (M)$ of vector fields  on $M$. 
 L$\ity$ action  of a Lie algebra $\g$ on $(M,Q)$ is an  L$\ity$ homomorphism of $\g$ to the differential 
graded algebra $(Vect (M),Q)$. This definition is rather inexplicit to say the least. A more direct definition shall be introduced presently.

The action of a Lie algebra is specified by vector fields $q_{\alpha}$, corresponding to generators  $e_{\alpha}
$ of $\g$. The generators obey relations $[e_{\alpha},e_{\beta}]=f_{\alpha\beta}^{\gamma}e_{\gamma}$, where $f_
{\alpha\beta}^{\gamma}$ are the structure constants of $\g$ in the basis $e_{\alpha}$.  A weak action of $\g$ 
requires that this relation is valid up to $Q$-exact terms:
\begin{equation}\label{E:fjsjs}
[q_{\alpha},q_{\beta}]=f_{\alpha\beta}^{\gamma}q_{\gamma}+[Q,q_{\alpha\beta}]
\end{equation}
Even for a weak action  we have a genuine Lie algebra action on observables and on $Sol/\sim$.

To define an L$\ity$ action of Lie algebra $\g$ we need not only
 $q_{\alpha},q_{\alpha\beta}$ , but also their higher analogs $q_{\alpha_1\dots\alpha_i}$ obeying the relations similar to 
(\ref{E:fjsjs}). This can be formalised as follows. One can  consider $q_{\alpha_1\dots\alpha_i}$ as components of linear 
maps  
\begin{equation}\label{E;fshdg}
q^i:\Sym^i(\Pi \g)\rightarrow \Pi Vect (M)
\end{equation}
They  can be assembled  into a vector field $q$ on $\Pi \g\times M$. A choice of a basis in $\g$  defines coordinates on  
$\Pi \g$. In such coordinates the $i$-th Taylor coefficient coincides with the map $q^i$.  The coordinates $c^{\alpha}$ on $\Pi \g$ can be identified with ghost variables of the Lie algebra $\g$. One can    consider $q$ 
as a vector field on $M$ depending on ghost variables. 

Let us introduce a  super-commutative  differential algebra $C^{\bullet}(\g)$ as the algebra of polynomial functions in
ghost variables $c^{\alpha}$ with the differential 

\begin{equation}\label{E:dg}
d_{\g}=\frac{1}{2}f^{\gamma}_{\alpha \beta}c^{\alpha}c^{\beta} \frac{\partial}{\partial c^{\gamma}}.
\end{equation}
Odd ghosts correspond to even generators, even ghosts correspond to odd generators. The Lie group cohomology with trivial coefficients  is 
defined as the cohomology of $d_{\g}$.
 
The collection (\ref{E;fshdg}) defines a L$\ity$ action if the ghost dependent vector field  $q$ satisfies
\begin{equation}\label{E:fadafdh}
d_{\g}q+[Q,q]+\frac{1}{2}[q,q]=0.
\end{equation}

Notice, that instead of $q$ we can consider ghost dependent vector field $\tilde q= Q+q$; in terms of this field (\ref 
{E:fadafdh}) takes the form \begin{equation}
\label{tq}
d_{\g}\tilde q+\frac{1}{2}[{\tilde q},{\tilde q}]=0
\end{equation}
The notion of L$\ity$ action is a particular case of the notion of L$\ity$ module. Recall that a $\g$-module where $\g$ is a 
Lie algebra can be defined as 
as a homomorphism of $\g$ in the Lie algebra of linear operators acting on vector space $N$. (In other words a $\g$-module  is the same as linear representation of $\g$.) If $N$ is a complex
the space of linear operators on $N$ is a differential 
Lie algebra. A structure of L$\ity$ $\g$ module on $N$ is an L$\ity$ homomorphism of $\g$ into this differential Lie 
algebra. This structure can be described as a polynomial function $q$ of ghosts $c^{\alpha}$ taking values in the space 
of linear operators on $N$  and obeying relation:
\begin{equation}
\label{LN}
d_{\g}q+[Q,q]+\frac{1}{2}[q,q]=0.
\end{equation}
As usual we denote by $Q$ the differential in $N$ (cf. (\ref {E:fadafdh})).

We can define cohomology $ \rH_{\g}^{\bullet} (N)=\rH^{\bullet}(\g, N)$  of the Lie algebra $\g$ with coefficients in L$\ity$ $
\g$-module $N$ to be the  cohomology of the differential

\begin{equation}
\label{CN}
d_c= d_{\g}+q+Q=\frac{1}{2}f^{\gamma}_{\alpha \beta}c^{\alpha}c^{\beta} \frac{\partial}{\partial c^{\gamma}}+\sum _k \frac
{1}{k!}q_{\alpha _1,...,\alpha _k}  c^{\alpha _1}\cdots c^{\alpha _k}+Q.
\end{equation}
acting on the space of $N$-valued functions of ghosts (i.e. on the tensor product  $C^{\bullet}(\g)\otimes N$).
It follows immediately from (\ref {LN}) that $d_c$ is a differential. Conversely, if the expression (\ref {CN}) is a differential then
$q$ specifies an L$\ity$ action.

To define homology of the Lie algebra $\g$ with coefficients in L$\ity$ module $N$ we use the differential $d_h$ acting 
on $N$-valued polynomial  functions
of ghost variables $c_{\alpha}$ (on the tensor product $\Sym \Pi \g\otimes N$). This differential can be obtained from 
$d_c$ by means of formal Fourier transform in ghost variables, i.e. a  substitution of the derivation with respect to $c_{\alpha} $ instead of multiplication by $c^{\alpha}$
and of the multiplication by $c_{\alpha}$ instead of derivation with respect to $c^{\alpha}$:
\begin{equation}
\label{HN}
d_h=\frac{1}{2}f^{\gamma}_{\alpha \beta}c_{\gamma}\frac {\partial }{\partial c_{\alpha}} \frac{\partial}{\partial c_{\beta}}+
\sum _k \frac{1}{k!}q_{\alpha _1,...,\alpha _k} \frac {\partial }{\partial c_{\alpha_1}} \cdots \frac {\partial }{\partial c_{\alpha 
_k}}+Q
\end{equation}
 
In light of this discussion  the definition of a Hamiltonian L$\ity$ action  on odd symplectic manifold $M$ is obvious.  In the formula (\ref
{E:fadafdh}) we  replace the vector field $q$ by a function and the commutator  by the Poisson bracket.
 A Hamiltonian L$\ity$ symmetry of classical BV action functional $S$ can be specified by a function of ghosts and fields 
(i.e. by an element   $\sigma \in C^{\bullet}(\g)\otimes C^{\infty}( M)$).
 This element should obey the equation
 \begin{equation}
\label{litys}
d_{\g}\sigma+\{S,\sigma\}+\frac{1}{2}\{\sigma,\sigma\}=0.
\end {equation}
Introducing a function $\hat {S}=\sigma+ S$ we can rewrite (\ref {litys}) in the form
\begin{equation}
\label{lityss}
d_{\g}{\hat S}+\frac{1}{2}\{{\hat S},{\hat S}\}=0.
\end{equation}
Mathematically it is very natural to let both ingredients of a BV package $(Q,\omega)$ to be ghost dependent. Such ghost dependent pair $(\hat{Q},\hat{\omega})$ satisfies a  block of axions and give rise to  constructions analogous to ghost independent setup:
\begin{equation}
\label{hhq}
\hat Q^i\hat \omega _{ij}=\frac{\partial \hat S}{\partial x^i}.
\end{equation}
In this formula ghost variables are parameters.
The condition $Q\omega=0$ extends to the condition $d_c\hat \omega=0$ where in $q_{\alpha _1,...,\alpha _k}$ and $Q$ in  (\ref{CN}) act on $\hat{\omega}$ by Lie derivatives; the operator $d_{\g}$ acts only on ghost parameters. In addition we  assume that ghost coefficients of $\hat \omega$ are de Rham closed and and ghost independent term is nondegenerate. We shall call ghost dependent two-forms, that satisfy last two conditions {\it closed} and {\it nondegenerate} respectively. 

In many interesting situations an action of a Lie algebra on shell (on the space $Sol/\thicksim$) can be lifted to an L$\ity$ 
action off shell. Conversely any L$\ity$  action of Lie algebra (or, more generally, any weak action) on a Q-manifold (off-
shell action) generates ordinary Lie algebra action on shell.

{\footnote {More generally, we can consider L$\ity$ $\g$-module $N$. We say that the structure of L$\ity$
module on $N$ is Hamiltonian if $N$ can be equipped with $\g$-invariant inner product. (We say that the inner product is 
$\g$-invariant if  $q$ specifying L$\ity$ $\g$-action takes values in the Lie algebra of linear operators on $N$ 
that preserve the inner product.) If we have an  odd symplectic $Q$- manifold $M$ we can take as $N$
the L$\ity$ algebra constructed as the Taylor decomposition of $Q$ in Darboux coordinates
in the neighborhood of a point belonging to the zero locus of $Q$. This algebra is equipped with odd inner product 
coming from the odd symplectic form.  A Hamiltonian  L$\ity$ action on $M$ generates a Hamiltonian L$\ity$ action on 
$N$.}}


Let us come back to the general theory of deformations in BV-formalism.
Recall that infinitesimal deformations of solution to the classical Master equation (\ref{E:fggdsfjh}) (of classical Lagrangian system in BV formalism) are labeled by 
observables (by cohomology $\rH(C^{\infty}(M),Q)$ of $Q$ in the space $C^{\infty}(M)$). Every deformation of action $S$ induces a 
deformation of $Q$ and we have a homomorphism of corresponding cohomology groups
$\rH(C^{\infty}(M),Q)\to \rH(Vect (M), Q)$.

Let us analyze  classification of infinitesimal deformations of a classical system compatible  with symmetries of the system. We 
assume that the system is described by an odd vector field $Q$ obeying $[Q,Q]=0$ on a supermanifold $M$ and  the 
Lie algebra of symmetries $\g$ acts in   L$\ity$ fashion on $M$. Then the complex $(Vect (M), Q)$ as L$\ity$ $\g$-module. We would like to deform simultaneously the vector field $Q$ and the L$\ity$ action specified by the ghost dependent vector field $q$. {\it We shall show that the infinitesimal deformations are 
classified  by elements of cohomology $\rH^{\bullet}(\g, (Vect (M), Q))$ of  Lie algebra $\g$ with coefficients
in differential L$\ity$ $\g$-module $(Vect (M), Q)$.} To prove this statement we notice that  $Q$ and $q$ are combined 
in the ghost dependent
vector field $\tilde q$, hence the deformation we are interested in can be considered as the deformation of $\tilde q$ that 
preserves the relation (\ref {tq}). In other words this deformation should obey
\begin{equation}
\label{dtq}
d_{\g}\delta {\tilde q}+[{\tilde q},\delta {\tilde q}]=0.
\end{equation}
This condition means that  $\delta {\tilde q}$ is a $d_c$-cocycle  (\ref 
{CN})).
It is easy to see that cohomologous cocycles specify equivalent deformations and $\rH^{\bullet}(\g, (Vect (M), Q))$ is in one-to-one correspondence with the infinitesimal deformations of our objects.

It is important to emphasize that commutation relations of the new symmetry  generators are deformed (even if we have 
started with genuine action of $\g$ we can obtain a weak action after the deformation).
 Nevertheless   on shell, i.e. after restriction to $Sol/\sim$,  commutation relations do not change. 

As we have said the cohomology $\rH^{\bullet}(\g, (Vect (M), Q))$ describes L$\ity$ deformations of $Q$ and L$\ity$ 
action. Notice, that two different L$\ity$ actions can induce the same Lie algebra action on shell.  It is easy to see that 
only the ghost number one components enter in the expressions for generators of Lie algebra symmetries on shell.

It is important to emphasize  that analyzing deformations  in
BV formulation we can chose any of physically equivalent
classical systems ( the cohomology we should calculate is invariant with respect to quasi-isomorphism).

We have analyzed the deformations of $Q$ (of equations of motion) preserving the symmetry.
Very similar consideration can be applied in Lagrangian formalism. In this case we start with the  functional $\hat S=
\sigma +S$ combining the classical BV functional $S$ and Hamiltonian L$\ity$ symmetry. We should deform this 
functional
preserving the relation (\ref {lityss}). We keep symplectic form undeformed.  We see that
that the infinitesimal deformation obeys
$$d_{\g}\delta {\hat S}+\{\hat S, \delta \hat S\}=0.$$ Interpreting this equation as a cocycle condition we obtain the 
following statement.

\begin{proposition} Let us consider a BV action functional $S$ on an odd symplectic manifold $M$ together with
Hamiltonian L$\ity$ action of Lie algebra $\g$ (i.e.
with a  functional $\sigma$ of fields and ghosts
such that  $\hat S=\sigma +S$ obeys (\ref {lityss})). Then the algebra $C^{\infty}( M)$
can be considered as a differential L$\ity$ $\g$-module (the differential is defined as a Poisson bracket with $S$). 
Infinitesimal deformations of
BV-action functional and Hamiltonian L$\ity$ action  are governed by the Lie algebra cohomology of $\g$ with coefficients 
in this module.
\end{proposition}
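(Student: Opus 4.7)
The plan is to mirror the non-Lagrangian argument given earlier (for deformations of $\tilde q$ governed by $H^\bullet(\g,(Vect(M),\TQ))$), translating every commutator into a Poisson bracket and every vector field into a Hamiltonian. First I would verify that $C^\infty(M)$ genuinely carries a differential L$\ity$ $\g$-module structure. The differential is $d=\{S,\cdot\}$, which squares to zero because $\{S,S\}=0$ and the Poisson bracket satisfies the graded Jacobi identity. Expanding $\sigma=\sum_k \frac{1}{k!}\sigma_{\alpha_1\dots\alpha_k}c^{\alpha_1}\cdots c^{\alpha_k}$, the desired multilinear operations on $C^\infty(M)$ are $q_{\alpha_1,\dots,\alpha_k}(f)=\{\sigma_{\alpha_1,\dots,\alpha_k},f\}$. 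The L$\ity$ module axiom (\ref{LN}) for this data is exactly the statement obtained by taking a Poisson bracket of (\ref{lityss}) with an arbitrary $f\in C^\infty(M)$: the term $d_\g\sigma$ produces $d_\g q$, the bracket $\{S,\sigma\}$ produces $[d,q]$, and the self-bracket $\frac{1}{2}\{\sigma,\sigma\}$ produces $\frac{1}{2}[q,q]$, using the Jacobi identity to convert nested Poisson brackets into compositions of derivations.

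Next I would study infinitesimal deformations of $\hat S$. Writing $\hat S\mapsto \hat S+\delta\hat S$ and demanding that (\ref{lityss}) still hold to first order yields
\begin{equation}
d_\g\,\delta\hat S+\{\hat S,\delta\hat S\}=0.
\end{equation}
But on the complex $C^\bullet(\g)\otimes C^\infty(M)$ the Chevalley--Eilenberg differential of the L$\ity$ module $C^\infty(M)$ is, by (\ref{CN}), precisely $d_c=d_\g+\sigma+d$ acting via $d_\g+\{\sigma,\cdot\}+\{S,\cdot\}=d_\g+\{\hat S,\cdot\}$. Hence the linearized master equation is exactly the cocycle condition $d_c(\delta\hat S)=0$, and $\delta\hat S$ defines a class in $H^\bullet(\g,C^\infty(M))$.

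Then I would show that two cohomologous cocycles specify equivalent deformations. Trivial deformations correspond to infinitesimal canonical transformations generated by a ghost-dependent Hamiltonian $\eta\in C^\bullet(\g)\otimes C^\infty(M)$ composed with the ghost relabelling produced by $d_\g$; the total effect on $\hat S$ is
\begin{equation}
\delta\hat S=d_\g\eta+\{\hat S,\eta\}=d_c\eta,
\end{equation}
i.e.\ a coboundary. Conversely, any coboundary is realized in this way by integrating the Hamiltonian flow of $\eta$ to first order. This establishes a bijection between equivalence classes of infinitesimal deformations of $(S,\sigma)$ and $H^\bullet(\g,C^\infty(M))$.

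The main obstacle I anticipate is purely bookkeeping: carefully matching signs and combinatorial factors when expanding (\ref{lityss}) in ghost degrees so that the resulting identity on the $\sigma_{\alpha_1,\dots,\alpha_k}$ is manifestly the L$\ity$ module axiom (\ref{LN}) for the Poisson-bracket action, and likewise verifying that the bracket identity used in the linearization of the master equation really reproduces the differential (\ref{CN}) rather than merely something cohomologous to it. Once these identifications are made transparent, the rest of the argument is formally identical to the vector-field version already spelled out in the paragraph surrounding (\ref{dtq}).
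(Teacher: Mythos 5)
Your proposal is correct and follows essentially the same route as the paper: the paper derives the linearized equation $d_{\g}\delta\hat S+\{\hat S,\delta\hat S\}=0$ and interprets it as the cocycle condition for $H^{\bullet}(\g, C^{\infty}(M))$, exactly mirroring the vector-field argument around (\ref{dtq}) with commutators replaced by Poisson brackets. You supply somewhat more detail than the paper does (the explicit check of the L$\ity$ module axiom via the Jacobi identity and the identification of coboundaries with infinitesimal canonical transformations), but the underlying argument is the same.
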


\begin{proposition} \label {2} 
Let us consider a  supermanifold $M$ together  with an odd vector field $Q$ obeying $[Q,Q]=0$ and with L$\ity$ action of a Lie algebra $\g$  on $(M,Q)$ (in other words we have a classical system with symmetry Lie algebra $\g$). Let us assume that the manifold $M$ is equipped with a non-degenerate ghost-dependent  closed two-form $\hat \omega$ that is compatible with the L$\ity$ action(i.e. $d_c\hat{\omega}=0$).  
 Then the algebra $C^{\infty}( M)$ is an L$\ity$ $\g$-module. 
The infinitesimal deformations of
$Q$ together with Hamiltonian L$\ity$ action preserving $\hat\omega$ are governed by the Lie algebra cohomology of $\g$ with coefficients 
in this module.
\end{proposition}
Last two propositions are related because all nondegenerate  ghost-dependent closed odd two-forms are diffeomorphic, if we allow ghost-dependent  diffeomorphisms. Such diffeomorphism can be used to eliminate ghost dependence of the form.

Our experience with pure spinor formalism dictates a need for extension of above theorem to manifolds with a degenerate two-form.  As in the strictly symplectic  variant of this definition every hamiltonian vector field $\xi$ ( a vector field  preserving the form $\omega$)  defines a Hamiltonian $F$ by the formula (\ref{ham}) (of course, the Hamiltonian is defined up to an additive constant).This construction of the Hamiltonian defines a linear map from the algebra of hamiltonian vector fields $Ham$ to the space of functions $C^{\infty}(M)/\mathbb{C}$:
\[H:Ham\rightarrow C^{\infty}(M)/\mathbb{C}.\]
For degenerate form $\omega$ the map $H$ is not invertible and we do not expect  to have a Poisson structure on the algebra of functions. 
The commutator with $Q$ defines a differential  on $Ham$. Infinitesimal deformations of $Q$ are in bijections with cohomology classes of the complex $(Ham,Q)$.  A closed two-form $\omega$ is   homologically non-degenerate
if it defines a nondegenerate pairing on cohomology $\rH(T_x, Q)$ of the tangent complex $(T_x, Q)$ at each  critical point $x$ of $Q$).   In this case we expect  that $H$ is a quasiisomorphism (i.e. it induces an isomorphism $\rH(Ham,Q)\rightarrow \rH(C^{\infty}(M)/\mathbb{C},Q)$, which enables us to transfer the Lie algebra structure onto $\rH(C^{\infty}(M)/\mathbb{C},Q)$). The reason to believe in this statement is a validity of a  local statement at a critical point $x$ of $Q$. More precisely, the statement is correct if we replace $M$ by a germ of $M$ at a critical point . As it has already been pointed out this framework is equivalent  to  the framework of $L\ity$-algebras with odd inner product). Hence in homologically non-degenerate case deformations are classified by  $\rH(C^{\infty}(M))$
(at least for  formal $Q$-manifolds).

These statements can be generalized to a $\g$-equivariant setup .
We say that a two-form  $\hat \omega$ homologically non-degenerate if its ghost-free term is homologically non-degenerate.
The proposition \ref{2} is still valid if the closed two-form  $\hat \omega$  on a formal $Q$-manifold is  homologically non-degenerate. Identifying formal $Q$-manifolds with $L\ity$-algebras we can use proposition \ref{2} to classify deformations of $L\ity$ algebras with homologically non-degenerate odd inner product (i.e. with inner product that is non-degenerate on homology).

Let us describe some formulations of 10D SUSY YM in BV formalism. For simplicity we shall restrict ourselves to the theory 
reduced to a point.

In component formalism besides fields $A_i,\chi^{\alpha}$, antifields $A^*_i,\chi^*_{\alpha}$ we have ghosts $c$ and  
anti-fields for ghosts $c^*$ (all of them are $n\times n$ matrices).  The BV action functional has the form 
\begin{equation}\label{E:bvsym}
\L_{BVSYM}=\tr\bigg(\frac{1}{4}F_{ij}F_{ij}+\frac{1}{2}\Gamma^i_{\alpha\beta}\chi^{\alpha}\nabla_i\chi^{\beta}+\nabla_ic 
A^*_i+\chi^{\alpha}[c,\chi^*_{\alpha}]+\frac{1}{2}[c,c]c^*\bigg)
\end{equation}
The corresponding vector field $Q$ is given 
\begin{equation}\label{E:yunbr}
\begin{split}
&Q(A_i)=-\nabla_ic\\
&Q(\psi^{\alpha})=[c,\psi^{\alpha}]\\
&Q(c)=\frac{1}{2}[c,c]\\
&Q(c^*)=\sum_{i=1}^{10}\nabla_iA^{*i}+
\sum_{\alpha}[\psi^{\alpha},\psi_{\alpha}^{*}]+[c,c^*]\\
&Q(A^{*m})=-\sum_{i=1}^{10}\nabla_iF_{im}
+\frac{1}{2}\sum_{\alpha 
\beta}\Gamma_{\alpha \beta}^m[\psi^{\alpha},\psi^{\beta}]-[c,A^{*m}]\\
&Q(\psi_{\alpha}^{*})=-\sum_{i=1}^{10}\sum_{\beta}\Gamma_{\alpha 
\beta}^i\nabla_i\psi^{\beta}
-[c,\psi_{\alpha}^{*}]\\
\end{split}
\end{equation}


Another possibility is to work in the formalism of pure spinors.

Let $\rS=\mathbb{C}^{16}$ be a 16-dimensional complex vector space with coordinates $\lambda^1,\dots,\lambda^{16}$.
Denote by $\mathcal{C}$ a cone of pure spinors in $\rS$ defined by equation 
\begin{equation}\label{E:dgdsjg}
\Gamma^i_{\alpha\beta}\lambda^{\alpha}\lambda^{\beta}=0
\end{equation}
and by  $\Ss=\mathbb{C}[\lambda^1,\dots,\lambda^{16}]/\Gamma^i_{\alpha\beta}\lambda^{\alpha}\lambda^{\beta}$ the 
space of polynomial functions on $\mathcal{C}$. 

The fields in this formulation are elements $A(\lambda,\theta)\in \Ss\otimes \Lambda [\theta^1,\dots,\theta^{16} ]\otimes 
\Mat_N$; they can be considered as $\Mat_N$-valued polynomial super-functions on $C\mathcal{Q}\times \Pi S$.  We define 
differential $d$ acting on these fields by the formula $d=\lambda ^{\alpha}\frac{\partial}{\partial \theta ^{\alpha}}$. Using 
the terminology of Section \ref{2.1} we can identify the  space of fields with tensor product of reduced Berkovits algebra  
$B_0$ and $\Mat_N$.

The vector field $Q$ on the space of fields  is given by the formula
\begin{equation}\label{E:vectorfield}
\delta_{Q}A=dA+\frac{1}{2}\{A,A\}.
\end{equation}
This vector field specifies a classical system quasi-isomorphic to
 the classical system corresponding to the action functional
( \ref {E:bvsym}) (see \cite {MSch}).
               
An odd two- form on the space of fields is given by the formula:
\begin{equation}
\label {tr}
\omega(\delta A_1,\delta A_2)=\tr(\delta A_1\delta A_2)
\end{equation}
The trace $\tr$  is nontrivial only on $\Ss_3\otimes \Lambda^5[\theta^1,\dots,\theta^{16}]$. Denote $\Gamma$ be the 
only $\Spin(10)$ invariant element in $\Ss_3\otimes \Lambda^5[\theta^1,\dots,\theta^{16}]$. Let $p$ be the only $\Spin
(10)$-invariant projection on the span $<\Gamma>$. Then $p(a)=\tr (a)\Gamma$. This definition fixes $\tr$ up to a 
constant. (The trace at hand was introduced in \cite{Berkovits}, where more explicit formula was given.)
Notice that $\omega$ is a degenerate closed two-form. However, it is homologically nondegenerate; this allows us to study infinitesimal Lagrangian deformations using a generalization of proposition \ref {2}.

In  BV-formalism equations of motion      can be obtained from the action functional $$S(A)=\tr(AdA+\frac{2}{3}A^3),$$
obeying the classical Master equation $\{S,S\}=0$ (recall that we factorize the space of fields with respect to $\Ker 
\omega$ and $S$ descends to the quotient). The vector field  $Q$ specified by the formula (\ref{E:vectorfield}) 
corresponds to this action functional.

The BV formulation of unreduced SYM theory in
terms of pure spinors is similar. The basic field
$A(x,\lambda, \theta)$ where $x$ is a ten-dimensional vector is matrix-valued. The differential $d$ is defined by the 
formula $d=\lambda ^{\alpha}(\frac{\partial}{\partial \theta ^{\alpha}}+\Gamma ^i _{\alpha \beta} \theta ^{\beta}\frac
{\partial}{\partial x^i}).$ In the terminology of Section \ref{2.1} the space of fields is a tensor product of Berkovits algebra $B$ 
and $\Mat_N$. The expressions for action functional and odd symplectic form remain the same, but $\tr$ includes 
integration over ten-dimensional space. 

 {\footnote {To establish the relation to the  superspace formalism we recall that in $(10|16)$ dimensional superspace $
(x^n,\theta^{\alpha})$ SYM equations together with constraints  can be represented in the form
\begin{equation}\label{E:dafdsf}
F_{\alpha\beta}=0
\end{equation}
where $F_{\alpha\beta}=\{\nabla_{\alpha},\nabla_{\beta}\}-\Gamma^i_{\alpha\beta}\nabla_{i}$, $\nabla_{\alpha}=D_
{\alpha}+A_{\alpha}$, $D_{\alpha}=\frac{\partial}{\partial \theta^{\alpha}}+\Gamma^i_{\alpha\beta}\theta^{\beta}\frac
{\partial}{\partial x^i}$.
It follows from these equations that  the covariant derivatives $\nabla (\lambda)=\lambda ^{\alpha}\nabla _{\alpha}$ obey 
$[\nabla (\lambda),\nabla (\lambda)]=0$ if $\lambda$ is a pure spinor. This allows us interpret Yang-Mills fields as degree 
one components of $A(x,\theta,\lambda)$.
Degree  zero components of $A(x,\theta,\lambda)$ correspond to ghosts. Degree two components correspond to 
antifields, degree three to antifields for ghosts. Components of higher degree belong to the kernel of $\omega$ and can be 
disregarded
(see \cite {Berkovits} and \cite {MSch} for detail).}}

Notice that that in component version of
BV formalism  the standard supersymmetry algebra acts on shell, but off shell we have weak action of this algebra 
(commutation relations are satisfied up to $Q$-trivial terms). In pure spinor formalism we have genuine action of 
supersymmetry algebra, but  the form $\omega$ is not invariant with respect to supersymmetry transformations. 
(However, the corrections to this form are $Q$-trivial.)

We shall show that  the weak action of supersymmetry algebra can be extended to L$\ity$ action (Appendix C). 
Moreover, in Appendix D we shall prove that  for appropriate choice of this action it will be compatible with odd symplectic 
structure.

Let us apply our general considerations to 10D SUSY YM
reduced to a point. In Section \ref{S:fdgdfgmjl} we described SUSY deformations
of this action functional in component formalism. Now we shall
rewrite these deformations in BV formalism. Moreover,
we shall be able to write down also the deformed supersymmetry.

Let us start with BV description of  the theory based on
the Lagrangian (\ref{E:bvsym}). A vector field $\xi$ on the underlying space is completely characterized by the values  on the generators of the algebra of functions. We shall refer to these values as to components. As the generators can 
be naturally combined in matrices, the components of the vector fields are also matrix-valued.   The vector fields  of 
sypersymmetries $\theta_{\alpha}$ in the matrix space description have the following components (we omit matrix 
indices ):  \[\theta_{\alpha}A^i=\Gamma^i_{\alpha\, \beta}\chi^{\beta} \] \[\theta_{\alpha}\chi^{\beta}=\Gamma_{\alpha}^
{\beta\, ij}[A_i,A_j].\]
The component  description of  the vector fields $D_i$ and $G_{\alpha}$ is \[D_iA_j=[A_i,A_j],\quad  D_i\chi^{\alpha}=
[A_i,\chi^{\alpha}]\] and 
\[G_{\alpha}A_j=[\chi_{\alpha},A_j],\quad  G^{\alpha}\chi^{\beta}=[\chi^{\alpha},\chi^{\beta}].\]
In this setup  we have the identities
\begin{equation}
\begin{split}
& [\theta_{\alpha}, \theta_{\beta}]-\Gamma_{\alpha\beta}D_i=[Q,\eta_{\alpha\beta}],\\
&[\theta_{\alpha},D_i]-\Gamma_{\alpha\beta i}G_{\chi^{\beta}}=[Q, \eta_{\alpha i}]
\end{split}
\end{equation}

Here 
\[\eta_{\alpha\beta}\chi^{\gamma}=2P_{\alpha\beta}^{\gamma\delta}\chi_{\delta}^*,
\]
\[\eta_{\alpha i}A_j=C_{\alpha i}^{\beta j}\chi^*_{\beta},\quad \eta_{\alpha i}\chi^{\beta}=-C_{\alpha i}^{\beta j}A^*_j\]

The tensors $P_{\alpha\beta}^{\gamma\delta}$ and $C_{\alpha i}^{\beta j}$ are proportional to $\Gamma_{\alpha\, \beta}
^{i_1,\dots,i_5}\Gamma^{\gamma\, \delta}_{i_1,\dots,i_5}$ and to $\Gamma_{\alpha i}^{\beta j}$ respectively.
We have described in Section \ref{S:fdgdfgmjl}  an infinite family of SUSY deformations (\ref{E:fadjghjd}). 
It is easy to write down the  terms $q$ and $q_{\alpha}$ in the corresponding cocycle.  It is obvious that $q$ is a 
Hamiltonian vector field corresponding to the functional $\dL$ given by the formula (\ref{E:fadjghjd}).  To find  the 
functional $\sigma _{\alpha} $ generating the Hamiltonian vector field $q_{\alpha}c^{\alpha}$   we should calculate $
\theta_{\alpha}\dL$ and use (23).The calculation of $\theta_{\alpha}\dL$ repeats the proof of
the supersymmetry of the deformation (9) and leads to the following result:  
 \begin{equation}
\begin{split}
&\theta_{\alpha}\dL=\tr(\theta_{\alpha}\theta_{1}\dots \theta_{16}G)=\\
&\tr(\sum_{\gamma=1}^{\alpha-1}(-1)^{\gamma-1}\Gamma_{\alpha\gamma}^k\theta_{1}\dots \theta_{\gamma-1}\TQ(\eta_
{\alpha\gamma})\theta_{\gamma+1}\dots \theta_{16}G)+\\
&+\tr(\sum_{\gamma=1}^{\alpha-1}(-1)^{\gamma-1}\Gamma_{\alpha\gamma}^k\theta_{1}\dots \theta_{\gamma-1}D_k
\theta_{\gamma+1}\dots \theta_{16}G)+\\
&+\frac{1}{2}\tr((-1)^{\alpha-1}\Gamma_{\alpha\alpha}^k\theta_{1}\dots \theta_{\alpha-1}\TQ(\eta_{\alpha\gamma})\theta_
{\alpha+1}\dots \theta_{16}G)+\\
&+\frac{1}{2}\tr((-1)^{\alpha-1}\Gamma_{\alpha\alpha}^k\theta_{1}\dots \theta_{\alpha-1}D_k\theta_{\alpha+1}\dots 
\theta_{16}G)=\\
&=Q\sum_{\gamma=1}^{\alpha-1}\Gamma_{\alpha\gamma}^k\tr(\theta_{1}\dots \theta_{\gamma-1}\eta_{\alpha\gamma}
\theta_{\gamma+1}\dots \theta_{16}G)+\\
&+Q\frac{1}{2}\Gamma_{\alpha\alpha}^k\tr(\theta_{1}\dots \theta_{\alpha-1}\eta_{\alpha\alpha}\theta_{\alpha+1}\dots 
\theta_{16}G)+\\
&+Q\sum_{\gamma=1}^{\alpha-1}\sum_{\gamma'=1}^{\gamma-1}(-1)^{\gamma+\gamma'}\Gamma_{\alpha\gamma}^k\tr
(\theta_{1}\dots\theta_{\gamma'-1} \eta_{\alpha k}\theta_{\gamma'+1}\dots \hat\theta_{\gamma}\dots \theta_{16}G)+\\
&+Q\frac{1}{2}\sum_{\gamma=1}^{\alpha-1}(-1)^{\alpha+\gamma}\Gamma_{\alpha\alpha}^k\tr(\theta_{1}\dots \theta_
{\gamma-1} \eta_{\alpha k}\theta_{\gamma+1}\dots \hat\theta_{\alpha}\dots \theta_{16}G)+\\
&+\sum_{\gamma=1}^{\alpha-1}\pr{x^k}\tr((-1)^{\gamma-1}\Gamma_{\alpha\gamma}^k\theta_{1}\dots \hat\theta_
{\gamma}\dots \theta_{16}G)+\\
&+\frac{1}{2}\pr{x^k}\tr((-1)^{\alpha-1}\Gamma_{\alpha\alpha}^k\theta_{1}\dots \hat\theta_{\alpha}\dots \theta_{16}G).
\end{split}
\end{equation}
As usual ``$\hat{\ }$'' marks the symbol that should be omitted in the formula.
If subscript in $\theta_{\gamma}$ is out of range $[1,16]$ then $\theta_{\gamma}$  must be omitted.
From this computation we conclude that for Hamiltonian of the vector field $q_{\alpha}$ as a function of $G$ is
\begin{equation}
\begin{split}
&\sum_{\gamma=1}^{\alpha-1}\Gamma_{\alpha\gamma}^k\tr(\theta_{1}\dots \theta_{\gamma-1}\eta_{\alpha\gamma}
\theta_{\gamma+1}\dots \theta_{16}G)+\\
&+\frac{1}{2}\Gamma_{\alpha\alpha}^k\tr(\theta_{1}\dots \theta_{\alpha-1}\eta_{\alpha\alpha}\theta_{\alpha+1}\dots 
\theta_{16}G)+\\
&+\sum_{\gamma=1}^{\alpha-1}\sum_{\gamma'=1}^{\gamma-1}(-1)^{\gamma+\gamma'}\Gamma_{\alpha\gamma}^k\tr
(\theta_{1}\dots \theta_{\gamma'-1}\eta_{\alpha k}\theta_{\gamma'+1}\dots \hat \theta_{\gamma}\dots \theta_{16}G)+\\
&+\frac{1}{2}\sum_{\gamma=1}^{\alpha-1}(-1)^{\alpha}\Gamma_{\alpha\alpha}^k\tr(\theta_{1}\dots \theta_
{\gamma-1}\eta_{\alpha k}\theta_{\gamma+1}\dots  \hat \theta_{\alpha}\dots\theta_{16}G)
\end{split}
\end{equation}

\section{Formal  SUSY  deformations}\label{E:Formal}

We have analyzed infinitesimal SUSY deformations of reduced and unreduced SUSY  YM theory.  One can prove that all 
of these deformations can  be extended to formal deformations (i.e. there exist  SUSY deformations
represented as formal series with respect to parameter $\epsilon$ and giving an arbitrary 
infinitesimal deformation in the first order with respect to $\epsilon$).  We shall sketch the proof of this fact in present 
section.

 We have seen in Section \ref{S:BV} that there is a large class of infinitesimal supersymmetric deformations that have a 
form $\theta_1\dots \theta_{16}G$. We shall start with the proof that all these infinitesimal deformations can be extended 
to formal deformations.

We shall consider  more general situation when we have any action functional in BV formalism that is invariant with 
respect to L$\ity $ action of SUSY. As follows from Appendix \ref{AppendixD} our considerations can be applied to ten-
dimensional SYM theory.

The SUSY Lie algebra has $m$ even commuting generators $X_1,\dots,X_m$ and $n$ odd generators
$\tau _1,\dots,\tau _n$ obeying relations
$$[\tau _{\alpha},\tau _{\beta}]=\Gamma^i_{\alpha \beta}X_i.$$

In the definition of L$\ity$-action of $\g$ we use the algebra $C^{\bullet}(\g)$ of functions of corresponding ghosts. In our 
case this algebra is the algebra  
\begin{equation}\label{E:Kalgebra}
K=\mathbb{C}[[t^1,\dots,t^n]]\otimes \Lambda[\xi^1,\dots,\xi^m].
\end{equation} The odd variables $\xi^1,\dots,\xi^m$ are the ghosts for even generators (space-time translations), the 
even variables are the ghosts for odd generators. The algebra $K$ is equipped with the differential \[d=\Gamma_{\alpha
\beta}^it^{\alpha}t^{\beta}\frac{\partial}{\partial \xi_i},\]  where $\Gamma_{\alpha\beta}^i$ are the structure constants of the 
supersymmetry algebra.

The  L$\ity$ action can be described by an element $\hat{S}\in A=K\otimes C^{\infty}(M)$, where $M$ is the space of 
fields (in other words
$\hat {S}$ is a function of ghost variables $t^i,\xi ^{\alpha}$ and fields).

The  equation (\ref{lityss}) for $\hat{S}$ takes the form 

\begin{equation}\label{E:linftyeq1}
d\hat{S}+\frac{1}{2}\{\hat{S},\hat{S}\}=0.
\end{equation}
A solution to this equation gives us a solution $S$ to the
BV Master equation (obtained if we assume that ghost variables are equal to zero) and L$\ity$ action of supersymmetries  
preserving $S$.
We would like to construct a formal deformation of such a solution, i.e. we would like to construct a formal power series $
\hat {S}(\epsilon)$ with
respect to $\epsilon$ obeying the equation (\ref {E:linftyeq1}) and giving the original solution for 
$\epsilon=0.$ We shall start with a construction of the
solution of the equation for infinitesimal deformation
\begin{equation}\label{di}
dH+\{\hat {S},H\}=0.
\end{equation}
If we know the solution of the equation (\ref{di})
for every $\hat {S}$ we  can find the deformation
solving the equation
\begin{equation}
\label{f}
\frac{d\hat{S}(\epsilon)}{d\epsilon}=H(\hat{S}(\epsilon)).
\end{equation}
To solve the equation (\ref{di}) we construct a family of functions $F^k$ defined by inductive
formula
\begin{equation}\label{E:dsfjejjd}
{F}^{k+1}=\frac{1}{t_{k+1}}\left(d_{k+1}{F}^k+\{\hat{S}^{k},{F}^k\}\right).
\end{equation}
where  $d_k$ is defined as  $\sum_{\alpha\beta\leq k}\Gamma_{\alpha\beta}^it^{\alpha}t^{\beta}\frac{\partial}{\partial \xi_i}
$. 
We assume that $F^k$ and $\hat {S}^k$ do not depend on
$t^{k+1},\dots,t^n$ and $\hat {S}^k$ coincides with
$\hat {S}$ if $t^{k+1}=\cdots =t^n=0.$
We impose also an initial condition  $F^0$ obeying $\{\hat {S}_i, F^0\}=0$ where $\hat{S}_i=\frac{\partial \hat{S}}{\partial 
\xi_i}|_{0}.$
We shall see that $F^n$ is a solution of the equation (\ref {di}); this allows us to take $H=F^n.$
To  prove this fact we should give geometric interpretation of (\ref {E:dsfjejjd}). First of all we notice that the solutions of 
(\ref {di}) are cocycles of the differential $d_S=d+\{\hat {S},\cdot\}$ acting on the algebra  $A$ of functions of ghosts 
$t^1,\dots,t^n, \xi ^1,\dots, \xi ^m$ and fields.
We consider differential ideal $I_k$ of this algebra  defined as set of functions that vanish
if $t^1=  \cdots =t^k=0$ (in other words, $I_k$ is generated by $t^1,\dots,t^k$) and the quotient $A_k$ of the algebra $A$ 
with respect to this ideal. The differential algebra $A_k$ can be interpreted as the algebra of functions depending on 
ghosts $t^1,\dots,t^k, \xi ^1, \dots, \xi ^m$ and fields. The inductive formula 
(\ref {E:dsfjejjd}) gives a map of $A_k$ into  $A_{k+1}$
that descends to cohomology.
To construct this map we notice that the embedding $I_{k+1}\subset I_k$ generates a short exact sequence
$$0\to I_k/I_{k+1}\to A_{k+1}\to A_k\to 0.$$
The ideal $I_k/I_{k+1}$ of the algebra $ A_{k+1}$
is generated by one element $t^{k+1}$. This means we can rewrite the exact sequence in the form
$$0\to A_{k+1}\to A_{k+1}\to A_k\to 0,$$
where the map $ A_{k+1}\to A_{k+1}$ is a multiplication by $t^{k+1}.$ The boundary map in
the corresponding exact cohomology sequence  gives (\ref {E:dsfjejjd}). The condition imposed on $F^0$ means that 
$F^0$ is a cocycle in $A_0.$

For every admissible $F^0$ we have constructed
$H(S)$ as a solution of (\ref {di}); we have used this solution to construct formal deformation by means of  (\ref {f}).

This fairly simple description of supersymmetric deformations has one obvious shortcoming. The Poincar\'e invariance is 
hopelessly lost in the the formula (\ref{E:dsfjejjd}) even if we start with Poincar\'e invariant ${F}^0$. This can be fixed if 
we work in the euclidean signature. The algebra $A$ contains a subalgebra  $A_{\SO(m)}$ of $\SO(m)$-invariant 
elements. The vector field $H({F}^0)$ , restricted on $A_{\SO(m)}$ can be replaced by \[H^{\SO(m)}=\frac{1}{\vol(\SO
(m))}\int_{\SO(m)}H^gdg\] - the  average of the $g$-rotated element $H$ over $\SO(m)$. It can be proved by  other 
means that $H^{\SO(m)}_{{F}^0}$ is nonzero if ${F}^0$ is Poincar\'e-invariant. The above prescription can be formulated 
also in more algebraic form where 
Euclidean signature is unnecessary. We  decompose $A$ into direct sum of irreducible representations of $\SO(m)$ and 
leave only
$\SO(m)$ invariant part of $H$.

Let us make a connection with Section \ref{S:fdgdfgmjl}.

We start with  identifications. The odd symplectic manifold $M$  coincides with the space of fields in the maximally super-
symmetric Yang-Mills theory in  Batalin-Vilkovisky formalism (we can consider both reduced case when $n=16, m=0$ 
and unreduced case when $n=16, m=10$). It can be shown  that the supersymmetry action can be extended to an L$\ity
$ action, whose generating function satisfies equation (\ref{E:linftyeq1}); see Appendix \ref{AppendixD}.

 Let us start with a Poincar\'e invariant  $F^0=G$ as described in Section \ref{S:fdgdfgmjl}.  The l'H\^{o}pital's rule applied 
to $H={F}^n$ shows that its   leading term coincides with (\ref{E:fadjghjd}).  This means that infinitesimal  SUSY 
deformations of the form $\tr \theta _1\cdots\theta _{16}G$
can be extended to formal deformations. In reduced case this logic can be applied to arbitrary Poincar\'e invariant $G$, in 
unreduced case we should consider local gauge covariant $G$ to obtain SUSY deformation. 

There exists only one infinitesimal deformation $\delta \mathcal{L}_{16}$ that does not have the form $\tr \theta _1\cdots
\theta _{16}G$ (Theorem \ref{T:reduced} and Theorem \ref{E:theorem10}). {\footnote { It is better to say that every 
infinitesimal deformation can be represented as linear combination of $\delta \mathcal{L}_{16}$ and $\tr \theta _1\cdots
\theta _{16}G$. }} One can prove that this deformation also can be extended to formal deformation together with
L$\ity$ action of SUSY algebra (\ref{E:hfdfjfhjd})  {\footnote {Notice that superstring theory gives a formal SUSY 
deformation of SYM theory that corresponds to infinitesimal deformation represented as linear combination of $\delta 
\mathcal{L}_{16}$  (with non-zero coefficient) and $\tr \theta _1\cdots\theta _{16}G$. If we were able to prove that that 
SUSY action extends to L$\ity$ action we could use this deformation to extend all infinitesimal deformations in reduced 
case.}}Constructing formal deformations of this formal deformation we obtain that in the reduced case all infinitesimal 
deformations can be extended to formal ones.

We have noticed in Section \ref{S:fdgdfgmjl} that for $G$ of the form $\Delta=
\nabla _i\nabla _i$ the expression $\tr \theta _1\cdots\theta _{16}G$  generates a SUSY infinitesimal deformation of  
unreduced YM action functional.  One can prove that this deformation also can be extended to formal deformation, 
however, the above construction of formal deformation does not work in this case. The proof is based on the remark that 
infinitesimal deformation $A\tr \Delta$  can be applied to a formal deformation we constructed and it remains local.
\vskip.2in

\appendix
{\bf \LARGE  Appendices}

\vskip.2in
\section{L$\ity$ and A$\ity$ algebras}\label{AppendixA}

Let us consider a supermanifold equipped with an odd vector field $Q$ obeying $[Q,Q]=0$ (a $Q$-manifold).
Let us introduce a coordinate system in a neighborhood of a point of $Q$-manifold belonging to zero locus $Q$. Then 
the vector field $Q$ considered as a derivation of the algebra of formal power series can be specified by its action on the 
coordinate functions $z^A$:
\begin{equation}
Q(z^A)=\sum_{n}\sum \pm \mu^A_{B_1,\dots,B_n}z^{B_1}\dots z^{B_n}
\end{equation} 
 We can use tensors $\mu_n=\mu^A_{B_1,\dots,B_n}$  to define a series of operations. The operation $\mu_n$ has $n$ 
arguments;  it can be considered as a linear map $V^{\otimes n}\rightarrow V$ (here $V$ stands for the tangent space at 
$x=0$). However, it is convenient to change parity of $V$ and consider $\mu_n$ as a symmetric map $(\Pi V)^{\otimes 
n}\rightarrow \Pi V$.  It is convenient to add some signs in the definition of $\mu_n$. With appropriate choice of signs we 
obtain that operations $\mu_n$ obey some quadratic relations; by
definition the operators $\mu _n$ obeying these relations specify a structure of  L$\ity$ algebra  on $W=\Pi V$. We see 
that a point of zero locus of the  field $Q$ specifies an L$\ity$ algebra;  geometrically one can say that L$\ity$ algebra is 
a formal $Q$-manifold. (A formal manifold  is a space whose algebra of functions  can be identified with the algebra of 
formal power series. If the algebra is equipped with odd derivation $Q$, such that $\{Q,Q\}=0$ we have a structure of 
formal $Q$ manifold.) The considerations of our paper are formal. This means that we can interpret all functions of fields 
at hand as formal power series. Therefore  instead of working with $Q$-manifolds we can work with L$\ity$ algebras.

On a $Q$-manifold with odd symplectic structure we can choose the coordinates $z^1,\dots, z^n$ as Darboux 
coordinates,i.e. we can assume that the coefficients of symplectic form do not depend on $z$. Then the  L$\ity$ algebra 
is equipped with invariant odd inner product. 

Hence we can say that L$\ity$ algebra specifies  a classical system and L$\ity$ algebra with  invariant odd inner product 
specifies a Lagrangian classical system.

It is often important to consider $\mathbb{Z}$-graded L$\ity$-algebras (in BV-formalism this corresponds to the case 
when the fields are classified according to ghost number). {\it We assume  in this case that the derivation $Q$ raises the 
grading (the ghost number) by one.}

An  L$\ity$ algebra where all operations $\mu_n$  with $n\geq 3$ vanish can be identified with differential graded Lie 
algebra (the operation $\mu_1$ is the differential, $\mu_2$ is the bracket). An L$\ity$ algebra corresponding to Lie 
algebra with zero differential is $\mathbb{Z}$-graded.

For L$\ity$ algebra $\g=(W,\mu _n)$ one can define a notion of cohomology generalizing the standard notion of 
cohomology of Lie algebra. For example, in the case of trivial coefficients we can consider cohomology of $Q$ acting as 
a derivation of the algebra $\widehat {\Sym}(W^*)$ of formal functions on $W$ (of the algebra of formal series). In the 
case when the L$\ity$ algebra corresponds to  differential Lie algebra $\g$ this cohomology coincides with Lie algebra 
cohomology $\rH(\g,\mathbb{C})$ (cohomology with trivial coefficients). Considering cohomology of $Q$ acting on the 
space of vector fields (space of derivations of the algebra of functions) we get a notion generalizing  the notion of 
cohomology $\rH(\g,\g)$  ( cohomology with coefficients in adjoint representation).
{\footnote { Usually  the definition of Lie algebra
cohomology is based on the consideration of polynomial functions of ghosts; using formal series we obtain a completion 
of cohomology.}}

Notice, that to every L$\ity$ algebra  $\g=(W, \mu _n)$ we can assign a supercommutative differential
algebra $(\widehat {\Sym}(W^*),Q)$ that is in some sense dual
to the original L$\ity$-algebra.  If only a finite number of operations $\mu _n$ does not vanish
the operator $Q$  transforms a polynomial function into a polynomial function, hence we can consider also a free 
supercommutative differential algebra $(\Sym(W),Q)$ where $\Sym(W)$ stands for the algebra of polynomials on $W$.
We shall use the notations $(\Sym(W^*),Q)=C^{\bullet} (\g), (\widehat {\Sym}(W^*),Q)={\hat C}^{\bullet}(\g)$ and the 
notations  $\rH (\g,\mathbb{C}),\hat {\rH} (\g,\mathbb{C})$ for corresponding cohomology. {\footnote {In the case of Lie 
algebra the functor $C^{\bullet}$ coincides with Cartan-Eilenberg construction
of differential algebra giving Lie algebra cohomology.}}
Similarly for the cohomology in the space of derivations we use the notations $\rH(\g,\g), \hat {\rH}(\g, \g)$.

In the case when L$\ity$ algebra is $\mathbb{Z}$-graded the  cohomology   $\rH(\g,\mathbb{C})$  and $\rH(\g,\g)$ are  also 
$\mathbb{Z}$-graded.

 One can consider intrinsic cohomology of an L$\ity$ algebra. They are defined   as $\Ker\mu_1/\Im\mu_1$. One says 
that an  L$\ity$ homomorphism, which is the same as  $Q$-map in the language of $Q$-manifolds{\footnote {Recall, that 
a map of $Q$-manifolds is a $Q$-map if it is compatible with $Q$.}}, is a quasi-isomorphism if it induces an isomorphism of 
intrinsic cohomology. Notice, that in the case of $\mathbb{Z}$-graded L$\ity$ algebras L$\ity$ homomorphism should 
respect $\mathbb{Z}$ grading.

Every $\mathbb{Z}$-graded L$\ity$ algebra is quasi-isomorphic to L$\ity$ with $\mu_1=0$. (In other words every L$\ity$ 
algebra has a minimal model). Moreover,  every $\mathbb{Z}$-graded L$\ity$ algebra algebra is quasi-isomorphic to 
direct product of minimal  L$\ity$ algebra and a trivial one. (We say that  L$\ity$ algebra  is trivial if it can be regarded as 
differential abelian Lie algebra with zero cohomology.)

The role of zero locus of $Q$ is played by the space of solutions of Maurer-Cartan (MC) equation:
$$\sum _n \frac{1}{n!}\mu _n (a,...,a)=0.$$
 To obtain a space of solutions $Sol/\thicksim$ we should factorize space of solutions $Sol$ of MC in appropriate way  or 
work with a minimal model of $A$.

Our main interest lies in gauge theories. We consider these theories for all groups U$(n)$ at the same time. To analyze 
these theories it is more convenient to work with A$\ity$ instead of  L$\ity$ algebras. 

An A$\ity$ algebra can be defined as a formal non-commutative $Q$-manifold. In other words we consider an algebra of  
power series  of several variables which do not satisfy any relations (some of them are even, some are odd). An A$\ity$ 
algebra is defined as an odd derivation $Q$ of this algebra which satisfies $[Q,Q]=0$. 

 More precisely we consider a $\mathbb{Z}_2$-graded vector space $W$ with coordinates $\bz^{A}$. The algebra of 
formal noncommutative power series $\mathbb{C}\langle \langle \bz^A \rangle \rangle$ is a completion  $\hat{T}(W^*)$ of 
the tensor algebra $T(W^*)$  (of the algebra of formal noncommutative polynomials). The derivation is specified by the 
action on $\bz^A$:
\begin{equation}\label{E:quusuf}
Q(\bz^A)=\sum_{n}\sum \pm \mu^A_{B_1,\dots,B_n}\bz^{B_1}\dots \bz^{B_n}
\end{equation} 
We can use $\mu^A_{B_1,\dots,B_n}$ to specify a series of operations $\mu_n$ on the space $\Pi W$ as in L$\ity$ case. 
(In the case when $W$ is $\mathbb{Z}$-graded instead parity reversal $\Pi$ we should consider the shift of the grading 
by $1$.) If
$Q$ defines an A$\ity$ algebra then the condition $[Q,Q]=0$
leads to quadratic relations between operations; these relations can be used to give an alternative definition of A$\ity$ 
algebra.
In this definition an A$\ity$ algebra is a $\mathbb{Z}_2$-graded  or $\mathbb{Z}$- graded linear space , equipped with a 
series of maps $\mu_n:A^{\otimes n}\to A, n\ge 1$ of degree $2-n$ that  satisfy quadratic relations:
\begin{equation}\label{E:gfdjruv}
\begin{split}
&\sum_{i+j=n+1}\sum_{0\le l\le i}\epsilon(l,j)\times\\
&\mu_i(a_0,\dots,a_{l-1},\mu_j(a_l,\dots,a_{l+j-1}),a_{l+j},\dots,a_n)=0
\end{split}
\end{equation}
where $a_m\in A$, and 

$\epsilon(l,j)=(-1)^{j\sum_{0\le s\le l-1}deg(a_s)+l(j-1)+j(i-1)}$.

In particular, $\mu_1^2=0$.

Notice that in the case when only finite number of operations $\mu _n$ do not vanish  (the RHS of (\ref {E:quusuf}) is a 
polynomial) we can work with polynomial functions instead of power series. We obtain in this case a differential on the 
tensor algebra  $(T(\Pi W^*),Q)$.
The transition from A$\ity$ algebra $A=(W,\mu_n)$ to a differential algebra $\cobar A=(T(\Pi W^*),Q)$ is known as  co-
bar construction.  If we consider instead of tensor algebra its completion 
(the algebra of formal power series) we obtain the differential algebra $({\hat T}(\Pi W^*),Q)$  as a completed co-bar 
construction ${\widehat {\cobar}} A$.  

The cohomology of differential algebra $(T(\Pi W^*),Q)$=$\cobar A$ are called Hochschild cohomology of $A$ with 
coefficients in trivial module $\mathbb{C}$; they are denoted  by $\rHH (A,\mathbb{C})$.  Using the completed co-bar 
construction  we can give another definition of
Hochschild cohomology of A$\ity$ algebra  
as the cohomology of the differential algebra
$({\hat T}(\Pi W^*),Q)$=$\widehat {\cobar} A$; this cohomology can be defined also in the case when we have infinite 
number of operations. It will be denoted by  $\widehat {\rHH} (A,\mathbb{C})$.  Under some mild conditions (for example,
if the differential is equal to zero) 
one can prove that $\widehat {\rHH} (A,\mathbb{C})$ is a completion of ${\rHH} (A,\mathbb{C})$; in the case when  ${\rHH} 
(A,\mathbb{C})$ is finite-dimensional this means that the definitions coincide. We shall always assume  that $\widehat 
{\rHH} (A,\mathbb{C})$ is a completion of ${\rHH} (A,\mathbb{C})$.

The theory of A$\ity$ algebras is very similar to the theory of L$\ity$ algebras. In particular $\mu_1$ is a differential: $
\mu_1^2=0$. It can be used to define intrinsic cohomology of A$\ity$ algebra. If $\mu_n=0$ for $n\geq3$ then operations 
$\mu_1,\mu_2$ define a structure of differential associative algebra on $W$.

The role of equations of motion is played by so called MC equation
\begin{equation}
\sum_{n\geq 1}\mu_n(a,\dots,a)=0
\end{equation}
Again to get a space of solutions $Sol/\sim$  we should factorize solutions of MC equation in appropriate way or to work 
in a framework of minimal models, i.e. we  should use the  A$\ity$  algebra that is quasi-isomorphic to the original algebra 
and has $\mu_1=0$. (Every $\mathbb{Z}$-graded A$\ity$ algebra has a minimal model.)

We say that $1$ is the unit element of A$\ity$ algebra if $\mu _2(1,a)=\mu _2(a,1)=a$ (i.e. $1$ is the unit for binary 
operation) and all other operations with $1$ as one of arguments give zero. For every A$\ity$ algebra $A$ we  construct
a new A$\ity$ algebra $\tilde A$ adjoining a unit element. {\footnote { Notice, that in our definition
of Hochschild cohomology we should work with
non-unital algebras; otherwise the result for the cohomology with coefficients in $\mathbb{C}$ would be trivial. In more standard approach one defines 
Hochschild cohomology of unital algebra using the augmentation ideal.}}

Having an A$\ity$ algebra $A$ we can construct a series $L_N(A)$ of L$\ity$ algebras. If $N=1$ it  is easy to describe 
the corresponding L$\ity$ algebra  in geometric language. There is a map from noncommutative formal functions on $\Pi 
A$ to ordinary (super)commutative formal functions on the same space. Algebraically it corresponds to imposing (super)
commutativity relations among generators. 
Derivation $Q$ is compatible with such modification. It results in $L_1(A)$.  By definition $L_N(A)=L_1(A\otimes \Mat_N)
$.

If $A$ is an ordinary associative algebra, then $L_1(A)$ is in fact a Lie algebra- it has the same space and the   
operation  is equal to the commutator  $[a,b]=ab-ba$.

The use of A$\ity$ algebras in the YM theory is based on the remark that one can construct an A$\ity$ algebra $\mathcal
{A}$ with inner product such that for every $N$ the algebra $L_N(\tilde {\mathcal{A}})$ specifies YM theory with matrices 
of size $N\times N$ in $BV$ formalism. (Recall, that we construct  $\tilde {\mathcal{A}}$ adjoining unit element to $
\mathcal{A}$.)   The construction of the A$\ity$ algebra $\tilde {\mathcal{A}}$ is very simple: in  the formula for $Q$ 
in BV-formalism of YM theory in component formalism we replace matrices with free variables. The operator $Q$ 
obtained in this way specifies also differential algebras $\cobar \tilde {\mathcal{A}}$ and
${\widehat {\cobar}}\tilde {\mathcal{A}}$. To construct the A$\ity$ algebra $\mathcal{A}$ in the case of reduced YM theory 
we notice that the elements of the basis of $ \tilde {\mathcal{A}}$ correspond to the fields of the theory; the element 
corresponding to the ghost field $c$ is the unit; remaining elements of the basis generate the algebra $\mathcal{A}$.  In 
the case
of reduced theory the differential algebra
$\cobar \mathcal{A}$ can be obtained  from
$\cobar \tilde {\mathcal{A}}$ by means of factorization with respect to the ghost field $c$;
we denote this algebra by $BV_0$ and the original algebra $\mathcal{A}$ will be denoted by $bv_0$.  The construction 
in unreduced case is similar. In this case the ghost field (as all other fields) is a function on ten-dimensional space; to 
obtain $\cobar \mathcal{A}$ (that will be denoted later by $BV$)  we factorize $\cobar \tilde {\mathcal{A}}$ with respect to 
the ideal generated by the constant ghost field $c$. We shall use the notation $bv$ for the algebra $\mathcal{A}$ in 
unreduced case.

Instead  of working with component fields we can use pure spinors. Then instead of the  algebra $bv_0$ we should work 
with reduced Berkovits algebra $B_0$ that is quasi-isomorphic to $bv_0$; the algebra $BV_0$ is quasi-isomorphic to $U
(YM)$.  In unreduced case we work 
with Berkovits algebra $B$ that is quasi-isomorphic to $bv$ and with the algebra $U(TYM)$ quasi-isomorphic to $BV$ 
(see Section \ref{S:BV}, \cite {MSch} and \cite {MSch2} for more detail).

Notice, that the quasi-isomorphisms we  have described are useful for calculation of homology.
For example, as we have seen in Section \ref{S:BV} the space of fields in pure spinor formalism can be equipped with odd 
symplectic form (\ref {tr}) that
vanishes if the sum of ghost numbers of arguments is $>3$; the space of fields should be factorized with respect to the 
kernel of this form.
It  follows that homology and cohomology of
$U(YM)$ with coefficients in  any module vanish 
in dimensions $>3$. From the other side the form
(\ref {tr}) can be used to establish Poincar\'e duality
in  the cohomology of $U(YM)$.

It is easy to reduce classification of deformations of A$\ity$ algebra $A$ to a homological problem (see \cite{Penkava}). 
Namely it is clear that an infinitesimal deformation of $Q$ obeying $[Q,Q]=0$ is an odd derivation $q$ obeying $[Q,q]
=0$. The operator $Q$ specifies a differential on the space of all derivations by the formula 
\begin{equation}\label{E:defds}
\TQ q=[Q,q]
\end{equation} We see that infinitesimal deformations correspond to cocycles of this differential. It is easy to see that two 
infinitesimal deformations belonging to the same cohomology  class are equivalent (if $q=[Q,v]$ where $v$ is a 
derivation then we can eliminate $q$ by a change of variables $exp(v)$).
We see that the classes of infinitesimal deformations can be identified with homology  $\rH(Vect(\mathbb{V}),d)$ of the 
space of vector fields. 
(Vector fields on $\mathbb{V}$ are  even and odd derivations of $\mathbb{Z}_2$-graded algebra of formal power series.) 
If the number of operations is finite we can restrict ourselves to polynomial vector fields (in other words, we can replace 
$Vect(\mathbb{V})$ with $\cobar A\otimes A$).

The above construction is another particular case of Hochschild cohomology ( the cohomology with coefficients in 
coefficients in $\mathbb{C}$ was defined in terms of cobar construction. ) We denote it by $\widehat {\rHH}(A,A)$ (if we 
are working with formal power series) or by $\rHH(A,A)$  (if we are working with polynomials). Notice that these 
cohomologies  have a structure of  (super) Lie algebra induced  by commutator of vector fields. 

We shall give a definition of Hochschild cohomology  of differential graded associative algebra $(A,d_A)$ \[ A=\bigoplus_{i
\geq 0} A_i\] with  coefficients in a differential bimodule $(M,d_M)$\[M=\bigoplus_{i} M_i\]  in terms of
Hochschild cochains (multilinear functionals on $A$ with values in $M$).

We use the standard notation for the degree $\bar{a}=i$  of a homogeneous element $a\in A_i$. 

We first associate with the pair $(A,M)$ a bicomplex $(C^{n,m},D_I,D_{II}), n\geq 0$, $D_I:C^{n,m}\rightarrow C^{n+1,m}
$,$D_{II}:C^{n,m}\rightarrow C^{n,m+1}$ as follows: 
\begin{equation}\label{E:prod1}   
C^{n.m}(A,M)=\prod_{i_1,\dots,i_n } \Hom(A_{i_1}\otimes \cdots\otimes A_{i_n},M_{m+i_1+\cdots i_n} )
\end{equation}
 and for $c\in C^{n,m} $

\begin{equation}
\begin{split}
&D_{I}c=a_0c(a_1,\dots,a_n)+\sum_{i=0}^{n-1}(-1)^{i+1}c(a_0,\dots, a_ia_{i+1},\dots,a_n)+(-1)^{m\bar{a}_n+n}c
(a_0,\dots,a_{n-1})a_n\\
&D_{II}c=\sum_{i=1}^n(-1)^{1+\bar{a}_1+\cdots a_{i-1}}c(a_1,\dots, d_A(a_i),\dots,a_n)+(-1)^kd_Mc(a_1,\dots,a_n)\\
\end{split}
\end{equation}
Clearly \[D_{I}^2=0, D_{II}^2=0, D_ID_{II}+D_{II}D_I=0\]
We define the space of  Hochschild $i$-th cochains as 
\begin{equation}\label{E:prod2}
\widehat{C}^i(A,M)=\prod_{n+m=i} C^{n,m}(A,M)`.
\end{equation}
 Then $\widehat{C}^{\bullet}(A,M)$ is the complex $(\prod C^{i}(A,M),D)$ with $D=D_I+D_{II}$.
The operator  $D$ can also  be considered as a differential on the direct sum $C(A,M)=\bigoplus_i C^i(A,M)$ with direct 
products in (\ref{E:prod1},\ref{E:prod2}) replaced by the direct sums (on the space of non-commutative polynomials on $
\Pi A$ with values in $M$).
Similarly  $\widehat{C}(A, M)$ gets interpreted as  the space of formal power series on $A$ with values in $M$. We 
define the Hochschild  cohomology $\rHH(A, M)$ and  $\widehat {\rHH}(A, M)$ as the cohomology of this differential. Again 
under certain mild conditions that will be assumed in  our consideration the second group is a completion of the first one; 
the  groups coincide if $\rHH(A, M)$ is finite-dimensional.

Notice that $C(A,M)$ can be identified with the tensor product $\cobar A\otimes M$ with a differential defined by the 
formula
\begin{equation}
\label{hh0}
D(c\otimes m)=(d_{\cobar}+d_M)c\otimes m+[e,c\otimes m]
\end{equation}
where $e$ is the tensor of the identity map $\id\in \End(A)\cong \Pi A^*\otimes A\subset \cobar(A)\otimes A$

A similar statement is true for ${\hat C}(A, M).$

Notice that we can define the total grading of Hochschild cohomology  $\rHH^i(A,M)$ where  $i$ stands for the total 
grading defined in terms of $A$, $M$ and the ghost  number (the number of arguments).

In the case when $M$ is the algebra $A$ considered as a bimodule the elements of $\rHH^2(A,A)$ label infinitesimal 
deformations of associative algebra $A$ and the elements of $\rHH^{\bullet}(A,A)$ label infinitesimal deformations of $A$ into A$\ity$  algebra.  Derivations of $A$ specify elements of
$\rHH^1(A,A)$ (more precisely, a derivation can be considered as one-dimensional Hochschild cocycle; inner derivations are homologous to zero). 

We can define Hochschild homology $\rHH_{\bullet}$ considering Hochschild chains (elements of $A\otimes...\otimes A 
\otimes M$
). If $A$ and $M$  are finite-dimensional (or graded with finite-dimensional components) we
can define homology by means of dualization of cohomology 
$$\rHH_i(A,M)=\rHH^i(A, M^*)^*.$$
Let us assume that  the differential bimodule $M$ is equipped with bilinear inner product  of degree $n$ {\footnote { This 
means that  the inner product does not vanish only if the sum of degrees of arguments is equal to $n$. For example, the 
odd bilinear form in pure spinor formalism of SYM can be considered as inner product of degree 3.}} that descends to 
non-degenerate inner product on homology. This product generates a 
quasi-isomorphism $M\to M^*$ and therefore an
isomorphism between $\rHH_i(A,M)$ and $\rHH^{n-i}(A,M)$ (Poincar\'e isomorphism).
Let us suppose now that A$\ity$ has Lie algebra of symmetries $\g$ and we are interested in deformations of this algebra 
preserving  the symmetries.

This problem appears if we consider YM theory for all groups $\U(n)$ at the same time and we would like to deform the 
equations of motion preserving the symmetries of the original theory (however we do not require that  the deformed 
equations  come from an action functional).

When we are talking about symmetries of A$\ity$ algebra  $A$ we have in mind derivations of the algebra $\widehat 
{\cobar} A=  (\hat T(W^*),Q)$ (vector fields on a formal non-commutative manifold) that commute  with $Q$; see equation 
(\ref{E:quusuf}). We say that symmetries $q_1,\dots,q_k$ form Lie algebra $\g$ if they satisfy commutation relations of $
\g$ up to $Q$-exact terms. These symmetries determine  a homomorphism of Lie algebra $\g$ into Lie algebra $\widehat 
{\rHH}(A,A)$.  We shall say that
this homomorphism specifies weak action of $\g$ on $A$.

In the case when A$\ity$ algebra is $\mathbb{Z}$-graded we  can impose the condition that the symmetry is compatible 
with the grading.

Another way to define symmetries of A$\ity$ algebra is to identify them with L$\ity$ actions of Lie algebra $\g$ on this 
algebra, i.e. with L$\ity$ homomorphisms of $\g$ into differential Lie algebra of derivations $Vect$ of the algebra $
\widehat {\cobar} A$ (the differential acts on $Vect$ as (super)commutator with $Q$).  More explicitly L$\ity$ action is 
defined as a linear map
\begin{equation}
\label{li}
q:\Sym \Pi \g \to \Pi Vect
\end{equation}
or as an element of odd degree 
\begin{equation}
\label{lit}
q\in C^{\bullet}(\g)\otimes Vect
\end{equation}
obeying
\begin{equation}
\label{lity}
d_{\g}q+[Q,q]+\frac{1}{2}[q,q]=0.
\end{equation}
where $d_{\g}$ is a differential entering the definition of Lie algebra cohomology.
We can write $q$ in the form \[q=\sum \frac{1}{r!}q_{\alpha _1,\dots,\alpha _r} c^{\alpha _1}\cdots c^{\alpha _r}\] where 
$c^{\alpha}$ are ghosts of the Lie algebra; here $d_{\g}=\frac{1}{2}f^{\alpha}_{\beta\gamma}c^{\beta}c^{\gamma}\pr{c^
{\alpha}}$ where $f^{\gamma}_{\alpha \beta}$ denote structure constants of $\g.$

One can represent  (\ref{lity})  as an infinite
sequence of equations for the coefficients; the first of these equations has the form
$$[q_{\alpha},q_{\beta}]=f^{\gamma}_{\alpha \beta}q_{\gamma}+[Q,q_{\alpha\beta}].$$
  We see that
$q_{\alpha}$ satisfy commutation relations of $\g$ up to $Q$-exact terms (as we have said this means that  they specify  
a weak action of $\g$ on $A$ and a homomorphism $\g\to\widehat {\rHH}(A,A)$ ).

{\it In the remaining part of this section  we use the notation $\rHH$ instead of $\widehat {\rHH}$. }

Let us consider now an A$\ity$ algebra  $A$ equipped with L$\ity$ action of Lie algebra $\g$.
 To describe infinitesimal deformations of $A$ preserving the Lie algebra of symmetries we should find solutions of 
equations (\ref{lity})
 and $[Q,Q]=0$ where $Q$ is replaced by $ Q+\delta Q$ and $q$ by $q+\delta q$. After appropriate identifications these 
solutions can be described by  elements of cohomology group  that will be denoted by $\rHH_{\g}(A,A)$. To define this 
group we introduce ghosts $c^{\alpha}$. In other words we multiply $Vect(\V)$  by $\Lambda(\Pi \g^*)$  and define the 
differential by the formula
\begin{equation}\label{E:fdjerppq1}
d=\TQ+\frac{1}{2}f^{\alpha}_{\beta\gamma}c^{\beta}c^{\gamma}\pr{c^{\alpha}}+q_{\alpha}c^{\alpha}+\dots
\end{equation}
The dots denote the terms having higher order with respect to $c^{\alpha}$. They should be included to satisfy $d^2=0$ if 
$q_{\alpha}$ obey commutations of $\g$ up to $Q$-exact term. They
can be expressed in terms of $q_{\alpha _1,\dots, \alpha _r}$:
\begin{equation}\label{E:eqcohomologydig}
d=\TQ+\frac{1}{2}f^{\alpha}_{\beta\gamma}c^{\beta}c^{\gamma}\pr{c^{\alpha}}+\sum_{r\geq 1} \frac{1}{r!}c^{\alpha 
_1}\cdots c^{\alpha _r}q_{\alpha _1,\dots, \alpha _r}
\end{equation}
 In the terminology introduced in Section \ref{S:BV}  $\rHH_{\g} (A,A)$ is the Lie algebra cohomology of $\g$ with coefficients in the 
L$\ity$ differential $\g$ -module $(Vect(\V),\TQ)$:
 \begin{equation}
\label{lh}
\rHH_{\g} (A,A)=H(\g, (Vect(\V),\TQ)).
\end{equation}
 
From other side in the case of trivial $\g$ we obtain Hochschild cohomology. Therefore we shall use the term Lie-
Hochschild cohomology 
for the group (\ref {lh}).

Every deformation of A$\ity$ algebra $A$ induces a deformation of the algebra $\tilde A$ and of the corresponding L$\ity
$ algebra $L_N(\tilde A)$ ; if A$\ity$ algebra has Lie algebra of symmetries $\g$ then the same is true for this  L$\ity$ 
algebra. Deformations of A$\ity$ algebra preserving the symmetry algebra $\g$ induce symmetry preserving 
deformations of the  L$\ity$ algebra\footnote{At the level of cohomology groups it means that we have a series of 
homomorphisms $\rHH_{\g}(A,A)\rightarrow \rH_{\g}(L_N(\tilde A),L_n (\tilde A)) $}. This remark permits us to say that the 
calculations of symmetry preserving  deformations of A$\ity$ algebra $A$ corresponding to YM theory induces symmetry 
preserving deformations of EM for YM theories with gauge group $\UN$ for all $N$.

The calculation of cohomology groups $\rHH_{\susy}(YM,YM)$ permits us to describe SUSY-invariant deformations of EM. 
However we would like also to characterize  Lagrangian deformations of EM. This problem also can be formulated in 
terms of homology. Namely we should consider A$\ity$ algebras with invariant inner product and their deformations. We 
say that A$\ity$ algebra $A$ is equipped with odd invariant nondegenerate inner product $\langle.,.\rangle$ if $\langle 
a_0,\mu_n(a_1,\dots,a_n)\rangle=(-1)^{n+1}\langle a_n,\mu_n(a_0,\dots,a_{n-1})\rangle$. It is obvious that the 
corresponding L$\ity$ algebras $L_N(A)$ are equipped with odd invariant inner product. Therefore the corresponding 
vector field $Q$ comes from a solution of a Master equation $\{S,S\}=0$ (i.e. we have Lagrangian equations of motion). 
We shall  check that the deformations of A$\ity$ algebra preserving invariant inner product are labeled by cyclic 
cohomology of the algebra \cite{Penkava}.

As we have seen the deformations of A$\ity$ algebra are labeled by Hochschild cohomology cocycles of differential $\TQ
$ (see formula (\ref{E:defds})) acting on the space of derivations  $Vect(\V)$.

A derivation $\rho$ is uniquely defined by its values on generators of the basis of vector space $W^*$( on generators of 
algebra $\hat T(W^*)$). Let us introduce notations $\rho(z^i)=\rho^i(z^1,\dots,z^n)$. The condition that $\rho$ specifies a 
cocycle of $d$  means that it specifies a Hochschild cocycle with coefficients in $A$. The condition that $\rho$ preserves 
the invariant inner product is equivalent the  cyclicity condition on $\rho_{i_0,i_1\dots i_n}$, where $\rho_{i_0}
(z^1,\dots,z^{n})=\sum \rho_{i_0,i_1\dots i_k} z^{i_1}\dots z^{i_k}$. (We lower the  the upper index in $\rho$ using the 
invariant inner product.) The cyclicity condition has the form 
\begin{equation}\label{F:cycl}
\rho_{i_0,i_1\dots i_k}=(-1)^{k+1}\rho_{i_k,i_0\dots i_{k-1}}
\end{equation}
We say that $\rho_{i_0,i_1\dots, i_k}$ obeying formula (\ref{F:cycl}) is a cyclic cochain. To define cyclic cohomology we 
use Hochschild differential on the space of cyclic cochains.

{\footnote {One can say that the vector field $\rho$ preserving inner product is  a Hamiltonian vector field. The  cyclic 
cochain $ \rho_{i_0,i_1\dots i_k}$  can be considered as its Hamiltonian. The differential (\ref{E:defds}) acts on the space 
of Hamiltonian vector fields. The cohomology of corresponding differential acting on the space of Hamiltonians  is  called 
cyclic cohomology.}}

If we consider deformations of A$\ity$ algebra with inner product and Lie algebra $\g$ of symmetries and we are 
interested in deformations of $A$ to an algebra that also has invariant inner product and the same algebra of symmetries 
we should consider cyclic cohomology $\rHC_{\g}(A)$. The definition of this cohomology can be obtained if we modify the 
definition of $\rHC(A)$ in the same way as we modified the definition of $\rHH(A,A)$ to $\rHH_{\g}(A,A)$.

It is obvious that there exist a homomorphism from $\rHC(A)$ to $\rHH(A,A)$ and from $\rHC_{\g}(A)$ to $\rHH_{\g}(A,A)$ 
(every deformation preserving inner product is a deformation).{\footnote {Notice that we have assumed that $A$ is 
equipped with non-degenerate inner product. The definition of cyclic cohomology does not require the choice of inner 
product; in general there exists a homomorphism $\rHC(A)\to \rHH(A,A^*)$. The homomorphism $\rHC(A)\to \rHH(A,A)$ can 
be obtained as a composition of this homomorphism with a homomorphism $\rHH(A,A^*)\to \rHH(A,A)$ induced by a map 
$A^*\to A$.}}
 Our main goal is to calculate the image of $\rHC_{\g}(A)$ in $\rHH_{\g}(A,A)$ for the case of A$\ity$ algebra of YM theory, 
i.e. we would like to describe all supersymmetric deformations of YM that come from a Lagrangian.
 
 Cyclic cohomology are related to Hochschild cohomology by Connes exact sequence:
$$...\to \rHC^n(A)\to \rHH^n(A,A^*)\to \rHC^{n-1}(A) \to \rHC^{n+1}(A) \to...$$
Similar sequence exists for Lie-cyclic cohomology.

To define the cyclic homology $\rHC_{\bullet}(A)$ we work with cyclic chains (elements of $A\otimes...\otimes A$ 
factorized with  respect to the action of cyclic group). The natural map of Hochschild chains with coefficients in $A$ to 
cyclic chains commutes with the differential and therefore specifies a homomorphism $\rHH_k (A,A)\overset {l}{\to} \rHC_k(A)$. This homomorphism enters the homological version of Connes exact sequence
$$...\to \rHC_{n-1}(A)\overset{b}\to \rHH_n(A,A)\overset{I}{\to} \rHC_{n}(A) \overset{S}\to \rHC_{n-2}(A) \to...$$
We define the differential $B:\rHH_n(A,A)\to \rHH_{n+1}(A,A)$ as a composition $b\circ I$.
  
An interesting refinement of  Connes exact sequence exists in the case when $A$  is the universal enveloping of a Lie algebra $\g$ over $\mathbb{C}$. In this case cyclic homology get an additional index:  $\rHC_{k,j}(A)$. Such groups fit into the long exact sequence \cite{Loday}:

$$...\to \rHC_{n-1,i}(U(\g))\overset{b_{n-1,i}}\to \rHH_n(U(\g),\Sym^i(\g))\overset{I_{n,i}}\to \rHC_{n,i+1}(U(\g)) \overset{S_{n,i+1}}\to \rHC_{n-2,i}(U(\g)) \to...$$
 The differential \[B_i:\rHH_n(U(\g),\Sym^i(\g))\to \rHH_{n+1}(U(\g),\Sym^{i-1}(\g))\] is defined as a composition $b_{n,i+1}\circ I_{n,i}$. Finally if the Lie algebra $\g$ is graded then all homological constructs acquire an additional bold  index: $\rHH_{n\bm{l}}(U(\g),\Sym^i(\g))$, $\rHC_{n,i,\bm{l}}(U(\g))$. This index is preserved by the differential in the above sequence.
 
It is worthwhile to mention that all natural constructions that exist in cyclic homology can be extended to Lie-cyclic homology. 
 
It is important to emphasize that  homology and cohomology theories we considered in this section   are invariant  with 
respect to quasi-isomorphism (under certain conditions that are
fulfilled in our situation). {\footnote {The most general results and precise formulation of this statement can be found in 
\cite{Kellerhochschild} for Hoschschild cohomology and in \cite{Kellercyclic} for cyclic cohomology.}
 
According to \cite{Kellerhochschild} a quasi-isomorphism of two algebras $A\rightarrow B$ induces an isomorphism in 
Hochschild cohomology $\rHH^{\bullet}(A,A)\cong \rHH^{\bullet}(B,B). $ As we have mentioned Hochschild cohomology  
$\rHH^{\bullet}(A,A)$ is equipped with a structure of super Lie algebra, the isomorphism is compatible with this structure. 

This  theorem guarantees that quasi-isomorphism $A\rightarrow B$  allows  us to translate a  weak $\g$  action from $A$ 
to $B$.  

We have defined L$\ity$ action as an L$\ity$ homomorphism of Lie algebra $\g$ into differential Lie algebra of derivations 
$Vect (A)$.
It follows from the results of \cite{Kellerhochschild} that  a quasiisomorphism $\phi:A\rightarrow B$ induces a quasi-
isomorphism $\tilde{\phi}: Vect (A)\to Vect (B)$ compatible with L$\ity$  structure. 
{\footnote { In fact the structure of $Vect (A)$ is richer: it is a B$\ity$ algebra  (see \cite{Kellerhochschild} for details), but 
we shall use only L$\ity$ (Lie)  structure. One of the results of \cite{Kellerhochschild} asserts that $\tilde {\phi}$ is 
compatible with B$\ity$  structure. As a corollary it induces a quasi-isomorphism of L$\ity$ structures.}} We obtain that L$
\ity$ action on $A$ can be transferred to an L$\ity$ action on quasi-isomorphic algebra $B$.

The calculation of cohomology groups  we are interested in is a difficult problem. To solve this problem we apply the 
notion of duality of associative and A$\ity$ algebras.

\section{Duality}\label{AppendixB}

We define a pairing of two differential graded augmented \footnote{A differential graded algebra $A$ is called augmented 
if it is equipped with a $d$-invariant homomorphism $\epsilon:A\rightarrow \mathbb{C}$ of degree zero. We assume that 
the algebras at hand are $\mathbb{Z}$-graded and graded components are finite-dimensional.} algebras $A$ and $B$ as 
a degree one element $e \in A\otimes B$ that satisfies Maurer-Cartan equation 
\begin{equation}\label{E:MCEE}
(d_A+d_B)e+e^2=0
\end{equation}
Here we understand $A\otimes B$ as a completed tensor product.
\begin{example}
Let $x_1,\dots,x_n$ be the generating set of the quadratic algebra $A$. The set $\xi^1,\dots,\xi^n$ generates the dual  
quadratic algebra $A^!$ (see preliminaries). The element $e=x_i\otimes \xi^i$ has degree one, provided $x_i$ and $\xi^i$ 
have degrees two and minus one.  The element $e$ satisfies $e^2=0$ - a particular case of (\ref{E:MCEE}) for algebras 
with zero differential and therefore specifies a pairing between $A$ and $A^!$. 
\end{example}

Notice that the grading we are using here differs from the grading in the Section \ref{S:Preliminaries}.

{\bf Remark.} Many details of the theory depend on the completion of the tensor product, mentioned in the definition of $e
$. We, however, chose to completely ignore this issue because the known systematic way to deal with it requires 
introduction of a somewhat artificial language of co-algebras. {\footnote {One can define the notion of duality
between algebra  and co-algebra. This notion has better properties than the duality between algebras.}}

We call a non-negatively (non-positively) graded differential algebra $A=\bigoplus_i A_i$ connected, if $A_0\cong 
\mathbb{C}$. Such algebra is automatically augmented $\epsilon :A\rightarrow A_0$.
We call a  non-negatively graded connected   algebra $A$ simply-connected if  $A_1\cong 0$.

Let us consider a differential graded algebra $\cobar A=(T(\Pi A^*),d)$ where $A$ is an associative algebra and $d$ is 
the Hochschild differential. In other words we consider the co-bar construction for the algebra $A$. 
\begin{proposition}
The pairing $e$ defines the map \[\rho:\cobar(A)\rightarrow B\] of differential graded algebras. 
\end{proposition}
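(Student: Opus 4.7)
The plan is to use the universal property of the tensor algebra. Since $\cobar(A) = T(\Pi A^*)$ is the free graded associative algebra on the shifted generating space $\Pi A^*$, any linear map from $\Pi A^*$ to a graded algebra extends uniquely to a graded algebra homomorphism. I would first interpret the pairing $e \in A \otimes B$ as a linear map. Writing $e = \sum_i a_i \otimes b_i$ in terms of a homogeneous basis $\{a_i\}$ of $A$ with dual basis $\{a_i^*\}$ of $A^*$, the pairing corresponds to a degree-one linear map $A^* \to B$ sending $a_i^* \mapsto b_i$. After the parity/degree shift built into the cobar construction, this becomes a degree-zero map $\tilde{e}: \Pi A^* \to B$. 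Freeness then gives a unique graded algebra homomorphism $\rho: \cobar(A) \to B$ extending $\tilde e$.

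The remaining task is to show that $\rho$ intertwines the differentials. Since $\rho$ is an algebra map and both $d_{\cobar}$ and $d_B$ are derivations, the composites $\rho \circ d_{\cobar}$ and $d_B \circ \rho$ are both $\rho$-derivations from $\cobar(A)$ to $B$. Two such derivations agree as soon as they agree on a generating set, so it suffices to verify $\rho(d_{\cobar}(a^*)) = d_B(\rho(a^*))$ for $a^* \in \Pi A^*$. The cobar differential on a generator has two pieces: the contribution $d_A^*$ coming from the internal differential of $A$, and the piece dual to the multiplication $m_A: A \otimes A \to A$ which sends $a^*$ into $\Pi A^* \otimes \Pi A^* \subset \cobar(A)$.

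I would then unpack both sides in terms of the component $e = \sum_i a_i \otimes b_i$. The term $d_B \circ \rho$ on the generator $a_i^*$ produces (up to sign) the $A$-component of $(\id \otimes d_B)e$ dual-paired against $a_i$. The term $\rho \circ d_{\cobar}$ produces on one hand the contribution of $(d_A \otimes \id)e$ and on the other the quadratic term obtained from $e \cdot e = \sum_{i,j} a_i a_j \otimes b_i b_j \in A \otimes B$, read off coefficient by coefficient. Collecting these, the compatibility $\rho \circ d_{\cobar} = d_B \circ \rho$ on generators is precisely the component-by-component form of the Maurer--Cartan equation $(d_A + d_B)e + e^2 = 0$, which is assumed by hypothesis.

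The main obstacle is purely bookkeeping: carefully tracking the Koszul signs introduced by the parity shift $\Pi$, by the transposition of $A^*$ past elements of $B$ inside the completed tensor product, and by writing the multiplication in $\cobar(A)$ (which is the free graded algebra on $\Pi A^*$) against that of $B$. Once the sign conventions are fixed consistently — matching those implicit in the Hochschild differential on the cobar — no further subtlety arises, and the whole statement reduces to the tautology that the Maurer--Cartan equation for $e$ is the co-derivation/derivation compatibility for $\rho$.
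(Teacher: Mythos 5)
Your proposal is correct and follows essentially the same route as the paper: define $\rho$ on the generators $\Pi A^*$ by pairing against the first tensor factor of $e$, extend by freeness of the tensor algebra, and observe that compatibility with the differentials on generators is exactly the Maurer--Cartan equation $(d_A+d_B)e+e^2=0$. Your write-up merely makes explicit the derivation argument and sign bookkeeping that the paper compresses into ``follows automatically.''
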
 
\begin{proof}
The algebra $\cobar(A)$ is  generated by elements of $\Pi A^*$. The value of the map $\rho$ on $l\in \Pi A^*$ is equal to 
\[\rho(l)=\langle l,f_i\rangle g^i\] where $e=f_i\otimes g^i\in A\otimes B$. The compatiblity of $\rho$ with the differential 
follows automatically from (\ref{E:MCEE}). (Notice, that for graded spaces we always consider the dual as graded dual, 
i.e. as a direct sum of dual spaces to the graded components.)

\end{proof}

Similarly the element $e$ defines a map $\cobar(B)\rightarrow A$. 

\begin{definition}\label{D:dual}
The differential algebras $A$ and $B$ are dual if there exists a pairing $(A,B,e)$ such that the maps $\cobar(A)
\rightarrow B$ and  $\cobar(B)\rightarrow A$ are quasi-isomorphisms.
\end{definition}
{\footnote {Very similar notion of duality was  suggested independently by Kontsevich \cite {Ko}.}} 

Notice that duality is invariant with respect to quasi-isomorphism.

 If $A$ is quadratic then $A$ is dual to $A^!$ iff $A$ is a Koszul algebra.

If a differential graded algebra $A$ has a dual algebra, then $A$ is dual to $\cobar A$.
If  $A$ is a connected and simply-connected differential graded algebra, i.e. $A=\bigoplus_{i\geq 0} A_i$ and $A_0=
\mathbb{C}$ and $A_1=0$, then $A$ and $\cobar A$ are dual.

If differential graded algebras $A$ and $B$ are dual it is clear that Hochschild cohomology $\rHH(A,\mathbb{C})$ of $A$ 
with trivial coefficients coincide with intrinsic cohomology of $B$. This is because $B$ is quasi-isomorphic to $\cobar(A)$. 
One can say also that
\begin{equation}
\label{ab}
\rHH(A,A)=\rHH(B,B), 
\end{equation}
This is clear because these cohomology can be
calculated in terms of complex $A\otimes B$,
that is quasi-isomorphic  both to $A\otimes \cobar A$ and $\cobar B\otimes B$. Notice, that the isomorphism (\ref {ab}) does not preserve the grading.

This statement can be generalized to Hochschild cohomology of $A$ with coefficients in any bimodule $M$. Namely, we 
should introduce in $B\otimes M$  a differential by the  formula
\begin{equation}\label{E:dwwdiff}
d(b\otimes m)=(d_B+ d_M)b\otimes m +[e,b\otimes m]
\end{equation}
\begin{proposition}\label{E:rfjrvjdii}
Let $A$ be a connected and simply-connected differential graded algebra, i.e. $A=\bigoplus_{i\geq 0} A_i$ and $A_0=
\mathbb{C}$ and $A_1=0$. Then the Hochschild cohomology $\rHH(A,M)$ coincide with the cohomology of $B\otimes M$ 
with respect to  differential (\ref{E:dwwdiff}
).
\end{proposition}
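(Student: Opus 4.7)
My plan is to reduce the statement to the known case $M = A$ of formula (\ref{ab}), using the description of Hochschild cohomology via the cobar complex.

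First, I recall that by (\ref{hh0}) the Hochschild cochain complex $C(A,M)$ is naturally identified with $\cobar A \otimes M$, equipped with the differential $D(c\otimes m) = (d_{\cobar A} + d_M)(c\otimes m) + [e_A,\, c\otimes m]$, where $e_A \in \Pi A^* \otimes A \subset \cobar A \otimes A$ is the tensor of the identity map. So $HH(A,M)$ is computed by $(\cobar A \otimes M,\, D)$.

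Next, since $A$ and $B$ are dual, the pairing $e \in A\otimes B$ yields a quasi-isomorphism $\rho : \cobar A \to B$ of differential graded algebras, as in the proposition preceding the definition of duality. I want to extend $\rho$ to a chain map $\rho \otimes \id_M : (\cobar A \otimes M,\, D) \to (B\otimes M,\, d)$, where $d$ is the differential (\ref{E:dwwdiff}). The only nontrivial point is that $\rho$ intertwines the cross terms $[e_A,\cdot]$ and $[e,\cdot]$. Writing $e = f_i \otimes g^i$, the defining formula $\rho(l) = \langle l, f_i\rangle g^i$ gives $(\rho \otimes \id_A)(e_A) = g^i \otimes f_i$, i.e. $e$ after the flip $B\otimes A \cong A\otimes B$. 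Because the left (resp.\ right) multiplication of $A$ on $M$ corresponds under the pairing to the left (resp.\ right) multiplication of $B$ on the cobar side, the commutator action of $e_A$ on $\cobar A \otimes M$ is transported by $\rho\otimes \id_M$ to the commutator action of $e$ on $B\otimes M$.

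Having a chain map, it remains to show it is a quasi-isomorphism. I would use a spectral sequence argument based on the natural filtration by the internal $A$-grading (equivalently, the number of tensor factors in $\cobar A$, weighted by the grading of $A$): on the $\cobar A \otimes M$ side, filter by $F^p = \bigoplus_{k\ge p}(\cobar A)_k \otimes M$, and on the $B\otimes M$ side filter by $F^p = \bigoplus_{k\ge p} B_k \otimes M$. Both filtrations are preserved by $\rho\otimes \id_M$. In the associated graded, the cross-term $[e,\cdot]$ strictly increases the filtration (since $e$ has positive $A$-degree, and in the $B$-picture positive $B$-degree), so the $E_0$ differential is just $d_{\cobar A}+d_M$ on one side and $d_B + d_M$ on the other. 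Thus the $E_1$ page of the source is $H^\bullet(\cobar A) \otimes H^\bullet(M)$ and the $E_1$ page of the target is $H^\bullet(B)\otimes H^\bullet(M)$, and the map $\rho\otimes \id_M$ induces an isomorphism there because $\rho$ is a quasi-isomorphism.

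The key place where the connected and simply-connected hypotheses enter is the convergence of the two spectral sequences: since $A_0 = \mathbb{C}$ and $A_1 = 0$, each internal degree $n$ of $\cobar A \otimes M$ involves only tensor factors of bounded length (those of $\cobar A$-degree $\le n/2$), so the filtration is locally finite and both spectral sequences converge strongly. This is the one step I expect to require the most care — without simple-connectedness the filtration need not be exhaustive/complete in a way that guarantees convergence, and the argument would break. Once convergence is established, an isomorphism on $E_1$ implies an isomorphism on $E_\infty$, hence on cohomology, proving $HH(A,M) \cong H^\bullet(B\otimes M,\, d)$.
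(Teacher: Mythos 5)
Your proof is correct and follows essentially the same route as the paper's own argument: the pairing induces the chain map $\rho\otimes\id_M:(\cobar A\otimes M,D)\to(B\otimes M,d)$ via (\ref{hh0}), and a spectral sequence comparison (filtering by internal degree, with connectedness and simple-connectedness guaranteeing convergence) upgrades the quasi-isomorphism $\cobar A\to B$ to an isomorphism on Hochschild cohomology with coefficients in $M$. You have merely spelled out the spectral-sequence details that the paper only sketches.
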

To prove this statement we notice that the quasi-isomorphism $\cobar A\to B$ induces a homomorphism $C(A,M)=\cobar 
A\otimes M\to B\otimes M$; it follows from (\ref{hh0}) that this homomorphism commutes with the differentials
and therefore induces a  homomorphism on homology.  The induced homomorphism is an isomorphism; this can be 
derived from the fact that the map  $\cobar A\to B$ is a quasi-isomorphism. (The derivation is based on the
techniques of spectral sequences; the condition on algebra $A$ guarantees the convergence of spectral sequence.)

The  above proposition can be applied to the case when $A$ is a Koszul quadratic algebra and
$B=A^!$ is the dual quadratic algebra. We obtain the following statement generalizing the proposition  \ref{P:tqydxc1iww}  in Section \ref{S:wweqe}. (There is also a  statement for Hochschild homology generalizing  proposition \ref{P:iqwwstiwww2}. )

\begin{proposition}\label{P:tqydxc}  \cite{Qalg}
We assume that $A$ is Koszul.
The  Hochschild cohomology $\rHH^{\bullet}(A,N)$ is equal to the  cohomology of the complex  $N_c\overset{\ddef}=N
\otimes A^!$. The differential is the graded commutator with $e$.
 The complex $N_c$ splits according to degree: 
\begin{equation}\label{E:fdgisev}
\begin{split}
&N^{\bullet}_{c\  \bm{m}}=N_{m}\otimes A^!_0\rightarrow N_{m+1}\otimes A^!_1 \rightarrow\dots 
\end{split}
\end{equation}
The complex $N^{\bullet}_{c\  \bm{m}}$ is defined for positive and negative $\bm{m}$, we assume that $N_m=0$ if 
$m<m_0$.
Then $\rHH^{k,\bm{m}}(A,N)=H^k(N^{\bullet}_{c\ \bm{m}})$.
\end{proposition}
\begin{proposition}\label{P:iqwwst} \cite{Qalg}
We assume that $A$ is Koszul.
Homology $\rHH_{\bullet}(A,N)$ are equal to the cohomology of the complex $N_h\overset{\ddef}=N\otimes A^{!*}$. The 
space $A^{!*}=\bigoplus_{n\geq 0} A^{!*}_n$ is an $A^!$-bimodule dual to  $A^!$. The differential is a commutator with  
$e$ given by the formula (\ref{E:diff}). 
The complex $N_h$  splits :
\begin{equation}\label{E:homol22}
\begin{split}
&N^{\bullet}_{h\ \bm{m}}=N_{m_0}\otimes A^{!*}_{m-m_0}\overset{d}\rightarrow \dots N_{0}\otimes A^{!*}_{m}\overset{d}
\rightarrow  \dots N_{m-1}\otimes A^{!*}_{1} \overset{d}\rightarrow N_{m}\otimes A^{!*}_{0}
\end{split}
\end{equation}

Then $\rHH_{k,\bm{m}}(A,N)=H^{m-k}(N^{\bullet}_{h\ \bm{m}})$.

\end{proposition}
Proposition \ref{P:tqydxc1iww}  follows from Propositions \ref{P:tqydxc} ,\ref{P:iqwwst} if we set 
$A=U(\g),A^!=\C$  and use the fact that Lie algebra  cohomology  of $\g$
with coefficients in a $\g$-module coincide with Hochschild cohomology  of $U(\g)$ with coefficients in $U(\g)$-bimodule.
{\footnote {Let $N$ be a $U(\mathfrak{g})$-bimodule. Define a new structure of $\mathfrak{g}$-module on $N$ by the 
formula $l\otimes n\rightarrow ln-nl, l\in\mathfrak{g}, n\in N $
There is an isomorphism \cite{CE}\begin{equation}\label{E:dsfdsh}\rHH^i(U(\mathfrak{g}),N)\rightarrow \rH^i(\mathfrak{g},N),\end 
{equation} 
defined by the formula:
\begin{equation}\gamma(l_1,\dots, l_n) \rightarrow \tilde \gamma=\frac{1}{n!}\sum_{\sigma\in S_n}\pm \gamma(l_{\sigma
(1)},\dots,l_{\sigma(n)}),l_i\in \mathfrak{g}\end{equation}.
There is a similar isomorphism for homology. }}

\begin {proposition}
\label {al}
If   differential graded algebra $A$ is dual to $B$ and quasi-isomorphic to the envelope $U(\g)$ of Lie algebra $\g$ then 
$B$ is quasi-isomorphic  to the
super-commutative differential algebra  $C^{\bullet}(\g).$
\end{proposition} 
This statement  follows from the fact that the cohomology  of $C^{\bullet}(\g)$ (=Lie algebra cohomology of $\g$) 
coincides with Hochschild cohomology of $U(\g)$ with trivial coefficients.

It turns out that it is possible to calculate cyclic and Hochschild cohomology of $A$ in terms of suitable homological 
constructions for a dual algebra $B$.

Let $A$ and $B$ be dual differential graded algebras. Let us assume that $A$ and $B$ satisfy assumptions of 
Proposition \ref{E:rfjrvjdii}. 

\begin{proposition}
Under  above assumptions there is a canonical  isomorphism \[\rHC_{-1-n}(A)\cong \rHC^n(B),\]
where $\rHC^n$($\rHC_n$) stands for $i$-th cyclic cohomology(resp. homology) of an algebra.
\end{proposition}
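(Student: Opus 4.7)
The plan is to upgrade the Hochschild duality already established (Proposition \ref{E:rfjrvjdii} and equation (\ref{ab})) to a duality between cyclic homology of $A$ and cyclic cohomology of $B$, by working at the level of mixed complexes and comparing Connes' long exact sequences on both sides via the five lemma.

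First I would extend the isomorphism of Hochschild theories coming from the pairing $e$ to include Connes' operator $B$. Concretely, one models $HC_{\bullet}(A)$ by the mixed complex $(C_{\bullet}(A), b, \mathcal{B})$, where $\mathcal{B}$ is Connes' boundary, and similarly uses the dual mixed complex $(C^{\bullet}(B), b^*, \mathcal{B}^*)$ to compute $HC^{\bullet}(B)$. The pairing $e\in A\otimes B$ of degree one (satisfying the Maurer--Cartan equation (\ref{E:MCEE})) induces a chain map between these mixed complexes with a degree shift of one (the origin of the ``$-1$'' in the index $-1-n$). The key verification is that this pairing is compatible not only with the Hochschild differentials $b$ and $b^*$ (this is essentially what gives (\ref{ab}) and Proposition \ref{E:rfjrvjdii}) but also with the cyclic operator $\mathcal{B}$; this is a cyclic-symmetry statement which follows from the fact that $e$ is a $d$-cocycle and that the bar/cobar constructions intertwine cyclic symmetry on $A$ with cyclic symmetry on $B$.

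Next, given compatibility with mixed structures, I would write down the Connes long exact sequences
\begin{equation*}
\cdots\to HC_{n-1}(A)\to HH_n(A,A)\to HC_n(A)\overset{S}{\to} HC_{n-2}(A)\to\cdots
\end{equation*}
and
\begin{equation*}
\cdots\to HC^{n-2}(B)\overset{S^*}{\to} HC^n(B)\to HH^n(B,B)\to HC^{n-1}(B)\to\cdots,
\end{equation*}
together with the Hochschild isomorphism $HH_{-n}(A,A)\cong HH^n(B,B)$ derived in the first step (again with the degree-one shift built in). Using that the pairing $e$ intertwines the periodicity operators $S$ on both sides, a five-lemma induction on $n$ (started from the known isomorphism on Hochschild cohomology and boundedness of the relevant complexes in one direction) yields $HC_{-1-n}(A)\cong HC^n(B)$.

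The main obstacle will be the mixed-complex compatibility in the first step: showing that the quasi-isomorphism $\cobar A\to B$ (which underlies the duality) respects Connes' operator up to coherent homotopy. This is the place where the assumption that $A$ is connected and simply-connected (ensuring convergence of the spectral sequences) becomes essential. A conceptually cleaner route, which I would use if the direct verification becomes heavy, is to present both sides as (co)homology of the same cyclic bicomplex built out of $A\otimes B$ with total differential $d_A+d_B+[e,\,\cdot\,]$ quotiented by the cyclic group action; then periodicity and the Hochschild duality on $E^2$-pages of the two spectral sequences collapse to the desired statement automatically.
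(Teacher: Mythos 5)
First, for calibration: the paper does not actually prove this proposition --- it states that the quadratic case is proven in \cite{FT} and that the general case is ``similar,'' based on \cite{Burg} or \cite{Loday}. Measured against the arguments in those references, your overall strategy is the right one: one compares cyclic (bi)complexes of $A$ and $B$ and plays the two Connes exact sequences off against the Hochschild-level duality. The five-lemma/induction step of your plan is also unproblematic \emph{once the map of mixed complexes exists}: a map of mixed complexes inducing an isomorphism on Hochschild homology automatically induces one on all the associated cyclic theories.

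The genuine gap is in your first step, and you have labelled it rather than closed it. The claim that the pairing $e$ (equivalently the quasi-isomorphism $\cobar A\to B$) is compatible with Connes' operator ``because the bar/cobar constructions intertwine cyclic symmetry'' is precisely the content of the theorem, not a fact you may invoke: the Hochschild-level inputs (\ref{ab}) and Proposition \ref{E:rfjrvjdii} involve no cyclic symmetry whatsoever and say nothing about $\mathcal{B}$. Moreover, ``respects Connes' operator up to coherent homotopy'' is not sufficient to define a map on cyclic homology --- you would need to exhibit the entire tower of higher homotopies (a cyclic, or $S$-module, enhancement of the comparison map), and nothing in the plan produces it. Two smaller points: the induction must be run with \emph{reduced} cyclic homology in the augmented/non-unital conventions of the paper, since otherwise the copies of $HC_{\bullet}(\mathbb{C})$ sitting in all even degrees destroy the vanishing needed to start the induction; and the shift $-1-n$ should be read off from the bicomplex rather than from the heuristic ``$e$ has degree one,'' which would wrongly predict an analogous shift already at the Hochschild level. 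Your fallback --- realizing both sides as the (co)homology of a single small cyclic complex built from $A\otimes B$ with differential $d_A+d_B+[e,\cdot]$ --- is in fact the route of \cite{FT} and is where the effort should go; but as stated (``quotiented by the cyclic group action'') it is too vague to check, since one must specify which cyclic group acts on what and verify that the action is compatible with $[e,\cdot]$.
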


\begin{proposition}
Under the above assumptions there is an isomorphism \[\rHH^n(A,A^*)=\rHH_{-n}(B,B),\]
where  $\rHH^n$($\rHH_n$) stands for $n$-th Hochschild cohomology (resp. homology).
\end{proposition}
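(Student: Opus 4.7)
The plan is to identify both sides of the claimed isomorphism with the cohomology of one and the same complex built from the pairing element $e \in A \otimes B$. The right-hand side is straightforward: applying Proposition \ref{E:rfjrvjdii} to the $A$-bimodule $M = A^*$ yields $HH^n(A, A^*) \cong H^n(B \otimes A^*, D)$, where $D = d_B \otimes 1 + 1 \otimes d_{A^*} + [e, \cdot\,]$.

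For the left-hand side I would use the standard Hochschild chain complex $C_\bullet(B,B) = B \otimes T(\Pi \bar B)$ of $B$ with coefficients in itself. The tensor factor $T(\Pi \bar B)$ is precisely the bar construction of $B$, and bar--cobar duality, combined with the duality hypothesis that $\cobar B$ is quasi-isomorphic to $A$, says that the bar construction of $B$ is quasi-isomorphic as a DG coalgebra to the graded dual $A^*$. This replacement turns $C_\bullet(B,B)$ into a complex of the form $B \otimes A^*$, whose boundary is assembled from $d_B$, the codifferential on $A^*$ (which is dual to $d_A$), and left and right multiplication by $e$ coming from the $B$-bimodule action on the bar resolution.

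The crux of the argument is then to check that the two differentials on $B \otimes A^*$ obtained above coincide. The signs work out so that left and right multiplication by $e$ combine into the supercommutator $[e, \cdot\,]$ appearing in $D$, and the codifferential on $A^*$ matches $d_{A^*}$. The change of sign in the index, $n \mapsto -n$, reflects the fact that under the identification $T(\Pi \bar B) \simeq A^*$ the homological grading of $C_\bullet(B,B)$ is reversed relative to the cohomological grading used on the right; this is the same reversal already responsible for the degree shift in the preceding proposition $HC_{-1-n}(A) \cong HC^n(B)$.

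The main obstacle will be the careful bookkeeping of parity shifts, signs and completions. The identification of the bar complex $T(\Pi \bar B)$ with $A^*$ holds only up to quasi-isomorphism; to pass to cohomology one needs the connectedness and simple-connectedness assumptions of Proposition \ref{E:rfjrvjdii} in order to guarantee convergence of the relevant spectral sequence, exactly as in the proof of that proposition. Once these technical points are settled, the desired isomorphism $HH^n(A, A^*) \cong HH_{-n}(B, B)$ falls out by passing to cohomology in the common complex $(B \otimes A^*, D)$.
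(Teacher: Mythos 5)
The paper does not actually prove this proposition: it only remarks that the quadratic case is in \cite{FT} and that ``the proof in general case is similar,'' deferring to \cite{Burg} or \cite{Loday}. Your sketch therefore supplies an argument where the paper supplies a citation, and the argument you give is the right one — it is essentially the Feigin--Tsygan twisted-tensor-product argument transported to the DG setting, and it meshes cleanly with the machinery the appendix has already set up. Identifying $HH^n(A,A^*)$ with $H^n(B\otimes A^*,D)$ is a legitimate application of Proposition \ref{E:rfjrvjdii} to the dual bimodule $M=A^*$; identifying $HH_{-\bullet}(B,B)$ with the cohomology of the same complex amounts to the homological mirror of that proposition, obtained by replacing the bar coalgebra $T(\Pi\bar B)$ in the Hochschild chain complex of $B$ by $A^*$ via the dual of the quasi-isomorphism $\cobar B\to A$. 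The two technical points you flag are exactly the ones that need care: the dual of $\cobar B$ is the \emph{completed} bar construction, so one must use the connectedness and simple-connectedness hypotheses (finitely many bar degrees contribute in each fixed internal degree) to compare it with the uncompleted chain complex and to make the comparison spectral sequence converge; and the degree reversal $n\mapsto -n$ is forced by the parity/degree shift in $\Pi\bar B$ versus $A^*$, consistent with the shift in the companion statement $HC_{-1-n}(A)\cong HC^n(B)$. One small caveat: ``the signs work out so that left and right multiplication by $e$ combine into $[e,\cdot\,]$'' is asserted rather than checked, and this is where a full write-up would have to verify that the twisting cochain induced by $e$ really reproduces the Hochschild boundary $b$ of $B$ and not some sign-perturbed variant; but this is bookkeeping, not a gap in the idea.
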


For the case when $A$ and $B$ are quadratic algebras these two propositions were proven in
\cite{FT}. The proof in general  case is similar.
It can be based on results of \cite{Burg} or \cite{Loday}.

Let us illustrate some of  above theorems on concrete examples.
The  algebra $\Ss$ is dual to $U(L)$. This means,
that 
$$\rHH^{\bullet}(\Ss,\Ss)=\rHH^{\bullet}(U(L),U(L))=H^{\bullet}(L,U(L)).$$
The reduced Berkovits algebra $B_0$ is dual to
$U(YM)$, hence
$$\rHH^{\bullet}(B_0,B_0)=\rHH^{\bullet} (U(YM),U(YM))=\rH^{\bullet}(YM,U(YM))$$. 

We need the following information about this cohomology (\cite{M4}):  
\begin{equation}
\begin{split}\label{E:cohcomp}
& \rH^{0}(YM, U(YM))\cong \mathbb{C} \\
&\rH^{1}(YM,U(YM))\cong \mathbb{C} +S^*+\Lambda^2(V)+ V+S^*\\
\end{split}
\end{equation}
 Notice, that the answer for $\rH^1(YM, U(YM))=\rHH^1(U(YM),U(YM))$ has clear physical interpretation: symmetries of SYM theory (translations, Lorenz transformations and supersymmetries) specify
 derivations of the algebra $U(YM)$.  

Representing $U(YM)$ as $\Sym (YM)$ we obtain additional grading on cohomology:
\begin{equation}
\begin{split}\label{E:cohcompu}
\rH^{1}(YM,\Sym ^1(YM))\cong \mathbb{C} +S^*+\Lambda^2(V)\\
\rH^{1}(YM,\Sym ^0 (YM))\cong V+S^*\\
 \rH^{1}(YM,\Sym^i(YM))=0,i\geq 2 \\
\end{split}
\end{equation}

It follows from the remarks in Appendix A that
$\rH^k(YM, U(YM))=\rHH^k(U(YM),U(YM))=0$ for $k>3$.  

As we mentioned in Section \ref{S:BV} the odd symplectic structure in pure spinor formalism is specified by  degenerate  closed two-form $\omega$. This form determines an odd inner product of degree 3 on $B_0$
that  generates Poincar\'e isomorphism
$\rH^i(YM,U(YM))\overset{P}{\cong}\rH_{3-i}(YM, U(YM)).$

\section{On the relation of the Lie algebra and the BV approaches to the deformation problem}\label{AppendixC}

Our main goal is to calculate SUSY deformations of 10D YM theory and its reduction to a point.  In Section \ref
{S:homologicalapp} we have reduced this question to a homological problem. Another reduction of this kind comes from 
BV formalism (Section \ref{S:BV} and Appendix \ref{AppendixA}). Here we shall relate these two approaches. For 
simplicity we shall talk mostly about the
reduced case; we shall describe briefly the modifications that are necessary in the unreduced case.

We shall use the fact that under certain conditions all objects we are interested in are invariant with respect to quasi-
isomorphisms.

We can study the symmetries of Yang-Mills theory using the A$\ity$ algebra $\mathcal{A}$ constructed in Appendix \ref
{AppendixA} or any other algebra that is  quasi-isomorphic to $\mathcal{A}$. In  BV formalism a Lie algebra  action 
should be replaced by weak action or by  L$\ity$ action. It will be important for us to work with L$\ity$ action, because 
this action is used in the construction of formal deformations (Section \ref{E:Formal}).  We consider the case of YM 
theory dimensionally reduced to a point ; in this case  we use the notation $\mathcal{A}=bv_0$ and
the algebra $\cobar bv_0=BV_0$ is quasi-isomorphic to $U(YM)$ (to the envelope of
Lie algebra $YM$); see \cite{MSch2}, Theorem 1.
{\footnote {The algebra $BV_0$ has as generators the generators of
$U(YM)$, corresponding antifields  and  $c^*$ (the antifield for ghost)  ; sending  antifields and $c^*$ to zero we obtain a 
homomorphism of differential algebras. (Recall that  the differential on $U(YM)$ is trivial.)  It has been  proven in \cite 
{MSch2} that this homomorphism is a 
quasi-isomorphism.}}

The algebra $bv_0$ is dual to the algebra
$BV_0$.  This means that $bv_0$ is quasi-isomorphic to $C^{\bullet} (YM)$ (to the differential
commutative algebra
that computes Lie algebra cohomology with trivial coefficients; see Appendix \ref{AppendixB}, (\ref {al})).

 One can construct an L$\ity$ action of the reduced supersymmetry algebra $\g=\Pi\mathbb{C}^{16}$ on the algebras 
$bv_0,C^{\bullet} (YM), U(YM)$.  It is sufficient to construct such an action on one of these algebras.

Let us describe the  action on $C^{\bullet}(YM)$.

We shall use the Lie algebra $L$ defined in Section 2. By construction $L$  as a linear space is isomorphic to the direct sum $S
+YM$, where $S=L^1$ is spanned by $\btheta_{1},\dots,\btheta_{16}$. Thus \[C^{\bullet}(L)\cong \mathbb{C}[[t^1,\dots,t^
{16}]]\otimes C^{\bullet}(YM),\] where $t^{\alpha}$ are   even  variables dual to $\btheta_{\alpha}$. The differential $d_L$ 
in $C^{\bullet}(L)$ is the sum \[d_L=d_{YM}+q,\]
 where $q$ is equal to $t^{\alpha}t^{\beta}q_{\alpha\beta}+t^{\gamma}q_{\gamma}$. The operators  $q_{\alpha\beta},q_
{\gamma}$ are derivations of $C^{\bullet}(YM)$. 
 We can interpret  $q$  as map of  $\Sym (\Pi \g)$=$\Sym (\mathbb{C}^{16})$ into the space of derivations of  $C^
{\bullet}(YM)$. It is easy to check, that this map obeys (\ref{lity}); hence it specifies L$\ity$ action of  $\g=\Pi\mathbb{C}^
{16}$ on $C^{\bullet}(YM)$.
 
 Another way to describe  this L$\ity$ action of $\g=\Pi\mathbb{C}^{16}$ is to
 construct the corrections that arise because 
  the dimensionally reduced supersymmetries $q_{\gamma}$ defined by the formula $q_{\gamma}=[\btheta, x] $ anti-
commute only on-shell. The operators $q_{\alpha\beta}$ can be interpreted as L$\ity$ corrections to the  action of the Lie 
algebra $\g=\Pi\mathbb{C}^{16}$. In this construction no higher order operators $q_{\alpha_1,\dots,\alpha_n}$ ($n\geq 
3$) are present.

We should say  a word of caution. The action of $\Pi \mathbb{C}^{16}$ on $bv_0$ constructed this way could be 
incompatible with the inner product. A refined version of this action , free from the  shortcoming, is constructed in  
Appendix \ref{AppendixD}.

Similar arguments permit us to construct an L$\ity$ action  of SUSY Lie algebra in unreduced case. In this case the 
algebra $\mathcal{A}$ is denoted by $bv$, $\cobar bv=BV$ is quasi-isomorphic to $U(TYM)$ and $bv$ is quasi-
isomorphic to $C^{\bullet}(TYM)$. To construct  an L$\ity$ action of SUSY Lie algebra
on  $C^{\bullet}(TYM)$ we notice that as a vector space $L$ can be represented as a direct sum of vector subspaces
$L^1+L^2$ and $TYM$.  This means that
$$C^{\bullet}(L)\cong \mathbb{C}[[t^1,\dots,t^{16}]]\otimes \Lambda [\xi ^1,\dots,\xi ^{10}]\otimes C^{\bullet}(TYM),$$
where $t^{\alpha}, \xi ^i$ can be interpreted as even and odd ghosts of the Lie algebra $\susy$.
Again we can construct the L$\ity$ $\susy$ action on $C^{\bullet}(TYM)$ using the differential $d_L$ acting on $C^
{\bullet}(L).$
Namely, we choose a basis $\langle e_{\gamma},  \rangle \gamma\geq 1$ in $TYM$,  a basis $\langle \theta_{\alpha} ,
\rangle \alpha=1,\dots,16$ in $S$ and $\langle v_i \rangle, i=1,\dots,10$ in $V$. Together they form a basis in $\langle e_
{\gamma}, \theta_{\alpha}, v_i\rangle$  of $TYM+S+V\cong L$. The structure constants of the Lie algebra in this basis 
are $[e_{\gamma},e_{\gamma'}]=\sum_{\delta\geq 0} c^{\delta}_{\gamma\gamma'}e_{\delta}$, $[\theta_{\alpha},e_
{\gamma}]=\sum_{k}f_{\alpha \gamma}^{\delta} e_{\delta}$, $[v_i,e_{\gamma}]=\sum_{k}g_{ ij}^k e_{\gamma}$, $[\theta_
{\alpha},\theta_{\beta}]=\Gamma_{\alpha\beta}^iv_i$, $[\theta_{\alpha},v_i]=\sum_{\delta}h_{\alpha i}^{\delta}e_{\delta}$.  
The algebra $\Sym{\Pi L^*}\cong \mathbb{C}[[t^1,\dots,t^{16}]]\otimes \Lambda [\xi ^1,\dots,\xi ^{10}]\otimes C^{\bullet}
(TYM)$ has generators $\epsilon^{\gamma}$, $t^{\alpha}$, $\xi_{i}$ dual and opposite parity to $e_{\gamma},\theta_
{\alpha}, v_i$. The differential can be written as
\[\begin{split} &c_{\gamma\gamma'}^{\delta}\epsilon^{\gamma}\epsilon^{\gamma'}\frac{\sd}{\sd \epsilon^{\delta}}+\\
&+f_{\alpha \gamma}^{\gamma'}  t^{\alpha}\epsilon^{\gamma}\frac{\sd}{\sd \epsilon^{\gamma'} }+\\
&+g_{i\gamma}^{\gamma'}\xi^{i}\epsilon^{\gamma}\frac{\sd}{\sd \epsilon^{\gamma'}}+\\
&+r_{i\alpha}^{\gamma}\xi^{i}t^{\alpha}\frac{\sd}{\sd\epsilon^{\gamma} }+\Gamma_{\alpha\alpha'}^it^{\alpha}t^{\alpha'}
\frac{\sd}{\sd \xi_{i}}\end{split}\]

We identify $c_{\gamma\gamma'}^{\delta}\epsilon^{\gamma}\epsilon^{\gamma'}\frac{\sd}{\sd \epsilon^{\delta}}$ with the 
differential in $C^{\bullet}(TYM)$. All other terms define the desired L$\ity$action.

To classify infinitesimal SUSY deformations
of reduced YM theory it is sufficient to calculate
Lie- Hochschild cohomology
$\rHH_{\g}(C^{\bullet}(YM),C^{\bullet}(YM))$. 
First of all we notice that one can use duality between 
$C^{\bullet}(YM)$ and $U(YM)$  to calculate
 Hochschild cohomology
\begin{equation}
\label{hh}
\rHH(C^{\bullet}(YM),C^{\bullet}(YM))=\rHH(U(YM),U(YM))=\rH(YM,U(YM))=\rH(YM, \Sym (YM)).
 \end{equation}
 Here we are  using (\ref {ab}) and the relation between Hochschild cohomology of enveloping algebra $U(YM)$
 and Lie algebra cohomology of $YM$ as well as Poincar\'e -Birkhoff-Witt theorem. Analyzing the
 proof of (\ref {hh}) we obtain quasi-isomorphism 
 between the complex $\cobar C^{\bullet}(YM)\otimes C^{\bullet}(YM)$ that we are using in calculation of $\rHH(C^{\bullet}
(YM),C^{\bullet}(YM))$ and the complex $\Sym (\Pi YM)^*\otimes \Sym YM$ with cohomology $\rH(YM, \Sym (YM))$.
 
 To calculate the Lie-Hochschild cohomology
 $\rHH_{\g}(C^{\bullet}(YM),C^{\bullet}(YM))$
 we should consider  a complex 
 $C^{\bullet}(\g)\otimes \cobar C^{\bullet}(YM)\otimes C^{\bullet}(YM)$
 with the differential defined in (\ref  {E:eqcohomologydig} ). This complex is quasi-isomorphic to
 $C^{\bullet}(\g)\otimes \Sym (\Pi YM)^*\otimes \Sym YM$ with appropriate differential.
 
 Now we can notice that  $C^{\bullet}(\g)\otimes \Sym (\Pi YM)^*=\mathbb{C}[[t^1,\dots,t^{16}]]\otimes \Sym(\Pi YM^*)=
\Sym(\Pi L^*)$.  Hence we can reassemble $C^{\bullet}(\g)\otimes \Sym (\Pi YM)^*\otimes \Sym YM$ into $\Sym(\Pi L^*)
\otimes \Sym( YM)$, which is isomorphic to $C^{\bullet}(L,U(YM))$. 

We see that  Lie- Hochschild cohomology
$\rHH_{\g}(C^{\bullet}(YM),C^{\bullet}(YM))$ with $\g= \Pi \mathbb{C}^{16}$ classifying  infinitesimal SUSY deformations in 
the reduced case is isomorphic to $\rH^{\bullet}(L,U(YM))$.

Lie-Hochschild cohomology
$\rHH_{\g}(C^{\bullet}(TYM),C^{\bullet}(TYM))$
where $\g=\susy$ govern SUSY deformations of
unreduced SYM. Similar considerations permit us to prove that these cohomology are isomorphic to  $\rH^{\bullet}(L,U
(TYM))$.

The above statements agree with the theorems of Section \ref{S:homologicalapp} where it is proven that two-dimensional 
cohomology of $L$ with coefficients
in $U(YM)$ and in $U(TYM)$ correspond to
SUSY deformations in reduced and unreduced
cases. We see  that BV approach
leads to  wider class of SUSY deformations.
{\footnote { One can modify the arguments of Section \ref{S:homologicalapp} to cover the additional deformations arising in BV formalism. The 
modification is based on consideration of A$\ity$ deformations of associative algebras $U(YM)$ and $U(TYM).$}}
However, one can prove that all super Poincar\'e invariant deformations in the reduced case are covered by the 
constructions of Sections \ref{S:homologicalapp}. The proof is based on the remark that the groups $\rH^i(L,U(YM))$ do 
not contain $\Spin (10)$-invariant elements for $i>2.$ (This remark can be  derived from the considerations of Section \ref{S:calc}.) 
Notice, however, that  the group 
$\rH^i(L,U(TYM))$ contains  $\Spin (10)$-invariant element for $i=3$. This element is responsible for the super Poincar\'e 
invariant deformation of L$\ity$ action of supersymmetry, but it generates a trivial infinitesimal variation of action 
functional.
However, corresponding formal deformation
constructed in Section \ref{E:Formal} can be non-trivial.

\section{The  L$\ity$ action of the supersymmetry algebra in the BV formulation}\label{AppendixD}

In Appendix \ref{AppendixC} we have shown that one  can construct an L$\ity$ action of SUSY algebra
on $bv$. In this section we shall give another
proof of the existence  of this action; we shall show that  this proof permits us to construct an L$\ity$
action that is compatible with invariant inner product on $bv$. 
We  use the formalism of pure spinors  in our considerations.

The pure spinor construction will be preceded by a somewhat general discussion of L$\ity$-invariant traces. 

Suppose that the tensor product $A\otimes \Sym(\Pi \g)$ is furnished with a differential $d$ which can be written as $d_A
+d_{\g}+q$, where $d_A$ is a differential in $A$, $d_{\g}$ is the Lie algebra differential (\ref{E:dg} ) in $\Sym(\Pi \g)$ and 
$q=\sum_{n\geq 1}\frac{1}{n!} c^{\alpha_1}\cdots c^{\alpha_n} q_{\alpha_1,\dots,\alpha_n}$ is  the generating function of 
derivations $q_{\alpha_1,\dots,\alpha_n}$ that satisfies the analog of (\ref{lity} ). We say that $A$ is equipped with $\g$-
equivariant trace if there is a linear map \[p:A\otimes \Sym(\Pi \g)\rightarrow \Sym(\Pi \g)\] which satisfies $p([a,a'])=0$ and 
\[p(Q_A+d_{\g}+q)a=d_{\g}p a\] for every $a\in A\otimes \Sym(\Pi \g)$. 
In the case when we have an ordinary  action of a Lie algebra $\g$ on a differential graded Lie algebra $A$ and a trace 
functional $p$ is $\g$-invariant, i.e. $p(la)=0$ for any $l\in \g$ and $a\in A$ then $p$ is trivially a $\g$-equivariant 
functional.

This construction provides us with an inner product $\langle a, b \rangle = p_{\g}(ab)$ on $A$ with values in $\Sym(\Pi\g)$ (ghost valued inner product). If the  trace $p$ is odd the corresponding inner product is also odd.

In pure spinor formalism the algebra $S\otimes C^{\infty}(\mathbb{R}^{10|16})$ is equipped with the differential $D$ 
given by the formula (\ref{E:differential}) and the $D$-closed odd linear functional
\begin{equation}
p:S\otimes C^{\infty}(\mathbb{R}^{10|16})\rightarrow \mathbb{C}
\end{equation} 
It is defined on elements that decay sufficiently fast at the space-time infinity. The functional $p$ splits into a tensor 
product of translation-invariant volume form $\vol$  on $\mathbb{R}^{10}$ and a functional $p_{\red}:S\otimes C^{\infty}
(\mathbb{R}^{0|16})\rightarrow \mathbb{C}$.

The super-symmetries generators are \[\theta_{\alpha}=\frac{\partial}{\partial\psi^{\alpha}}+\Gamma^i_{\alpha\beta}\psi^
{\beta}\frac{\partial}{\partial x^i}.\]

The functional $p_{\red}$ is $\Spin(10)$-invariant. Also it  can be characterised as the only nontrivial $\Spin(10)$-
invariant functional on 
$S\otimes C^{\infty}(\mathbb{R}^{0|16})$.  This follows from simple representation theory for $\Spin(10)$. 
This fact enables us to construct an "explicit" formula for $p_{\red}$. The projection \[\mathbb{C}[\lambda^1,\dots,
\lambda^{16}]\overset{k}\rightarrow S\] commutes with the action of $\Spin(10)$. A simple corollary of representation 
theory  is that $k$ has a unique linear $\Spin(10)$-equivariant splitting $k^{-1}$. Using this splitting we can identify  
elements of $S$ with $\Gamma$-traceless elements in $\mathbb{C}[\lambda^1,\dots,\lambda^{16}]$. Let us define a $
\Spin(10)$-invariant  differential operator on $\mathbb{C}[\lambda^1,\dots,\lambda^{16}]\otimes \Lambda[\psi^1,\dots,
\psi^{16}]$ by the formula

\begin{equation}\label{E:pdiff}
P=\Gamma^{\alpha\beta}_m
\frac{\partial}{\partial \lambda^{\alpha}}\frac{\partial}{\partial \psi^{\beta}}
\Gamma^{\gamma\delta}_n
\frac{\partial}{\partial \lambda^{\gamma}}\frac{\partial}{\partial \psi^{\delta}}
\Gamma^{\epsilon\varepsilon}_k
\frac{\partial}{\partial \lambda^{\epsilon}}\frac{\partial}{\partial \psi^{\varepsilon}}
\Gamma^{\mu\nu}_{mnk}
\frac{\partial}{\partial \psi^{\mu}}\frac{\partial}{\partial \psi^{\nu}}
\end{equation}

We define  $\pp_{\varnothing}(a)$ as $Pa|_{\lambda,\psi=0}$.

One of the properties of $\pp_{\varnothing}$ is that it is $D$-closed \[\pp_{\varnothing}(Da)=0.\]
It is not, however, invariant with respect to the action of supersymmetries. It satisfies a weaker condition \[\pp_
{\varnothing}(\theta_{\alpha}a)=\pp_{\alpha}(Da)\]

The generating function technique that was used for formulation of L$\ity$ action in the BV formulation can be used here.  
We have even ghosts $ t^{1},\dots,t^{16}$ and odd ghosts $\xi^1,\dots,\xi^{10}$.   We define the total L$\ity$ action 
operator $D_{\infty}$ as the sum
\begin{equation}
D_{\infty}=D+t^{\alpha}\theta_{\alpha}+\xi^i\frac{\partial}{\partial x^i}+d_{\susy}
\end{equation}
The condition $D_{\infty}^2=0$ is equivalent to the standard package of properties of the pure spinor BV differential and  
supersymmetries.

We shall construct a generating function of functionals $\pp$ that satisfies equation
\begin{equation}
\pp(a)=d_{\susy}\pp(a)+\mbox{ exact terms }
\end{equation}
We define  a formal perturbation  $\pp$ of $P$ as a composition
\begin{equation}
\pp=\sum_{k\geq 0}\frac{1}{k!}P\bigg(t^{\alpha}\frac{\partial}{\partial \lambda^{\alpha}}\bigg)^k
\end{equation}
Then \[\pp_{\varnothing}(a) \mbox{ is equal to } \pp(a)|_{t,\psi=0}\]
The L$\ity$-invariant trace  functional can be defined as \[p_{\susy}(a)=\int \pp(a)\vol\]

The (ghost dependent) odd inner product corresponding to this trace is also L$\ity$ invariant. It specifies a homologically non-degenerate  two-form on a formal $Q$-manifold; this two-form obeys the condition of proposition \ref {2},  hence we  can apply the conclusion of this proposition to classify Lagrangian deformations of SYM theory.

\section{Calculation of the hypercohomology}\label{AppendixE}

To justify the  calculations of Section \ref{S:calc} we should check that the embedding $\W^*\rightarrow \YM^{\bullet}$ 
and the embedding $\Sym^i (\W^*)\to\Sym^i (\YM)$ are quasi-isomorpisms. In other words, we should prove that these 
homomorphisms induce isomorphisms of hypercohomology. We should 
prove also similar results for embedding $\W\to \TYM$  and embedding $\Sym^i (\W)\to\Sym^i (\YM)$.

We shall start with some general considerations.
As we have noticed in Section \ref{S:calc} , there are two spectral sequences that can be used in calculation of 
hypercohomology of the complex of vector bundles $\N^{\bullet}_{\bm{l}}$. Here we shall use the second one (with 
$E_2=H^i(H^j(\Omega(\N_{\bm{l}} ),d_e),\dbar)$.

First of all we shall consider the  modules $N$ where $N=L, YM$ or $TYM$, corresponding  differential vector bundles $
\N=\cal L, \YM, \TYM$ and differential $P$-modules $N_P$  obtained as
fibers of these bundles over the point $\lambda _0\in \mathcal{Q}$. The differential on the module $N_P$ is obtained as restriction 
of the  differential $d_e$ 
on vector bundle $\N$ and will be denoted by the same symbol.

Let us start with calculation of the cohomology of the module $U(L)_P$.
\begin{proposition}\label{P:equiva}
$\rH^i(U(L)_P,d_{e})=\rHH^i(S,\mathbb{C}_{\lambda_0})$
\end{proposition}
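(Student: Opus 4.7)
The plan is to reduce the statement to a direct application of Proposition \ref{P:tqydxc}, once both sides are identified as the same explicit complex. First I would unpack the left-hand side: by construction, $U(L)$ as a graded $L$-module gives rise to the complex of $\Spin(10)$-invariant vector bundles $\mathcal{L}^{\bullet}_{c\,\bm{l}}$ whose $k$-th term is $U(L)_{k}\otimes \mathcal{O}(k-\text{const})$, with differential given by multiplication by $e=\lambda^{\alpha}\bm{\theta}_{\alpha}$. Taking the fiber of this complex at the base point $\lambda_{0}\in\Q$ trivializes each line bundle $\mathcal{O}(k)$ (a neighbourhood of $\lambda_0$ is an affine patch and the coordinates $\lambda^{\alpha}$ become honest numbers $\lambda_{0}^{\alpha}$). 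Hence the complex $U(L)_{P}$ is the underlying graded vector space of $U(L)$ equipped with the differential of left multiplication by the scalarised element $\lambda_{0}^{\alpha}\bm{\theta}_{\alpha}\in U(L)$.

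Next I would unpack the right-hand side. The algebra $S=\Ss$ is Koszul (as stated in Section \ref{2.1}) with Koszul dual $S^{!}=U(L)$, and $\mathbb{C}_{\lambda_{0}}$ is the one-dimensional $S$-bimodule on which each generator $\lambda^{\alpha}$ acts by the scalar $\lambda_{0}^{\alpha}$ on both sides. Proposition \ref{P:tqydxc} then identifies
\[ HH^{\bullet}(S,\mathbb{C}_{\lambda_{0}})\;=\;H^{\bullet}\!\bigl(\mathbb{C}_{\lambda_{0}}\otimes U(L),\;[e,-]\bigr), \]
where $e=\sum_{\alpha}\lambda^{\alpha}\otimes \bm{\theta}_{\alpha}$ and the bracket is the graded commutator. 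Because $\mathbb{C}_{\lambda_{0}}$ is one-dimensional and the left and right actions of $S$ coincide on it, the graded commutator collapses to the left action of $e$ on $U(L)$, i.e.\ to left multiplication by $\lambda_{0}^{\alpha}\bm{\theta}_{\alpha}$.

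Thus both complexes in question are literally the same: the graded vector space $U(L)$ with differential $\lambda_{0}^{\alpha}\bm{\theta}_{\alpha}\cdot(-)$. Taking cohomology gives the desired isomorphism, natural in $\lambda_{0}$ and compatible with the residual $P$-action coming from the stabiliser of $\lambda_{0}$ in $\Spin(10,\mathbb{C})$. The only thing requiring care is bookkeeping: one must check that the cohomological grading on the fiber (inherited from the $U(L)$-grading via the twisting powers of $\mathcal{O}$) matches the $A^{!}$-grading that appears in Proposition \ref{P:tqydxc}. This is immediate from the definition of $\mathcal{L}^{\bullet}_{c\,\bm{l}}$, but it is the one place where a sign or shift convention could in principle introduce an off-by-one error, and so is the main (purely formal) obstacle to watch.
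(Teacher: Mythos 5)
Your overall strategy is the paper's own: the proof in the text is a one-line application of Proposition \ref{P:tqydxc} with $A=\Ss$, $N=\mathbb{C}_{\lambda_0}$, $A^{!}=U(L)$, and your attempt to match the two complexes explicitly is in the right spirit. There is, however, a genuine error in your identification of the differential. In Proposition \ref{P:tqydxc} the differential on $N\otimes A^{!}=\mathbb{C}_{\lambda_0}\otimes U(L)$ is the \emph{graded commutator} with $e=\lambda^{\alpha}\otimes\btheta_{\alpha}$. The fact that the left and right actions of $\Ss$ on $\mathbb{C}_{\lambda_0}$ coincide does not make the right-action term disappear: both terms survive, each contributing the scalar $\lambda_0^{\alpha}$, and their graded difference is
\[
n\otimes u\ \longmapsto\ \lambda_0^{\alpha}\,n\otimes[\btheta_{\alpha},u],
\]
so the induced differential on $U(L)$ is $\lambda_0^{\alpha}[\btheta_{\alpha},\,\cdot\,]$, not left multiplication by $\lambda_0^{\alpha}\btheta_{\alpha}$. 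The same correction applies to the left-hand side: $U(L)_P$ is the fiber of the differential vector bundle built from $U(L)$ with its \emph{adjoint} $L$-module structure (this is the module structure relevant to $H^{\bullet}(L,U(L))=HH^{\bullet}(\Ss,\Ss)$), so its differential is again the commutator with $\lambda_0^{\alpha}\btheta_{\alpha}$, and only with this reading do the two complexes coincide.

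The distinction is not a bookkeeping convention. Left multiplication by $\lambda_0^{\alpha}\btheta_{\alpha}$ sends $1\in U(L)_0$ to the nonzero element $\lambda_0^{\alpha}\btheta_{\alpha}\in L^1$ (nonzero because $\lambda_0\neq 0$ and the $\btheta_{\alpha}$ are independent in $L^1$), so the zeroth cohomology of your complex would vanish; but Corollary \ref{C:hgfjs}, via Proposition \ref{P:HKR}, gives $HH^{0}(\Ss,\mathbb{C}_{\lambda_0})=\Lambda^{0}(T_{\lambda_0})=\mathbb{C}$. With the commutator differential, $1$ is a cocycle and the answer $\Lambda^{\bullet}(T_{\lambda_0})$ is consistent. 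Once the differential is corrected on both sides, your argument goes through and reproduces the paper's proof.
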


Here one-dimensional $S$-bimodule $\mathbb{C}_{\lambda_0}$ is obtained by specialization at $\lambda_0\in \mathcal{C} $ 
with coordinates $\lambda_0^{\alpha}$. 
In more details the left and right  actions of polynomial $f(\lambda)$ on generators $a\in \mathbb{C}_{\lambda_0}$ is 
given by the formula $f(\lambda)\times a=f(\lambda_0)a$

\begin{proof}
This is a direct application of Proposition \ref{P:tqydxc} where $N=\mathbb{C}_{\lambda_0}$ and $A=\Ss$,$A^!=U(L)$.
\end{proof}

To calculate  RHS in Proposition \ref{P:equiva}  we use  the following statement that can be considered as  a weak form 
of Hochschild-Kostant-Rosenberg theorem (see \cite {Connes}):

\begin{proposition}
\label {P:HKR}
Suppose $A$ is a ring of algebraic functions on affine algebraic variety. Let $\mathbb{C}_x$ denote a one-dimensional 
bimodule, corresponding to a smooth point $x$. Then $\rHH^i(A,\mathbb{C}_x)=\Lambda^i(T_x)$, where $T_x$ is the 
tangent space at $x$.
\end{proposition}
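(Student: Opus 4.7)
The plan is to reduce to the local situation at $x$, use smoothness to write down a Koszul resolution of $A$ as an $A$-bimodule, and then observe that after pairing against $\mathbb{C}_x$ all Koszul differentials vanish. This is the standard local form of the Hochschild--Kostant--Rosenberg theorem, and I would simply organise its specialization to a skyscraper bimodule.

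First I would localise. Since the bimodule $\mathbb{C}_x$ is annihilated from both sides by the maximal ideal $\mathfrak{m}_x$, every Hochschild cochain $A^{\otimes n}\to\mathbb{C}_x$ factors through the completion (or merely the localisation) $R=A_{\mathfrak{m}_x}$. Thus we may replace $A$ by the regular local ring $R$ without changing $HH^{\bullet}(A,\mathbb{C}_x)$. Let $n=\dim T_x$ and choose local parameters $y_1,\ldots,y_n$ generating $\mathfrak{m}_xR$.

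Second, I would invoke the Koszul resolution of $R$ as an $R^e=R\otimes R$-module. Smoothness at $x$ means that the kernel $I$ of the multiplication $R^e\to R$ is generated, after localisation, by the regular sequence $\eta_i=y_i\otimes 1-1\otimes y_i$, so we have a free resolution
\[
R^e\otimes\Lambda^{\bullet}\mathbb{C}^n\;\longrightarrow\;R
\]
with Koszul differential given by contraction against $(\eta_1,\ldots,\eta_n)$. Using the canonical identification $I/I^2\cong\Omega^1_{R/\mathbb{C}}$, the $i$-th term identifies with $R^e\otimes_R\Lambda^i\Omega^1_{R/\mathbb{C}}$.

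Third, apply $\Hom_{R^e}(-,\mathbb{C}_x)$. The degree $i$ cochain space becomes
\[
\Hom_R\bigl(\Lambda^i\Omega^1_{R/\mathbb{C}},\mathbb{C}_x\bigr)=\Hom_{\mathbb{C}}\bigl(\Lambda^i T^*_x,\mathbb{C}\bigr)=\Lambda^i T_x,
\]
since $\Omega^1_{R/\mathbb{C}}\otimes_R\mathbb{C}_x=\mathfrak{m}_x/\mathfrak{m}_x^2=T^*_x$ at a smooth point. The decisive observation is that each Koszul differential is a $\mathbb{C}$-linear combination of the operators $\eta_i$, and both the left and right multiplication by $y_i$ act by the same scalar $y_i(x)=0$ on $\mathbb{C}_x$; hence every induced differential on $\Hom_{R^e}(-,\mathbb{C}_x)$ vanishes. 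Reading off cohomology gives $HH^i(A,\mathbb{C}_x)=\Lambda^i T_x$. The main obstacle is the construction of the Koszul resolution, i.e.\ verifying that at a smooth point the diagonal embedding $R\hookrightarrow R^e$ is a local complete intersection so that $(\eta_1,\ldots,\eta_n)$ is a regular sequence in $R^e$; once this is granted (and it is a standard consequence of smoothness), the rest of the argument is purely formal.
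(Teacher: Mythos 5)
The paper does not actually prove this proposition: it is stated as ``a weak form of Hochschild--Kostant--Rosenberg'' and delegated to the reference \cite{Connes}. Your argument is the standard one that sits behind that citation --- localise at the smooth point, resolve the diagonal by the Koszul complex on the elements $y_i\otimes 1-1\otimes y_i$, and note that both the left and the right action of $y_i$ on $\mathbb{C}_x$ are multiplication by $y_i(x)=0$, so the induced differentials vanish and the cochain groups $\Lambda^i T_x$ survive untouched. That is correct, and it is exactly what is needed here, since the paper applies the proposition to the coordinate ring $\Ss$ of the pure spinor cone at a point $\lambda_0\neq 0$, i.e.\ at a smooth point of a variety that is singular at the origin; your hypotheses match that use.

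One justification is stated too loosely: the claim that ``every Hochschild cochain $A^{\otimes n}\to\mathbb{C}_x$ factors through the localisation $R=A_{\mathfrak{m}_x}$'' is not literally true for an arbitrary multilinear map (a general cochain need not extend to fractions). The reduction to the local ring is nonetheless valid, but it should be justified at the level of cohomology rather than cochains: write $HH^{\bullet}(A,\mathbb{C}_x)=\Ext^{\bullet}_{A^e}(A,\mathbb{C}_x)$, observe that $R^e$ is a localisation of $A^e$, that $A\otimes_{A^e}R^e\cong R$, and that $\mathbb{C}_x$ is already an $R^e$-module, so flat base change for $\Ext$ over the noetherian ring $A^e$ gives $\Ext^{\bullet}_{A^e}(A,\mathbb{C}_x)\cong\Ext^{\bullet}_{R^e}(R,\mathbb{C}_x)$. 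With that repair the rest of your argument goes through as written.
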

\begin{corollary}\label{C:hgfjs}
$\rHH^i(S,\mathbb{C}_{\lambda_0})=\rH^i(U(L)_P,d_e)=\rH^i(\Sym(L)_P,d_e)=\Lambda^i(T_{\lambda_0})$, where $T_
{\lambda_0}$ is the tangent space to $\mathcal{C} $ at the point ${\lambda_0}\neq 0$.
It follows from this that
$\rH^i(\Sym^i(L)_P,d_e)=\Lambda^i(T_{\lambda_0})$ and $\rH^j(\Sym^i(L)_P,d_e)=0,i\neq j$. In particular, for $i=1$ we 
obtain $\rH^1(L_P)=T_{\lambda _0}, \rH^j(L_P)=0$ if $j>1$.
\end{corollary}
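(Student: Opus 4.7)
I would break the chain of equalities into three pieces. The first piece, $HH^i(\Ss,\mathbb{C}_{\lambda_0}) = H^i(U(L)_P, d_e)$, is Proposition~\ref{P:equiva} applied with $N = \mathbb{C}_{\lambda_0}$ and $A = \Ss$, so nothing new is required. The third piece, $HH^i(\Ss,\mathbb{C}_{\lambda_0}) = \Lambda^i T_{\lambda_0}$, is an application of the weak Hochschild--Kostant--Rosenberg theorem (Proposition~\ref{P:HKR}), for which one must check that $\lambda_0$ is a smooth point of the affine variety $\CQ$. This is exactly where the hypothesis $\lambda_0 \neq 0$ enters: the open subset $\CQ\setminus\{0\}$ is the total space of a line bundle (with zero section removed) over the smooth homogeneous projective variety $\Q = \Spin(10,\mathbb{C})/P$, hence smooth, so every $\lambda_0 \neq 0$ is a smooth point and Proposition~\ref{P:HKR} applies.

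The middle piece, $H^i(U(L)_P, d_e) = H^i(\Sym(L)_P, d_e)$, is where I expect the real work. The plan is to exploit the PBW theorem: $U(L)$ carries a filtration by word length whose associated graded is $\Sym(L)$ (interpreted in the graded-symmetric sense, so that the subspace $\Sym(L^1)$ is really the exterior algebra $\Lambda(L^1)$, and so on), and left multiplication by $e_0 = \lambda_0^{\alpha}\theta_{\alpha}$ respects this filtration in the sense that its principal symbol on $\mathrm{gr}$ is ordinary multiplication by $e_0 \in \Sym(L)$. The resulting filtration spectral sequence has $E_1$-page $H^{\bullet}(\Sym(L)_P, d_e)$ and abuts to $H^{\bullet}(U(L)_P, d_e)$. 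Since the abutment has already been identified with $\Lambda^{\bullet} T_{\lambda_0}$ via the first and third pieces, matching Poincar\'e series forces degeneration and establishes the claimed isomorphism.

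The refinement $H^j(\Sym^i(L)_P, d_e) = \delta_{ij}\,\Lambda^i T_{\lambda_0}$ should then fall out by tracking the additional $\Sym^i$-grading through the same spectral sequence: multiplication by $e_0$ raises both the cohomological degree and the $\Sym^i$-degree by one, so a class in cohomological degree $j$ is represented by an element of $\Sym^j$-degree. The low-degree case $i=1$ can be checked by hand on $L_P$ with $d_e = \mathrm{ad}(e_0)$: the kernel of $[e_0,\,\cdot\,] : L^1 \to L^2$ consists of those $\theta \in L^1$ with $\lambda_0^{\alpha}\Gamma^i_{\alpha\beta}\theta^{\beta} = 0$, which is precisely the tangent space $T_{\lambda_0}\CQ$ to the pure-spinor cone, while vanishing of higher $H^j$ reduces to injectivity/surjectivity statements that are visible from the $\widetilde{\GL}(5)$-decompositions (\ref{E:decompa})--(\ref{E:decompd}). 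The main obstacle is the PBW step: establishing collapse of the spectral sequence rigorously (rather than merely by total dimension count) and sorting out the super-symmetric parity conventions so that each piece $\Sym^i(L)_P$ really is a subcomplex (or at least contributes to a well-defined row of the spectral sequence) requires careful bookkeeping, but is ultimately pinned down by the already-computed abutment.
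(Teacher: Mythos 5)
Your first and third links are exactly the paper's argument: Proposition \ref{P:equiva} for $HH^i(\Ss,\mathbb{C}_{\lambda_0})=H^i(U(L)_P,d_e)$, and Proposition \ref{P:HKR} together with smoothness of $\CQ\setminus\{0\}$ for the identification with $\Lambda^i(T_{\lambda_0})$; your hand computation of $H^1(L_P)$ as $\ker(\lambda_0^{\alpha}\Gamma^i_{\alpha\beta}\,\cdot\,)$ is also correct. The middle link is where your proposal has a genuine gap, in fact two. First, the differential on $U(L)_P$ produced by Proposition \ref{P:tqydxc} with $N=\mathbb{C}_{\lambda_0}$ is the graded commutator $\mathrm{ad}(e_0)$ with $e_0=\lambda_0^{\alpha}\btheta_{\alpha}$ (both the left and the right $\Ss$-actions on $\mathbb{C}_{\lambda_0}$ are the scalar $\lambda_0$, so the two terms of the Hochschild differential assemble into a commutator), not left multiplication as you write when setting up the filtration. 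Second, and more seriously, your degeneration argument is circular: the PBW filtration spectral sequence has $E_1=H^{\bullet}(\Sym(L)_P)$ and converges to $\mathrm{gr}\,H^{\bullet}(U(L)_P)$, so knowing the abutment only tells you that $E_1$ dominates it; ``matching Poincar\'e series'' cannot force collapse unless you already know $E_1$, which is precisely the group you are trying to compute. You flag this yourself as the main obstacle, but it does not get resolved. The intended route needs no spectral sequence: in characteristic zero the symmetrization map $\Sym(L)\to U(L)$ is an isomorphism of $L$-modules for the adjoint action (the equivariant form of PBW, the same fact invoked in Theorem \ref{T:dfadfqq} to replace $U(YM)$ by $\Sym(YM)$); since $d_e=\mathrm{ad}(e_0)$ with $e_0\in L^1$ on both sides, this is an isomorphism of complexes and the middle equality is immediate.

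The refinement also rests on incorrect bookkeeping. The operator $\mathrm{ad}(e_0)$ acts on $\Sym(L)$ as the derivation induced by $\mathrm{ad}(e_0):L\to L$, so it \emph{preserves} the symmetric-power degree (this is exactly why each $\Sym^i(L)_P$ is a subcomplex and the statement makes sense) and raises only the internal, i.e.\ cohomological, degree; it does not ``raise both degrees by one.'' Consequently the claim that a class in cohomological degree $j$ lives in $\Sym^j$ does not follow from the shape of the differential --- a priori $\Sym^2(L)_P$ could contribute to internal degree $3$. The concentration $H^j(\Sym^i(L)_P)=\delta_{ij}\Lambda^i(T_{\lambda_0})$ requires the multiplicative structure: $HH^{\bullet}(\Ss,\mathbb{C}_{\lambda_0})$ is the exterior algebra generated by $HH^1=T_{\lambda_0}$, the product is realized by multiplication in $\Sym(L)$, and $T_{\lambda_0}=\ker(\mathrm{ad}(e_0):L^1\to L^2)$ sits in $\Sym^1(L)$ in internal degree one; hence $\Lambda^i(T_{\lambda_0})\subset\Sym^i(L^1)$ consists of cocycles of bidegree $(i,i)$ which already exhaust $H^i$ of the total complex, forcing every other bigraded piece to vanish. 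Without this (or some equivalent independent computation), the graded refinement is unproved.
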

The corollary follows from Proposition \ref{P:HKR} because   $\mathcal{C}$ is a smooth homogeneous space away from $
\lambda= 0$.

Recall that Lie algebra $L$ as a vector space is equal to $L^1+YM$. The action of the differential $d_e$ is $P$-covariant. 
This fact together with the information about  the cohomology of $L_P$ permits us to calculate the action of   $d_e$  on 
$L_P$ and on $YM_P$. 

Recall that
$$L_P=L^1\otimes \mu _{-1}+L^2+L^3\otimes\mu _1+...$$

We  describe the differential $d_e$  on $L^1\otimes \mu _{-1}$ using decomposition (\ref{E:decompa}-\ref{E:decompb}). 

It follows from  (\ref{E:decompa}) that $L^1\otimes \mu _{-1}$ has  $W^*$ as factor -representation, i.e there exists a 
surjective homomorphism $\phi :L^1\otimes \mu _{-1}\to W^*$.	We conclude from Schur's lemma that 
$d_e$ maps $L^1\otimes \mu _{-1}$ onto $W^*\subset L^2$ and coincides with $\phi$ up to a constant factor. From the 
information about the cohomology of $L_P$ we infer that the constant factor does not vanish. Taking into account that
the $\rH^i(L_P,d_e)=0$ for $i>1$ we obtain that the
 complex $L^1/\Ker d_e\rightarrow L^2\rightarrow \dots$ is acyclic. If we truncate $L^1/\Ker d_e$  term, the resulting 
complex will have cohomology equal to $d_e(L^1/\Ker d_e)=W^*$. This proves that the embedding $W^*\subset YM_P$ 
is a quasiisomorphism. 
 
To derive from this statement that the embedding
of vector bundles $\W^*\subset \YM$ generates
isomorphism of hypercohomology we notice
that this embedding induces a homomorphism of spectral sequences calculating the hypercohomology. It is easy to 
check  that the above statement implies isomorphism of $E_2$
terms, hence isomorphism of hypercohomology.

From K\"unneth theorem we can conclude that the embedding $\Sym ^j W\subset \Sym ^j YM _P$ is a quasi-
isomorphism; using spectral sequences we derive isomorphism of hypercohomology of corresponding complexes of 
vector bundles.

We can give a similar analysis of the complex $\TYM^{\bullet}$. Indeed we have a short exact sequence of complexes \[ 0 \rightarrow\TYM^{\bullet}\rightarrow \YM^{\bullet} \rightarrow {\cal L}^2\rightarrow 0,\] where ${\cal L}^2$ is a trivial 
vector bundle over $\mathcal{Q}$ with a fiber $L^2$.  The short exact sequence   gives rise to short exact sequence of 
corresponding $P$-modules and to a long exact sequence of their cohomology:
\begin{align}
&  0 \rightarrow \rH^0(YM_P,d_e) \rightarrow L^2 
\rightarrow \rH^1(TYM _P,d_e)  \rightarrow 0\notag \\
&\rH^i(TYM_P,d_e) = \rH^i(YM_P,d_e)\quad i\geq 2 \label{E:ffdjdfhq}
\end{align}
By definition $\rH^0(TYM_P,d_e)=0$.
Taking into account quasiisomorphism between $W^*$ and $YM_P$ we get an exact sequence of $\Spin(10)$-modules
\[ 0 \rightarrow W^*\rightarrow L^2 \rightarrow \rH^1(TYM_P,d_e)\rightarrow 0\]

It follows from the decomposition (\ref{E:decompb}) that there is  only one $\Spin(10)$-equivariant embedding of  $\Spin
(10)$-modules:$W^*\rightarrow L^2$. Also the module $L^2/W^*$ is isomorphic to $W$.
From this we conclude that $\rH^1(TYM_P,d_e)$ is isomorphic to $W$.
Due to isomorphisms (\ref{E:ffdjdfhq}) the complex $TYM_P$ has no higher cohomology. We see that the embedding $W
\to TYM_P$ is
a quasi-isomorphism. Again using spectral sequences we obtain that the embedding
\begin{equation}\label{E:tquasi}
\W\rightarrow \TYM^{\bullet}
\end{equation} 
generates an isomorphism of hypercohomology.

\section{Construction of the supersymmetric deformations}\label{comutations}
We will show that starting with $G^{\alpha}$ obeying
\begin{equation}\label{E:firsthomology}
\theta_{\alpha} G^{\alpha}=0
\end{equation} on solutions of the SYM equations we can construct a SUSY deformation of SYM.
Most of these deformation are given by the general  formula (\ref {E:fadjghjd}). We will describe solutions that lead to exceptional deformations  corresponding  to (\ref {E;laplacian}) and (\ref {g2}).

First of all we can take \[G^{\alpha}=\chi ^{\alpha}.\] Then $\theta_{\alpha}\chi ^{\alpha}=\Gamma_{\alpha}^{\alpha ij}F_{ij}=0$.

We shall denote
\begin{equation}\label{E:symmetrization}
E_1\circ \cdots \circ E_n=\frac{1}{n!}\sum_{\sigma\in S_n} E_{\sigma(1)}\cdots E_{\sigma(n)}
\end{equation} the symmetrized product of operators.  We shall be also using freely   a generalization to the formula to the $\mathbb{Z}_2$-graded case.
There is another interesting  solution is
\begin{equation}\label{E:secondsporadic}
G^{\alpha}=2\Gamma^{\alpha ijkl}_{\beta}F_{ij}\circ F_{kl}\circ\chi^{\beta}+7F_{ij}\circ F_{ij}\circ \chi^{\alpha}
\end{equation}

Let us  verify that $G^{\alpha}$ (\ref{E:secondsporadic}) satisfies (\ref{E:firsthomology}). We use $\circ$ for symmetrized product of operators (see Section \ref{S:exsupdef}).
It is an immediate corollary of (\ref{E:susyform})  that 
\[
\theta_{\alpha} \Gamma^{\alpha ijkl}_{\beta}F_{ij}\circ F_{kl}\circ\chi^{\beta}=-4\Gamma^{\alpha ijkl}_{\beta}\Gamma^i_{\alpha\gamma}\nabla_j\chi^{\gamma}\circ F_{kl}\circ \chi^{\beta}+\Gamma^{\alpha ijkl}_{\beta}\Gamma^{\beta st}_{\alpha}F_{ij}\circ F_{kl}\circ F_{st}
\]
Let $T^{ijklst}F_{ij}\circ F_{kl}\circ F_{st}$ be equal to $\Gamma^{\alpha ijkl}_{\beta}\Gamma^{\beta st}_{\alpha}F_{ij}\circ F_{kl}\circ F_{st}$. The tensor $T^{ijklst}$  is obtained from $\Gamma^{\alpha ijkl}_{\beta}\Gamma^{\beta st}_{\alpha}$ by symmetrization in $(ij)(kl)(st)$ groups of indices. $T^{ijklst}$ is zero because it defines an $\so_{10}$-invariant polynomial function of degree three on the adjoint representation of $\so_{10}$. It is well known (\cite{BVinbergALOnishchik}) that such invariants for Lie algebras of $\so$-type exist only in degrees divisible by four.

We omit    verification of \[\Gamma^{\alpha ijkl}_{\beta}\Gamma^i_{\alpha\gamma}=\frac{14}{4}\Gamma^{jkl}_{\beta\gamma}=\frac{14}{4}\left(\Gamma_{\beta\delta}^j \Gamma^{\delta \epsilon k}  \Gamma_{ \epsilon \gamma}^l-\frac{1}{2}\delta^{jk}\Gamma^{l}_{\beta\gamma} -\frac{1}{2}\delta^{kl}\Gamma^{j}_{\beta\gamma}+\frac{1}{2}\delta^{jl}\Gamma^{k}_{\beta\gamma}  \right)\]
$\Gamma$-matrix $\Gamma^{jkl}_{\beta\gamma}$ is skew-symmetric in $\beta\gamma$. We infer that 
\begin{equation}\label{E:gsimpl}
\begin{split}
&\theta_{\alpha} \Gamma^{\alpha ijkl}_{\beta}F_{ij}\circ F_{kl}\circ\chi^{\beta}=14\left(\Gamma_{\beta\delta}^j \Gamma^{\delta \epsilon k}  \Gamma_{ \epsilon \gamma}^l-\frac{1}{2}\delta^{jk}\Gamma^{l}_{\beta\gamma} -\frac{1}{2}\delta^{kl}\Gamma^{j}_{\beta\gamma}+\frac{1}{2}\delta^{jl}\Gamma^{k}_{\beta\gamma}  \right)\nabla_j\chi^{\beta}\circ F_{kl}\circ \chi^{\gamma}=\\
&=-14\Gamma^{l}_{\beta\gamma}\nabla_j\chi^{\beta}\circ F_{jl}\circ \chi^{\gamma} 
\end{split}
\end{equation}
The term that contains $\Gamma_{\beta\delta}^j \Gamma^{\delta \epsilon k}  \Gamma_{ \epsilon \gamma}^l$ vanishes by virtue of the Dirac equation. The term with $\frac{1}{2}\delta^{kl}\Gamma^{j}_{\beta\gamma}$ disappears because $\delta^{kl}F_{kl}=0$.
If we compare 
\[
\theta_{\alpha} F_{ij}\circ F_{ij} \circ \chi^{\alpha}=4\Gamma^{j}_{\beta\alpha}\nabla_i\chi^{\beta}\circ F_{ij}\circ \chi^{\alpha}
\]
with (\ref{E:gsimpl})
we conclude that $G^{\alpha}$ (\ref{E:secondsporadic}) satisfies (\ref{E:firsthomology}).

The space of solutions of (\ref{E:firsthomology}) contains a set of trivial solutions  $G^{\alpha}=\theta_{\beta}G^{\alpha\beta}$ with $\Gamma$-traceless $G^{\alpha\beta}$. Indeed by virtue of (\ref{E:hfdfjfhjd}) \[\theta_{\alpha}G^{\alpha}=\Gamma^i_{\alpha\beta}D_iG^{\alpha\beta}=0\]
It is natural to identify solutions of (\ref{E:firsthomology}), that differ by a trivial solution.

Our interest to the group $H$ of equivalence classes of such solutions  has been stimulated by existence of a map from $H$ to infinitesimal  supersymmetric deformations of the equations of motion of SYM
\[\frac{\delta \L_{SYM}}{\delta A_i}+\epsilon \M_i\quad \frac{\delta \L_{SYM}}{\delta \chi^{\alpha}}+\epsilon \M_{\alpha}\]
We omit verification that infinitesimal field redefinition $\N^i,\N^{\alpha}$ transforms
\[\begin{split}&\M_i\rightarrow  \M_i+D_iD_j\N_{\gamma j}-2D_jD_i\N_{\gamma j}+D_jD_j\N_{\gamma i} +\Gamma^i_{\alpha\beta}[\chi^{\alpha},\N^{\beta}_{\gamma}]\\
& \M_{\alpha}\rightarrow \M_{\alpha}-\Gamma^j_{\alpha\beta}[\chi^{\beta},\N_{\gamma j}]-\Gamma^i_{\alpha\beta}D_i\N^{ \beta}_{\gamma}
\end{split}\]

 The following map generalizes the operator $A$ (\ref{E:int}):
\begin{equation}\label{E:deformationG}
 \M_i=\tt_{\alpha}^{i}G^{\alpha} \quad \M_{\alpha}=\tt_{ \alpha\beta}G^{\beta}
\end{equation}
with $\tt_{\alpha}^{i}(\theta_1,\dots,\theta_{16})$ and $\tt_{ \alpha\beta}(\theta_1,\dots,\theta_{16})$ being some ingeniously chosen  noncommutative polynomials in supersymmetry operators. 
Operators  $\tt_{\alpha}^{i}$, $\tt_{\alpha \beta}$ satisfy certain equations that ensure supersymmetry of (\ref{E:deformationG}).

Operators $\tt$, $\tt_{\alpha}^{i}$, and $\tt_{\alpha \beta}$ satisfy an analogue of the descent equation (see e.g. \cite{Zumino}). To simplify notations we shall  work with  functions 
\begin{equation}\label{E:simplification}
\tt^{i}(\lambda)=\tt_{\alpha}^{i}\lambda^{\alpha},\quad \tt_{ \alpha}(\lambda)=\tt_{ \alpha\beta}\lambda^{\beta},\quad \theta(\lambda)=\theta_{\alpha}\lambda^{\alpha} ... 
\end{equation}
in pure spinor variables $\lambda^{\alpha}$.
There are several  equations they satisfy (see Appendix \ref{E:deformationcomplex}). The most relevant to our present needs   are
\begin{equation}\label{E:descent1}
\tt\theta(\lambda)=D_i\tt^{i}(\lambda)+\chi^{\beta}\tt_{\beta}(\lambda)
\end{equation}
\begin{equation}\label{E:descent2}
\begin{split}
&\theta_{\gamma}\tt=D_iB_{\gamma}^{i}+\chi^{\beta}B_{\gamma,\beta}\\
&\theta_{\gamma}^{R_1} \left( \begin{smallmatrix} \tt^{i}(\lambda)\\\tt_{ \beta}(\lambda)\end{smallmatrix}\right) =\left( \begin{smallmatrix} B_{\gamma}^{i}\theta(\lambda)\\ B_{\gamma,\beta}\theta(\lambda) \end{smallmatrix}\right)+d_1\left( \begin{smallmatrix} C_{\gamma}^{i}(\lambda)\\ C_{\gamma}^{\delta}(\lambda) \end{smallmatrix}\right) \\
\end{split}
\end{equation}
Operators $B,C$ likewise $\tt$ are noncommutative polynomials in $\theta_{\alpha}$.
The reader should consult  formulas (\ref{E:supersymmetry1},\ref{E:supersymmetry2}) and (\ref{E:differential1}) for the action  of the  supersymmetry $\theta_{\alpha}^{R_i}i=1,2$  and the operator $d_1$ . We refer the reader to Appendix \ref{E:deformationcomplex} for the details.
Let us introduce an infinitesimal field redefinition
\begin{equation}\label{E:deformationtriv}
 \N_{\gamma i}=C_{\gamma \beta, i}G^{\beta} \quad \N_{\gamma}^{\alpha}=C_{\gamma\beta}^{\alpha}G^{\beta}
\end{equation}
with $\beta$ a pure spinor index.



The immediate  corollary of (\ref{E:descent1}) is that $\M_i$ $\M_{\alpha}$ in (\ref{E:deformationG}) satisfy (\ref{E:gauegeinv}) for any $G^{\alpha}$, that solves (\ref{E:firsthomology}):
\begin{equation}
\nabla_i \M_i+[\chi^{\beta},\M_{\alpha}]=D_i\tt_{\alpha}^{i}G^{\alpha} +\chi^{\alpha}\tt_{\alpha \beta}G^{\beta} =\tt\theta_{\alpha}G^{\alpha}=0
\end{equation}
Supersymmetry of this deformation  follows  from (\ref{E:descent2}):
\[\begin{split} &\theta_{\gamma}\M_i+2\Gamma_{\gamma}^{\beta ij}D_j\M_{\beta}=D_iD_j\N_{\gamma j}-2D_jD_i\N_{\gamma j}+D_jD_j\N_{\gamma i} +\Gamma^i_{\alpha\beta}[\chi^{\alpha},\N^{\beta}_{\gamma}]\\
&-\theta_{\gamma}\M_{\beta}+\Gamma_{\gamma\beta}^i\M_i=-\Gamma^j_{\alpha\beta}[\chi^{\beta},\N_{\gamma j}]-\Gamma^i_{\alpha\beta}D_i\N^{ \beta}_{\gamma}\end{split}\]
Note that the terms involving $B^{i}_{\gamma}\theta_{\beta}G^{\beta},B_{\gamma\alpha}\theta_{\beta}G^{\beta}$ drop out because $G^{\beta}$ satisfies (\ref{E:firsthomology}).

\section{A deformation complex}\label{E:deformationcomplex}
The goal of this Appendix is to interpret equations (\ref{E:descent1}) and (\ref{E:descent2}) as a part of a more general system of equations. A solution of this system  is a cocycle  in a certain bi-complex:
\begin{equation}\label{E:bicomplex}
\begin{array}{lllllll}
\cdots&&\cdots&&\cdots&&\cdots\\
\uparrow^{{d^{II}_1}}&&\uparrow^{d^{II}_1}&&\uparrow^{d^{II}_1}&&\uparrow^{d^{II}_1}\\
 E^{1}_0&\overset{d^I_0}\leftarrow &E^{1}_1&\overset{d^I_1}\leftarrow  &E^{1}_2&\overset{d^I_2}\leftarrow   &E^{1}_3\\
\uparrow^{d^{II}_0}&&\uparrow^{d^{II}_0}&&\uparrow^{d^{II}_0}&&\uparrow^{d^{II}_0}\\
E^{0}_0&\overset{d^I_0}\leftarrow &E^{0}_1&\overset{d^I_1}\leftarrow  &E^{0}_2&\overset{d^I_2}\leftarrow   &E^{0}_3\\
\end{array}
\end{equation}
The horizontal differential $d^I$ is 
defines  a four-term complex
\[R_0\overset{d_0}\leftarrow R_1\overset{d_1}\leftarrow  R_2\overset{d_2}\leftarrow   R_3\]

In the  simplest form the linear spaces $R_0,R_1,R_2,R_3$ of the complex are formed by noncommutative polynomials with constant coefficients
in covariant derivatives of curvature and spinor field  with values in sections of adjoint bundle 
, the fields satisfy equations of motion of SYM. The complex  governs infinitesimal deformations of SYM. The space $R_1$ contains infinitesimal deformations  $(\M_i(A,\chi),\M_{\alpha}(A,\chi))$ of Euler-Lagrange equation
\begin{equation}\label{E:variation}
\frac{\delta\L_{SYM}}{\delta A_i}+h\M_i=0\quad \frac{\delta\L_{SYM}}{\delta \chi^{\alpha}}+h\M^{\alpha}
\end{equation}

 The space $R_2$ contains infinitesimal field redefinitions $(\N_i(A,\chi),\N^{\alpha}(A,\chi))$:
 \begin{equation}\label{E:fieldredef}A_i\rightarrow A_i+h \N_i\quad \chi^{\alpha}\rightarrow   \chi^{\alpha}+h \N^{\alpha}\end{equation}
 The spaces $R_0,R_3$ are spanned by $\mathcal{Y}(A,\chi)$, which, as we already mentioned, in case $R_0$ define  a Lagrangian density $\tr \mathcal{Y}(A,\chi)$. A cohomology of $d_2$ we interpret as an infinitesimal automorphism of a solution  $(\nabla_i,\chi^{\alpha})$.
 
The differential $d_0,d_1,d_2$ have  forms
\begin{equation}\label{E:differential0}
 d_0\left( \begin{smallmatrix}   \M_i\\\M_{\alpha}\end{smallmatrix}\right)=D_i\M_i+[\chi^{\alpha},\M_{\alpha}]
 \end{equation}
$d_0$ can be obtained from 
\begin{equation}\label{E:gauegeinv}
\nabla_i \frac{\delta \L}{\delta A_i}+[\chi^{\alpha}\frac{\delta \L}{\delta \chi^{\alpha}}]=0.
\end{equation} by variation (\ref{E:variation}).
\begin{equation}\label{E:differential1}
\begin{split}
d_1\left( \begin{smallmatrix}   \N_i\\\N^{\alpha}\end{smallmatrix}\right)=
\left( \begin{smallmatrix}  D_iD_j\N_{j}-2D_jD_i\N_{j}+D_jD_j\N_{i} +\Gamma^i_{\alpha\beta}[\chi^{\alpha},\N^{\beta}]\\-\Gamma^j_{\alpha\beta}[\chi^{\beta},\N_{j}]-\Gamma^j_{\alpha\beta}D_j\N^{\beta}\end{smallmatrix}\right)
\end{split}
\end{equation}
$d_1\left( \begin{smallmatrix}   \N_i\\\N^{\alpha}\end{smallmatrix}\right)$ is a variation of equations (\ref{E:relssfsyy1},\ref{E:relssfsyy3}) under infinitesimal field redefinition (\ref{E:fieldredef}).
\begin{equation}\label{E:differential2}
\begin{split}
d_2\mathcal{Y}=\left( \begin{smallmatrix}   D_j\mathcal{Y}\\-[\chi^{\alpha},\mathcal{Y}]\end{smallmatrix}\right)
\end{split}
\end{equation}
$d_2\mathcal{Y}$ stands for a field-dependent infinitesimal gauge transformation. For the first time the non-supersymmetric analogue of the complex $R$ has been analyzed in mathematical literature in \cite{CD}. The variant that contains fermions and scalar bosons has been worked out in \cite{MSch2}.
The differentials $d_i$ commute with  supersymmetries 
\[\begin{split}
\theta^{R_i}_{\alpha}(\mathcal{Y})=\theta_{\alpha}\mathcal{Y}\quad i=0,3
\end{split}\]

\begin{equation}\label{E:supersymmetry1}
\begin{split}
\theta^{R_1}_{\alpha}\left( \begin{smallmatrix}   \M_i\\\M_{\alpha}\end{smallmatrix}\right)=
\left( \begin{smallmatrix}   \theta_{\alpha}\M_i+2\Gamma_{\alpha}^{\beta ij}D_j\M_{\beta}\\-\theta_{\alpha}\M_{\beta}+\Gamma_{\alpha\beta}^i\M_i\end{smallmatrix}\right)
\end{split}
\end{equation}

\begin{equation}\label{E:supersymmetry2}
\begin{split}
\theta^{R_2}_{\alpha}\left( \begin{smallmatrix}   \N_i\\\N^{\beta}\end{smallmatrix}\right)=
\left( \begin{smallmatrix}   \theta_{\alpha}\N_i+\Gamma_{\alpha\gamma}^i\N^{\gamma}\\-\theta_{\alpha}\N^{\beta}-2\Gamma_{\alpha}^{\beta ij}D_i\N_{j}\end{smallmatrix}\right)
\end{split}
\end{equation}
We do not give here the formulas for infinitesimal shift operators in $R_i$, but mention in passing that they act trivially in cohomology as  operators $\mathcal{Y},\M,\N$ have space-time constant coefficients. The algebra generated by operators $\theta^{R_i}_{\alpha}, i=1,2$  is rather complex. However  operators $\theta^{R_i}_{\alpha}, i=0,1,2,3$ anti-commute in cohomology.

In a  more sophisticated version $\Theta$ of the complex $R$, which is used in construction of the bicomplex (\ref{E:bicomplex}), $\mathcal{Y},\M_i, \M_{\alpha},\N_i,\N^{\alpha}$ are noncommutative polynomials in $\theta_{\alpha}$. Operators $d_0,d_1,d_2$  can be adapted to $\Theta$ if we replace $D_i$  and $[\chi^{\alpha},\cdot]$ in the formulas (\ref{E:differential0},\ref{E:differential1},\ref{E:differential2}) by  
\begin{equation}\label{E:Dformula}
D_i=1/8\Gamma^{\alpha\beta}_i\theta_{\alpha} \theta_{\beta}.
\end{equation} 
\begin{equation}\label{E:chiformula}
1/5\Gamma^{\delta\gamma}_i[\theta_{\beta}, D_i]=1/40\Gamma^{\delta\gamma}_i\Gamma^{\alpha\beta}_i[\theta_{\gamma}, \theta_{\alpha} \theta_{\beta}].
\end{equation} 
 
The bicomplex $E=\bigoplus_{ij} E^{j}_{i}$ is the  tensor product \[E^{j}_{i}=\Theta_i\otimes \Ss_j.\] The differential  $d^{II}$ is a right multiplication on $\pm\theta(\lambda)=\pm\theta_{\alpha}\lambda^{\alpha}$:
\begin{equation}\label{E:diffII}
\mathcal{Y},\M_i, \M_{\alpha},\N_i,\N^{\alpha}\rightarrow \mathcal{Y}\theta(\lambda),\M_i\theta(\lambda), \M_{\alpha}\theta(\lambda),\N_i\theta(\lambda),\N^{\alpha}\theta(\lambda)
\end{equation}

The operator $c_0=\tt$ (\ref{E:int}) is an element of $E_0^0$, operators $c_1=(\tt^{i}(\lambda),\tt_{ \alpha}(\lambda))$ (\ref{E:simplification}) belong to $E_1^1$. There are also $c_2=(\widetilde{\tt}^{ i}(\lambda),\widetilde{\tt}^{ \alpha}(\lambda))$ and $c_3=\widetilde{\tt}(\lambda)$-functions of degree two and three in $\lambda$.
The full set of equations on $c=(c_0,c_1,c_2,c_3)$ that generalizes (\ref{E:descent1})  is:
\[d_0^{II}c_0=d^I_0c_1\quad d_1^{II}c_1=d^I_0c_2 \quad d_2^{II}c_2=d^I_0c_3 \quad d_3^{II}c_3=0 \]
 
 The cohomology of the diagonal complex  $Tot(E)^k=\bigoplus_{k=j-i} E_i^j$ (\ref{E:bicomplex}) is intractable, but if we choose to work with all the gauge groups $\U(N)$ simultaneously, i.e.  let $\mathcal{Y},\M_i, \M_{\alpha},\N_i,\N^{\alpha}$ be  noncommutative polynomials in $\bm{\theta}_{\alpha}\in L$ (see Section \ref{2.1} ) then cohomology are finite dimensional:
 \[H^0(Tot(E))=H^3(Tot(E))=\mathbb{C}\quad \rH^1(Tot(E))=V^*+S^* \quad \rH^2(Tot(E))=V+S\]
Collection  $c_0,c_1,c_2,c_3$ is a generator in $\rH^0(Tot(E))$. Note that  $c_3$ satisfies $d_3^{II}c_3=0$. In fact it is a generator of the only nontrivial $\theta(\lambda)$-cohomology class in $U(L)\otimes \Ss$.  It contains a representative that can be conveniently written in terms of graded symmetric product and differentiations (\ref{E:diff}):
\[\Gamma_{\alpha \beta}^m\lambda^{\alpha}            \Gamma_{\gamma \delta}^n\lambda^{\gamma}      \Gamma_{\epsilon \varepsilon}^k\lambda^{\epsilon}          \Gamma_{\mu \nu}^{mnk}        \frac{\sd^5}{\sd \theta_{\beta}\sd \theta_{\delta}\sd \theta_{\varepsilon}\sd \theta_{\mu}\sd \theta_{\nu}} \theta_{1}\circ\cdots \circ \theta_{16}\] 




To carry out the computations of $c_1$ we choose $\theta_{\alpha}$ to be a weight basis. Then $\Gamma_{\alpha\alpha}^i=0$. Operators  $\left( \begin{smallmatrix} I^{i}(\lambda)\\I_{ \beta}(\lambda)\end{smallmatrix}\right)$ (\ref{E:deformationG},\ref{E:descent1}) can be computed by the formula:
\[\begin{split}
&\left( \begin{smallmatrix} I^{i}(\lambda)\\I_{ \beta}(\lambda)\end{smallmatrix}\right)=\sum_{\beta<\alpha_1}\sum _{k>0}\\\\
&\left(\sum_{\alpha_{2k+1}<\cdots<\alpha_1}(-1)^{\alpha_1+\alpha_3+\cdots+\alpha_{2k+1}} \left(\theta^{R_1}_{\alpha_{2k+1}}\cdots \theta^{R_1}_{\alpha_{2}}\left( \begin{smallmatrix} [\theta_{\alpha_{1}},\theta_{\beta}]\lambda^{\beta}\\0\end{smallmatrix}\right) \right)\theta_1\cdots \widehat{\theta}_{\alpha_{2k+1}}\cdots \widehat{\theta}_{\alpha_{1}}\cdots \theta_{16}+\right.\\
&+\left.\sum_{\alpha_{2k}<\cdots<\alpha_1}(-1)^{\alpha_1+\alpha_3+\cdots+\alpha_{2k-1}+\alpha_{2k}+1} \left( \theta^{R_1}_{\alpha_{2k}} \cdots \theta^{R_1}_{\alpha_{2}} \left( \begin{smallmatrix} [\theta_{\alpha_{1}},\theta_{\beta}]\lambda^{\beta}\\0\end{smallmatrix}\right)\right) \theta_1\cdots \widehat{\theta}_{\alpha_{2k}}\cdots \widehat{\theta}_{\alpha_{1}}\cdots \theta_{16}\right)\\
\end{split}
\]

The summands in the formula should be understood as follows. We apply consequently operators $\theta^{R_1}_{\alpha_{i}}$ (\ref{E:supersymmetry1}) to $ \left( \begin{smallmatrix}  [\theta_{\alpha_{1}},\theta_{\beta}]\lambda^{\beta}\\0\end{smallmatrix}\right)$. Then we multiply from the right components of the resulting two-vector on $ \theta_1\cdots \widehat{\theta}_{\alpha_{n}}\cdots \widehat{\theta}_{\alpha_{1}}\cdots \theta_{16}$
The proof is based on the formula
\[\begin{split}
&\theta_1\cdots \theta_{16}\theta_{\beta}=\sum_{\beta<\alpha_1}\sum^n _{k=1}\\
&\left(\sum_{\alpha_{2k+1}<\cdots<\alpha_1}(-1)^{\alpha_1+\alpha_3+\cdots+\alpha_{2k+1}}[\theta_{\alpha_{2k+1}},[\dots ,[\theta_{\alpha_{1}},\theta_{\beta}]]\theta_1\cdots \widehat{\theta}_{\alpha_{2k+1}}\cdots \widehat{\theta}_{\alpha_{1}}\cdots \theta_{16}+\right. \\
&\left.\sum_{\alpha_{2k}<\cdots<\alpha_1}(-1)^{\alpha_1+\alpha_3+\cdots+\alpha_{2k-1}+\alpha_{2k}+1}[\theta_{\alpha_{2k}},[\dots,[\theta_{\alpha_{1}},\theta_{\beta}]]\theta_1\cdots \widehat{\theta}_{\alpha_{2k}}\cdots \widehat{\theta}_{\alpha_{1}}\cdots \theta_{16}\right)\\
&+r_n
\end{split}
\]
where 
\[r_n=
\begin{cases}
\sum_{\beta<\alpha_1}\sum_{\alpha_{2k+1}<\cdots<\alpha_1}(-1)^{\alpha_1+\alpha_3+\cdots+\alpha_{2k-1}}\theta_1\cdots [\theta_{\alpha_{2k}},[\dots ,[\theta_{\alpha_{1}},\theta_{\beta}]]\cdots \widehat{\theta}_{\alpha_{2k-1}}\cdots \widehat{\theta}_{\alpha_{1}}\cdots \theta_{16}& n=2k\\
\sum_{\beta<\alpha_1}\sum_{\alpha_{2k+1}<\cdots<\alpha_1}(-1)^{\alpha_1+\alpha_3+\cdots+\alpha_{2k+1}}\theta_1\cdots [\theta_{\alpha_{2k+1}},[\dots ,[\theta_{\alpha_{1}},\theta_{\beta}]]\cdots \widehat{\theta}_{\alpha_{2k}}\cdots \widehat{\theta}_{\alpha_{1}}\cdots \theta_{16} & n=2k+1\\
\end{cases}\]
It can verified by induction on $n$.

Few words about the proof of existence of $c$. The algebra $U(L)$ is a free left $U(YM)$-module. 
Then $0=\rH_i(YM,U(L))=\rH^{3-i}(YM,U(L))$ for $i\geq 1$. The complex $\Theta$ computes $\rH^{i}(YM,U(L))$. The element $d^{II}c_1$ is a $d^I$-cocycle. By the vanishing result the cocyle $d^{II}c_1$ is $d^I$-trivial: $d^{II}c_1=d^{I}c_2$. We find $c_3$ along the same lines. We leave the proof of non-triviality of $c$ to the reader.

Note that  $\theta_{\alpha}c=(\theta^{R_0}_{\alpha}c_0,\theta^{R_1}_{\alpha}c_1,\theta^{R_2}_{\alpha}c_2,\theta^{R_3}_{\alpha}c_3)$ is a cocycle (the action of the supersymmetries commutes with the differential). $\theta_{\alpha}c$ has a degree in $\theta$ on one greater  then $c$. This means that $\theta_{\alpha}c$ is a trivial cocycle. The second equation in (\ref{E:descent2}) follows from this.

\section{Dimensional reductions of SYM}\label {S:Appendix H}
 
 We have analyzed   supersymmetric deformations of  ten-dimensional SYM theory and of this theory reduced to a point. Similar methods can be applied to $d$-dimensional reduction of ten-dimensional SYM theory for any $d\leq 10$.  Here we discuss some general results in this direction  leaving concrete calculations for the future work.

 We can generalize the results of  Section \ref{S:homologicalapp} introducing the Lie algebra  $T_dYM$. Recall that the Lie algebra $YM$  as a vector space can be represented as a direct sum 
 $\sum _{n\geq 2} L^n$ where $L^2$ has a basis $\bD_1,..., \bD_{10}$. We define $T_dYM$    as a subalgebra $(L^{2})'+\sum _{n\geq 3} L^n$ of the algebra $YM$. Here $(L^{2})'$ is spanned by $\bD_{d+1},...,\bD _{10}.$ (In more invariant way one can say that $(L^{2})'$ is a $(10-d)$-dimensional subspace of $L^2$, such that the restriction of the  inner-product from $L^2$ to $(L^{2})'$ is non-degenerate.)    We will modify below  the arguments of Appendix C  to obtain the following generalization of theorems \ref{T:dfadfqq} and \ref{T:theorem12} :
\begin{proposition}
Every element of $\rH^{\bullet}(L,U(T_dYM))$ where $k\geq 2$ specifies an infinitesimal supersymmetric deformation of  equations of motion of SYM theory reduced to $d$-dimensional space.
\end{proposition}

This statement should be understood in the framework of BV-formalism.   In this formalism we interpret invariance with respect to Lie algebra $\g$ in terms of L$\ity$ action (Section 6
and Appendix A).  The solutions of the equations of motion  correspond to the points of zero locus of the vector field $Q$; the invariance of  equations of motion  with respect to  L$\ity$ action is expressed by the equation
\begin{equation}\label{E:fadafdh}
d_{\g}q+[Q,q]+\frac{1}{2}[q,q]=0.
\end{equation}
where $q$ is a ghost dependent vector field.
Infinitesimal deformations of $Q$ and $q$ obeying this equation are labeled by elements of some homology group that can be interpreted as 
Lie-Hochschild cohomology $\rHH_{\g}^{\bullet}(\cal A,\cal A )$; in our case $\g=\susy _d$ stands for the algebra of supersymmetries in dimension $10$ reduced to the dimension $d$ and for $\mathcal{A}$ we can take the algebra $bv_d$ obtained from $bv$ by means of dimensional  reduction. {\footnote {Notice, that the most interesting deformations  correspond to the elements of $\rHH_{\g}^k(\cal A, \cal A)$ with $k=2$ ; the elements with $k>2$ correspond to deformation of higher terms in L$\ity$ action; these higher terms do not have direct physical meaning.}}
We can use pure spinor formalism also in the  case of reduction to $d$ dimensions; then for $\mathcal{A}
$ we should take $d$-dimensional reduction of Berkovits algebra $B$ that can be defined as the algebra $B_d$ of polynomial functions depending of pure spinor $\lambda$, odd spinor $\psi$ and $x\in \mathbb{R}^{d}$ with the differential defined as the derivation 
\begin{equation}\label{E:differentia}
\lambda^{\alpha}\bigg(\pr{\psi^{\alpha}}+\Gamma_{\alpha\beta}^i\psi^{\beta}\pr{x^{i}}\bigg).
\end{equation}
The algebra $B_d$ is a supercommutative quadratic algebra; its Koszul dual is a universal enveloping algebra of differential graded Lie algebra $\tilde L_d$ defined as an algebra with generators  $\bm{\theta}_{1},\dots, \bm{\theta}_{16}$ of degree  one, generators $s_{1},\dots,s_{16} $ of degree zero,and generators $\varsigma_{1},\dots ,\varsigma_{d}$ of degree $1$. They satisfy relations 
\begin{equation}\label{E:nvpr}
\Gamma^{\alpha\beta}_{i_1,\dots,i_5}[\bm{\theta}_{\alpha},\bm{\theta}_{\beta}]=0.
\end{equation}
\begin{equation}
\label{nv}
[\bm{\theta}_{\alpha},s_{\beta}]=\Gamma_{\alpha\beta}^i\varsigma_i
\end{equation}
(Other commutation relations are trivial.)  
The differential acts by the formulas  $d(\bm{\theta}_{\alpha})=s_{\alpha},d(\bD_i)=\varsigma_i$ for $i\leq d$, $d(\bD_i)=0$ for $i>d$. It is easy to construct a natural embedding of 
$T_dYM$ into $\tilde L_d$ and to prove that this embedding is a quasi-isomorphism; similar statements are true for their universal envelopes. These statements together with the fact that $B_d$ is a Koszul algebra permit us to say that
$B_d$ is dual to $U(T_dYM)$ in the sense of Appendix B. From other side, $U(T_dYM)$ is dual to $C^{\bullet}(T_dYM)$. (Recall that $C^{\bullet}(\cal G)$ denotes the algebra of polynomial functions of ghosts of Lie algebra $\cal G$ with  the differential calculating the cohomology  $\rH^{\bullet}(\cal G,\mathbb{C})$; the duality between
 $C^{\bullet}(\cal G)$ and $U(\cal G)$ follows the isomorphism between Lie algebra cohomology
$\rH^{\bullet}(\cal G,\mathbb{C})$ and Hochschild cohomology $\rHH^{\bullet}(U(\cal G),\mathbb{C})$.) We obtain that $B_d$ is quasi-isomorphic to $C^{\bullet}(T_dYM)$, hence the supersymmetric deformations are governed by Lie-Hochschild cohomology $\rHH_{\susy _d}{\bullet}(C^{\bullet}(T_dYM),C^{\bullet}(T_dYM))$ that can be regarded as cohomology of a complex $$C^{\bullet}(\susy _d)\otimes \Sym (\Pi T_dYM)^*\otimes \Sym T_dYM$$ with appropriate differential.
 To finish the proof we notice that this cohomology
is isomorphic to $\rH^{\bullet} (L,U(T_dYM)$. (This fact can be derived from the remark that
$C^{\bullet}(\susy _d)\otimes \Sym (\Pi T_dYM)^*=\mathbb{C}[[t^1,\dots,t^{16}]]\otimes \Sym(\Pi YM^*)=\Sym(\Pi L^*)$.)

In Section 2.3 we have constructed vector bundles $\W^*$ and $\W$ over $\mathcal{Q}$. The bundle
$\W^*$ is embedded into trivial vector bundle $\cal V$ with the fiber $V=L^2$;the bundle $\W$ is defined as a quotient $\cal V/\W^*$.
 We define a two-step complex of vector bundles
\begin{equation}\label{E:twoste}
{\cal V}\overset {p}\rightarrow  \W.
\end{equation}

This complex is quasi-isomorphically embedded into the complex of vector bundles $\YM$. This
follows from the remark that it is quasi-isomorphic to the bundle $\W^*$ and from the considerations  of Appendix E. We can  consider a more general two-step complex 
\begin{equation}\label{E:twostep}
{\cal V'}\overset {p}\rightarrow  \W.
\end{equation}
where $\cal V'$ is a trivial vector bundle with the fiber $V'=(L^2)'$. It is embedded into $\mathcal{T}_d\YM$ and this embedding is a quasi-isomorphism. This
can be seen as follows. The zero component of the complex $\mathcal{T}_d\YM$ coincides with $V^{'}$. The zero-truncated complex, i.e. the complex with removed first component , coincides with $\TYM$. We know (see ( \ref{E:tquasi})) that it is quasi-isomorphic to $\W$. 
The statement we need follows from this fact.

 Observe that if we work with the  unreduced theory then $V'=0$ and the complex becomes simply the vector bundle $\W$ in agreement with Proposition \ref{P:WW}.  In the opposite extremity we can study the  theory reduced to a point. Then $V'=V$ and we obtain  the complex (\ref {E:twoste}) which is  quasiisomorphic to $\W^*$. Again this agrees with the results of Proposition \ref{P:W}.

We see that in the case at hand we can write down an analog of (\ref{E:delta}) and apply it to the calculation of   $\rH^i(L,U(T_dYM))$.   The main remaining problem is the calculation of hypercohomology 
\begin{equation}\label{E:hypercohomology}
\mathbb{H}^i(\Sym^k(\BEE)(j)).
\end{equation}
 We are planning to solve this problem  using the techniques of \cite{Macaulay2}.  
 
Notice that the calculation of Euler characteristic of the cohomology  we consider can be done by means of methods used in  Section 4.
\section { Representation rings}\label  {S:Appendix I}

In the set of equivalence classes of representations of a  group $G$ or of $\g$-modules (of representations of Lie algebra
$\g$ ) we can introduce a notion of sum ( direct sum) and of product (tensor product).   One can consider also virtual modules (virtual representations) defined as formal differences
of modules (representations). The set   of equivalence classes of virtual modules can be considered as a ring with respect to direct sum and tensor product;  this ring is called representation ring of group $G$ or of Lie algebra $\g$. \footnote {Notice that in the definition of representation ring one can consider only finite-dimensional representations or only unitary representation, etc }If $G$ is a simply connected Lie group and $\g$ denotes its Lie algebra the representation group of $G$ is isomorphic to the
representation ring of $\g$.

One can use another definition of representation ring that is based on consideration of representations in superpaces instead of virtual representations. (If $V$ is a module and $\Pi V$ is obtained from $V$ by means of parity reversion we assume that $V+\Pi V$ is equivalent to zero. In other words the virtual representation $V-W$ can be considered as a representation  $V+\Pi W$.)

To every representation $\rho$ of a group $G$ we can assign its character  $\chi$ as a function on the set of conjugacy  classes of $G$ defined by the formula $\chi (g)= Tr \rho (g)$. It is easy to check that the character of direct sum of representations is a sum of characters and the character of tensor product is a product of characters. Using this remark one can verify that for a compact group $G$ the representation ring
is isomorphic to a subring of class functions (functions on the set of conjugacy classes of $G$.) Equivalently
one can say that the representation ring is  isomorphic to a subring of the ring of $W(G)$-invariant functions on maximal torus where $W(G)$ stands for the Weyl group.

A graded representation $V=\sum V_k$ can be considered as a power series $\sum V_kt^k$ taking values in the representation ring,

One can define some natural operations on the representation ring (the word "natural" means here that these operations are compatible with homomorphisms).

First of all we can define  operations $\lambda ^i$ sending a module $V$ into its exterior power
$\Lambda ^i V$ and operators $\sigma^i$ sending a module into its symmetric power. We understand here the exterior power and the symmetric power in the sense of superalgebra , hence
$$\lambda ^i(-V)=(-1)^i \sigma^i(V)$$
(changing parity we exchange symmetry with antisymmetry).
It is convenient to consider generating functions
$$\lambda_t (V)=\sum \lambda ^i (V)t^i,$$ 
$$\sigma_t (V)=\sum \sigma ^i(V)t^i.$$
(These functions arise naturally from graded representations $\Lambda V=\sum \Lambda^iV$ and
$\Sym V=\sum \Sym ^kV$.)

Then
 $$\lambda _t(V) S_t(V)=1.$$
 We will consider only representation rings of compact Lie groups and corresponding Lie algebras (reductive Lie algebras). Then  it is sufficient to check an identity between natural operations for the group ring of $\U(1)$;  this follows from the fact a character of a representation is determined by its restriction to the maximal torus.  Using this remark we derive the above identity from  the relation 
 $$(1-xt)(1+xt+x^2t^2+...)=1.$$
 
We define Adams operations $\Psi ^i$ in terms of action on characters: operation $\Psi ^i$ transforms a class function $\chi (g)$ into a class function $\chi (g ^i)$.  It is obvious that these operations are homomorphisms of representation ring:
$$\Psi ^i (V+W)=\Psi ^i (V) +\Psi ^i (W),$$ 
$$\Psi ^i (VW)=\Psi ^i (V) \Psi ^i (W).$$
It is clear also that
$$\Psi ^k\Psi ^l (V)=\Psi ^{kl}(V).$$
The generating function $\Psi _t (V)=\sum \Psi ^i (V) t^i$ can be expressed in the form
$$\Psi _t(V)=-t\frac{d}{dt}\log \lambda _t (V)=t\frac{d}{dt}\log \sigma_t (V).$$
The proof of this relation also can be reduced to  the consideration  of the representation ring of $\U(1)$. 
Conversely, this relation allows us to  express $\lambda ^k$ and $\sigma^k$ in terms of Adams operations.
In particular,
$$\sigma_t(V)=\exp(-\sum \frac{\Psi^k(V)t^k}{k}).$$
In terms of characters this formula can be written in the following way:
\begin{equation}
\label{ad}
\sum t^k\chi _{\Sym ^kV}(g)=\exp (-\sum \frac{t^k}{k}\chi_V(g^k))
\end{equation}

\end{document}